\newcommand{\noun}[1]{\textsc{#1}}
\providecommand{\tabularnewline}{\\}
\newtheorem{theorem}{Theorem}
\newtheorem{proposition}[theorem]{Proposition}
\newtheorem{lemma}[theorem]{Lemma}
\newtheorem{corollary}[theorem]{Corollary}
\newtheorem{definition}[theorem]{Definition}
\newtheorem{example}[theorem]{Example}
\newtheorem{remark}[theorem]{Remark}
\newenvironment{proof}[1][Proof]{\noindent \textbf{#1.} }{\qedsymbol}
\newcommand{\qedsymbol}{\hspace{\fill}\rule{1.5ex}{1.5ex}}
\let\pdfoutput=\undefined\fi
\chardef\@x10\chardef\@xv60
\def\tcitime{
\def\@time{%
  \@minute\time\@hour\@minute\divide\@hour\@xv
  \ifnum\@hour<\@x 0\fi\the\@hour:%
  \multiply\@hour\@xv\advance\@minute-\@hour
  \ifnum\@minute<\@x 0\fi\the\@minute
  }}%
\def\x@hyperref#1#2#3{%
   \catcode`\~ = 12
   \catcode`\$ = 12
   \catcode`\_ = 12
   \catcode`\# = 12
   \catcode`\& = 12
   \y@hyperref{#1}{#2}{#3}%
}
\def\y@hyperref#1#2#3#4{%
   #2\ref{#4}#3
   \catcode`\~ = 13
   \catcode`\$ = 3
   \catcode`\_ = 8
   \catcode`\# = 6
   \catcode`\& = 4
}
\def\QCTOpt[#1]#2{%
  \def\QCTOptB{#1}
  \def\QCTOptA{#2}
}
\def\QCTNOpt#1{%
  \def\QCTOptA{#1}
  \let\QCTOptB\empty
}
\def\Qct{%
  \@ifnextchar[{%
    \QCTOpt}{\QCTNOpt}
}
\def\QCBOpt[#1]#2{%
  \def\QCBOptB{#1}%
  \def\QCBOptA{#2}%
}
\def\QCBNOpt#1{%
  \def\QCBOptA{#1}%
  \let\QCBOptB\empty
}
\def\Qcb{%
  \@ifnextchar[{%
    \QCBOpt}{\QCBNOpt}%
}
\def\PrepCapArgs{%
  \ifx\QCBOptA\empty
    \ifx\QCTOptA\empty
      {}%
    \else
      \ifx\QCTOptB\empty
        {\QCTOptA}%
      \else
        [\QCTOptB]{\QCTOptA}%
      \fi
    \fi
  \else
    \ifx\QCBOptA\empty
      {}%
    \else
      \ifx\QCBOptB\empty
        {\QCBOptA}%
      \else
        [\QCBOptB]{\QCBOptA}%
      \fi
    \fi
  \fi
}
\def\GRAPHICSPS#1{%
 \ifcase\GRAPHICSTYPE
   \special{ps: #1}%
 \or
   \special{language "PS", include "#1"}%
 \fi
}%
\def\graffile#1#2#3#4{%
    \bgroup
	   \@inlabelfalse
       \leavevmode
       \@ifundefined{bbl@deactivate}{\def~{\string~}}{\activesoff}%
        \raise -#4 \BOXTHEFRAME{%
           \hbox to #2{\raise #3\hbox to #2{\null #1\hfil}}}%
    \egroup
}%
\def\draftbox#1#2#3#4{%
 \leavevmode\raise -#4 \hbox{%
  \frame{\rlap{\protect\tiny #1}\hbox to #2%
   {\vrule height#3 width\z@ depth\z@\hfil}%
  }%
 }%
}%
\let\nographics=\@msidraft
\newif\ifwasdraft
\def\GRAPHIC#1#2#3#4#5{%
   \ifnum\@msidraft=\@ne\draftbox{#2}{#3}{#4}{#5}%
   \else\graffile{#1}{#3}{#4}{#5}%
   \fi
}
\def\addtoLaTeXparams#1{%
    \edef\LaTeXparams{\LaTeXparams #1}}%
\newif\ifBoxFrame \BoxFramefalse
\newif\ifOverFrame \OverFramefalse
\newif\ifUnderFrame \UnderFramefalse
\def\BOXTHEFRAME#1{%
   \hbox{%
      \ifBoxFrame
         \frame{#1}%
      \else
         {#1}%
      \fi
   }%
}
\def\doFRAMEparams#1{\BoxFramefalse\OverFramefalse\UnderFramefalse\readFRAMEparams#1\end}%
\def\readFRAMEparams#1{%
 \ifx#1\end%
  \let\next=\relax
  \else
  \ifx#1i\dispkind=\z@\fi
  \ifx#1d\dispkind=\@ne\fi
  \ifx#1f\dispkind=\tw@\fi
  \ifx#1t\addtoLaTeXparams{t}\fi
  \ifx#1b\addtoLaTeXparams{b}\fi
  \ifx#1p\addtoLaTeXparams{p}\fi
  \ifx#1h\addtoLaTeXparams{h}\fi
  \ifx#1X\BoxFrametrue\fi
  \ifx#1O\OverFrametrue\fi
  \ifx#1U\UnderFrametrue\fi
  \ifx#1w
    \ifnum\@msidraft=1\wasdrafttrue\else\wasdraftfalse\fi
    \@msidraft=\@ne
  \fi
  \let\next=\readFRAMEparams
  \fi
 \next
 }%
\def\IFRAME#1#2#3#4#5#6{%
      \bgroup
      \let\QCTOptA\empty
      \let\QCTOptB\empty
      \let\QCBOptA\empty
      \let\QCBOptB\empty
      #6%
      \parindent=0pt
      \leftskip=0pt
      \rightskip=0pt
      \setbox0=\hbox{\QCBOptA}%
      \@tempdima=#1\relax
      \ifOverFrame
          \typeout{This is not implemented yet}%
          \show\HELP
      \else
         \ifdim\wd0>\@tempdima
            \advance\@tempdima by \@tempdima
            \ifdim\wd0 >\@tempdima
               \setbox1 =\vbox{%
                  \unskip\hbox to \@tempdima{\hfill\GRAPHIC{#5}{#4}{#1}{#2}{#3}\hfill}%
                  \unskip\hbox to \@tempdima{\parbox[b]{\@tempdima}{\QCBOptA}}%
               }%
               \wd1=\@tempdima
            \else
               \textwidth=\wd0
               \setbox1 =\vbox{%
                 \noindent\hbox to \wd0{\hfill\GRAPHIC{#5}{#4}{#1}{#2}{#3}\hfill}\\%
                 \noindent\hbox{\QCBOptA}%
               }%
               \wd1=\wd0
            \fi
         \else
            \ifdim\wd0>0pt
              \hsize=\@tempdima
              \setbox1=\vbox{%
                \unskip\GRAPHIC{#5}{#4}{#1}{#2}{0pt}%
                \break
                \unskip\hbox to \@tempdima{\hfill \QCBOptA\hfill}%
              }%
              \wd1=\@tempdima
           \else
              \hsize=\@tempdima
              \setbox1=\vbox{%
                \unskip\GRAPHIC{#5}{#4}{#1}{#2}{0pt}%
              }%
              \wd1=\@tempdima
           \fi
         \fi
         \@tempdimb=\ht1
         \advance\@tempdimb by -#2
         \advance\@tempdimb by #3
         \leavevmode
         \raise -\@tempdimb \hbox{\box1}%
      \fi
      \egroup%
}%
\def\DFRAME#1#2#3#4#5{%
  \vspace\topsep
  \hfil\break
  \bgroup
     \leftskip\@flushglue
	 \rightskip\@flushglue
	 \parindent\z@
	 \parfillskip\z@skip
     \let\QCTOptA\empty
     \let\QCTOptB\empty
     \let\QCBOptA\empty
     \let\QCBOptB\empty
	 \vbox\bgroup
        \ifOverFrame 
           #5\QCTOptA\par
        \fi
        \GRAPHIC{#4}{#3}{#1}{#2}{\z@}%
        \ifUnderFrame 
           \break#5\QCBOptA
        \fi
	 \egroup
  \egroup
  \vspace\topsep
  \break
}%
\def\FFRAME#1#2#3#4#5#6#7{%
  \@ifundefined{floatstyle}
    {
     \begin{figure}[#1]%
    }
    {
	 \ifx#1h
      \begin{figure}[H]%
	 \else
      \begin{figure}[#1]%
	 \fi
	}
  \let\QCTOptA\empty
  \let\QCTOptB\empty
  \let\QCBOptA\empty
  \let\QCBOptB\empty
  \ifOverFrame
    #4
    \ifx\QCTOptA\empty
    \else
      \ifx\QCTOptB\empty
        \caption{\QCTOptA}%
      \else
        \caption[\QCTOptB]{\QCTOptA}%
      \fi
    \fi
    \ifUnderFrame\else
      \label{#5}%
    \fi
  \else
    \UnderFrametrue%
  \fi
  \begin{center}\GRAPHIC{#7}{#6}{#2}{#3}{\z@}\end{center}%
  \ifUnderFrame
    #4
    \ifx\QCBOptA\empty
      \caption{}%
    \else
      \ifx\QCBOptB\empty
        \caption{\QCBOptA}%
      \else
        \caption[\QCBOptB]{\QCBOptA}%
      \fi
    \fi
    \label{#5}%
  \fi
  \end{figure}%
 }%
\def\makeactives{
  \catcode`\"=\active
  \catcode`\;=\active
  \catcode`\:=\active
  \catcode`\'=\active
  \catcode`\~=\active
}
   \gdef\activesoff{%
      \def"{\string"}%
      \def;{\string;}%
      \def:{\string:}%
      \def'{\string'}%
      \def~{\string~}%
    }
\def\FRAME#1#2#3#4#5#6#7#8{%
 \bgroup
 \ifnum\@msidraft=\@ne
   \wasdrafttrue
 \else
   \wasdraftfalse%
 \fi
 \def\LaTeXparams{}%
 \dispkind=\z@
 \def\LaTeXparams{}%
 \doFRAMEparams{#1}%
 \ifnum\dispkind=\z@\IFRAME{#2}{#3}{#4}{#7}{#8}{#5}\else
  \ifnum\dispkind=\@ne\DFRAME{#2}{#3}{#7}{#8}{#5}\else
   \ifnum\dispkind=\tw@
    \edef\@tempa{\noexpand\FFRAME{\LaTeXparams}}%
    \@tempa{#2}{#3}{#5}{#6}{#7}{#8}%
    \fi
   \fi
  \fi
  \ifwasdraft\@msidraft=1\else\@msidraft=0\fi{}%
  \egroup
 }%
\def\TEXUX#1{"texux"}
\def\limfunc#1{\mathop{\rm #1}}%
\long\def\QQQ#1#2{%
     \long\expandafter\def\csname#1\endcsname{#2}}%
\long\def\QQA#1#2{}%
\def\QTR#1#2{{\csname#1\endcsname {#2}}}%
\def\EXPAND#1[#2]#3{}%
\def\NOEXPAND#1[#2]#3{}%
\def\LaTeXparent#1{}%
\def\ChildStyles#1{}%
\def\ChildDefaults#1{}%
\def\QTagDef#1#2#3{}%
  \providecommand{\UNICODE}[2][]{\protect\rule{.1in}{.1in}}
  \providecommand{\U}[1]{\protect\rule{.1in}{.1in}}
\def\QQfnmark#1{\footnotemark}
 \def\abstract{%
  \if@twocolumn
   \section*{Abstract (Not appropriate in this style!)}%
   \else \small 
   \begin{center}{\bf Abstract\vspace{-.5em}\vspace{\z@}}\end{center}%
   \quotation 
   \fi
  }%
   \def\registered{\relax\ifmmode{}\r@gistered
                    \else$\m@th\r@gistered$\fi}%
 \def\r@gistered{^{\ooalign
  {\hfil\raise.07ex\hbox{$\scriptstyle\rm\text{R}$}\hfil\crcr
  \mathhexbox20D}}}}{}%
\newdimen\theight
\def\newfmtname{LaTeX2e}
  \DeclareOldFontCommand{\rm}{\normalfont\rmfamily}{\mathrm}
  \DeclareOldFontCommand{\sf}{\normalfont\sffamily}{\mathsf}
  \DeclareOldFontCommand{\tt}{\normalfont\ttfamily}{\mathtt}
  \DeclareOldFontCommand{\bf}{\normalfont\bfseries}{\mathbf}
  \DeclareOldFontCommand{\it}{\normalfont\itshape}{\mathit}
  \DeclareOldFontCommand{\sl}{\normalfont\slshape}{\@nomath\sl}
  \DeclareOldFontCommand{\sc}{\normalfont\scshape}{\@nomath\sc}
\def\alpha{{\Greekmath 010B}}%
\def\beta{{\Greekmath 010C}}%
\def\gamma{{\Greekmath 010D}}%
\def\delta{{\Greekmath 010E}}%
\def\epsilon{{\Greekmath 010F}}%
\def\zeta{{\Greekmath 0110}}%
\def\eta{{\Greekmath 0111}}%
\def\theta{{\Greekmath 0112}}%
\def\iota{{\Greekmath 0113}}%
\def\kappa{{\Greekmath 0114}}%
\def\lambda{{\Greekmath 0115}}%
\def\mu{{\Greekmath 0116}}%
\def\nu{{\Greekmath 0117}}%
\def\xi{{\Greekmath 0118}}%
\def\pi{{\Greekmath 0119}}%
\def\rho{{\Greekmath 011A}}%
\def\sigma{{\Greekmath 011B}}%
\def\tau{{\Greekmath 011C}}%
\def\upsilon{{\Greekmath 011D}}%
\def\phi{{\Greekmath 011E}}%
\def\chi{{\Greekmath 011F}}%
\def\psi{{\Greekmath 0120}}%
\def\omega{{\Greekmath 0121}}%
\def\varepsilon{{\Greekmath 0122}}%
\def\vartheta{{\Greekmath 0123}}%
\def\varpi{{\Greekmath 0124}}%
\def\varrho{{\Greekmath 0125}}%
\def\varsigma{{\Greekmath 0126}}%
\def\varphi{{\Greekmath 0127}}%
\def\nabla{{\Greekmath 0272}}
\def\FindBoldGroup{%
   {\setbox0=\hbox{$\mathbf{x\global\edef\theboldgroup{\the\mathgroup}}$}}%
}
\def\Greekmath#1#2#3#4{%
    \if@compatibility
        \ifnum\mathgroup=\symbold
           \mathchoice{\mbox{\boldmath$\displaystyle\mathchar"#1#2#3#4$}}%
                      {\mbox{\boldmath$\textstyle\mathchar"#1#2#3#4$}}%
                      {\mbox{\boldmath$\scriptstyle\mathchar"#1#2#3#4$}}%
                      {\mbox{\boldmath$\scriptscriptstyle\mathchar"#1#2#3#4$}}%
        \else
           \mathchar"#1#2#3#4%
        \fi 
    \else 
        \FindBoldGroup
        \ifnum\mathgroup=\theboldgroup 
           \mathchoice{\mbox{\boldmath$\displaystyle\mathchar"#1#2#3#4$}}%
                      {\mbox{\boldmath$\textstyle\mathchar"#1#2#3#4$}}%
                      {\mbox{\boldmath$\scriptstyle\mathchar"#1#2#3#4$}}%
                      {\mbox{\boldmath$\scriptscriptstyle\mathchar"#1#2#3#4$}}%
        \else
           \mathchar"#1#2#3#4%
        \fi     	    
	  \fi}
\newif\ifGreekBold  \GreekBoldfalse
\let\SAVEPBF=\pbf
\def\pbf{\GreekBoldtrue\SAVEPBF}%
  \newcounter{equationnumber}  
  \def\mathletters{%
     \addtocounter{equation}{1}
     \edef\@currentlabel{\theequation}%
     \setcounter{equationnumber}{\c@equation}
     \setcounter{equation}{0}%
     \edef\theequation{\@currentlabel\noexpand\alph{equation}}%
  }
    \def\BibTeX{{\rm B\kern-.05em{\sc i\kern-.025em b}\kern-.08em
                 T\kern-.1667em\lower.7ex\hbox{E}\kern-.125emX}}}{}%
\def\AmS{{\protect\usefont{OMS}{cmsy}{m}{n}%
                A\kern-.1667em\lower.5ex\hbox{M}\kern-.125emS}}}{}%
\def\@@eqncr{\let\@tempa\relax
    \ifcase\@eqcnt \def\@tempa{& & &}\or \def\@tempa{& &}%
      \else \def\@tempa{&}\fi
     \@tempa
     \if@eqnsw
        \iftag@
           \@taggnum
        \else
           \@eqnnum\stepcounter{equation}%
        \fi
     \fi
     \global\tag@false
     \global\@eqnswtrue
     \global\@eqcnt\z@\cr}
\def\TCItag{\@ifnextchar*{\@TCItagstar}{\@TCItag}}
\def\@TCItag#1{%
    \global\tag@true
    \global\def\@taggnum{(#1)}}
\def\@TCItagstar*#1{%
    \global\tag@true
    \global\def\@taggnum{#1}}
\def\dsum{\mathop{\displaystyle \sum }}%
\def\ExitTCILatex{\makeatother }
\if@compatibility\message{amsmath already loaded}\fi\aftergroup\ExitTCILatex}
\if@compatibility\message{amstex already loaded}\fi\aftergroup\ExitTCILatex}
\if@compatibility\message{amsgen already loaded}\fi\aftergroup\ExitTCILatex}
\let\DOTSI\relax
\def\RIfM@{\relax\ifmmode}%
\def\FN@{\futurelet\next}%
\def\iint{\DOTSI\intno@\tw@\FN@\ints@}%
\def\iiint{\DOTSI\intno@\thr@@\FN@\ints@}%
\def\iiiint{\DOTSI\intno@4 \FN@\ints@}%
\def\idotsint{\DOTSI\intno@\z@\FN@\ints@}%
\def\ints@{\findlimits@\ints@@}%
\newif\iflimtoken@
\newif\iflimits@
\def\findlimits@{\limtoken@true\ifx\next\limits\limits@true
 \else\ifx\next\nolimits\limits@false\else
 \limtoken@false\ifx\ilimits@\nolimits\limits@false\else
 \ifinner\limits@false\else\limits@true\fi\fi\fi\fi}%
\def\multint@{\int\ifnum\intno@=\z@\intdots@                          
 \else\intkern@\fi                                                    
 \ifnum\intno@>\tw@\int\intkern@\fi                                   
 \ifnum\intno@>\thr@@\int\intkern@\fi                                 
 \int}
\def\multintlimits@{\intop\ifnum\intno@=\z@\intdots@\else\intkern@\fi
 \ifnum\intno@>\tw@\intop\intkern@\fi
 \ifnum\intno@>\thr@@\intop\intkern@\fi\intop}%
\def\intic@{%
    \mathchoice{\hskip.5em}{\hskip.4em}{\hskip.4em}{\hskip.4em}}%
\def\negintic@{\mathchoice
 {\hskip-.5em}{\hskip-.4em}{\hskip-.4em}{\hskip-.4em}}%
\def\ints@@{\iflimtoken@                                              
 \def\ints@@@{\iflimits@\negintic@
   \mathop{\intic@\multintlimits@}\limits                             
  \else\multint@\nolimits\fi                                          
  \eat@}
 \else                                                                
 \def\ints@@@{\iflimits@\negintic@
  \mathop{\intic@\multintlimits@}\limits\else
  \multint@\nolimits\fi}\fi\ints@@@}%
\def\intkern@{\mathchoice{\!\!\!}{\!\!}{\!\!}{\!\!}}%
\def\plaincdots@{\mathinner{\cdotp\cdotp\cdotp}}%
\def\intdots@{\mathchoice{\plaincdots@}%
 {{\cdotp}\mkern1.5mu{\cdotp}\mkern1.5mu{\cdotp}}%
 {{\cdotp}\mkern1mu{\cdotp}\mkern1mu{\cdotp}}%
 {{\cdotp}\mkern1mu{\cdotp}\mkern1mu{\cdotp}}}%
\def\RIfM@{\relax\protect\ifmmode}
\def\text{\RIfM@\expandafter\text@\else\expandafter\mbox\fi}
\let\nfss@text\text
\def\text@#1{\mathchoice
   {\textdef@\displaystyle\f@size{#1}}%
   {\textdef@\textstyle\tf@size{\firstchoice@false #1}}%
   {\textdef@\textstyle\sf@size{\firstchoice@false #1}}%
   {\textdef@\textstyle \ssf@size{\firstchoice@false #1}}%
   \glb@settings}
\def\textdef@#1#2#3{\hbox{{%
                    \everymath{#1}%
                    \let\f@size#2\selectfont
                    #3}}}
\newif\iffirstchoice@
\def\Let@{\relax\iffalse{\fi\let\\=\cr\iffalse}\fi}%
\def\vspace@{\def\vspace##1{\crcr\noalign{\vskip##1\relax}}}%
\def\multilimits@{\bgroup\vspace@\Let@
 \baselineskip\fontdimen10 \scriptfont\tw@
 \advance\baselineskip\fontdimen12 \scriptfont\tw@
 \lineskip\thr@@\fontdimen8 \scriptfont\thr@@
 \lineskiplimit\lineskip
 \vbox\bgroup\ialign\bgroup\hfil$\m@th\scriptstyle{##}$\hfil\crcr}%
\def\Sb{_\multilimits@}%
\def\endSb{\crcr\egroup\egroup\egroup}%
\def\Sp{^\multilimits@}%
\newdimen\ex@
\def\rightarrowfill@#1{$#1\m@th\mathord-\mkern-6mu\cleaders
 \hbox{$#1\mkern-2mu\mathord-\mkern-2mu$}\hfill
 \mkern-6mu\mathord\rightarrow$}%
\def\leftarrowfill@#1{$#1\m@th\mathord\leftarrow\mkern-6mu\cleaders
 \hbox{$#1\mkern-2mu\mathord-\mkern-2mu$}\hfill\mkern-6mu\mathord-$}%
\def\leftrightarrowfill@#1{$#1\m@th\mathord\leftarrow
\mkern-6mu\cleaders
 \hbox{$#1\mkern-2mu\mathord-\mkern-2mu$}\hfill
 \mkern-6mu\mathord\rightarrow$}%
\def\overrightarrow{\mathpalette\overrightarrow@}%
\def\overrightarrow@#1#2{\vbox{\ialign{##\crcr\rightarrowfill@#1\crcr
 \noalign{\kern-\ex@\nointerlineskip}$\m@th\hfil#1#2\hfil$\crcr}}}%
\def\overleftarrow{\mathpalette\overleftarrow@}%
\def\overleftarrow@#1#2{\vbox{\ialign{##\crcr\leftarrowfill@#1\crcr
 \noalign{\kern-\ex@\nointerlineskip}$\m@th\hfil#1#2\hfil$\crcr}}}%
\def\overleftrightarrow{\mathpalette\overleftrightarrow@}%
\def\overleftrightarrow@#1#2{\vbox{\ialign{##\crcr
   \leftrightarrowfill@#1\crcr
 \noalign{\kern-\ex@\nointerlineskip}$\m@th\hfil#1#2\hfil$\crcr}}}%
\def\underrightarrow{\mathpalette\underrightarrow@}%
\def\underrightarrow@#1#2{\vtop{\ialign{##\crcr$\m@th\hfil#1#2\hfil
  $\crcr\noalign{\nointerlineskip}\rightarrowfill@#1\crcr}}}%
\def\underleftarrow{\mathpalette\underleftarrow@}%
\def\underleftarrow@#1#2{\vtop{\ialign{##\crcr$\m@th\hfil#1#2\hfil
  $\crcr\noalign{\nointerlineskip}\leftarrowfill@#1\crcr}}}%
\def\underleftrightarrow{\mathpalette\underleftrightarrow@}%
\def\underleftrightarrow@#1#2{\vtop{\ialign{##\crcr$\m@th
  \hfil#1#2\hfil$\crcr
 \noalign{\nointerlineskip}\leftrightarrowfill@#1\crcr}}}%
\def\qopnamewl@#1{\mathop{\operator@font#1}\nlimits@}
\let\nlimits@\displaylimits
\def\setboxz@h{\setbox\z@\hbox}
\def\varlim@#1#2{\mathop{\vtop{\ialign{##\crcr
 \hfil$#1\m@th\operator@font lim$\hfil\crcr
 \noalign{\nointerlineskip}#2#1\crcr
 \noalign{\nointerlineskip\kern-\ex@}\crcr}}}}
 \def\rightarrowfill@#1{\m@th\setboxz@h{$#1-$}\ht\z@\z@
  $#1\copy\z@\mkern-6mu\cleaders
  \hbox{$#1\mkern-2mu\box\z@\mkern-2mu$}\hfill
  \mkern-6mu\mathord\rightarrow$}
\def\leftarrowfill@#1{\m@th\setboxz@h{$#1-$}\ht\z@\z@
  $#1\mathord\leftarrow\mkern-6mu\cleaders
  \hbox{$#1\mkern-2mu\copy\z@\mkern-2mu$}\hfill
  \mkern-6mu\box\z@$}
\def\projlim{\qopnamewl@{proj\,lim}}
\def\injlim{\qopnamewl@{inj\,lim}}
\def\varinjlim{\mathpalette\varlim@\rightarrowfill@}
\def\varprojlim{\mathpalette\varlim@\leftarrowfill@}
\def\varliminf{\mathpalette\varliminf@{}}
\def\varliminf@#1{\mathop{\underline{\vrule\@depth.2\ex@\@width\z@
   \hbox{$#1\m@th\operator@font lim$}}}}
\def\varlimsup{\mathpalette\varlimsup@{}}
\def\varlimsup@#1{\mathop{\overline
  {\hbox{$#1\m@th\operator@font lim$}}}}
\def\align{\@verbatim \frenchspacing\@vobeyspaces \@alignverbatim
You are using the "align" environment in a style in which it is not defined.}
\let\csname endalign*\endcsname =\endtrivlist
\def\alignat{\@verbatim \frenchspacing\@vobeyspaces \@alignatverbatim
You are using the "alignat" environment in a style in which it is not defined.}
\let\csname endalignat*\endcsname =\endtrivlist
\def\xalignat{\@verbatim \frenchspacing\@vobeyspaces \@xalignatverbatim
You are using the "xalignat" environment in a style in which it is not defined.}
\let\csname endxalignat*\endcsname =\endtrivlist
\def\gather{\@verbatim \frenchspacing\@vobeyspaces \@gatherverbatim
You are using the "gather" environment in a style in which it is not defined.}
\let\csname endgather*\endcsname =\endtrivlist
\def\multiline{\@verbatim \frenchspacing\@vobeyspaces \@multilineverbatim
You are using the "multiline" environment in a style in which it is not defined.}
\let\csname endmultiline*\endcsname =\endtrivlist
\def\arrax{\@verbatim \frenchspacing\@vobeyspaces \@arraxverbatim
You are using a type of "array" construct that is only allowed in AmS-LaTeX.}
\def\tabulax{\@verbatim \frenchspacing\@vobeyspaces \@tabulaxverbatim
You are using a type of "tabular" construct that is only allowed in AmS-LaTeX.}
\let\csname endarrax*\endcsname =\endtrivlist
\let\csname endtabulax*\endcsname =\endtrivlist
 \def\endequation{%
     \ifmmode\ifinner 
      \iftag@
        \addtocounter{equation}{-1} 
        $\hfil
           \displaywidth\linewidth\@taggnum\egroup \endtrivlist
        \global\tag@false
        \global\@ignoretrue   
      \else
        $\hfil
           \displaywidth\linewidth\@eqnnum\egroup \endtrivlist
        \global\tag@false
        \global\@ignoretrue 
      \fi
     \else   
      \iftag@
        \addtocounter{equation}{-1} 
        \eqno \hbox{\@taggnum}
        \global\tag@false%
        $$\global\@ignoretrue
      \else
        \eqno \hbox{\@eqnnum}
        $$\global\@ignoretrue
      \fi
     \fi\fi
 } 
 \newif\iftag@ \tag@false
 \def\TCItag{\@ifnextchar*{\@TCItagstar}{\@TCItag}}
 \def\@TCItag#1{%
     \global\tag@true
     \global\def\@taggnum{(#1)}}
 \def\@TCItagstar*#1{%
     \global\tag@true
     \global\def\@taggnum{#1}}
     \def\tag{\@ifnextchar*{\@tagstar}{\@tag}}
     \def\@tag#1{%
         \global\tag@true
         \global\def\@taggnum{(#1)}}
     \def\@tagstar*#1{%
         \global\tag@true
         \global\def\@taggnum{#1}}
\def\dfrac#1#2{{\displaystyle {#1 \over #2}}}%
\newcommand{\wt}{\widetilde}
\newcommand{\sol}{\mathop{\rm SOL}}
\def\b0{\mbox{\boldmath $0$}}
\newcounter{algo}\newenvironment{algo}[2]{\refstepcounter{algo}\label{#2}   \begin{center}
\begin{minipage}{0.9\textwidth}   \hrule\smallskip
\textbf{Algorithm \thealgo: #1}
\par\smallskip\hrule\smallskip\ignorespaces}{\par\smallskip\hrule
\end{minipage}
\end{center}
}
\def\baselinestretch{1.15}
\begin{document}

\title{\vspace{-2cm}
 \textbf{\huge{Real and Complex }}\\
\textbf{\huge{ Monotone Communication Games}}}

\author{Gesualdo Scutari$^{1}$, Francisco Facchinei$^{2}$, Jong-Shi Pang$^{3}$,
and Daniel P. Palomar$^{4}${\small{\vspace{-0.4cm}
 }}\\
 \\
{\footnotesize{E-mail: gesualdo@buffalo.edu, facchinei@dis.uniroma1.it,
jspang@uiuc.edu,}} {\footnotesize{palomar@ust.hk.}} \\
 {\small{{} {} {} {} $^{1}$}}{\footnotesize{COMSENS Research
Center, Dpt. of Electrical Engineering, State University of New York
at Buffalo, Buffalo, USA.}} \\
 {\small{{} {} {} {} $^{2}$}}{\footnotesize{Dpt. of Computer,
Control, and Management Engineering Antonio Ruberti, Univ. of Rome
``La Sapienza\textquotedbl{}, Rome, Italy.}}\\
 {\small{{} {} {} {} $^{3}$}}{\footnotesize{Dpt. of Industrial
and Enterprise Systems Engineering, University of Illinois at Urbana-Champaign,
Urbana, USA.}}\\
 {\small{{} {} {} {} $^{4}$}}{\footnotesize{Dpt. of Electronic
and Computer Eng., Hong Kong Univ. of Science and Technology, Hong
Kong.}}%
\thanks{{\footnotesize{The work of G. Scutari is supported by the National
Science Foundation Grant No. CNS-1218717. The work of J.-S. Pang is
based on research supported by the U.S. National Science Foundation
grant No. CMMI 0969600. The work of D. P. Palomar was supported by
the Hong Kong RGC 617810 research grant. Part of this paper has been
presented at the 31st Annual IEEE International Conference on Computer
Communications (IEEE INFOCOM 2012).}}%
}\\
 }

\date{{\normalsize{Submitted to }}\emph{\normalsize{IEEE Transactions on
Information Theory}}{\normalsize{, October 8, 2012; Revised Dec. 11,
2013.}}{\small{{}\vspace{-1cm}
 }}}

\maketitle
 
\begin{abstract}
Noncooperative game-theoretic tools have been increasingly used to
study many important resource allocation problems in communications,
networking, smart grids, and portfolio optimization. In this paper,
we consider a general class of convex Nash Equilibrium Problems (NEPs),
where each player aims at solving an arbitrary smooth convex optimization
problem. Differently from most of current works, we do not assume
any specific structure for the players' problems, and we allow the
optimization variables of the players to be matrices in the complex
domain. Our main contribution is the design of a novel class of distributed
(asynchronous) best-response- algorithms suitable for solving the
proposed NEPs, even in the presence of \emph{multiple} solutions.
The new methods, whose convergence analysis is based on Variational
Inequality (VI) techniques, can select, among all the equilibria of
a game, those that optimize a given performance criterion, at the
cost of limited signaling among the players. This is a major departure
from existing best-response algorithms, whose convergence conditions
imply the uniqueness of the NE. Some of our results hinge on the use
of VI problems directly in the complex domain; the study of these
new kind of VIs also represents a noteworthy innovative contribution.
We then apply the developed methods to solve some new generalizations
of SISO and MIMO games in cognitive radio systems, showing a considerable
performance improvement over classical pure noncooperative schemes. 
\end{abstract}

\section{Introduction and Motivation{\small{\label{sec:Introduction-and-Motivation}}}}

In recent years, there has been a growing interest in the use of noncooperative
games to model and solve resource allocation problems in communications
and networking, wherein the interaction among several agents is by
no means negligible and centralized approaches are not suitable. Examples
are power control and resource sharing in wireless/wired peer-to-peer
networks, cognitive radio systems (e.g., \cite{Yu-Ginis-Cioffi_jsac02,Luo-Pang_IWFA-Eurasip,ShumandLeungSung_JSAC07,Scutari-Palomar-Barbarossa_SP08_PI,Scutari-Palomar-Barbarossa_AIWFA_IT08,Pang-Scutari-Facchinei-Wang_IT_08,Scutari-Palomar-Barbarossa_JSAC08,Scutari-Palomar-Barbarossa_SP08_IWFA,Scutari-Palomar-Barbarossa_SPMag08,Scutari-Palomar-Facchinei-Pang_SPMag09,Scutari-Palomar_SP09_CR_GTMIMO,RazaviyaynLuoTsengPang_MP11}),
distributed routing, flow and congestion control, and load balancing
in communication networks (e.g., \cite{Altman-Wynter_survey_GT-TLC-Net04,Altman-Boulogne-ElAzouzi-Jimenez-Wynter_survey_GT-TLC06,Huang-Cendrillon-Chiang-Moonen_SP07}
and references therein), and smart grids (see \cite{Saad-Han-Poor-Basar_SPMag12,Atzeni-Ordonez-Scutari-Palomar_TSG12}
and references therein). Two recent special issues on the subject
are \cite{HuangPalomarMandayamWalrandWickerBasarGTJSAC08,JorswieckLarssonLuisePoor_GTSPMag09}.

Among the variety of models and solution concepts proposed in the
literature, the Nash Equilibrium Problem (NEP) plays a central role
and has been used mostly to model interactions among individuals competing
selfishly for scarce resources. In a NEP there is a finite number
$I$ of players; each player $i$ makes decisions on a set of variables
$\mathbf{x}_{i}$ belonging to a given feasible set $\mathbf{x}_{i}\in\mathcal{Q}_{i}$.
The goal of each player $i$ is to minimize his own objective function
$f_{i}(\mathbf{x}_{i},\mathbf{x}_{-i})$ over $\mathcal{Q}_{i}$ while
anticipating the reactions $\mathbf{x}_{-i}\triangleq(\mathbf{x}_{j})_{j\neq i=1}^{I}$
from the rivals: 
\begin{equation}
\begin{array}{ll}
\limfunc{minimize}\limits _{\mathbf{x}_{i}} & f_{i}(\mathbf{x}_{i},\,\mathbf{x}_{-i})\\[5pt]
\text{subject to} & \mathbf{x}_{i}\in{\cal Q}_{i}.
\end{array}\label{eq:NEP_intro}
\end{equation}
The NEP is the problem of finding a vector $\mathbf{x}^{\star}\triangleq(\mathbf{x}_{i}^{\star})_{i=1}^{I}$
such that each $\mathbf{x}_{i}^{\star}$ belongs to $\mathcal{Q}_{i}$
and solves the player's problem (given $\mathbf{x}_{-i}^{\star}$):
\begin{equation}
f_{i}(\mathbf{x}_{i}^{\star},\,\mathbf{x}_{-i}^{\star})\leq f_{i}(\mathbf{x}_{i},\,\mathbf{x}_{-i}^{\star}),\quad\forall\mathbf{x}_{i}\in\mathcal{Q}_{i}.\label{eq:NE}
\end{equation}
Such a point $\mathbf{x}^{\star}$ is called a Nash Equilibrium (NE)
or, more simply, a solution of the NEP. In words, a NE is a feasible
strategy profile $\mathbf{x}^{\star}$ such that no \emph{single}
player can benefit from a \emph{unilateral} deviation from $\mathbf{x}_{i}^{\star}$.

In this paper we focus on NEPs in the general form (\ref{eq:NEP_intro}),
in the following setting: i) the optimization variables of each player
can be either real vectors or complex matrices; ii) each optimization
problem in (\ref{eq:NEP_intro}) is convex for any given feasible
$\mathbf{x}_{-i}$; and iii) players' objective functions are continuously
differentiable in all the variables (more precisely, functions of
complex variables are assumed to be $\mathbb{R}$-differentiable,
see Sec. \ref{sec:VI_complex_domain}). We will term such a game (\emph{real}
or\emph{ complex}) \emph{player-convex} NEP. Note that assumptions
ii) and iii) are mild and quite standard in the literature, see for
example \cite{Yu-Ginis-Cioffi_jsac02,Luo-Pang_IWFA-Eurasip,ShumandLeungSung_JSAC07,Scutari-Palomar-Barbarossa_SP08_PI,Scutari-Palomar-Barbarossa_AIWFA_IT08,Pang-Scutari-Facchinei-Wang_IT_08,Scutari-Palomar-Barbarossa_JSAC08,Scutari-Palomar-Barbarossa_SP08_IWFA,Scutari-Palomar-Barbarossa_SPMag08,Scutari-Palomar-Facchinei-Pang_SPMag09,Scutari-Palomar_SP09_CR_GTMIMO,HuangPalomarMandayamWalrandWickerBasarGTJSAC08,JorswieckLarssonLuisePoor_GTSPMag09}
where special instances of the player-convex NEP (\ref{eq:NEP_intro})
are studied. The convexity assumption ii) makes the NEP numerically
tractable (a NE may not even exist otherwise) while, to date, the
differentiability of the players' functions seems indispensable to
analyze \emph{distributed} solution methods \cite{Facchinei_Pang_VI-NE_bookCh_09,FacchineiPangScutariLampariello_MP11},
unless the game has a very specific structure, like in potential or
supermodular games; see, e.g., \cite{FacchineiPiccialliSciandrone_IT_08,Huang-Berry-Honig_JSAC06,Facchinei-Kanzow_GNEP}
and references therein. Motivated by recent applications of noncooperative
models in MIMO communications \cite{Scutari-Palomar-Barbarossa_JSAC08,Scutari-Palomar-Barbarossa_SP08_IWFA,Scutari-Palomar-Facchinei-Pang_SPMag09,Scutari-Palomar_SP09_CR_GTMIMO,Arslan-Fatih-Demirkol-Song_WC07},
we also allow, according to i), players' optimization variables to
be complex matrices, which significantly enlarges the range of applicability
of model (\ref{eq:NEP_intro}). To the best of our knowledge, this
is the first work where a \emph{complex} NEP in the general form (\ref{eq:NEP_intro})
is considered.

While the solution analysis (e.g., solution existence) of a \emph{real}
player-convex NEP relies on standard results in game theory (see,
e.g., the seminal work \cite{Rosen_econ65}, or \cite{Facchinei_Pang_VI-NE_bookCh_09}
for more recent results), the development of \emph{distributed }solution
algorithms is much more involved. The goal of this paper is to address
this difficult task in the broad setting described above. We are interested
in the design and analysis of (possibly) asynchronous iterative \emph{best-response
}algorithms, suitable for solving real and complex player-convex NEPs,
even in the presence of multiple NEs. By ``best-response'' algorithms
we mean iterative schemes where the players iteratively choose the
(feasible) strategy that minimizes their cost functions, given the
actions of the other players; the reason for our emphasis on best-response
schemes will be described shorty.\vspace{-0.2cm}

\subsection{Literature review\vspace{-0.2cm}
}

The study of iterative algorithms for (special cases of) player-convex
NEPs has been addressed in a number of papers, under different settings
and assumptions; the main features and limitations of current state-of-the-art
approaches are discussed next.

A first class of papers is composed of works motivated by specific
applications, some examples are \cite{Yu-Ginis-Cioffi_jsac02,Luo-Pang_IWFA-Eurasip,ShumandLeungSung_JSAC07,Scutari-Palomar-Barbarossa_SP08_PI,Scutari-Palomar-Barbarossa_AIWFA_IT08,Scutari-Palomar-Barbarossa_JSAC08,Scutari-Palomar-Barbarossa_SP08_IWFA,Scutari-Palomar-Barbarossa_SPMag08,Scutari-Palomar-Facchinei-Pang_SPMag09,Scutari-Palomar_SP09_CR_GTMIMO},
where different resource allocation problems in communications are
modelled as noncooperative games and solved via iterative algorithms;
all these formulations are special cases of the NEP (\ref{eq:NEP_intro}).
A key feature of all these models is that the best-response of each
player (i.e., the optimal solution of each player's optimization problem)
is unique and can be expressed in closed form; this simplifies enormously
the application of standard fixed-point arguments to the study of
the convergence of best-response algorithms. A monotonicity-based
approach is instead used in \cite{Uryasev-Rubinstein,Yates_JSAC95}.
Even though algorithms in \cite{Basar-Olsder_DGT,Li-Basar_Automatica87,Uryasev-Rubinstein}
do not require a closed form solution of players' optimization problems,
they can be computationally very demanding and the convergence conditions
are based on assumption whose verification for games arising from
realistic applications remains elusive. Last but not least, convergence
conditions of the algorithms proposed in all the aforementioned papers
imply the uniqueness of the NE.

A more general and powerful methodology suitable for studying noncooperative
games is offered by the theory of finite-dimensional Variational Inequalities
(VIs) \cite{Facchinei-Pang_FVI03}. VI and complementarity problems
have a long history and have been well documented in the literature
of operation research \cite{Facchinei-Pang_FVI03}, but only recently
they have been brought to the attention of the signal processing,
communications, and networking communities \cite{Luo-Pang_IWFA-Eurasip,Scutari-Palomar-Barbarossa_SP08_PI,Pang-Scutari-Facchinei-Wang_IT_08,Scutari-Palomar-Facchinei-Pang_SPMag09,Pang-Scutari-Palomar-Facchinei_SP_10,Scutari-Palomar-Facchinei-Pang_SPMag10}.
Given a subset $\mathcal{K}$ of $\mathbb{R}^{n}$ and a vector-valued
function $\mathbf{F}:\mathcal{K}\rightarrow\mathbb{R}^{n}$, the VI
problem, denoted by VI$(\mathcal{K},\mathbf{F})$, consists in finding
a point $\mathbf{x}^{\star}\in\mathcal{K}$ such that 
\begin{equation}
\left(\mathbf{x}-\mathbf{x}^{\star}\right)^{T}\mathbf{F}(\mathbf{x}^{\star})\geq0\quad\forall\mathbf{x}\in\mathcal{K}.\label{eq:VI_intro}
\end{equation}
The VI approach to \emph{real} player-convex NEPs as in (\ref{eq:NEP_intro})
hinges on an easy equivalence with the (partitioned) VI problem VI$(\mathcal{K},\mathbf{F})$
in (\ref{eq:VI_intro}), with $\mathcal{K}=\prod_{i=1}^{I}{\cal Q}_{i}$
and $\mathbf{F}=(\nabla_{\mathbf{x}_{i}}f_{i}(\mathbf{x}))_{i=1}^{I}$
(intended to be a column vector), where $\nabla_{\mathbf{x}_{i}}f_{i}(\mathbf{x})$
denotes the gradient of $f_{i}$ with respect to $\mathbf{x}_{i}$.
Based on this equivalence, one can solve a real player-convex NEP
by focusing on the associated VI problem and taking advantage of the
many (centralized and distributed) solution methods available in the
literature for partitioned VIs \cite[Vol. II]{Facchinei-Pang_FVI03}.

In the effort of obtaining distributed schemes for NEPs, researchers
have focused on so called\emph{ projection algorithms} \cite[Ch. 12]{Facchinei-Pang_FVI03}
for partitioned VIs; see, e.g., \cite{Konnov_VI_book,Yin-Shanbhag-Mehta_TAC10,KANNAN-SHANBHAG_sub10}
(and also \cite{Rosen_econ65,Shamma-Arslan_TAC05} for related approaches).
However, these solution methods suffer from some drawbacks, which
strongly limit their applicability in practice, especially in the
design of wireless systems. First, they are not ``incentive compatible'',
meaning that selfish users may deviate from them, unless they are
imposed by some authority as a protocol to follow. Second, and most
importantly, they generally converge very slowly; this has been observed
in a number of different applications (see, e.g., numerical results
in \cite{Yin-Shanbhag-Mehta_TAC10,KANNAN-SHANBHAG_sub10,Shamma-Arslan_TAC05}
and Fig. \ref{fig:fig_intro} in Sec. \ref{sub:Numerical-Results}).

A different approach to the design of algorithms for partitioned VIs
has been followed in \cite{Pang_MP85,Pang-Chan_MP82}, where the authors
investigated the local and global convergence of various iterative
synchronous methods that decompose the original VI problem into a
sequence of simpler lower-dimensional VI subproblems. Unfortunately
the convergence analysis in \cite{Pang_MP85,Pang-Chan_MP82}, based
on contraction arguments, leads to abstract convergence conditions,
whose verification in practice seems not possible. Easier conditions
to be checked have been obtained recently in \cite{Facchinei_Pang_VI-NE_bookCh_09}
for simultaneous best-response algorithms, still using the VI approach.
However, conditions in \cite{Facchinei_Pang_VI-NE_bookCh_09,Pang_MP85,Pang-Chan_MP82}
are applicable only to a restricted class of \emph{real} NEPs; they
indeed imply the (uniformly) strongly convexity of the players' cost
functions and the uniqueness of the NE. In the presence of multiple
solutions, the distributed computation of even a single NE of real/\emph{complex}
NEPs via best-response algorithms becomes a difficult and unsolved
task. 

The analysis of the current literature carried out so far leads to
the following conclusions: When it comes to distributed computation
of NE via best-response dynamics, the following issues arise: i) the
convergence analysis and algorithms apply only to a restricted class
of NEPs, whose players' cost functions and feasible sets have a very
specific structure, leaving outside player-convex NEPs in the general
form (\ref{eq:NEP_intro}); ii) the best-response mapping of each
player must be unique and/or is required to be computed in closed
form; iii) convergence is obtained only under conditions implying
the uniqueness of the NE; and iv) none of current results and VI-based
methodologies can be applied to study and solve \emph{complex} player-convex
NEPs, which arise naturally, e.g., from applications in MIMO communications.

\vspace{-0.2cm}

\subsection{Main contributions\vspace{-0.2cm}
}

In order to address the key issues listed at the end of the previous
subsection, in this paper we introduce several new developments that
are summarized next. 
\begin{enumerate}
\item Building on our recent contributions \cite{Facchinei_Pang_VI-NE_bookCh_09,FacchineiPangScutariLampariello_MP11,Pang-Scutari-Palomar-Facchinei_SP_10,Scutari-Facchinei-Pang-Lampariello_INFOCOM12},
we develop a VI-based unified theory for the study and design of distributed
best-response algorithms for the solution of \emph{real} player-convex
NEPs, having (possibly) multiple solutions. Our unified framework
has many desirable properties, such as: \smallskip{}
\\
$-$ It provides a systematic methodology for analyzing old and new
algorithms, simplifying greatly the application of game-theoretical
models to new problems.\smallskip{}
\\
$-$ It improves on traditional \emph{synchronous} methods studied
in the literature, see e.g. \cite{Facchinei_Pang_VI-NE_bookCh_09,Rosen_econ65,Konnov_VI_book,Yin-Shanbhag-Mehta_TAC10,KANNAN-SHANBHAG_sub10,Shamma-Arslan_TAC05},
by providing for the first time totally \emph{asynchronous} and \emph{distributed}
methods for general player-convex NEPs. In spite of their better features,
the proposed algorithms converge under weaker conditions than those
available in the literature for synchronous best-response schemes;
nevertheless, the convergence conditions still imply the uniqueness
of the NE.\smallskip{}
\\
$-$ It provides convergent best-response schemes also for NEPs having
\emph{multiple} solutions. Although no centralized control is required,
these schemes need some (limited) signaling among the players. Nevertheless,
our algorithms are still applicable to a variety of resource allocation
problems in wireless systems, such as \cite{Yu-Ginis-Cioffi_jsac02,Luo-Pang_IWFA-Eurasip,ShumandLeungSung_JSAC07,Scutari-Palomar-Barbarossa_SP08_PI,Scutari-Palomar-Barbarossa_AIWFA_IT08,Scutari-Palomar-Barbarossa_JSAC08,Scutari-Palomar-Barbarossa_SP08_IWFA,Scutari-Palomar-Barbarossa_SPMag08,Scutari-Palomar-Facchinei-Pang_SPMag09,Scutari-Palomar_SP09_CR_GTMIMO}
and constitute the fist class of provable convergent distributed best-response
schemes for NEPs with multiple solutions. Moreover, an additional
new feature of our methods is that one can also control the quality
of the achievable solution by forcing convergence to a NE that optimizes
a further performance criterion (thus performing an equilibrium selection).
This feature is very appealing in the design of practical wireless
systems, where algorithms with unpredictable performance are not acceptable.\smallskip{}
\\
 $-$ It does not require the players' best-response to be unique
or given in closed form. \smallskip{}
\\
 $-$ It allows us to gauge the trade-off between signaling and characteristic
of the resulting algorithms. 
\item We develop an entirely new theory for the study of VIs in the complex
domain along with new several instrumental technical tools (of independent
interest). Once this new theory has been established, one can (almost)
effortlessly extend all the aforementioned results to player-convex
NEPs whose players' optimization variables are complex matrices. The
resulting algorithms are new to the literature.
\end{enumerate}
\indent 

To the best of our knowledge the above features constitute a substantial
advancement in the distributed solution methods of noncooperative
games, which enlarges considerably scope and flexibility of game-theoretical
models in wireless distributed (MIMO) networks. In order to illustrate
our techniques we consider some new MIMO games over vector Gaussian
Interference Channels (ICs), modeling some distributed resource allocation
problems in SISO and MIMO CR systems. These games are examples of
NEPs that cannot be handled by current methodologies. Numerical results
show the superiority of our approach with respect to plain noncooperative
solutions as well as good performance with respect to centralized
solutions, in spite of very limited signaling among the players.

The paper is organized as follows. Sec. \ref{sub:Some-motivating-examples}
introduces the just mentioned new resource allocations problems. Building
on the connection between VIs and NEPs, Sec. \ref{sec:Nash-Equilibrium-Problems}
focuses on the solution analysis of real convex-player NEPs; special
emphasis is given to some classes of vector functions $\mathbf{F}$
and its properties that play a key role also in the convergence analysis
of distributed algorithms for NEPs. Sec. \ref{sub:Distributed-algorithms-for_NEP}
and Sec. \ref{sec:VI_complex_domain} constitute the core theoretical
part of the paper; in Sec. \ref{sub:Distributed-algorithms-for_NEP}
we provide various distributed algorithms for solving real player-convex
NEPs in several significant settings along with their convergence
properties; Sec. \ref{sec:VI_complex_domain} generalizes the main
results obtained for real convex-player NEPs (VIs) to the complex
case. Sec. \ref{sec:Applications} shows how to apply the developed
machinery to the resource allocation problems introduced in Sec. \ref{sub:Some-motivating-examples},
whereas Sec. \ref{sub:Numerical-Results} provides some numerical
results corroborating our theoretical findings. Finally, Sec. \ref{sec:Conclusions}
draws some conclusions. \vspace{-0.2cm}

\section{Motivating Examples: Noncooperative Games Over Gaussian ICs\label{sub:Some-motivating-examples}}

To motivate and illustrate our new results more in detail, we start
introducing some novel resource allocation problems over SISO frequency-selective
and MIMO Gaussian ICs, widely extending formulations that have already
been studied in the literature. We will show that these problems cannot
be analyzed and solved using current results and algorithms, but call
for a more general theory. 

The IC is suitable to model many practical multiuser systems, such
as digital subscriber lines, wireless ad-hoc and Cognitive Radio (CR)
networks, peer-to-peer systems, multicell OFDM/TDMA cellular systems,
and Femtocell-based networks. We will focus on CR systems; however
the proposed techniques can be readily applied also to the other aforementioned
network models.\vspace{-0.2cm}

\subsection{The SISO case\label{sub:The-SISO-case}}

We consider an $I$-user $N$-parallel Gaussian interference channel,
modeling a CR system composed of $I$ secondary users (SUs) and $P$
primary users (PUs). In this model, there are $I$ transmitter-receiver
pairs$-$the SUs$-$where each transmitter wants to communicate with
its corresponding receiver over a set of $N$ parallel Gaussian subchannels
which may represent time or frequency bins (here we consider transmissions
over the frequency-selective IC without loss of generality). We denote
by $H_{ij}(k)$ the (cross-) channel transfer function over the $k$-th
frequency bin between the secondary transmitter $j$ and the receiver
$i$, while the channel transfer function of secondary link $i$ is
$H_{ii}(k)$. The transmission strategy of each user (pair) $i$ is
the power allocation vector $\mathbf{p}_{i}=\{p_{i}(k)\}_{k=1}^{N}$
over the $N$ subcarriers; the power budget of each transmitter $i$
is $\sum_{k=1}^{N}p_{i}(k)\leq P_{i}$. In a CR system, additional
power constraints limiting the interference radiated by the SUs need
to be imposed. Here we envisage the use of the following general interference
constraints: for each SU $i$,\vspace{-0.2cm} 
\begin{equation}
\sum_{k=1}^{N}\mathbf{w}_{i}(k)\, p_{i}(k)\leq\boldsymbol{\alpha}_{i},\quad i=1,\ldots,I,\vspace{-0.2cm}\label{eq:Interference_Constraint}
\end{equation}
where $\mathbf{w}_{i}(k)\in\mathbb{R}_{+}^{m}$ and $\boldsymbol{\alpha}_{i}\in\mathbb{R}_{+}^{m}$
are nonnegative $m$-length vectors. Note that constraints in the
form of (\ref{eq:Interference_Constraint}) are general enough to
include, as special cases, for example: i) spectral mask constraints
$\mathbf{p}_{i}\leq\mathbf{p}_{i}^{\max}$, where $\mathbf{p}_{i}^{\max}=({p}_{i}^{\max}(k))_{k=1}^{N}$
is the vector of spectral masks over licensed bands; and ii) interference
temperature limit-like constraints ${\sum_{k=1}^{N}}|H_{pi}^{(P,S)}(k)|^{2}p_{i}(k)\leq{\displaystyle I_{pi}}$
for $p=1,\ldots,P,$ where $H_{pi}^{(P,S)}(k)$ is the cross-channel
transfer function over carrier $k$ between the secondary transmitter
$i$ and the primary receiver $p$, and $I_{pi}$ is the maximum level
of interference that SU $i$ is allowed to generate. Let us define
by\vspace{-0.2cm} 
\begin{equation}
\tilde{{\mathcal{P}}}{}_{i}^{\,\texttt{{siso}}}\triangleq\left\{ \mathbf{p}_{i}\in\mathbb{R}^{N}\,:\,{\displaystyle {\sum_{k=1}^{N}}p_{i}(k)\leq P_{i},\quad}\mathbf{0}\leq\mathbf{p}_{i}\leq\mathbf{p}_{i}^{\max}\right\} ,\label{set_P_q}
\end{equation}
the set of power budget constraint of SU $i$ including explicitly
the power budget and spectral mask constraints.

Under basic information theoretical assumptions (see, e.g., \cite{Yu-Ginis-Cioffi_jsac02,Scutari-Palomar-Barbarossa_SP08_PI}),
the maximum achievable rate on link $i$ for a specific power allocation
profile $\mathbf{p}_{1},\ldots,\mathbf{p}_{I}$ is 
\begin{equation}
r_{i}(\mathbf{p}_{i},\mathbf{p}_{-i})=\sum_{k=1}^{N}\log\left(1+\dfrac{|H_{ii}(k)|^{2}p_{i}(k)}{\sigma_{i}^{2}(k)+\sum_{j\neq i}|H_{ij}(k)|^{2}p_{j}(k)}\right)\label{eq:rate_FSIC}
\end{equation}
where $\mathbf{p}_{-i}\triangleq\left(\mathbf{p}_{1},\ldots,\mathbf{p}_{i-1},\mathbf{p}_{i+1},\ldots,\mathbf{p}_{I}\right)$
is the set of all the users power allocation vectors, except the $i$-th
one, and $\sigma_{i}^{2}(k)+\sum_{j\neq i}|H_{ij}(k)|^{2}p_{j}(k)$
is the variance of the noise plus the multiuser interference (MUI)
over subcarrier $k$ measured by the receiver $i$, with $\sigma_{i}^{2}(k)$
denoting the power of the thermal noise (possibly including the interference
generated by the PUs). 

In this setting, the system design is formulated as a NEP: the aim
of each player (link) $i$, given the strategy profile $\mathbf{p}_{-i}$
of the others, is to choose a feasible power allocation $\mathbf{p}_{i}$
that maximizes the rate $r_{i}(\mathbf{p}_{i},\mathbf{p}_{-i})$,
i.e., 
\begin{equation}
\begin{array}{lll}
\underset{\mathbf{p}_{i}}{\textnormal{maximize}} & r_{i}(\mathbf{p}_{i},\mathbf{p}_{-i})\\
\textnormal{subject\,\ to}\\
\quad\begin{array}{l}
\mathbf{(a)}:\vspace{0.3cm}\\
\mathbf{(b)}:\vspace{0.25cm}
\end{array} & \hspace{-0.2cm}\left.\begin{array}{l}
\mathbf{p}_{i}\in\tilde{{\mathcal{P}}}{}_{i}^{\,\texttt{{siso}}},\\
{\displaystyle {\sum_{k=1}^{N}}}\mathbf{w}_{i}(k)\, p_{i}(k)\leq\boldsymbol{\alpha}_{i},
\end{array}\right\} \triangleq\mathcal{P}_{i}^{\,\texttt{{siso}}}
\end{array}\label{eq:max_rate_game}
\end{equation}
for all $i=1,\ldots,I$, where $\tilde{{\mathcal{P}}}{}_{i}^{\,\texttt{{siso}}}$
and $r_{i}(\mathbf{p}_{i},\mathbf{p}_{-i})$ are defined in (\ref{set_P_q})
and (\ref{eq:rate_FSIC}), respectively. We denote the NEP based on
(\ref{eq:max_rate_game}) by $\mathcal{G}_{\texttt{{siso}}}=\left\langle \mathcal{P}{}^{\,\texttt{{siso}}},\,(r_{i})_{i=1}^{I}\right\rangle $
, with $\mathcal{P}{}^{\,\texttt{{siso}}}\triangleq\prod_{i}\mathcal{P}{}_{i}^{\,\texttt{{siso}}}$
and $\mathcal{P}{}_{i}^{\,\texttt{{siso}}}$ being the feasible set
of the optimization problem (\ref{eq:max_rate_game}) of SU $i$.
Note that $\mathcal{G}_{\texttt{{siso}}}$ is an instance of the real
player-convex NEP in (\ref{eq:NEP_intro}). \smallskip{}

\noindent \textbf{Literature review}. Special cases of the NEP in
(\ref{eq:max_rate_game}) have been extensively studied in the literature
in the context of ad-hoc networks, namely when there are\emph{ only
power constraints }(a) \cite{Yu-Ginis-Cioffi_jsac02,Luo-Pang_IWFA-Eurasip,Scutari-Palomar-Barbarossa_SP08_PI,Scutari-Palomar-Barbarossa_AIWFA_IT08,CendrillonHuangChiangMoonen_TSP06}.
In such a simplified setting, given the strategy profile $\mathbf{p}_{-i}$,
the optimization problem of each player reduces to: 
\begin{equation}
\begin{array}{lll}
\underset{\mathbf{p}_{i}}{\textnormal{maximize}} &  & r_{i}(\mathbf{p}_{i},\mathbf{p}_{-i})\\
\textnormal{subject\,\ to} &  & \mathbf{p}_{i}\in\tilde{{\mathcal{P}}}_{i}^{\,\texttt{{siso}}}.
\end{array}\label{eq:game_g_tilde}
\end{equation}
We denote the game resulting from (\ref{eq:game_g_tilde}) by $\tilde{{\mathcal{G}}}_{\texttt{{siso}}}=\left\langle \tilde{{\mathcal{P}}}^{\,\texttt{{siso}}},\,(r_{i})_{i=1}^{I}\right\rangle $,
with $\tilde{{\mathcal{P}}}^{\,\texttt{{siso}}}\triangleq\prod_{i}\tilde{{\mathcal{P}}}_{i}^{\,\texttt{{siso}}}$.
Introducing the matrices $\mathbf{M}\triangleq(\mathbf{M}_{ij})_{i,\, j=1}^{I}\in\mathbb{R}^{N\, I\times N\, I}$
and $\boldsymbol{\Gamma}\in\mathbb{R}^{I\times I}$ defined respectively
as\vspace{-0.2cm}

\begin{equation}
\ensuremath{\mathbf{M}_{ij}\triangleq\text{diag}\left\{ \left(\dfrac{|H_{ij}(k)|^{2}}{|H_{ii}(k)|^{2}}\right)_{k=1}^{N}\right\} }\quad\mbox{and}\quad\left[\boldsymbol{\Gamma}\right]_{ij}\triangleq\left\{ \begin{tabular}{ll}
 \ensuremath{0},  &  if \ensuremath{i=j}; \\
\ensuremath{\max_{k}\dfrac{|H_{ij}(k)|^{2}}{|H_{ii}(k)|^{2}},} &  \ensuremath{\text{otherwise},}
\end{tabular}\right.\label{Upsilon_matrix}
\end{equation}
the state-of-the-art-results on $\tilde{{\mathcal{G}}}_{\texttt{{siso}}}$
can be collected together in the following theorem, where $\rho(\boldsymbol{A})$
denotes the spectral radius of $\mathbf{A}$.\vspace{-0.2cm}

\begin{theorem}\label{Theo_NEP}Given the NEP $\tilde{{\mathcal{G}}}_{\texttt{{siso}}}$
(with no interference constraints), the following hold.\emph{ } \vspace{-0.2cm}
\begin{description}
\item [{{(a)}}] $\tilde{{\mathcal{G}}}_{\texttt{{siso}}}$ has a nonempty
and compact solution set\emph{; } 
\item [{{(b)}}] If $\mathbf{M}\succ\mathbf{0}$ then $\tilde{{\mathcal{G}}}_{\texttt{{siso}}}$
has a unique NE\emph{ \cite{Luo-Pang_IWFA-Eurasip,Scutari-Palomar-Barbarossa_SP08_PI}};
\vspace{-0.1cm}
\item [{{(c)}}] If $\rho(\boldsymbol{\Gamma})<1$, then $\tilde{{\mathcal{G}}}_{\texttt{{siso}}}$
has a unique NE and the asynchronous Iterative Waterfilling Algorithm
(IWFA) based on the waterfilling best-response as proposed in \emph{\cite{Scutari-Palomar-Barbarossa_SP08_PI}}
converges to the equilibrium\emph{.} \vspace{-0.1cm}
\end{description}
\end{theorem}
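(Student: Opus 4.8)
The plan is to work through the Variational Inequality reformulation of $\tilde{\mathcal G}_{\texttt{siso}}$. Each rate $r_i(\cdot,\mathbf p_{-i})$ in (\ref{eq:rate_FSIC}) is concave in $\mathbf p_i$ on the polytope $\tilde{\mathcal P}_i^{\,\texttt{siso}}$ of (\ref{set_P_q}), so $\tilde{\mathcal G}_{\texttt{siso}}$ is a real player-convex NEP and, by the NEP--VI equivalence recalled in the Introduction, its NEs coincide with the solutions of $\mathrm{VI}(\tilde{\mathcal P}^{\,\texttt{siso}},\mathbf F)$, where $\tilde{\mathcal P}^{\,\texttt{siso}}=\prod_i\tilde{\mathcal P}_i^{\,\texttt{siso}}$ and $[\mathbf F(\mathbf p)]_{i,k}= -\,|H_{ii}(k)|^{2} / (\sigma_i^{2}(k)+\sum_{j}|H_{ij}(k)|^{2}p_j(k))$. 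Two structural facts drive everything: (i) $\mathbf F$ is continuous on $\tilde{\mathcal P}^{\,\texttt{siso}}$, since $\sigma_i^{2}(k)>0$ keeps all denominators bounded away from zero, and it decouples across subcarriers; (ii) after reindexing coordinates by subcarrier, the matrix $\mathbf M$ of (\ref{Upsilon_matrix}) becomes block-diagonal with $I\times I$ blocks $\mathbf M(k)=[\,|H_{ij}(k)|^{2}/|H_{ii}(k)|^{2}\,]_{i,j}$ (unit diagonal), so $\mathbf M\succ\mathbf 0$ is equivalent to $\mathbf M(k)\succ\mathbf 0$ for every $k$, while $[\boldsymbol\Gamma]_{ij}=\max_k[\mathbf M(k)]_{ij}$ for $i\ne j$.

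\emph{Part (a).} The set $\tilde{\mathcal P}^{\,\texttt{siso}}$ is a nonempty, compact, convex polytope and $\mathbf F$ is continuous on it, so the classical existence theorem for VIs over compact convex sets (a Brouwer fixed-point argument, see \cite[Vol.~I]{Facchinei-Pang_FVI03}) gives $\mathrm{SOL}(\tilde{\mathcal P}^{\,\texttt{siso}},\mathbf F)\ne\emptyset$, i.e. a NE exists. For compactness of the solution set, write $\mathrm{SOL}(\tilde{\mathcal P}^{\,\texttt{siso}},\mathbf F)=\bigcap_{\mathbf q\in\tilde{\mathcal P}^{\,\texttt{siso}}}\{\mathbf p\in\tilde{\mathcal P}^{\,\texttt{siso}}:\mathbf F(\mathbf p)^{T}(\mathbf q-\mathbf p)\ge 0\}$, an intersection of closed sets (by continuity of $\mathbf F$ and closedness of $\tilde{\mathcal P}^{\,\texttt{siso}}$) that is contained in the bounded set $\tilde{\mathcal P}^{\,\texttt{siso}}$; hence it is compact. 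This settles (a).

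\emph{Part (c).} Here I would prove the stronger fact that, under $\rho(\boldsymbol\Gamma)<1$, the waterfilling best-response map is a block-contraction. For fixed $\mathbf p_{-i}$ the $i$-th problem (\ref{eq:game_g_tilde}) has the unique waterfilling solution $\mathrm{WF}_i(\mathbf p_{-i})$, which equals the Euclidean projection onto $\tilde{\mathcal P}_i^{\,\texttt{siso}}$ of the vector $-\boldsymbol\ell_i$, where $\boldsymbol\ell_i=(\ell_i(k))_k$ and $\ell_i(k)=\sum_{j\ne i}[\mathbf M(k)]_{ij}\,p_j(k)+\sigma_i^{2}(k)/|H_{ii}(k)|^{2}$. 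Combining nonexpansiveness of the projection with $[\mathbf M(k)]_{ij}\le[\boldsymbol\Gamma]_{ij}$ for all $k$ and Minkowski's inequality yields $\|\mathrm{WF}_i(\mathbf p_{-i})-\mathrm{WF}_i(\mathbf p'_{-i})\|\le\sum_{j\ne i}[\boldsymbol\Gamma]_{ij}\,\|\mathbf p_j-\mathbf p'_j\|$ in Euclidean norm; equivalently, the vector of per-block errors of the joint map $\mathbf p\mapsto(\mathrm{WF}_i(\mathbf p_{-i}))_i$ is dominated entrywise by $\boldsymbol\Gamma$ applied to the per-block error vector of its argument. Since $\boldsymbol\Gamma\ge\mathbf 0$ and $\rho(\boldsymbol\Gamma)<1$, Perron--Frobenius gives a positive weight vector $\mathbf w$ and $\beta\in[0,1)$ with $\boldsymbol\Gamma\mathbf w\le\beta\mathbf w$, so the joint map is a contraction in the block-weighted maximum norm with weights $\mathbf w$; Banach's theorem then gives a unique fixed point (hence a unique NE), and the asynchronous convergence theorem for block-contractions (Bertsekas--Tsitsiklis) gives global convergence of the totally asynchronous IWFA of \cite{Scutari-Palomar-Barbarossa_SP08_PI}. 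This proves (c), and in particular uniqueness under $\rho(\boldsymbol\Gamma)<1$.

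\emph{Part (b), and the main obstacle.} The condition $\mathbf M\succ\mathbf 0$ (i.e. $\mathbf M(k)\succ\mathbf 0$ for each $k$) is a \emph{distinct} sufficient condition for uniqueness --- in general neither implied by nor implying $\rho(\boldsymbol\Gamma)<1$ --- and, crucially, it does \emph{not} make $\mathbf F$ monotone: the Jacobian of $\mathbf F$ is block-diagonal across subcarriers with $k$-th block $\mathbf D(k)\,\mathbf M(k)$, where $\mathbf D(k)=\mathrm{diag}(\,|H_{ii}(k)|^{4}/(\sigma_i^{2}(k)+\sum_j|H_{ij}(k)|^{2}p_j(k))^{2}\,)_i\succ\mathbf 0$, and the symmetric part of the product of a positive diagonal matrix and a positive definite matrix need not be positive semidefinite. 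So neither the contraction argument of (c) nor plain monotonicity of $\mathbf F$ applies. The plan is a contradiction argument at the level of the stacked waterfilling optimality conditions: assuming $\mathbf p^{1}\ne\mathbf p^{2}$ are both NEs, write subcarrier by subcarrier the first-order/complementarity conditions (including the budget multipliers $\mu_i$) characterizing $\mathbf p^{r}_i$ as the waterfilling best response at $\mathbf p^{r}_{-i}$, $r=1,2$, subtract them, pair the carrier-$k$ difference relations with $\mathbf p^{1}(k)-\mathbf p^{2}(k)$, sum over players, and use $\mathbf M(k)\succ\mathbf 0$ to force $\mathbf p^{1}(k)=\mathbf p^{2}(k)$ for each $k$, a contradiction; equivalently one recognizes the stacked waterfilling system as a mixed complementarity / affine-VI problem whose governing matrix is, up to a positive diagonal scaling, $\mathbf M$, so $\mathbf M\succ\mathbf 0$ renders it a $P$-matrix and forces uniqueness --- which is the route of \cite{Luo-Pang_IWFA-Eurasip,Scutari-Palomar-Barbarossa_SP08_PI}. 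I expect this to be the hard part: unlike (a) and (c) it is not a soft topological or contraction argument but genuinely exploits the piecewise-linear waterfilling structure, and keeping the quadratic form built from $\mathbf M(k)$ sign-definite while bookkeeping the active/inactive-subcarrier cases and the water levels $\mu_i$ is where the real work lies.
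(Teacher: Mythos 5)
The paper does not actually prove Theorem \ref{Theo_NEP}: it is presented explicitly as a collection of known results, with parts (b) and (c) attributed to \cite{Luo-Pang_IWFA-Eurasip,Scutari-Palomar-Barbarossa_SP08_PI}, so there is no in-paper proof to compare against line by line. Judged on its own merits, your treatment of (a) and (c) is correct and is the standard route (and the one the paper itself later generalizes: (a) is the bounded-set case of Theorem \ref{Theo_existence_uniquenessNE}(a), and your block-contraction argument for (c) --- waterfilling as a Euclidean projection, nonexpansiveness, Minkowski, then a weighted max-norm contraction from $\rho(\boldsymbol{\Gamma})<1$ and the Bertsekas--Tsitsiklis asynchronous convergence theorem --- is exactly the mechanism reproduced in Appendix C via Proposition \ref{Proposition_contraction_best-response} and the equivalence (\ref{eq:equivalence_P-property_contraction})).

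Part (b), however, is a plan rather than a proof, and the one place it could go wrong is the phrase ``whose governing matrix is, up to a positive diagonal scaling, $\mathbf{M}$, so $\mathbf{M}\succ\mathbf{0}$ renders it a P-matrix.'' You correctly observe that the direct monotonicity route fails because the Jacobian blocks are $\mathbf{D}(k)\mathbf{M}(k)$ and the symmetric part of a positive-diagonal-times-positive-definite product can be indefinite; but the same objection applies verbatim to any argument that pairs the differenced stationarity conditions with $\mathbf{p}^1(k)-\mathbf{p}^2(k)$ and hopes to invoke $\mathbf{M}(k)\succ\mathbf{0}$, because the quadratic form you obtain that way is built on $\mathbf{D}(k)\mathbf{M}(k)$, not on $\mathbf{M}(k)$. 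The way out --- and the actual content of the Luo--Pang analysis you defer to --- is to rewrite the simultaneous waterfilling conditions $p_i(k)=[\mu_i-\ell_i(k)]_0^{p_i^{\max}(k)}$ as a \emph{mixed linear} complementarity system whose matrix has unit diagonal and off-diagonal entries exactly $[\mathbf{M}(k)]_{ij}$ (no diagonal prefactor), augmented by the skew-symmetric rows coming from the budget multipliers $\mu_i$; positive definiteness of $\mathbf{M}$ then makes that affine VI strictly monotone in the $\mathbf{p}$-block and uniqueness follows. Alternatively, if you insist on keeping the diagonal scaling, you must switch from positive definiteness to the P-property (which, unlike positive definiteness of the symmetric part, is preserved under left multiplication by a positive diagonal matrix) and then handle the mixed-LCP structure, which is more delicate than the standard P-matrix LCP uniqueness theorem. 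As written, (b) is not self-contained, and the ``equivalently'' connecting your two descriptions of the argument papers over precisely the distinction (positive definiteness vs.\ P-property under diagonal scaling) that makes the proof work or fail.
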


Theorem \ref{Theo_NEP} provides a satisfactory characterization of
the NEP $\tilde{{\mathcal{G}}}_{\texttt{{siso}}}$ under $\rho(\boldsymbol{\Gamma})<1$
(or $\mathbf{M}$ positive definite). 
However, condition $\rho(\boldsymbol{\Gamma})<1$ may be too restrictive
in practice; indeed there are channel scenarios resulting in games
$\tilde{{\mathcal{G}}}_{\texttt{{siso}}}$ having multiple Nash equilibria,
resulting thus in $\rho(\boldsymbol{\Gamma})>1$. In such cases, the
IWFA is no longer guaranteed to converge and there are no algorithms
available in the literature solving the game $\tilde{{\mathcal{G}}}_{\texttt{{siso}}}$.
Moreover, the results in Theorem \ref{Theo_NEP} as well as the mathematical
tools used in \cite{Yu-Ginis-Cioffi_jsac02,Luo-Pang_IWFA-Eurasip,Scutari-Palomar-Barbarossa_SP08_PI,Scutari-Palomar-Barbarossa_AIWFA_IT08,CendrillonHuangChiangMoonen_TSP06}
to study $\tilde{{\mathcal{G}}}_{\texttt{{siso}}}$ cannot be applied
to the more general $\mathcal{G}_{\texttt{{siso}}}$, even in the
case of unique NE. The theoretical analysis of $\mathcal{G}_{\texttt{{siso}}}$
is then an open problem, which will be addressed in Sec. \ref{sec:Applications},
based on the general framework that we introduce in the forthcoming
sections.\vspace{-0.2cm}

\subsection{The MIMO case\label{sub:The-MIMO-case}}

In a MIMO setting, the secondary transceivers are equipped with multiple
antennas and are allowed to transmit over a multidimensional space,
whose coordinates may represent time slots, frequency bins, or angles.
In this setting, we envisage the use of the following very general
interference constraints:

\vspace{-0.2cm}

\begin{description}
\item [{\emph{-}}] \emph{Null constraints}: 
\[
\mathbf{U}_{i}^{H}\mathbf{Q}{}_{i}=\mathbf{0},
\]
where $\mathbf{Q}_{i}\in\mathbf{\mathbb{C}}^{n_{T_{i}}\times n_{T_{i}}}$
is the transmit covariance matrix of SU $i$ with $n_{T_{i}}$ being
the number of transmit antennas and $\mathbf{U}_{i}\in\mathbf{\mathbb{C}}^{n_{T_{i}}\times r_{U_{i}}}$
is a tall matrix whose columns represent the ``directions'' along
with user $i$ is not allowed to transmit. We assume, without loss
of generality (w.l.o.g.) that each matrix $\mathbf{U}_{i}$ is full-column
rank and, to avoid the trivial solution $\mathbf{Q}_{i}=\mathbf{0}$,
$r_{U_{i}}<n_{T_{i}}$. 
\item [{-}] \emph{Soft and peak power shaping constraints}: 
\[
\text{{tr}}\left(\mathbf{G}_{pi}^{H}\mathbf{Q}{}_{i}\mathbf{G}_{pi}\right)\leq I_{pi}^{\limfunc{ave}}\quad\mbox{and}\quad\lambda_{\max}\left(\mathbf{F}_{pi}^{H}\mathbf{Q}{}_{i}\mathbf{F}_{pi}\right)\leq I_{pi}^{\limfunc{peak}},\quad p=1,2,\ldots,
\]
which represent a relaxed version of the null constraints by limiting
the total average and peak average power radiated along the range
space of matrices $\mathbf{G}_{pi}\in\mathbf{\mathbb{C}}^{n_{T_{i}}\times n_{G_{p}}}$
and $\mathbf{F}_{pi}\in\mathbf{\mathbb{C}}^{n_{T_{i}}\times n_{F_{p}}}$,
where $I_{pi}^{\limfunc{ave}}$ and $I_{pi}^{\limfunc{peak}}$ are
the maximum average and average peak power respectively that can be
transmitted along the directions spanned by $\mathbf{G}_{pi}$ and
$\mathbf{F}_{pi}$. \vspace{-0.2cm}
\end{description}
\noindent \indent Null constraints are enforced to prevent SUs from
transmitting over prescribed subspaces (the range space of $\mathbf{U}_{i}$),
which for example can identify portion of licensed spectrum, time
slots used by the PUs, and/or angular directions identifying the primary
receivers as observed from the secondary transmitters. Soft shaping
constraints can be used instead to control the (average and peak average)
power radiated by the SUs along prescribed time/frequency/angular
``directions'' (those spanned by the columns of matrices $\mathbf{G}_{pi}$
and $\mathbf{F}_{pi}$); for instance, classical power constraints,
such as per-antenna power constraints $[\mathbf{Q}_{i}]_{kk}\leq\beta_{ik}$
with $k=1,,\ldots n_{T_{i}}$, or power budget constraints $\text{{tr}}$$(\mathbf{Q}_{i})\leq P_{i}$
are example of soft-shaping constraints.

\noindent \indent Under basic information theoretical assumptions
(see, e.g., \cite{Scutari-Palomar-Barbarossa_SP08_IWFA}), the maximum
information rate on secondary link $i$ for a given set of user covariance
matrices $\mathbf{Q}_{1},\ldots,\mathbf{Q}_{I}$, is 
\begin{equation}
R_{i}(\mathbf{Q}_{i},\mathbf{Q}_{-i})=\log\det\left(\mathbf{I}+\mathbf{H}_{ii}^{H}\mathbf{R}_{\mathbf{-}i}(\mathbf{Q}_{-i})^{-1}\mathbf{H}_{ii}\mathbf{Q}_{i}\right)\label{R_qq}
\end{equation}
where $\mathbf{R}_{-i}(\mathbf{Q}_{-i})\triangleq\mathbf{R}_{n_{i}}+\sum\limits _{j\neq i}\mathbf{H}_{ij}\mathbf{Q}_{j}\mathbf{H}_{ij}^{H}$
is the covariance matrix of the noise plus MUI, with $\mathbf{R}_{n_{i}}\in\mathbb{C}^{n_{R_{i}}\times n_{R_{i}}}$
denoting the covariance matrix of the thermal Gaussian zero mean noise
(possibly including the interference generated by the PUs), and assumed
to be positive definite; $\mathbf{Q}_{-i}\triangleq\left(\mathbf{Q}_{j}\right)_{j\neq i}$
is the set of all the users covariance matrices, except the $i$-th
one; $\mathbf{H}_{ii}\mathbf{\in}$ $\mathbf{\mathbb{C}}^{n_{R_{i}}\times n_{T_{i}}}$
is the channel matrix between the $i$-th secondary transmitter and
the intended receiver, whereas $\mathbf{H}_{ij}\mathbf{\in}$ $\mathbf{\mathbb{C}}^{n_{R_{i}}\times n_{T_{j}}}$
is the cross-channel matrix between secondary source $j$ and destination
$i$. Within the above setup, the game theoretical formulation is:
for each SU $i=1,\ldots,I$, 
\begin{equation}
\begin{array}{ll}
\underset{\mathbf{Q}_{i}\succeq\mathbf{0}}{\textnormal{maximize}} & R_{i}(\mathbf{Q}_{i},\mathbf{Q}_{-i})\\
\textnormal{subject\,\ to}\\
\begin{array}{c}
\qquad\mathbf{(a)}:\\
\qquad\mathbf{(b)}:\\
\qquad\mathbf{(c)}:\\
\qquad\mathbf{(d)}:
\end{array} & \hspace{-0.2cm}\left.\begin{array}{l}
\text{{tr}}\left(\mathbf{Q}{}_{i}\right)\leq P_{i},\\
\mathbf{U}_{i}^{H}\mathbf{Q}{}_{i}=\mathbf{0},\\
\text{{tr}}\left(\mathbf{G}_{pi}^{H}\mathbf{Q}{}_{i}\mathbf{G}_{pi}\right)\leq I_{pi}^{\limfunc{ave}},\quad\lambda_{\max}\left(\mathbf{F}_{pi}^{H}\mathbf{Q}{}_{i}\mathbf{F}_{pi}\right)\leq I_{pi}^{\limfunc{peak}},\quad\quad p=1,2,\ldots,\\
\mathbf{Q}_{i}\in\mathcal{Q}_{i},
\end{array}\right\} \triangleq\mathcal{P}_{i}^{\text{\,\texttt{{mimo}}}}
\end{array}\label{eq:MIMO_Game}
\end{equation}
where $\mathcal{Q}_{i}\subseteq\mathbb{C}^{n_{T_{i}}\times n_{T_{i}}}$
is an abstract set that can accommodate (possibly) additional constraints
on the covariance matrix $\mathbf{Q}_{i}$, on top of the power and
interference constraints; we only make the (blanket) assumption that
each $\mathcal{Q}_{i}$ is closed and convex. We refer to the NEP
based on (\ref{eq:MIMO_Game}) as $\mathcal{G}_{\texttt{{mimo}}}=\left\langle \mathcal{P}^{\texttt{{mimo}}},\,(R_{i})_{i=1}^{I}\right\rangle $,
with $ $${\mathcal{P}}^{\texttt{{mimo}}}\triangleq\prod_{i}{\mathcal{P}}_{i}^{\texttt{{mimo}}}$
and $\mathcal{P}_{i}^{\texttt{{mimo}}}$ defined in (\ref{eq:MIMO_Game}).
Note that $\mathcal{G}_{\texttt{{mimo}}}$ is an instance of the complex
NEP (\ref{eq:NEP_intro}).

\noindent \textbf{Literature review}. The design of MIMO CR systems
under different interference-power/interference-temperature constraints
has been addressed in a number of papers. Distributed algorithms (mostly)
for ad-hoc networks based on game theoretical formulations have been
proposed in \cite{Arslan-Fatih-Demirkol-Song_WC07,Scutari-Palomar-Barbarossa_JSAC08,Larsson-Jorswieck_SP08,Scutari-Palomar-Barbarossa_SP08_IWFA,Scutari-Palomar_SP09_CR_GTMIMO};
the state-of-the-art result is the asynchronous MIMO IWFA solving
the NEP in (\ref{eq:MIMO_Game}), in the presence of constraints (a)
\cite{Scutari-Palomar-Barbarossa_SP08_IWFA} and (b) \cite{Scutari-Palomar_SP09_CR_GTMIMO}
only. Results in these papers are strongly based on the specific structure
of the optimization problem and the resulting solution$-$the MIMO
waterfilling-like expression$-$and thus are not applicable to the
general NEP (\ref{eq:MIMO_Game}). $\mathcal{G}_{\texttt{{mimo}}}$
is thus an other example of a novel game whose solution analysis requires
new mathematical tools, which is the goal of this paper. The study
of $\mathcal{G}_{\texttt{{mimo}}}$ is addressed in Sec. \ref{sub:The-MIMO-case_revised}
and will result as a direct application of the framework developed
in the forthcoming sections for complex NEPs.

\section{Nash Equilibrium Problems\label{sec:Nash-Equilibrium-Problems}}

In a standard real NEP there are $I$ players each controlling a variable
$\mathbf{x}_{i}\in\mathbb{R}^{n_{i}}$ that must belong to the player's
feasible set $\mathcal{Q}_{i}$, which is assumed to be closed and
convex: $\mathbf{x}_{i}\in\mathcal{Q}_{i}$. In what follows we denote
by $\mathbf{x}\triangleq(\mathbf{x}_{1},\ldots,\mathbf{x}_{I})$,
the vector of all players' variables, while $\mathbf{x}_{-i}\triangleq\left(\mathbf{x}_{1},\ldots,\mathbf{x}_{i-1},\mathbf{x}_{i+1},\ldots,\right.$
$\left.\mathbf{x}_{I}\right)$ denote the vector of all players' strategies
variables except that of player $i$. The aim of player $i$, given
the other players' strategies $\mathbf{x}_{-i}$, is to choose an
$\mathbf{x}_{i}\in\mathcal{Q}_{i}$ that minimizes his cost function
$f_{i}(\mathbf{x}_{i},\mathbf{x}_{-i})$, i.e., 
\begin{equation}
\begin{array}{ll}
\limfunc{minimize}\limits _{\mathbf{x}_{i}} & f_{i}(\mathbf{x}_{i},\,\mathbf{x}_{-i})\\[5pt]
\text{subject to} & \mathbf{x}_{i}\in{\cal Q}_{i}.
\end{array}\label{eq:G_def}
\end{equation}
Note that the players' optimization problem are \emph{coupled} since
the players' objective function (may) depend on the other players'
choices. Define the joint strategy set of the NEP by $\mathcal{Q}=\prod_{i=1}^{I}\mathcal{Q}_{i}$,
whereas $\mathcal{Q}_{-i}\triangleq\prod_{j\neq i}\mathcal{Q}_{j}$,
and set $\mathbf{f\triangleq}(f_{i})_{i=1}^{I}$. The NEP is formally
defined by the tuple $\mathcal{G}=\left\langle \mathcal{Q},\mathbf{f}\right\rangle $.
A solution of the NEP is the well-known Nash Equilibrium (NE), which
is formally defined in (\ref{eq:NE}).

We recall that a solution of (\ref{eq:G_def}), given $\mathbf{x}_{-i}$,
is also called best-response of user $i$. A useful way to see a NE
is as a fixed-point of the best-response mapping for each player;
this suggests the use of (iterative) best-response-based algorithms
to solve the game. Given the limitations of classical fixed-point
results in the study of convergence of best-response based algorithms
(cf. Sec. \ref{sec:Introduction-and-Motivation}), we address this
issue by reducing the NEP to a VI problem. The main advantage of this
reformulation is algorithmic, since once it has been carried out,
we can build on the well-developed VI theory \cite{Facchinei-Pang_FVI03}
in order to design new solution methods for NEPs. In the rest of this
paper, we freely use some basic results from VI theory. Since this
theory is not widely known in the information theory, communications,
and signal processing communities, for the reader convenience we summarize
the VI results used in this paper in Appendix  \ref{sec:A-Theory-of_partitioned_VI}.\vspace{-0.2cm}

\subsection{Connection to variational inequalities}

At the basis of the VI approach to NEPs there is an easy equivalence
between a real NEP and a suitably defined partitioned VI. This equivalence
follows readily from the minimum principle for convex problems and
the Cartesian structure of the joint strategy set $\mathcal{Q}$ \cite[Prop. 1.4.2]{Facchinei-Pang_FVI03}.\vspace{-0.2cm}

\begin{proposition}\label{VI_reformulation} Given the real NEP $\ensuremath{\mathcal{G}=\left\langle \mathcal{Q},\mathbf{f}\right\rangle }$,
suppose that for each player $\ensuremath{i}$ the following hold:\vspace{-0.3cm}

\begin{description}
\item [{{i)}}] the (nonempty) strategy set $\ensuremath{\mathcal{Q}_{i}}$
is closed and convex;\vspace{-0.3cm}

\item [{{ii)}}] the payoff function $\ensuremath{f_{i}(\mathbf{x}_{i},\mathbf{x}_{-i})}$
is convex and continuously differentiable in $\ensuremath{\mathbf{x}_{i}}$
for every fixed $\mathbf{x}_{-i}$. \vspace{-0.3cm}

\end{description}
Then, the game $\ensuremath{\mathcal{G}}$ is equivalent to the $\ensuremath{\limfunc{VI}(\mathcal{Q},\mathbf{F})}$,
where $\ensuremath{\mathbf{F}(\mathbf{x})\triangleq\left(\nabla_{\mathbf{x}_{i}}f_{i}(\mathbf{x})\right)_{i=1}^{I}}$.\end{proposition}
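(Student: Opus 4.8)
The plan is to establish the equivalence \emph{pointwise}: for an arbitrary feasible profile $\mathbf{x}^\star\in\mathcal{Q}$, I would show that ``$\mathbf{x}^\star$ is a NE of $\mathcal{G}$'' and ``$\mathbf{x}^\star$ solves $\limfunc{VI}(\mathcal{Q},\mathbf{F})$'' are each equivalent to one and the same intermediate family of $I$ block variational conditions, one per player; chaining the two equivalences then yields the claim.

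First I would unwind the definition of NE in (\ref{eq:NE}): $\mathbf{x}^\star\in\mathcal{Q}$ is a NE exactly when, for every player $i$, the block $\mathbf{x}_i^\star$ globally minimizes $f_i(\cdot,\mathbf{x}_{-i}^\star)$ over $\mathcal{Q}_i$. Under conditions i) and ii) this is the minimization of a convex, continuously differentiable function over a nonempty closed convex set, so the minimum principle for convex problems \cite[Prop. 1.4.2]{Facchinei-Pang_FVI03} applies and gives that $\mathbf{x}_i^\star$ is such a minimizer if and only if
\[
(\mathbf{x}_i-\mathbf{x}_i^\star)^T\,\nabla_{\mathbf{x}_i}f_i(\mathbf{x}_i^\star,\mathbf{x}_{-i}^\star)\;\geq\;0\qquad\forall\,\mathbf{x}_i\in\mathcal{Q}_i .
\]
Hence $\mathbf{x}^\star$ is a NE iff this inequality holds for all $i$ simultaneously --- the intermediate family. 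Here convexity in $\mathbf{x}_i$ is precisely what upgrades the first-order condition from necessary to sufficient, and continuous differentiability is what lets us phrase it through the gradient $\nabla_{\mathbf{x}_i}f_i$ that enters the definition of $\mathbf{F}$.

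It remains to check that this family is equivalent to the single partitioned inequality $(\mathbf{x}-\mathbf{x}^\star)^T\mathbf{F}(\mathbf{x}^\star)\geq 0$ for all $\mathbf{x}\in\mathcal{Q}$, and here the Cartesian structure $\mathcal{Q}=\prod_{i=1}^I\mathcal{Q}_i$ does the work. For the direction ``VI $\Rightarrow$ family'', fix $i$ and $\mathbf{x}_i\in\mathcal{Q}_i$ and test the VI inequality at the profile equal to $\mathbf{x}_i$ in block $i$ and to $\mathbf{x}_j^\star$ in every block $j\neq i$; this profile lies in $\mathcal{Q}$ by the product structure, all cross-blocks cancel, and the $i$-th block inequality drops out. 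For the converse, note that $(\mathbf{x}-\mathbf{x}^\star)^T\mathbf{F}(\mathbf{x}^\star)=\sum_{i=1}^I(\mathbf{x}_i-\mathbf{x}_i^\star)^T\nabla_{\mathbf{x}_i}f_i(\mathbf{x}^\star)$ for every $\mathbf{x}\in\mathcal{Q}$, and each summand is nonnegative by the $i$-th member of the family (with $\mathbf{x}_i\in\mathcal{Q}_i$ since $\mathbf{x}\in\mathcal{Q}$); summing gives the VI inequality. This closes the chain NE $\iff$ family $\iff$ $\limfunc{VI}(\mathcal{Q},\mathbf{F})$. There is essentially no obstacle in this argument; the only point deserving care is the block-by-block bookkeeping in this last step, which relies entirely on $\mathcal{Q}$ being a product set --- exactly the feature that fails for coupled feasible sets and forces the quasi-variational formulation in the generalized NEP.
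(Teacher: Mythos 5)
Your proof is correct and is exactly the argument the paper intends: the paper does not write out a proof but simply notes that the equivalence ``follows readily from the minimum principle for convex problems and the Cartesian structure of the joint strategy set $\mathcal{Q}$'' (citing \cite[Prop. 1.4.2]{Facchinei-Pang_FVI03}), and your chain NE $\Leftrightarrow$ per-player minimum principle $\Leftrightarrow$ partitioned VI is precisely the standard expansion of that remark.
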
\vspace{-0.2cm}

\noindent \indent In the sequel we refer to the VI$(\mathcal{Q},\mathbf{F})$
defined in previous proposition as the VI associated to the NEP $\ensuremath{\mathcal{G}}$.
It is possible to relax the assumptions in Proposition \ref{VI_reformulation}
and still get useful connections between games and VIs \cite{Facchinei_Pang_VI-NE_bookCh_09};
but since our aims are mainly computational, we do not pursue this
topic further. 
Indeed, throughout the paper, we will make the following blanket convexity/smoothness
assumptions, unless stated otherwise. \medskip{}

\noindent \textbf{Assumption 1.} For each $i=1,\ldots,I$, the set
$\mathcal{Q}_{i}$ is a nonempty, closed, and convex subset of $\mathbb{R}^{n_{i}}$
and the function $\ensuremath{f_{i}(\mathbf{x}_{i},\mathbf{x}_{-i})}$
is continuously differentiable on $\mathcal{Q}=\prod_{i}\mathcal{Q}_{i}$
and convex in $\ensuremath{\mathbf{x}_{i}}$ for every fixed $\mathbf{x}_{-i}\in\mathcal{Q}_{-i}$.

\medskip{}

\noindent \textbf{Assumption 2. }For each $i=1,\ldots,I$, each function
$\ensuremath{f_{i}(\mathbf{x})}$ is twice continuously differentiable
with bounded derivatives on $\mathcal{Q}=\prod_{i}\mathcal{Q}_{i}$.

\subsection{Existence and uniqueness of a NE\label{sub:Existence-and-uniqueness_NE}}

Building on the VI reformulation in the previous section and the existence/uniqueness
results for VIs (see Theorem \ref{Theo_existence_uniqueness} in Appendix
\ref{sec:A-Theory-of_partitioned_VI}), we can easily state the following
theorem that needs no further proof.

\begin{theorem}\label{Theo_existence_uniquenessNE} Given the real
NEP $\mathcal{G}=\left\langle \mathcal{Q},\mathbf{f}\right\rangle $,
suppose that $\mathcal{G}$ satisfies Assumption 1 and let $\ensuremath{\mathbf{F}(\mathbf{x})\triangleq\left(\nabla_{\mathbf{x}_{i}}f_{i}(\mathbf{x})\right)_{i=1}^{I}}$.
Then, the following statements hold:

\vspace{-0.3cm}

\begin{description}
\item [{{(a)}}] Suppose that for every $i$ the strategy set $\ensuremath{\mathcal{Q}_{i}}$
is bounded. Then the NEP has a nonempty and compact solution set;\vspace{-0.3cm}

\item [{{(b)}}] Suppose that $\ensuremath{\mathbf{F}}$ is a monotone
function on $\mathcal{Q}$. Then the NEP has a convex (possibly empty)
solution set;\vspace{-0.3cm}
\item [{{(c)}}] Suppose that $\ensuremath{\mathbf{F}}$ is a P (or strictly
monotone) function on $\mathcal{Q}$. Then the NEP has at most one
solution;\vspace{-0.3cm}

\item [{{(d)}}] Suppose that $\ensuremath{\mathbf{F}}$ is a uniformly-P
(or strongly monotone) function on $\mathcal{Q}$. Then the NEP has
a unique solution. 
\end{description}
\end{theorem}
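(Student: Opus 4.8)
The plan is to reduce the statement to the associated variational inequality and then quote the corresponding VI facts. Under Assumption~1, hypotheses i)--ii) of Proposition~\ref{VI_reformulation} hold for every player, so the game $\mathcal{G}$ is equivalent to $\limfunc{VI}(\mathcal{Q},\mathbf{F})$ with $\mathbf{F}(\mathbf{x})\triangleq\left(\nabla_{\mathbf{x}_i}f_i(\mathbf{x})\right)_{i=1}^I$; in particular, the set of Nash equilibria of $\mathcal{G}$ coincides with $\sol(\mathcal{Q},\mathbf{F})$. Moreover, since each $f_i$ is continuously differentiable on $\mathcal{Q}$, the map $\mathbf{F}$ is continuous, and $\mathcal{Q}=\prod_{i=1}^{I}\mathcal{Q}_i$ is nonempty, closed, and convex. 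Thus each of (a)--(d) is simply the translation, via Proposition~\ref{VI_reformulation}, of a standard existence/uniqueness result for VIs collected in Theorem~\ref{Theo_existence_uniqueness} of Appendix~\ref{sec:A-Theory-of_partitioned_VI}.

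Concretely: for (a), boundedness of each $\mathcal{Q}_i$ makes the finite Cartesian product $\mathcal{Q}$ compact and convex, so the continuous $\limfunc{VI}(\mathcal{Q},\mathbf{F})$ has a nonempty, compact solution set; transferring this through the reformulation yields a nonempty, compact NE set. For (b), monotonicity of $\mathbf{F}$ on $\mathcal{Q}$ implies that $\sol(\mathcal{Q},\mathbf{F})$ is convex whenever nonempty (it may be empty, since $\mathcal{Q}$ need not be bounded and $\mathbf{F}$ need not be coercive), hence the NE set is convex and possibly empty. For (c), a P map on $\mathcal{Q}$ (the blockwise analogue of strict monotonicity, and implied by it) forces $\limfunc{VI}(\mathcal{Q},\mathbf{F})$ to have at most one solution, hence the NEP has at most one NE. For (d), a uniformly-P map on $\mathcal{Q}$ (in particular a strongly monotone one) is coercive on $\mathcal{Q}$, so a solution exists even when $\mathcal{Q}$ is unbounded, and it is unique by the argument used for (c); hence the NEP has a unique NE.

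Since each implication is a direct citation of a VI result together with the equivalence of Proposition~\ref{VI_reformulation}, there is no real obstacle beyond bookkeeping. The only two points deserving a word of care are (i) the passage from "each $\mathcal{Q}_i$ bounded'' to "$\mathcal{Q}$ compact,'' which is immediate for a finite product, and (ii) making sure the (uniformly-)P hypotheses are understood \emph{blockwise} with respect to the partition $n=\sum_i n_i$ of the partitioned map $\mathbf{F}=(\nabla_{\mathbf{x}_i}f_i)_{i=1}^I$, exactly as in Theorem~\ref{Theo_existence_uniqueness}, so that the quoted VI conclusions apply verbatim; strict and strong monotonicity being the scalar-coupling special cases. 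With these conventions, (a)--(d) follow at once, which is why the statement needs no further proof.
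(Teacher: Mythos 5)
Your proposal is correct and follows exactly the route the paper intends: the theorem is stated there as needing ``no further proof'' precisely because it is the translation, via the equivalence of Proposition~\ref{VI_reformulation}, of the four items of Theorem~\ref{Theo_existence_uniqueness} in Appendix~\ref{sec:A-Theory-of_partitioned_VI}. Your additional remarks on compactness of the finite product and on reading the (uniformly-)P hypotheses blockwise are accurate and consistent with the paper's conventions.
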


\noindent \indent The above theorem and many of the algorithmic developments
to follow hinge critically on the monotonicity or P properties of
the function $\mathbf{F}$. However, checking such properties by using
directly the definition (see Def. \ref{Def_monotonicity} in Appendix
\ref{sec:A-Theory-of_partitioned_VI}) is in general not possible.
It is then useful to derive more practical conditions to establish
whether the aforementioned properties hold. It is well known that
when $\mathcal{Q}$ is an \emph{open} set and $\mathbf{F}$ is continuously
differentiable on $\mathcal{Q}$, with Jacobian matrix denoted by
$\mbox{\textbf{J}}\mathbf{F}$, it holds that \cite[Prop. 2.3.2]{Facchinei-Pang_FVI03}:%
\footnote{Conditions in (\ref{monotonicity-convexity-connection}) can be generalized
also to the case in which $\mathcal{Q}$ is closed; this will be done
in Sec. \ref{sec:VI_complex_domain}, where we introduce the VI problem
in the complex domain; see Proposition \ref{VI_monotonicity_closed_sets}.%
}\vspace{-0.3cm}

\begin{equation}
\begin{array}{lll}
\mathbf{F}(\mathbf{x})\mbox{ is monotone on \ensuremath{\mathcal{Q}}} & \quad\Leftrightarrow\quad\  & \mathbf{JF(x)}\succeq\mathbf{0},\,\,\forall\mathbf{x}\in\mathcal{Q};\\
\mathbf{F}(\mathbf{x})\mbox{ is strictly monotone on \ensuremath{\mathcal{Q}}} & \quad\Leftarrow\quad\  & \mathbf{JF(x)}\succ\mathbf{0},\,\,\forall\mathbf{x}\in\mathcal{Q};\\
\mathbf{F}(\mathbf{x})\mbox{ is strongly monotone on \ensuremath{\mathcal{Q}}} & \quad\Leftrightarrow\quad\  & \mathbf{JF}-c_{\text{{sm}}}\,\mathbf{I}\succeq\mathbf{0},\,\,\forall\mathbf{x}\in\mathcal{Q};
\end{array}\label{monotonicity-convexity-connection}
\end{equation}
where $\mathbf{A}\succeq\mathbf{0}$ ($\mathbf{A}\succ\mathbf{0}$)
means that $\mathbf{A}$ is a positive semidefinite (definite) matrix.
The verification of these kind of conditions is often difficult and,
furthermore, in many practical instances their verification cannot
easily be linked to physical characteristics of the systems being
studied. Therefore, our aim in the remaining part of this subsection
is developing some (conceptually) simpler and new conditions that
permit to deduce the desired $\mathbf{F}$ properties and that, at
least in some instances, can give some further insight into the problem
at hand. The conditions we introduce here capture some kind of ``diagonal
dominance'' property of $\mathbf{JF}$, and will play a key role
in the convergence theory of the algorithms introduced in Sec. \ref{sub:Distributed-algorithms-for_NEP}.

Let us define the matrix $\mathbf{JF}_{\text{{low}}}$ having the
same dimension as $\mathbf{JF}(\mathbf{x})$: 
\begin{equation}
\left[\mathbf{JF}_{\text{{low}}}\right]_{rs}\,\triangleq\,\left\{ \begin{array}{lcl}
\underset{\mathbf{x}\in\mathcal{Q}}{\inf}\left[\mathbf{B}^{T}\,\mathbf{JF}(\mathbf{x})\,\mathbf{B}\right]_{rr}, &  & \mbox{if }r=s,\\
-\,\underset{\mathbf{x}\in\mathcal{Q}}{\sup}\left|\left[\mathbf{B}^{T}\,\mathbf{JF}(\mathbf{x})\,\mathbf{B}\right]_{rs}\right|, &  & \mbox{otherwise,}
\end{array}\right.\label{eq:def_lower_of_comparison_of_Jacobian}
\end{equation}
where $\mathbf{B}\in\mathbb{R}^{n\times n}$ is an arbitrary nonsingular
matrix. A case that is relevant in the analysis of NEPs is that of
partitioned VIs. This corresponds to the set ${\cal Q}$ being a Cartesian
product of lower-dimensional sets: $\mathcal{Q}\triangleq\prod_{i=1}^{I}\mathcal{Q}_{i}$,
with each $\mathcal{Q}_{i}\subseteq\mathbb{R}^{n_{i}}$ being nonempty,
closed, and convex and with $n\triangleq\sum_{i=1}^{I}n_{i}$. When
this structure arises it will be quite natural to partition both $\mathbf{F}$
and $\mathbf{x}$ accordingly and therefore write $\mathbf{F}(\mathbf{x})=(\mathbf{F}_{i}(\mathbf{x}))_{i=1}^{I}$
and $\mathbf{x}=(\mathbf{x})_{i=1}^{I}$, where $\mathbf{F}_{i}:\mathcal{Q}\to\mathbb{R}^{n_{i}}$
is the $i$th-component block function of $\mathbf{F}$ and $\mathbf{x}_{i}\in\mathbb{R}^{n_{i}}$
is the $i$th-component block of $\mathbf{x}$. In the case of partitioned
VIs, let us  introduce the ``condensed'' $I\times I$ real matrices
$\mathbf{\boldsymbol{\Upsilon}}_{\mathbf{F}}$ and $\boldsymbol{\Gamma}_{\mathbf{F}}$:

\begin{equation}
\left[\mathbf{\boldsymbol{\Upsilon}}_{\mathbf{F}}\right]_{ij}\,\triangleq\,\left\{ \begin{array}{lcl}
\alpha_{i}^{\min}, &  & \mbox{if }i=j,\\
-\beta_{ij}^{\max}, &  & \mbox{otherwise},
\end{array}\right.\quad\mbox{and}\quad\left[\mathbf{\boldsymbol{\Gamma}_{F}}\right]_{ij}\,\triangleq\,\left\{ \begin{array}{lcl}
0, &  & \mbox{if }i=j,\\
{\beta_{ij}^{\max}}/\alpha_{i}^{\min}, &  & \mbox{otherwise,}
\end{array}\right.\label{eq:Upsilon_matrix}
\end{equation}
with
\begin{equation}
\alpha_{i}^{\min}\,\triangleq\,\inf_{\mathbf{x}\in\mathcal{Q}}\,\lambda_{\text{{least}}}\left(\mathbf{C}_{i}^{T}\,\mathbf{J}_{i}\mathbf{F}_{i}(\mathbf{x})\,\mathbf{C}_{i}\right)\,\quad\mbox{and}\quad\beta_{ij}^{\max}\,\triangleq\,\sup_{\mathbf{x}\in\mathcal{Q}}\,\left\Vert \mathbf{C}_{i}^{T}\,\mathbf{J}_{j}\mathbf{F}_{i}(\mathbf{x})\,\mathbf{C}_{j}\right\Vert ,\label{eq:def_alpha_and_beta_Jac}
\end{equation}
where $\lambda_{\text{{least}}}\left(\mathbf{A}\right)$ denotes the
smallest eigenvalue of $\frac{{1}}{2}(\mathbf{A}+\mathbf{A}^{T})$
(the symmetric part of $\mathbf{A}$), $\mathbf{J}_{j}\mathbf{F}_{i}(\mathbf{x})$
is the Jacobian of $\mathbf{F}_{i}(\mathbf{x})$ with respect to $\mathbf{x}_{j}$,
and $\mathbf{C}_{i}\in\mathbb{R}^{n_{i}\times n_{i}}$ with $i=1,\ldots,I$,
is a set of arbitrary nonsingular matrices. Note that in the definition
of $\mathbf{\boldsymbol{\Gamma}_{F}}$ we tacitly assumed all $\alpha_{i}^{\min}\neq0$
and $\beta_{ij}^{\max}$ are finite; the latter condition is equivalent
to the boundedness of $\mathbf{J}_{j}\mathbf{F}_{i}(\mathbf{x})$
on $\mathcal{Q}$. Matrices $\mathbf{B}$ and $\mathbf{C}_{i}$'s
provide an additional degree of freedom in obtaining conditions for
monotonicity and P properties of $\mathbf{F}$ that can be linked
to physical characteristics of the systems being studied (see Sec.
\ref{sec:Applications} for some examples). In order to explore the
relationship between the two matrices $\mathbf{\boldsymbol{\Upsilon}}_{\mathbf{F}}$
and $\boldsymbol{\Gamma}_{F}$, we need the following definition (see,
e.g., \cite{Cottle-Pang-Stone_bookLCP92,Berman-Plemmons_bookNonNegMat87}).

\begin{definition}\label{Def_Z_P_K_matrix} A matrix $\mathbf{M}\in\mathbb{R}^{n\times n}$
is called P matrix if every principal minor of $\mathbf{M}$ is positive.\end{definition}\vspace{-0.2cm}

Any positive definite matrix is obviously a P-matrix, but the reverse
does not hold (unless the matrix is symmetric). Furthermore, building
on the properties of the P-matrices \cite[Lemma 13.14]{Cottle-Pang-Stone_bookLCP92}\emph{,}
one can show that $\boldsymbol{\Upsilon}_{\mathbf{F}}$ is a P-matrix
if and only if $\rho(\boldsymbol{\Gamma}_{\mathbf{F}})<1$\emph{,
}where $\rho(\mathbf{A})$ denotes the spectral radius of $\mathbf{A}$
(see, e.g., \cite{Scutari-Palomar-Barbarossa_AIWFA_IT08}).

Matrices \textbf{$\mathbf{J}\textbf{F}_{\text{{low}}}$} and $\mathbf{\boldsymbol{\Upsilon}}_{\mathbf{F}}$
are useful to obtain sufficient conditions for the monotonicity and
P property of the mapping $\mathbf{F}$, as given next.\vspace{-0.2cm}

\begin{proposition} \label{monotonicity} Let $\mathbf{F}:\mathcal{Q}\mathcal{\rightarrow\mathbb{R}}^{n}$
be continuously differentiable with bounded derivatives on the closed
and convex set $\mathcal{Q}$. The following statements hold: \vspace{-0.1cm}
\begin{description}
\item [{{{\rm}(a)}}] If \textbf{\emph{$\mathbf{J}\mathbf{F}_{\text{low}}$}}
is \emph{copositive},%
\footnote{A matrix $\mathbf{A}$ is copositive if $\mathbf{x}^{T}\mathbf{A}\mathbf{x}\geq0$
for all $\mathbf{x}\geq\mathbf{0}$; it is strictly copositive if
$\mathbf{x}^{T}\mathbf{A}\mathbf{x}>0$ for all $\mathbf{0}\neq\mathbf{x}\geq\mathbf{0}$.
A positive (semi)definite matrix is (strictly) copositive.%
} then $\mathbf{F}$ is \emph{monotone} on $\mathcal{Q}$; \vspace{-0.1cm}
\item [{{{\rm}(b)}}] If \textbf{\emph{$\mathbf{J}\mathbf{F}_{\text{{low}}}$}}\emph{
}is \emph{strictly} \emph{copositive,$^{2}$ }then\emph{ }$\mathbf{F}$
is \emph{strictly monotone} on $\mathcal{Q}$; \vspace{-0.1cm}
\item [{{{\rm}(c)}}] If \textbf{\emph{$\mathbf{J}\mathbf{F}_{\text{{low}}}$}}
is \emph{positive definite, }then $\mathbf{F}$ is \emph{strongly
monotone} on $\mathcal{Q}$ with strong monotonicity constant given
by \emph{$c_{\limfunc{sm}}=\lambda_{\text{{least}}}\left(\mathcal{\mathbf{J}\textbf{F}_{\text{{low}}}}\right)$}
\emph{{[}}or\emph{ $c_{\limfunc{sm}}=\lambda_{\mathcal{\text{{least}}}}\left(\boldsymbol{{\Upsilon}_{F}}\right)${]}}.
\vspace{-0.1cm}
\end{description}
If we assume a Cartesian product structure, i.e. $\mathbf{F}=(\mathbf{F}_{i}(\mathbf{x}))_{i=1}^{I}$
and $\mathcal{Q}=\prod_{i}\mathcal{Q}_{i}$, then: \vspace{-0.1cm}
\begin{description}
\item [{{(d)}}] If $\mathbf{\boldsymbol{\Upsilon}}_{\mathbf{F}}$ is
positive semidefinite/P$_{0}$-matrix, then $\mathbf{F}$ is a monotone/P$_{0}$
function on $\mathcal{Q}$; \vspace{-0.1cm}
\item [{{(e)}}] If $\mathbf{\boldsymbol{\Upsilon}}_{\mathbf{F}}$ is
a P-matrix \emph{{[}}which is equivalent to $\rho(\boldsymbol{\Gamma}_{\mathbf{F}})<1$\emph{{]}},
then $\mathbf{F}$ is a \emph{uniformly P-function} on $\mathcal{Q}$
with uniform P constant given by\vspace{-0.2cm} 
\begin{equation}
\hat{c}_{\limfunc{uP}}(\mathbf{F})\,=\,{\displaystyle \frac{\delta(\boldsymbol{\Upsilon}_{\mathbf{F}})}{I\cdot\left(1+\zeta(\boldsymbol{\Upsilon}_{\mathbf{F}})/\delta(\boldsymbol{\Upsilon}_{\mathbf{F}})\right)^{2(I-1)}\cdot\max_{i=1,\ldots,I}\lambda_{\max}(\mathbf{C}_{i}^{T}\mathbf{C}_{i})},}\label{eq:mod_F_unifP}
\end{equation}
where $\zeta(\boldsymbol{\Upsilon}_{\mathbf{F}})\triangleq\max_{r\neq q}|[\boldsymbol{\Upsilon}_{\mathbf{F}}]_{rq}|$,
and $\delta(\boldsymbol{\Upsilon}_{\mathbf{F}})\triangleq\min\{\sigma([\boldsymbol{\Upsilon}_{\mathbf{F}}]_{\alpha\,\alpha})\,:\,\alpha\subseteq\{1,\ldots,I\}\}$,
with $\sigma([\mathbf{M}]_{\alpha\,\alpha})$ denoting the smallest
of the real eigenvalues (if any exists) of the principal submatrix
of $\mathbf{M}$ of order $\alpha$. 
\end{description}
\end{proposition}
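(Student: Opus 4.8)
The plan is to derive all five statements from one reduction --- a linear change of variables followed by the integral mean-value theorem --- which converts each global (strict/strong) monotonicity or P-type property of $\mathbf{F}$ into a \emph{pointwise} quadratic-form inequality for the Jacobian, to be closed by a triangle-inequality ``diagonal-dominance'' comparison.

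\emph{The reduction.} I would fix the nonsingular $\mathbf{B}$ of (\ref{eq:def_lower_of_comparison_of_Jacobian}) and work with $\widetilde{\mathbf{F}}(\mathbf{y})\triangleq\mathbf{B}^{T}\mathbf{F}(\mathbf{B}\mathbf{y})$ on $\widetilde{\mathcal{Q}}\triangleq\mathbf{B}^{-1}\mathcal{Q}$ (again nonempty, closed, convex). Since $(\mathbf{y}-\mathbf{y}')^{T}(\widetilde{\mathbf{F}}(\mathbf{y})-\widetilde{\mathbf{F}}(\mathbf{y}'))=(\mathbf{B}\mathbf{y}-\mathbf{B}\mathbf{y}')^{T}(\mathbf{F}(\mathbf{B}\mathbf{y})-\mathbf{F}(\mathbf{B}\mathbf{y}'))$, $\mathbf{F}$ is monotone/strictly monotone on $\mathcal{Q}$ iff $\widetilde{\mathbf{F}}$ is on $\widetilde{\mathcal{Q}}$, and strongly monotone up to the factor $\lambda_{\max}(\mathbf{B}^{T}\mathbf{B})$ (from $\|\mathbf{B}^{-1}\mathbf{z}\|^{2}\geq\lambda_{\max}(\mathbf{B}^{T}\mathbf{B})^{-1}\|\mathbf{z}\|^{2}$); this factor is precisely what becomes the $\max_{i}\lambda_{\max}(\mathbf{C}_{i}^{T}\mathbf{C}_{i})$ of (\ref{eq:mod_F_unifP}) once $\mathbf{B}$ is taken block-diagonal. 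Using $\mathbf{J}\widetilde{\mathbf{F}}(\mathbf{y})=\mathbf{B}^{T}\mathbf{JF}(\mathbf{B}\mathbf{y})\mathbf{B}$ and integrating along the segment $[\mathbf{y}',\mathbf{y}]\subseteq\widetilde{\mathcal{Q}}$ gives, with $\mathbf{d}\triangleq\mathbf{y}-\mathbf{y}'$,
\[
(\mathbf{y}-\mathbf{y}')^{T}\big(\widetilde{\mathbf{F}}(\mathbf{y})-\widetilde{\mathbf{F}}(\mathbf{y}')\big)=\int_{0}^{1}\mathbf{d}^{T}\big[\mathbf{B}^{T}\mathbf{JF}\big(\mathbf{B}(\mathbf{y}'+t\mathbf{d})\big)\mathbf{B}\big]\mathbf{d}\,dt ;
\]
here I only need $\mathbf{F}\in C^{1}$ on the closed convex $\mathcal{Q}$ (not openness as in (\ref{monotonicity-convexity-connection})), and ``bounded derivatives'' makes the suprema in (\ref{eq:def_lower_of_comparison_of_Jacobian})--(\ref{eq:def_alpha_and_beta_Jac}) finite.

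\emph{Parts (a)--(c).} For arbitrary $\mathbf{d}$ and $\mathbf{x}$, setting $\bar{\mathbf{d}}\triangleq(|d_{1}|,\dots,|d_{n}|)^{T}\geq\mathbf{0}$ and replacing off-diagonal terms by their moduli, the definitions in (\ref{eq:def_lower_of_comparison_of_Jacobian}) give $\mathbf{d}^{T}(\mathbf{B}^{T}\mathbf{JF}(\mathbf{x})\mathbf{B})\mathbf{d}\geq\bar{\mathbf{d}}^{T}\mathbf{JF}_{\mathrm{low}}\bar{\mathbf{d}}$. Plugging this into the integral: copositivity of $\mathbf{JF}_{\mathrm{low}}$ makes the integrand $\geq0$, so $\widetilde{\mathbf{F}}$ --- hence $\mathbf{F}$ --- is monotone (a); strict copositivity forces the integral to be $>0$ whenever $\mathbf{d}\neq\mathbf{0}$, giving (b); positive definiteness gives $\bar{\mathbf{d}}^{T}\mathbf{JF}_{\mathrm{low}}\bar{\mathbf{d}}\geq\lambda_{\mathrm{least}}(\mathbf{JF}_{\mathrm{low}})\|\mathbf{d}\|^{2}$, so $\widetilde{\mathbf{F}}$ is strongly monotone with modulus $\lambda_{\mathrm{least}}(\mathbf{JF}_{\mathrm{low}})$, and transporting back yields (c) (the modulus being exactly $\lambda_{\mathrm{least}}(\mathbf{JF}_{\mathrm{low}})$ for orthogonal $\mathbf{B}$, and otherwise carrying the $\lambda_{\max}(\mathbf{B}^{T}\mathbf{B})$ loss).

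\emph{The Cartesian case (d)--(e).} With $\mathbf{B}=\mathrm{diag}(\mathbf{C}_{1},\dots,\mathbf{C}_{I})$ the $(i,j)$ block of $\mathbf{J}\widetilde{\mathbf{F}}$ is $\mathbf{C}_{i}^{T}\mathbf{J}_{j}\mathbf{F}_{i}\mathbf{C}_{j}$. For $\mathbf{d}=(\mathbf{d}_{i})_{i}$ and $\bar{\mathbf{d}}\triangleq(\|\mathbf{d}_{1}\|,\dots,\|\mathbf{d}_{I}\|)^{T}$, bounding the $i$th diagonal block from below by $\alpha_{i}^{\min}\|\mathbf{d}_{i}\|^{2}$ (via $\lambda_{\mathrm{least}}$) and each off-diagonal block by $\beta_{ij}^{\max}\|\mathbf{d}_{i}\|\|\mathbf{d}_{j}\|$ (Cauchy--Schwarz and operator norm), I get the per-block estimate $\mathbf{d}_{i}^{T}\big(\sum_{j}\mathbf{C}_{i}^{T}\mathbf{J}_{j}\mathbf{F}_{i}\mathbf{C}_{j}\mathbf{d}_{j}\big)\geq\bar{d}_{i}[\boldsymbol{\Upsilon}_{\mathbf{F}}\bar{\mathbf{d}}]_{i}$, and, summing, $\mathbf{d}^{T}\mathbf{J}\widetilde{\mathbf{F}}\,\mathbf{d}\geq\bar{\mathbf{d}}^{T}\boldsymbol{\Upsilon}_{\mathbf{F}}\bar{\mathbf{d}}$. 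A positive semidefinite $\boldsymbol{\Upsilon}_{\mathbf{F}}$ then yields monotonicity exactly as in (a) (and a positive definite one the $\lambda_{\mathrm{least}}(\boldsymbol{\Upsilon}_{\mathbf{F}})$-variant of (c)). For the P$_{0}$ claim in (d) I would use the sign characterization of P$_{0}$-matrices: for each $\bar{\mathbf{d}}\neq\mathbf{0}$ some index $i$ has $\bar{d}_{i}>0$ and $\bar{d}_{i}[\boldsymbol{\Upsilon}_{\mathbf{F}}\bar{\mathbf{d}}]_{i}\geq0$, and for that $i$ the (integrated) per-block estimate shows $(\mathbf{y}_{i}-\mathbf{y}_{i}')^{T}(\widetilde{\mathbf{F}}_{i}(\mathbf{y})-\widetilde{\mathbf{F}}_{i}(\mathbf{y}'))\geq0$ with $\mathbf{y}_{i}\neq\mathbf{y}_{i}'$, i.e. $\widetilde{\mathbf{F}}$ (hence $\mathbf{F}$) is a P$_{0}$ function. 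For (e), when $\boldsymbol{\Upsilon}_{\mathbf{F}}$ is a P-matrix ($\Leftrightarrow\rho(\boldsymbol{\Gamma}_{\mathbf{F}})<1$, as recalled before the proposition) I would invoke the quantitative P-matrix bound $\max_{i}z_{i}[\boldsymbol{\Upsilon}_{\mathbf{F}}\mathbf{z}]_{i}\geq c(\boldsymbol{\Upsilon}_{\mathbf{F}})\|\mathbf{z}\|^{2}$ with $c(\boldsymbol{\Upsilon}_{\mathbf{F}})=\delta(\boldsymbol{\Upsilon}_{\mathbf{F}})\big/\big(I(1+\zeta(\boldsymbol{\Upsilon}_{\mathbf{F}})/\delta(\boldsymbol{\Upsilon}_{\mathbf{F}}))^{2(I-1)}\big)$, apply it at $\mathbf{z}=\bar{\mathbf{d}}$, combine with the per-block estimate to obtain $\max_{i}(\mathbf{y}_{i}-\mathbf{y}_{i}')^{T}(\widetilde{\mathbf{F}}_{i}(\mathbf{y})-\widetilde{\mathbf{F}}_{i}(\mathbf{y}'))\geq c(\boldsymbol{\Upsilon}_{\mathbf{F}})\|\mathbf{y}-\mathbf{y}'\|^{2}$, and finally divide by $\max_{i}\lambda_{\max}(\mathbf{C}_{i}^{T}\mathbf{C}_{i})$ in passing back to $\mathbf{F}$, landing exactly on $\hat{c}_{\mathrm{uP}}(\mathbf{F})$ of (\ref{eq:mod_F_unifP}).

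\emph{The main difficulty.} The change of variables, the segment mean-value integral, the triangle-inequality comparison, and the sign characterizations of P$_{0}$/P-matrices are routine; the one genuinely nontrivial ingredient is the \emph{explicit} quantitative P-matrix estimate behind (e). Positivity of $\min_{\mathbf{z}\neq\mathbf{0}}\max_{i}z_{i}[\mathbf{M}\mathbf{z}]_{i}/\|\mathbf{z}\|^{2}$ for any P-matrix $\mathbf{M}$ follows from compactness, but producing the closed form with the stated dependence on the least real eigenvalues of the principal submatrices ($\delta$) and the largest off-diagonal entry ($\zeta$) --- in particular the exponent $2(I-1)$ --- is the combinatorial heart of the matter; it is a known fact about P-matrices (see \cite{Cottle-Pang-Stone_bookLCP92} and the material recalled in Appendix \ref{sec:A-Theory-of_partitioned_VI}) and is the only step requiring a nontrivial external lemma. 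Finally, since $\mathbf{B}$ and the $\mathbf{C}_{i}$ are arbitrary nonsingular matrices, every conclusion can be optimized over them, which is what makes the criteria checkable in the applications of Section \ref{sec:Applications}.
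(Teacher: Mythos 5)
Your proposal is correct and follows essentially the same route as the paper: a mean-value/segment-integral reduction to pointwise quadratic forms of the (transformed) Jacobian, the triangle-inequality comparison yielding $(\mathbf{x}_i-\mathbf{y}_i)^T(\mathbf{F}_i(\mathbf{x})-\mathbf{F}_i(\mathbf{y}))\geq e_i(\boldsymbol{\Upsilon}_{\mathbf{F}}\mathbf{e})_i$ with $e_i=\|\mathbf{C}_i^{-1}(\mathbf{x}_i-\mathbf{y}_i)\|$, and the quantitative P-matrix constant $c(\boldsymbol{\Upsilon}_{\mathbf{F}})>0$ together with its explicit lower bound from \cite{Cottle-Pang-Stone_bookLCP92}, which you correctly single out as the one nontrivial external ingredient. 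The paper writes out only part (e), applying the scalar mean-value theorem componentwise (one $\bar{t}_i$ per block) where you use the integral form, but this is an immaterial difference.
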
\vspace{-0.2cm}\begin{proof} See Appendix \ref{sec:Proof-of-Proposition_monotonicity}.
\end{proof}

\begin{remark}[On the uniqueness conditions]\label{Remark_uniqueness}
\emph{Under the assumption that }$\ensuremath{\mathbf{F}(\mathbf{x})=\left(\nabla_{\mathbf{x}_{i}}f_{i}(\mathbf{x})\right)_{i=1}^{I}}$\emph{
is continuously differentiable with bounded derivatives on $\mathcal{Q}$
(Assumption 2), a sufficient condition for the uniqueness of the NE
is that the matrix }$\boldsymbol{\Upsilon}_{\mathbf{F}}$ \emph{defined
in (\ref{eq:Upsilon_matrix}) be a} \emph{P matrix {[}cf. Theorem
\ref{Theo_existence_uniquenessNE}(d) and Proposition \ref{monotonicity}(e){]}.
It turns out that this condition is sufficient also for global convergence
of best-response asynchronous distributed algorithms described in
Sec. \ref{sub:Distributed-algorithms-for_NEP}.} \emph{Note that if
}$\boldsymbol{\Upsilon}_{\mathbf{F}}$\emph{ is a P matrix, it must
be $\alpha_{i}^{\min}=\inf_{\mathbf{z}\in\mathcal{Q}}\left[\lambda_{\text{{min}}}(\nabla_{\mathbf{x}_{i}}^{2}f_{i}(\mathbf{z}))\right]>0$
for all $i$, where $\lambda_{\text{{min}}}(\nabla_{\mathbf{x}_{i}}^{2}f_{i}(\mathbf{z}))$
denotes the minimum eigenvalue of $\nabla_{\mathbf{x}_{i}}^{2}f_{i}(\mathbf{z})$.}%
\footnote{Note the difference between $\lambda_{\min}$ and $\lambda_{\text{{least}}}$;
the former is used for symmetric matrices, whereas the latter refers
to possibly non symmetric matrices. Of course if $\mathbf{A}$ is
symmetric, then $\lambda_{\min}(\mathbf{A})=\lambda_{\text{{least}}}(\mathbf{A})$.%
}\emph{ Thus an implicit consequence of the P assumption on the matrix
}$\boldsymbol{\Upsilon}_{\mathbf{F}}$\emph{ is the uniform positive
definiteness of the matrices $\nabla_{\mathbf{x}_{1}}^{2}f_{i}$ on
$\mathcal{Q}$, which implies the uniformly strong convexity of $f_{i}(\cdot,\,\mathbf{x}_{-i})$
for any given $\mathbf{x}_{-i}\in\mathcal{Q}_{-i}$ and thus the uniqueness
of the solution of the $i$-th player's optimization problem, for
any given $\mathbf{x}_{-i}\in\mathcal{Q}_{-i}$. }

\emph{The $\beta$'s in the definition of the matrix $\boldsymbol{\Upsilon}_{\mathbf{F}}$
measure the coupling of the players' optimization problems: the larger
the $\beta$'s, the more coupled the players' subproblems are. Indeed,
if all the $\beta$'s were 0, the game $\mathcal{G=<}\mathcal{Q},\,\mathbf{f}\mathcal{>}$
would decompose into $I$ uncoupled optimization problems; in such
a case, requiring the matrix $\boldsymbol{\Upsilon}_{\mathbf{F}}$
to be P simply amounts to requiring all $\alpha$'s to be positive,
which obviously implies uniqueness of the solution. It is reasonable
that if the $\beta$'s increase from zero but remain small enough
with respect to the $\alpha$'s, the game will still have a unique
solution. The P property quantifies how large the $\beta$'s can grow
while still preserving the uniqueness of the solution.}\hfill{}$\square$\end{remark}\vspace{-0.3cm}

We conclude this subsection providing a sufficient condition for the
matrix $\boldsymbol{\Upsilon}_{\mathbf{F}}$ in\emph{ }(\ref{eq:Upsilon_matrix})
to be a P (positive definite) matrix, which can be derived by elementary
diagonal dominance arguments.\vspace{-0.2cm}

\begin{proposition} \label{Cor_SF_for_Pmat} The matrix $\boldsymbol{\Upsilon}_{\mathbf{F}}$
in \emph{(\ref{eq:Upsilon_matrix})} is a P-matrix if \emph{one} of
the following two sets of conditions are satisfied: for some $\mathbf{w}=(w_{i})_{i=1}^{I}>\mathbf{0}$,
\emph{ 
\begin{equation}
\frac{1}{w_{i}}\dsum\limits _{j\neq i}w_{j}\dfrac{{\beta_{ij}^{\max}}}{\alpha_{i}^{\min}}<1,\,\,\forall i=1,\cdots,I,\qquad\frac{1}{w_{j}}\dsum\limits _{i\neq j}w_{i}\dfrac{{\beta_{ij}^{\max}}}{\alpha_{j}^{\min}}<1,\,\,\forall j=1,\cdots,I.\label{SF_aVI}
\end{equation}
} If actually both conditions in\emph{ \eqref{SF_aVI} }are satisfied,
then $\boldsymbol{\Upsilon}_{\mathbf{F}}$ is positive definite. \end{proposition}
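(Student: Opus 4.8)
First I would read the claim off the sign pattern of $\boldsymbol{\Upsilon}_{\mathbf{F}}$. Its off-diagonal entries $-\beta_{ij}^{\max}$ are nonpositive, so $\boldsymbol{\Upsilon}_{\mathbf{F}}$ is a $Z$-matrix; moreover, since the $\beta$'s are nonnegative and finite and the $w$'s positive (and since the $1\times1$ principal minors of a P-matrix are positive in any case), each diagonal entry $\alpha_i^{\min}$ is positive, which is exactly what legitimizes the divisions in \eqref{SF_aVI} and the rescalings below. For a $Z$-matrix the P-property, the nonsingular-M-matrix property, and (weighted) strict diagonal dominance all coincide \cite{Cottle-Pang-Stone_bookLCP92,Berman-Plemmons_bookNonNegMat87}, and it is this elementary equivalence, in its Gershgorin-disk form, that I would invoke.

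To dispatch the first group of inequalities in \eqref{SF_aVI}, I would fix $\mathbf{w}=(w_i)_i>\mathbf{0}$ for which they hold and multiply the $i$-th one through by $w_i\alpha_i^{\min}>0$, rewriting it as $\sum_{j\neq i}w_j\beta_{ij}^{\max}<w_i\alpha_i^{\min}$ for all $i$, that is, $\boldsymbol{\Upsilon}_{\mathbf{F}}\,\mathbf{w}>\mathbf{0}$ componentwise. With $\mathbf{D}:=\operatorname{diag}(w_1,\dots,w_I)$, the matrix $\mathbf{D}^{-1}\boldsymbol{\Upsilon}_{\mathbf{F}}\mathbf{D}$ has the same principal minors as $\boldsymbol{\Upsilon}_{\mathbf{F}}$ (restricting a diagonal conjugation to an index set $\alpha$ produces $\mathbf{D}_\alpha^{-1}[\boldsymbol{\Upsilon}_{\mathbf{F}}]_{\alpha\alpha}\mathbf{D}_\alpha$, of the same determinant) and is now strictly row-diagonally dominant with positive diagonal; every principal submatrix inherits these two properties, so Gershgorin's theorem places the eigenvalues of each such submatrix in the open right half-plane, whence its determinant is positive (the real eigenvalues lie in $(0,\infty)$, the non-real ones occur in conjugate pairs). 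Hence $\boldsymbol{\Upsilon}_{\mathbf{F}}$ is a P-matrix. If instead the second group holds for some $\mathbf{w}>\mathbf{0}$, the same clearing of denominators yields $\sum_{i\neq j}w_i\beta_{ij}^{\max}<w_j\alpha_j^{\min}$ for all $j$, i.e. $\boldsymbol{\Upsilon}_{\mathbf{F}}^{T}\mathbf{w}>\mathbf{0}$; applying the previous step to the $Z$-matrix $\boldsymbol{\Upsilon}_{\mathbf{F}}^{T}$ and using $\det\big([\boldsymbol{\Upsilon}_{\mathbf{F}}^{T}]_{\alpha\alpha}\big)=\det\big([\boldsymbol{\Upsilon}_{\mathbf{F}}]_{\alpha\alpha}\big)$ shows $\boldsymbol{\Upsilon}_{\mathbf{F}}$ is again a P-matrix. (Equivalently, the two groups are nothing but $\boldsymbol{\Gamma}_{\mathbf{F}}\mathbf{w}<\mathbf{w}$, respectively $\boldsymbol{\Gamma}_{\mathbf{F}}^{T}\mathbf{w}<\mathbf{w}$, with $\boldsymbol{\Gamma}_{\mathbf{F}}\ge\mathbf{0}$ and $\mathbf{w}>\mathbf{0}$, either of which forces $\rho(\boldsymbol{\Gamma}_{\mathbf{F}})<1$ by Perron--Frobenius, hence the P-property via the equivalence recorded before the statement.)

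For the positive-definiteness part I would use that \emph{both} groups now hold for a \emph{single} $\mathbf{w}>\mathbf{0}$, so $\boldsymbol{\Upsilon}_{\mathbf{F}}\mathbf{w}>\mathbf{0}$ and $\boldsymbol{\Upsilon}_{\mathbf{F}}^{T}\mathbf{w}>\mathbf{0}$ simultaneously; adding, $(\boldsymbol{\Upsilon}_{\mathbf{F}}+\boldsymbol{\Upsilon}_{\mathbf{F}}^{T})\mathbf{w}>\mathbf{0}$. Positive definiteness of $\boldsymbol{\Upsilon}_{\mathbf{F}}$, i.e. $\mathbf{x}^{T}\boldsymbol{\Upsilon}_{\mathbf{F}}\mathbf{x}>0$ for all $\mathbf{x}\neq\mathbf{0}$, is equivalent to that of the symmetric matrix $\mathbf{S}:=\boldsymbol{\Upsilon}_{\mathbf{F}}+\boldsymbol{\Upsilon}_{\mathbf{F}}^{T}$, which is itself a $Z$-matrix (off-diagonal entries $-\beta_{ij}^{\max}-\beta_{ji}^{\max}\le 0$, diagonal $2\alpha_i^{\min}>0$). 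Then $\mathbf{D}\mathbf{S}\mathbf{D}$ is a congruence of $\mathbf{S}$, is symmetric, has positive diagonal, and — precisely because $\mathbf{S}\mathbf{w}>\mathbf{0}$ — is strictly diagonally dominant, hence positive definite by Gershgorin; since congruence preserves positive definiteness, $\mathbf{S}\succ\mathbf{0}$, and therefore $\boldsymbol{\Upsilon}_{\mathbf{F}}$ is positive definite.

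I do not expect a genuine obstacle: the only points that need care are the bookkeeping of the index relabelling when passing between the two groups of inequalities, and the standing positivity of the $\alpha_i^{\min}$ (with finiteness of the $\beta_{ij}^{\max}$), which is what makes \eqref{SF_aVI} meaningful and the conjugations by $\operatorname{diag}(\mathbf{w})$ admissible.
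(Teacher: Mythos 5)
Your proof is correct and follows exactly the route the paper indicates (it gives no written proof, only the remark that the result ``can be derived by elementary diagonal dominance arguments''): clearing denominators turns \eqref{SF_aVI} into $\boldsymbol{\Upsilon}_{\mathbf{F}}\mathbf{w}>\mathbf{0}$ (resp.\ $\boldsymbol{\Upsilon}_{\mathbf{F}}^{T}\mathbf{w}>\mathbf{0}$), and the diagonal scaling by $\operatorname{diag}(\mathbf{w})$ plus Gershgorin yields the P-property, while both conditions together give strict diagonal dominance of the symmetric part and hence positive definiteness. Your observation that this is equivalent to $\rho(\boldsymbol{\Gamma}_{\mathbf{F}})<1$ via Perron--Frobenius is also consistent with the equivalence the paper records just before Definition \ref{Def_Z_P_K_matrix}.
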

The sufficient conditions in Proposition \ref{Cor_SF_for_Pmat} will
be shown in Sec. \ref{sec:Applications} to have an interesting physical
interpretation in the context of power control problems in CR systems.\vspace{-0.2cm}

\subsection{Problem classes\label{sub:Problem-Classes}}

Based on the previous results, it is natural to introduce the following
classes of real NEPs.\vspace{-0.2cm}

\begin{definition}\label{Def_monotone_NEP} A real NEP $\mathcal{G}=\left\langle \mathcal{Q},\mathbf{f}\right\rangle $
is: \vspace{-0.2cm}
\begin{description}
\item [{{i)}}] a\emph{ monotone NEP} if Assumption 1 holds and the associated
VI$(\mathcal{Q},\mathbf{F})$ is monotone; \vspace{-0.2cm}
\item [{{ii)}}] a \emph{uniformly P NEP} if Assumption 1 holds and the
associated VI$(\mathcal{Q},\mathbf{F})$ is uniformly P; \vspace{-0.2cm}
\item [{{iii)}}] a\emph{ $P{}_{\boldsymbol{\Upsilon}}$ NEP} if Assumptions
1 and 2 hold and the matrix $\boldsymbol{\Upsilon}_{\mathbf{F}}$
of the associated VI$(\mathcal{Q},\mathbf{F})$ is P. 
\end{description}
\end{definition}\vspace{-0.2cm}

\noindent 
\begin{figure}[H]
\vspace{-0.4cm}\center \includegraphics[height=3cm]{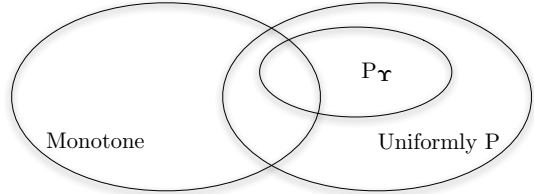}\vspace{-0.5cm}
 \caption{{\small{Relation among NEP classes.\vspace{-0.1cm}}}}

{\small{\label{fig_NE_class}}} 
\end{figure}
 Figure \ref{fig_NE_class} summarizes the relations between these
classes of problems. Note that a monotone NEP is not necessarily uniformly
P; it is enough to observe that monotone NEPs may have multiple NE
(see Example \#1 in Sec. \ref{sub:Numerical-Results}), whereas uniformly
P NEPs have only one solution {[}cf. Theorem \ref{Theo_existence_uniquenessNE}(d){]}\emph{.}
Similarly,\emph{ $P{}_{\boldsymbol{\Upsilon}}$} NEPs (and thus uniformly
P NEPs) are not a subclass of monotone NEPs, as shown by the following
example. \vspace{-0.3cm}

\begin{example}[A P$_{\boldsymbol{\Upsilon}}$ NEP which is not monotone]
\emph{Consider a real NEP with two players, each controlling one scalar
variable: $x_{1}$ and $x_{2}$. The players' problems are}\vspace{-0.1cm}
\[
\begin{array}{rl}
{\displaystyle {\hbox{\rm minimize}_{x_{1}}}} & \dfrac{1}{2}x_{1}^{2}+4x_{1}x_{2}\\[0.5em]
{\rm subject\, to\,\,\,} & x_{1}\,\in\,[0,10]
\end{array}\qquad\qquad\begin{array}{rl}
{\displaystyle {\hbox{\rm minimize}_{x_{2}}}} & \dfrac{1}{2}x_{2}^{2}-\dfrac{1}{8}x_{1}x_{2}\\[0.5em]
\hbox{\rm subject\, to\,} & x_{2}\,\in\,[-2,2]
\end{array}\vspace{-.1cm}
\]
\emph{The VI associated to this NEP is VI$([0,10]\times[-2,2],\mathbf{F})$,
with $\mathbf{F}=[x_{1}+4x_{2},x_{2}-(1/8)x_{1}]^{T}$. The symmetric
part of $\mathbf{J}\mathbf{F}$, $\mathbf{J}\mathbf{F}_{s}$, and
the matrix $\boldsymbol{\Upsilon}_{\mathbf{F}}$ are given by:}\vspace{-0.1cm}
\[
\mathbf{J}\mathbf{F}_{s}\,=\,\left[\begin{array}{cc}
1 & 31/16\\
31/16 & 1
\end{array}\right],\qquad\qquad\boldsymbol{\Upsilon}_{\mathbf{F}}\,=\,\left[\begin{array}{cc}
1 & -4\\
-1/8 & 1
\end{array}\right].
\]
\emph{Since $\mathbf{J}\mathbf{F}_{s}$ has a negative determinant,
$\mathbf{F}$ cannot be monotone; on the other hand it is easy to
check that the two principal minors of $\boldsymbol{\Upsilon}_{\mathbf{F}}$
are positive, implying that $\mathcal{G}$ is a P NEP.} \end{example}\vspace{-0.3cm}

\noindent \indent Centralized algorithms for monotone and uniformly
\emph{$P{}_{\boldsymbol{\Upsilon}}$} NEPs, based on VI theory, are
well-known \cite[vol II]{Facchinei-Pang_FVI03}; in this paper, we
focus on the more challenging issue of devising \emph{distributed}
(and possibly asynchronous) solution schemes for NEPs, which is the
topic of the next section.\vspace{-0.2cm}

\section{Distributed Algorithms for NEPs\label{sub:Distributed-algorithms-for_NEP}}

This section along with the next one constitute the core theoretical
part of the paper. We develop here a novel theory that allows devising
{\em distributed} algorithms for computing Nash equilibria in several
significant settings. More specifically, we will provide novel distributed
(asynchronous) algorithms for the solution of: (a) \emph{$P{}_{\boldsymbol{\Upsilon}}$}
NEPs; and (b) monotone NEPs. 

Since monotone NEPs may have multiple solutions, in case (a) we will
further consider both the situations in which one is interested in
computing \emph{any} \emph{one} solution, and the situations in which
one wants to select the \emph{best }solution, according to a given
criterion. In each of the settings above we will provide best-response-based
distributed algorithms along with their convergence properties; the
proposed algorithms differ in: i) the computational effort; ii) the
players' synchronization/signaling requirements; and iii) the convergence
speed. Note that while centralized solution methods are known for
uniformly P NEPs, the development of distributed algorithms for this
class of games is at the time of this writing an open problem. 

This section is organized in three parts. Sec. \ref{sub:Best-response-decomposition-algos}
and Sec. \ref{sub:Proximal-decomposition-algorithms_monotone_VI}
focus on algorithms for \emph{$P{}_{\boldsymbol{\Upsilon}}$} and
monotone NEPs, respectively; results in this sections will be the
building blocks for the more difficult issue of equilibrium selection
problem addressed in Sec. \ref{sub:Equilibrium-Selection-Monotone-NEP}.\vspace{-0.3cm}

\subsection{Best-response distributed algorithms for \emph{$P{}_{\boldsymbol{\Upsilon}}$}
NEPs\label{sub:Best-response-decomposition-algos}\vspace{-0.2cm}}

Since in a NEP every player is trying to minimize his own objective
function, a natural approach to compute a solution of a NEP is to
consider an iterative algorithm wherein all the players, given the
strategies of the others and according to a given scheduling (e.g.,
sequentially or simultaneously), update their own strategy by solving
their optimization problem (\ref{eq:G_def}). Here, we focus on a
very general class of best-response-based algorithms, namely the \emph{totally
asynchronous} best-response algorithms (in the sense specified in
\cite{Bertsekas_Book-Parallel-Comp}). In these schemes, some players
may update their strategies more frequently than others and they may
even use an outdated information about the strategy profile used by
the others; which is very appealing in many practical multiuser communication
systems, such as wireless ad-hoc networks or CR systems wherein synchronization
requirements are hard to enforce.

To provide a formal description of the algorithm, we need to introduce
some preliminary definitions. In an asynchronous scheme, the users
may not update their own strategies at each iteration; let denote
then by $\mathcal{T}_{i}\subseteq\mathcal{T}\subseteq\left\{ 0,1,2,\ldots\right\} $
the set of times at which player $i$ updates his own strategy $\mathbf{x}_{i}$,
denoted by $\mathbf{x}_{i}^{(n)}$ (thus, implying that, at time $n\notin\mathcal{T}_{i}$,
$\mathbf{x}_{i}^{(n)}$ is left unchanged). Moreover, in computing
their optimal strategy, the users can use an outdated version of the
others' strategies; let then $\tau_{j}^{i}(n)$ be the most recent
time at which the strategy profile of player $j$ is perceived by
player $i$ at the $n$-$th$ iteration (observe that $\tau_{j}^{i}(n)$
satisfies $0\leq\tau_{j}^{i}(n)\leq n$). Hence, if player $i$ updates
its strategy at the $n$-th iteration, then he minimizes his cost
function using the following (possibly) outdated strategy profile
of the other players: 
\begin{equation}
\mathbf{x}_{-i}^{(\mathbf{\boldsymbol{{\tau}}}^{i}(n))}\triangleq\left(\mathbf{x}_{1}^{(\tau_{1}^{i}(n))},\ldots,\mathbf{x}_{i-1}^{(\tau_{i-1}^{i}(n))},\mathbf{x}_{i+1}^{(\tau_{i+1}^{i}(n))},\ldots,\mathbf{x}_{I}^{(\tau_{I}^{i}(n))}\right).\label{eq:p_q_interference}
\end{equation}
Some standard conditions in asynchronous convergence theory, which
are fulfilled in any practical implementation, need to be satisfied
by the schedule $\mathcal{T}_{i}$'s and $\tau_{j}^{i}(n)$'s, namely
for each $i$: 
\begin{description}
\item [{{A1)}}] $0\leq\tau_{j}^{i}(n)\leq n$ (at any given iteration
$n$, each player $i$ can use only the strategy profile $\mathbf{x}_{-i}^{(\mathbf{\tau}^{i}(n))}$
adopted by the other players in the previous iterations ); 
\item [{{A2)}}] $\lim_{k\rightarrow\infty}\tau_{j}^{i}(n_{k})=+\infty$,
where $\{n_{k}\}$ is a sequence of elements in $\mathcal{T}_{i}$
that tends to infinity {[}for any given iteration index $n_{k}$,
the values of the components of $\mathbf{x}_{-i}^{(\mathbf{\tau}^{i}(n))}$
in (\ref{eq:p_q_interference}) generated prior to $n_{k}$ are not
used in the updates of $\mathbf{x}_{i}^{(n)}$, when $n$ becomes
sufficiently larger than $n_{k}${]}; 
\item [{{A3)}}] $\left\vert \mathcal{T}_{i}\right\vert =\infty$ (no
player fails to update his own strategy as time $n$ goes on). 
\end{description}
Using the above definitions, the totally asynchronous algorithm based
on the best-responses of the players is described in Algorithm \ref{async_best-response_algo}.
The convergence properties of the algorithm are given in Theorem \ref{Theo-async_best-response_NEP}.

\begin{algo}{Asynchronous Best-Response Algorithm} S\texttt{$\mbox{(\mbox{S.0})}:$}
Choose any feasible $\mathbf{x}^{(0)}\in\mathcal{Q}$ and set $n=0$.

\texttt{$\mbox{(S.1)}:$} \texttt{If} $\mathbf{x}^{(n)}$ satisfies
a suitable termination criterion: \texttt{STOP} \\
 \texttt{$\mbox{(S.2)}:$}\noun{ }\texttt{for} $i=1,\ldots,I$,$\,$compute
\begin{equation}
\mathbf{x}_{i}^{(n+1)}=\left\{ \begin{array}{lll}
\mathbf{x}_{i}^{\star}\in\underset{\mathbf{x}_{i}\in\mathcal{Q}_{i}}{\text{{argmin}}}f_{i}\left(\mathbf{x}_{i},\,\mathbf{x}_{-i}^{(\boldsymbol{{\tau}}^{i}(n))}\right), &  & \mbox{if }n\in\mathcal{T}_{i}\medskip\\
\mathbf{x}_{i}^{(n)}, &  & \mbox{otherwise}
\end{array}\right.\label{eq:Async_update}
\end{equation}
$\qquad\,$$\quad\,\,\,$\\
 \texttt{$\mbox{(S.3)}:$} $n\leftarrow n+1$; go to \texttt{$\mbox{(S.1)}$}.
\label{async_best-response_algo}\end{algo}

\begin{theorem} \label{Theo-async_best-response_NEP} Let $\mathcal{G}=\left\langle \mathcal{Q},\,\mathbf{f}\right\rangle $
be a \emph{$P{}_{\boldsymbol{\Upsilon}}$} NEP. Any sequence $\{\mathbf{x}^{(n)}\}_{n=0}^{\infty}$
generated by Algorithm \ref{async_best-response_algo} converges to
the unique NE of ${\mathcal{G}}$, for any given updating schedule
of the players satisfying assumptions A1-A3.\end{theorem}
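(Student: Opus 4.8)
The plan is to identify Algorithm \ref{async_best-response_algo} with the totally asynchronous iteration of the best-response map and to prove that this map is a contraction with respect to a suitable block-weighted maximum norm; convergence then follows from the classical asynchronous convergence theorem (the ``box condition'' argument of \cite{Bertsekas_Book-Parallel-Comp}).

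First I would record the well-posedness of the map. Since $\mathcal{G}$ is a $P_{\boldsymbol{\Upsilon}}$ NEP, Proposition \ref{monotonicity}(e) gives that $\mathbf{F}$ is a uniformly P-function, hence by Theorem \ref{Theo_existence_uniquenessNE}(d) the NEP has a unique NE $\mathbf{x}^{\star}$; moreover (cf. Remark \ref{Remark_uniqueness}) $\boldsymbol{\Upsilon}_{\mathbf{F}}$ being a P-matrix forces $\alpha_i^{\min}>0$, so each $f_i(\cdot,\mathbf{x}_{-i})$ is uniformly strongly convex and each player's problem has a unique optimal solution. Therefore the best-response map $\mathbf{T}(\mathbf{x})\triangleq(T_i(\mathbf{x}_{-i}))_{i=1}^{I}$, with $T_i(\mathbf{x}_{-i})\triangleq\arg\min_{\mathbf{x}_i\in\mathcal{Q}_i}f_i(\mathbf{x}_i,\mathbf{x}_{-i})$, is single-valued on $\mathcal{Q}$, maps $\mathcal{Q}$ into itself, and has $\mathbf{x}^{\star}$ as its unique fixed point; Algorithm \ref{async_best-response_algo} is precisely the totally asynchronous iteration of $\mathbf{T}$ under schedules obeying A1--A3.

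The core step is a per-block Lipschitz estimate: for all $\mathbf{x}_{-i},\mathbf{y}_{-i}\in\mathcal{Q}_{-i}$,
\[
\big\|\mathbf{C}_i^{-1}\big(T_i(\mathbf{x}_{-i})-T_i(\mathbf{y}_{-i})\big)\big\|\;\le\;\sum_{j\neq i}\,[\boldsymbol{\Gamma}_{\mathbf{F}}]_{ij}\,\big\|\mathbf{C}_j^{-1}(\mathbf{x}_j-\mathbf{y}_j)\big\|.
\]
Writing $\mathbf{u}=T_i(\mathbf{x}_{-i})$, $\mathbf{v}=T_i(\mathbf{y}_{-i})$, the minimum principle for the two convex problems gives $(\mathbf{v}-\mathbf{u})^{T}\nabla_{\mathbf{x}_i}f_i(\mathbf{u},\mathbf{x}_{-i})\ge0$ and $(\mathbf{u}-\mathbf{v})^{T}\nabla_{\mathbf{x}_i}f_i(\mathbf{v},\mathbf{y}_{-i})\ge0$; adding and rearranging,
\[
(\mathbf{u}-\mathbf{v})^{T}\big[\nabla_{\mathbf{x}_i}f_i(\mathbf{u},\mathbf{x}_{-i})-\nabla_{\mathbf{x}_i}f_i(\mathbf{v},\mathbf{x}_{-i})\big]\;\le\;(\mathbf{u}-\mathbf{v})^{T}\big[\nabla_{\mathbf{x}_i}f_i(\mathbf{v},\mathbf{y}_{-i})-\nabla_{\mathbf{x}_i}f_i(\mathbf{v},\mathbf{x}_{-i})\big].
\]
Using Assumption 2 and the integral mean-value theorem, the change of variables $\widetilde{\mathbf{z}}_i=\mathbf{C}_i^{-1}\mathbf{z}_i$, convexity of $\mathcal{Q}_i$ (so every segment stays in $\mathcal{Q}$), $\mathbf{F}_i=\nabla_{\mathbf{x}_i}f_i$, and the definitions in (\ref{eq:def_alpha_and_beta_Jac}), the left-hand side is bounded below by $\alpha_i^{\min}\|\mathbf{C}_i^{-1}(\mathbf{u}-\mathbf{v})\|^2$ while the right-hand side is bounded above by $\|\mathbf{C}_i^{-1}(\mathbf{u}-\mathbf{v})\|\sum_{j\neq i}\beta_{ij}^{\max}\|\mathbf{C}_j^{-1}(\mathbf{x}_j-\mathbf{y}_j)\|$; dividing by $\|\mathbf{C}_i^{-1}(\mathbf{u}-\mathbf{v})\|$ (the bound being trivial when it vanishes) and recalling $[\boldsymbol{\Gamma}_{\mathbf{F}}]_{ij}=\beta_{ij}^{\max}/\alpha_i^{\min}$ yields the estimate. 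Specializing $\mathbf{y}_{-i}=\mathbf{x}_{-i}^{\star}$ and setting $e_i(\mathbf{x})\triangleq\|\mathbf{C}_i^{-1}(\mathbf{x}_i-\mathbf{x}_i^{\star})\|$, $\mathbf{e}(\mathbf{x})=(e_i(\mathbf{x}))_{i=1}^{I}$, this reads $e_i(\mathbf{T}(\mathbf{x}))\le(\boldsymbol{\Gamma}_{\mathbf{F}}\,\mathbf{e}(\mathbf{x}))_i$ for every $i$.

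Finally, since $\boldsymbol{\Upsilon}_{\mathbf{F}}$ is a P-matrix we have $\rho(\boldsymbol{\Gamma}_{\mathbf{F}})<1$ with $\boldsymbol{\Gamma}_{\mathbf{F}}\ge\mathbf{0}$, so there exist $\mathbf{w}=(w_i)_{i=1}^{I}>\mathbf{0}$ and $\theta\in[0,1)$ with $\boldsymbol{\Gamma}_{\mathbf{F}}\mathbf{w}\le\theta\,\mathbf{w}$ (take, e.g., $\mathbf{w}=\sum_{k\ge0}\boldsymbol{\Gamma}_{\mathbf{F}}^{k}\mathbf{1}$, so $\boldsymbol{\Gamma}_{\mathbf{F}}\mathbf{w}=\mathbf{w}-\mathbf{1}$). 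With the block-weighted maximum norm $\|\mathbf{x}\|_{\star}\triangleq\max_i w_i^{-1}\|\mathbf{C}_i^{-1}\mathbf{x}_i\|$, the estimate above gives $\|\mathbf{T}(\mathbf{x})-\mathbf{x}^{\star}\|_{\star}\le\theta\,\|\mathbf{x}-\mathbf{x}^{\star}\|_{\star}$, i.e. $\mathbf{T}$ is a block-contraction about $\mathbf{x}^{\star}$. The nested sets $\mathcal{X}(k)\triangleq\prod_{i=1}^{I}\{\mathbf{x}_i\in\mathcal{Q}_i:\|\mathbf{C}_i^{-1}(\mathbf{x}_i-\mathbf{x}_i^{\star})\|\le\theta^{k}w_i\,\|\mathbf{x}^{(0)}-\mathbf{x}^{\star}\|_{\star}\}$ then have a Cartesian-product structure, contain $\mathbf{x}^{(0)}$, satisfy $\bigcap_k\mathcal{X}(k)=\{\mathbf{x}^{\star}\}$, and $\mathbf{T}(\mathcal{X}(k))\subseteq\mathcal{X}(k+1)$; hence the box condition and the synchronous convergence condition of the asynchronous convergence theorem of \cite{Bertsekas_Book-Parallel-Comp} are met, and every sequence $\{\mathbf{x}^{(n)}\}$ generated by Algorithm \ref{async_best-response_algo} under A1--A3 converges to $\mathbf{x}^{\star}$. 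The hardest part is the core Lipschitz estimate: one must push the two minimum-principle inequalities through the $\mathbf{C}_i$-change of coordinates and the mean-value integral so that the emerging constants land exactly on $\alpha_i^{\min}$ and $\beta_{ij}^{\max}$, all while keeping the intermediate points feasible.
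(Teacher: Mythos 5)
Your proposal is correct and follows essentially the same route as the paper's proof: the same pair of minimum-principle inequalities added and passed through the mean-value theorem to obtain $\mathbf{e}(\mathbf{T}(\mathbf{x}))\le\boldsymbol{\Gamma}_{\mathbf{F}}\,\mathbf{e}(\mathbf{x})$, the same use of the equivalence between the P property of $\boldsymbol{\Upsilon}_{\mathbf{F}}$ and a weighted-max-norm contraction of $\boldsymbol{\Gamma}_{\mathbf{F}}$, and the same appeal to the totally asynchronous convergence theorem of \cite{Bertsekas_Book-Parallel-Comp}. The only differences are presentational (you construct the weight vector and the nested boxes explicitly, where the paper cites \cite[Lemma 13.14]{Cottle-Pang-Stone_bookLCP92} and \cite[Prop. 2.1]{Bertsekas_Book-Parallel-Comp}).
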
\vspace{-0.3cm}

\begin{proof} See Appendix \ref{Appendix:Proof-of-Theorem-convergence-best_response}.
\end{proof}\vspace{-0.2cm}

\begin{remark}[Flexibility of the algorithm]\label{Rmk_flexibility_bestREspAlg}
\emph{Algorithm \ref{async_best-response_algo} contains as special
cases a large number of algorithms, each one obtained by a possible
choice of the schedule of the users in the updating procedure (i.e.,
the parameters $\{\tau_{j}^{i}(n)\}$ and $\{\mathcal{T}_{i}\}$).
Examples are the }simultaneous\emph{ (Jacobi scheme) and }sequential\emph{
(Gauss-Seidel scheme) updates, where the players update their own
strategies }simultaneously\emph{ and }sequentially\emph{, respectively.
Indeed, the Jacobi update corresponds to the schedule $\tau_{j}^{i}(n)=n$
and $\mathcal{T}_{i}=\{1,2,\ldots\}$ for all $i$ and $j$, whereas
the Gauss-Seidel scheme is obtained by taking $\tau_{j}^{i}(n)=n$
and $\mathcal{T}_{i}=\{i,\, i+I,\, i+2I,\ldots\}$ for all $i$ and
$j$. Moreover, variations of such a totally asynchronous scheme,
e.g., including constraints on the maximum tolerable delay in the
updating and on the use of the outdated information (which leads to
the so-called }partially\emph{ asynchronous algorithms), can also
be considered \cite{Bertsekas_Book-Parallel-Comp}. An important result
stated in Theorem \ref{Theo-async_best-response_NEP} is that all
the algorithms resulting as special cases of Algorithm \ref{async_best-response_algo}
are guaranteed to reach the unique NE of the NEP, under the same set
of convergence conditions, since the matrix $\boldsymbol{{\Upsilon}}_{\mathbf{F}}$
does not depend on the particular choice of $\{\tau_{j}^{i}(n)\}$
and $\{\mathcal{T}_{i}\}$. Note that all the algorithms coming from
Algorithm \ref{async_best-response_algo} are robust against missing
or outdated updates of the players. This feature strongly relaxes
the constraints on the synchronization of the players' updates; which
makes this class of algorithms appealing in many practical distributed
systems.} 

\emph{Note that the (synchronous) projection-response algorithms for
monotone VIs (and thus NEPs) proposed in} \emph{\cite{Yin-Shanbhag-Mehta_TAC10}
and \cite{Facchinei-Pang_FVI03,Konnov_VI_book} are not guaranteed
to converge if applied to a P$_{\boldsymbol{{\Upsilon}}}$ NEP that
is not monotone.}\hfill{}$\square$\end{remark}\vspace{-0.5cm}

\begin{remark}[On the convergence conditions] \label{rem: strongly convex}
\emph{Global convergence of Algorithm \ref{async_best-response_algo}
is guaranteed under the P property of $\boldsymbol{\Upsilon}_{\mathbf{F}}$
(or equivalently $\rho(\boldsymbol{\Gamma}_{\mathbf{F}})<1$). However,
we have already pointed out in Remark \ref{Remark_uniqueness} that
such a condition cannot be satisfied if there is a player whose cost
function has a singular Hessian, even in just one point. In fact,
if this is the case, we have, $\alpha_{i}^{\min}=0$ for some $i$,
let us say $i=1$, which implies that $\boldsymbol{\Gamma}_{\mathbf{F}}$
has a 1 in the left-upper corner. Since $\boldsymbol{\Gamma}_{\mathbf{F}}$
is nonnegative, we have that this implies $\rho(\boldsymbol{\Gamma_{\mathbf{F}}})\geq1$
\cite[Th. 1.7..4]{Bapat-Raghavan_book}. Assuming that the element
1 is contained in an irreducible principal matrix, we will actually
have $\rho(\boldsymbol{\Gamma_{\mathbf{F}}})>1$. Note that the irreducibility
assumption is extremely weak and trivially satisfied if $\boldsymbol{\Gamma_{\mathbf{F}}}$
is positive, which is true in many applications. In the next section
we discuss a remedy for this issue. }\hfill{}$\square$\end{remark}\vspace{-0.5cm}

\begin{remark}[On the convergence rate] \emph{\label{Rmk_rate}As
shown in Proposition \ref{Proposition_contraction_best-response}
in Appendix \ref{Appendix:Proof-of-Theorem-convergence-best_response},
in the setting of Theorem \ref{Theo-async_best-response_NEP} the
best-response mapping {[}see (\ref{eq:best-response_mapping}){]}
is a contraction. Building on this and choosing for notational simplicity
in (\ref{eq:def_alpha_and_beta_Jac}) $\mathbf{C}_{i}=\mathbf{I}$
for all $i$, it is straightforward to show that the convergence rate
of the synchronous Jacobi version of Algorithm \ref{async_best-response_algo}
is geometric with factor $\left\Vert \boldsymbol{{\Gamma_{\mathbf{F}}}}\right\Vert <1$
(see, e.g., \cite[Prop. 1.1]{Bertsekas_Book-Parallel-Comp}). Therefore,
one can readily determine how many iterations are needed to surely
achieve a desired accuracy $\varepsilon>0$: 
\begin{equation}
\left\Vert \mathbf{x}^{n}-\mathbf{x}^{\star}\right\Vert \leq\varepsilon\quad\mbox{for}\,\,\mbox{any}\,\,\mbox{positive}\,\, n\geq\overline{{n}}\triangleq\log\left(\dfrac{{\varepsilon\,\left(1-\left\Vert \boldsymbol{{\Gamma_{\mathbf{F}}}}\right\Vert \right)}}{\left\Vert \mathbf{x}^{(1)}-\mathbf{x}^{(0)}\right\Vert }\right)/\log\left(\left\Vert \boldsymbol{{\Gamma_{\mathbf{F}}}}\right\Vert \right),\label{eq:error_bound}
\end{equation}
where $\mathbf{x}^{\star}$ is the unique NE of $\mathcal{G}$ and
$\left\Vert \boldsymbol{{\Gamma_{\mathbf{F}}}}\right\Vert <1$ is
the best-response contraction constant, with $\boldsymbol{{\Gamma_{\mathbf{F}}}}$
defined in (\ref{eq:Upsilon_matrix}). Note that when the joint feasible
set $\mathcal{Q}$ is bounded with diameter $d_{\mathcal{Q}}$, we
can obtain an overestimate of $\overline{{n}}$ that is independent
on $\mathbf{x}^{(1)}$ and $\mathbf{x}^{(2)}$: $\overline{{n}}\leq\log\left({\varepsilon\,\left(1-\left\Vert \boldsymbol{{\Gamma_{\mathbf{F}}}}\right\Vert \right)}/d_{\mathcal{Q}}\right)/\log\left(\left\Vert \boldsymbol{{\Gamma_{\mathbf{F}}}}\right\Vert \right)$. }

\emph{The case of asynchronous implementations is conceptually similar
but necessarily more complex; we refer the interested reader to \cite[Sec. 6.3.5]{Bertsekas_Book-Parallel-Comp}.}\hfill{}$\square$\end{remark}\vspace{-0.4cm}

\subsection{Proximal distributed algorithms for monotone NEPs\label{sub:Proximal-decomposition-algorithms_monotone_VI}\vspace{-0.1cm}}

In this section we deal with monotone NEPs (see Definition \ref{Def_monotone_NEP}).
Since monotone NEPs in general have multiple NE, Algorithm \ref{async_best-response_algo}
solving \emph{$P{}_{\boldsymbol{\Upsilon}}$} (or uniformly P) NEPs
may fail to converge. There is a host of solution methods available
in the literature to solve monotone real VIs and thus monotone NEPs
(see, e.g., \cite[Vol. II]{Facchinei-Pang_FVI03}), but these algorithms
are \emph{centralized}. Recently, in \cite{Yin-Shanbhag-Mehta_TAC10},
the authors proposed some distributed synchronous schemes for solving
monotone VIs, based on the gradient-response mapping; we have already
discussed the main drawbacks of these algorithms, see Sec. \ref{sec:Introduction-and-Motivation}
(see also Sec. \ref{sec:Applications} for some numerical results).

The development of distributed \emph{best-response} algorithms for
solving monotone NEPs with (possibly) multiple solutions is a challenging
task; in this subsection, we cope with this issue building on a regularization
technique known as proximal algorithms, see \cite[Ch 12]{Facchinei-Pang_FVI03}
for an introduction to proximal point methods for VIs. The proposed
approach is to reduce the solution of a {\em single} monotone NEP
to the solution of a {\em sequence} of \emph{$P{}_{\boldsymbol{\Upsilon}}$}
NEPs with a particular structure. The advantage of this method is
that we can efficiently solve each of the \emph{$P{}_{\boldsymbol{\Upsilon}}$}
NEPs with convergence guarantee using Algorithm 1 (cf. Sec. \ref{sub:Best-response-decomposition-algos});
the disadvantage is that, to recover the solution of the original
monotone NEP, one has to solve a (possibly infinite) number of \emph{$P{}_{\boldsymbol{\Upsilon}}$}
NEPs. However, it is important to remark from the outset that this
potential drawback is greatly mitigated by the fact that, as we discuss
shortly, (i) one only needs to solve these \emph{$P{}_{\boldsymbol{\Upsilon}}$}
NEPs inaccurately; (ii) the (inaccurate) solution of the \emph{$P{}_{\boldsymbol{\Upsilon}}$}
NEPs usually requires little computational effort; and (iii) in practice,
a fairly accurate solution of the original NEP is obtained after solving
a limited number of \emph{$P{}_{\boldsymbol{\Upsilon}}$} NEPs.

Before introducing the formal description of the algorithm, let us
begin with some simple observations motivating how the sequence of
\emph{$P{}_{\boldsymbol{\Upsilon}}$} NEPs is built. Let $\mathcal{G}=\left\langle \mathcal{Q},\mathbf{f}\right\rangle $
be a monotone NEP; consider a perturbation of this game defined as
$\mathcal{G}_{\tau,\mathbf{y}}=\left\langle \mathcal{Q},\,(f_{i}+(\tau/2)\|\bullet-\mathbf{y}_{i}\|^{2})_{i=1}^{I}\right\rangle $,
where $\tau$ is a positive parameter and $\mathbf{y}=(\mathbf{y}_{i})_{i=1}^{I}$
is a given vector in $\mathbb{R}^{n}$, with each $\mathbf{y}_{i}\in\mathbb{R}^{n_{i}}$;
we term $\mathbf{y}$ center of the regularization. Note that $\mathcal{G}_{\tau,\mathbf{y}}$
is the game wherein each player $i$, anticipating $\mathbf{x}_{-i}\in\mathcal{Q}_{-i}$,
solves the following convex optimization problem: 
\begin{equation}
\begin{array}{ll}
\limfunc{minimize}\limits _{\mathbf{x}_{i}} & f_{i}(\mathbf{x}_{i},\,\mathbf{x}_{-i})+\frac{\tau}{2}\|\mathbf{x}_{i}-\mathbf{y}_{i}\|^{2}\\[5pt]
\text{subject to} & \mathbf{x}_{i}\in{\cal Q}_{i}.
\end{array}\label{eq:game tau}
\end{equation}
Let us consider now the VI reformulations of $\mathcal{G}$ and $\mathcal{G}_{\tau,\mathbf{y}}$,
given by $\ensuremath{\limfunc{VI}(\mathcal{Q},\mathbf{F})}$ and
$\ensuremath{\limfunc{VI}(\mathcal{Q},\mathbf{F}_{\tau,\mathbf{y}})}$
respectively, where $\ensuremath{\mathbf{F}(\mathbf{x})\triangleq\left(\nabla_{\mathbf{x}_{i}}f_{i}(\mathbf{x})\right)_{i=1}^{I}}$
and $\mathbf{F}_{\tau,\mathbf{y}}(\mathbf{x})=\mathbf{F}(\mathbf{x})+\tau\,(\mathbf{x}-\mathbf{y})$,
and let us introduce the matrices $\boldsymbol{\Upsilon}_{\mathbf{F}}$
and $\boldsymbol{\Upsilon}_{\mathbf{F}_{\tau,\mathbf{y}}}$ associated
to $\mathcal{G}$ and $\mathcal{G}_{\tau,\mathbf{y}}$, respectively
{[}see (\ref{eq:Upsilon_matrix}){]}. It is not difficult to check
that\vspace{-0.1cm} 
\begin{equation}
\boldsymbol{\Upsilon}_{\mathbf{F}_{\tau,\mathbf{y}}}=\boldsymbol{\Upsilon}_{\mathbf{F}}+\tau\mathbf{I}.\vspace{-0.1cm}\label{eq:Upsilon_F_tau}
\end{equation}
Note that $\boldsymbol{\Upsilon}_{\mathbf{F}_{\tau,\mathbf{y}}}$
does not depend on $\mathbf{y}$. It follows readily from (\ref{eq:Upsilon_F_tau})
that if $\tau$ is large enough, $\Upsilon_{\mathbf{F}_{\tau,\mathbf{y}}}$
is a P matrix, meaning that $\mathcal{G}_{\tau,\mathbf{y}}$ is a
\emph{$P{}_{\boldsymbol{\Upsilon}}$} NEP, for any given $\mathbf{y}\in\mathbb{\mathbb{R}}^{n}$.
More specifically, using the definitions of $\beta_{ij}^{\max}$'s
and $\alpha_{i}^{\min}$'s as given in (\ref{eq:def_alpha_and_beta_Jac}),
we have the following.\vspace{-0.2cm}

\begin{lemma}\label{lem: monotone tau P} For any given $\mathbf{y}\in\mathbb{R}^{n}$,
the game $\mathcal{G}_{\tau,\mathbf{y}}=\left\langle \mathcal{Q},\,(f_{i}+(\tau/2)\cdot\right.$
$\left.\|\bullet-\mathbf{y}_{i}\|^{2})_{i=1}^{I}\right\rangle $ is
a \emph{$P{}_{\boldsymbol{\Upsilon}}$} NEP for every $\tau$ larger
than $\bar{\tau}$ (independent of $\mathbf{y}$), with\vspace{-0.2cm}
\begin{equation}
\bar{\tau}\triangleq\max_{1\leq i\leq I}\left\{ \sum_{j\neq i}\beta_{ij}^{\max}-\alpha_{i}^{\min}\right\} .\label{eq:upper_bound_on_tau}
\end{equation}
\end{lemma}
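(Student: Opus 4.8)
The plan is to reduce the claim to the identity $\boldsymbol{\Upsilon}_{\mathbf{F}_{\tau,\mathbf{y}}}=\boldsymbol{\Upsilon}_{\mathbf{F}}+\tau\mathbf{I}$ already recorded in \eqref{eq:Upsilon_F_tau}, followed by an application of Proposition~\ref{Cor_SF_for_Pmat}. First I would check that $\mathcal{G}_{\tau,\mathbf{y}}$ still satisfies Assumptions~1 and~2, so that Definition~\ref{Def_monotone_NEP}(iii) can be invoked: the feasible sets $\mathcal{Q}_i$ are unchanged; each added term $\tfrac{\tau}{2}\|\mathbf{x}_i-\mathbf{y}_i\|^2$ is convex and smooth in $\mathbf{x}_i$ and, depending only on $\mathbf{x}_i$, adds nothing to the coupling among the players; and the Jacobian of $\mathbf{F}_{\tau,\mathbf{y}}=\mathbf{F}+\tau(\mathbf{x}-\mathbf{y})$ is just $\mathbf{J}\mathbf{F}+\tau\mathbf{I}$, hence continuous and bounded on $\mathcal{Q}$ whenever $\mathbf{J}\mathbf{F}$ is. Thus proving the lemma reduces to showing that $\boldsymbol{\Upsilon}_{\mathbf{F}_{\tau,\mathbf{y}}}$ is a P-matrix whenever $\tau>\bar\tau$.

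Next I would spell out how the regularization acts on the condensed matrix. Since the perturbation of player~$i$ involves only $\mathbf{x}_i$, the off-diagonal Jacobians $\mathbf{J}_j\mathbf{F}_i$, $j\neq i$, are unaffected, so the quantities $\beta_{ij}^{\max}$ of $\mathcal{G}_{\tau,\mathbf{y}}$ equal those of $\mathcal{G}$, whereas each diagonal Jacobian $\mathbf{J}_i\mathbf{F}_i$ picks up $\tau\mathbf{I}$; this is precisely \eqref{eq:Upsilon_F_tau}, whose $(i,i)$ entry is $\alpha_i^{\min}+\tau$ and whose $(i,j)$ entry with $i\neq j$ is $-\beta_{ij}^{\max}$, the $\alpha_i^{\min}$'s and $\beta_{ij}^{\max}$'s being the data of $\mathcal{G}$. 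In particular, $\bar\tau$ in \eqref{eq:upper_bound_on_tau} is built solely from these data and is therefore independent of $\mathbf{y}$, which is one of the two assertions.

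Finally I would observe that for $\tau>\bar\tau$ the definition of $\bar\tau$ gives $\alpha_i^{\min}+\tau>\sum_{j\neq i}\beta_{ij}^{\max}\geq 0$ for every $i$, so $\boldsymbol{\Upsilon}_{\mathbf{F}_{\tau,\mathbf{y}}}$ has strictly positive diagonal (in particular its matrix $\boldsymbol{\Gamma}$ is well defined), and the first set of sufficient conditions of Proposition~\ref{Cor_SF_for_Pmat}, applied to $\boldsymbol{\Upsilon}_{\mathbf{F}_{\tau,\mathbf{y}}}$ with the weights $w_i\equiv 1$, reads exactly $\sum_{j\neq i}\beta_{ij}^{\max}/(\alpha_i^{\min}+\tau)<1$ for all $i$, which is the inequality just established. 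Hence $\boldsymbol{\Upsilon}_{\mathbf{F}_{\tau,\mathbf{y}}}$ is a P-matrix and $\mathcal{G}_{\tau,\mathbf{y}}$ is a $P_{\boldsymbol{\Upsilon}}$ NEP for every $\tau>\bar\tau$, uniformly in $\mathbf{y}$. (One could equally avoid Proposition~\ref{Cor_SF_for_Pmat}: that same inequality makes $\boldsymbol{\Upsilon}_{\mathbf{F}_{\tau,\mathbf{y}}}$ strictly row-diagonally dominant with positive diagonal, a property shared by every principal submatrix, which by Gershgorin's theorem then has spectrum in the open right half-plane and hence positive determinant, giving the P-matrix property directly.) I do not expect any genuine obstacle here; the only points deserving a line of care are that the boundedness in Assumption~2 survives the perturbation — immediate, since the Jacobian only gains the constant term $\tau\mathbf{I}$ — and that the denominators $\alpha_i^{\min}+\tau$ are strictly positive, which is exactly what $\tau>\bar\tau$ guarantees, so that Proposition~\ref{Cor_SF_for_Pmat} may be legitimately applied.
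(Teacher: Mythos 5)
Your argument is correct and is exactly the route the paper intends (the paper leaves this lemma unproved, presenting it as an immediate consequence of the identity $\boldsymbol{\Upsilon}_{\mathbf{F}_{\tau,\mathbf{y}}}=\boldsymbol{\Upsilon}_{\mathbf{F}}+\tau\mathbf{I}$ in (\ref{eq:Upsilon_F_tau}) together with the diagonal-dominance criterion of Proposition \ref{Cor_SF_for_Pmat}). Your instantiation with $w_i\equiv 1$ recovers precisely the threshold $\bar{\tau}$ in (\ref{eq:upper_bound_on_tau}), and the observations that the off-diagonal data are unchanged and that $\bar{\tau}$ does not depend on $\mathbf{y}$ are the right supporting details.
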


Nice as it is, the result above would be of no practical interest
if we were not able to connect the solutions of $\mathcal{G}_{\tau,\mathbf{y}}$
to those of $\mathcal{G}$. Indeed, the solutions of $\mathcal{G}$
and $\mathcal{G}_{\tau,\mathbf{y}}$ are in general different but,
nevertheless, there exists a connection between them: a point $\mathbf{x}^{\star}$
is a solution of $\mathcal{G}$ if and only if $\mathbf{x}^{\star}$
is a solution of $\mathcal{G}_{\tau,\mathbf{x}^{\star}}$.\vspace{-0.2cm}

\begin{proposition} \label{Proposition_G_and_G_tau_x}Let $\mathcal{G}=\left\langle \mathcal{Q},\mathbf{f}\right\rangle $
be a monotone NEP. For any given $\tau>0$, $\mathbf{x}^{\star}\in\mathcal{Q}$
is a solution of $\mathcal{G}$ if and only if $\mathbf{x}^{\star}$
is a solution of $\mathcal{G}_{\tau,\mathbf{x}^{\star}}$. \end{proposition}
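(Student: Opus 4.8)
The plan is to derive the statement as a one-line consequence of the VI reformulation (Proposition~\ref{VI_reformulation}), the only real content being a trivial cancellation.

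First I would check that the perturbed game $\mathcal{G}_{\tau,\mathbf{y}}$ still falls under the VI reformulation for \emph{every} center $\mathbf{y}\in\mathbb{R}^n$: since $\mathcal{G}$ (being a monotone NEP) satisfies Assumption~1, each perturbed payoff $f_i(\mathbf{x}_i,\mathbf{x}_{-i})+\tfrac{\tau}{2}\|\mathbf{x}_i-\mathbf{y}_i\|^2$ is a sum of a function convex in $\mathbf{x}_i$ and a convex quadratic in $\mathbf{x}_i$, hence convex in $\mathbf{x}_i$, and it remains continuously differentiable on $\mathcal{Q}$; the feasible sets $\mathcal{Q}_i$ are unchanged, hence still closed and convex. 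Thus Proposition~\ref{VI_reformulation} applies to $\mathcal{G}_{\tau,\mathbf{y}}$, and its solution set equals that of $\limfunc{VI}(\mathcal{Q},\mathbf{F}_{\tau,\mathbf{y}})$, where (as already observed in the text) $\mathbf{F}_{\tau,\mathbf{y}}(\mathbf{x})=\mathbf{F}(\mathbf{x})+\tau(\mathbf{x}-\mathbf{y})$ and $\mathbf{F}(\mathbf{x})=(\nabla_{\mathbf{x}_i}f_i(\mathbf{x}))_{i=1}^I$; similarly the solution set of $\mathcal{G}$ equals that of $\limfunc{VI}(\mathcal{Q},\mathbf{F})$.

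Next I would take the center to be $\mathbf{y}=\mathbf{x}^\star$ and evaluate the map at the very point $\mathbf{x}^\star$: the regularization term vanishes, $\mathbf{F}_{\tau,\mathbf{x}^\star}(\mathbf{x}^\star)=\mathbf{F}(\mathbf{x}^\star)+\tau(\mathbf{x}^\star-\mathbf{x}^\star)=\mathbf{F}(\mathbf{x}^\star)$. Therefore the inequality characterizing $\mathbf{x}^\star$ as a solution of $\limfunc{VI}(\mathcal{Q},\mathbf{F}_{\tau,\mathbf{x}^\star})$, namely $(\mathbf{x}-\mathbf{x}^\star)^T\mathbf{F}_{\tau,\mathbf{x}^\star}(\mathbf{x}^\star)\ge 0$ for all $\mathbf{x}\in\mathcal{Q}$, is literally the inequality $(\mathbf{x}-\mathbf{x}^\star)^T\mathbf{F}(\mathbf{x}^\star)\ge 0$ for all $\mathbf{x}\in\mathcal{Q}$ characterizing $\mathbf{x}^\star$ as a solution of $\limfunc{VI}(\mathcal{Q},\mathbf{F})$. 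Chaining the equivalences from Proposition~\ref{VI_reformulation} then yields $\mathbf{x}^\star$ solves $\mathcal{G}_{\tau,\mathbf{x}^\star}$ $\iff$ $\mathbf{x}^\star$ solves $\limfunc{VI}(\mathcal{Q},\mathbf{F}_{\tau,\mathbf{x}^\star})$ $\iff$ $\mathbf{x}^\star$ solves $\limfunc{VI}(\mathcal{Q},\mathbf{F})$ $\iff$ $\mathbf{x}^\star$ solves $\mathcal{G}$, which is what we want.

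No step here is genuinely difficult; the only thing to be careful about is the bookkeeping around the hypotheses of the VI reformulation (so that it legitimately applies to $\mathcal{G}_{\tau,\mathbf{x}^\star}$), together with the elementary observation $\tau(\mathbf{x}^\star-\mathbf{x}^\star)=\mathbf{0}$ that makes the two VIs coincide. It is worth noting that neither the monotonicity of $\mathcal{G}$ nor the threshold $\bar\tau$ of Lemma~\ref{lem: monotone tau P} enters this proposition; monotonicity becomes relevant only when this fixed-point characterization is iterated in the proximal scheme. If one wishes to avoid VI language altogether, the same conclusion follows directly from the per-player problems: the ``only if'' direction because $\tfrac{\tau}{2}\|\mathbf{x}_i-\mathbf{x}_i^\star\|^2\ge 0$ with equality at $\mathbf{x}_i^\star$, and the ``if'' direction because the first-order optimality condition for the regularized player problem reduces, at $\mathbf{x}_i^\star$, to that for the unregularized one, the gradient of the regularizer being zero there.
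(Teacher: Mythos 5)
Your proof is correct and follows essentially the same route as the paper's: pass to the VI reformulation, set the center $\mathbf{y}=\mathbf{x}^{\star}$, and observe that $\tau(\mathbf{x}^{\star}-\mathbf{x}^{\star})=\mathbf{0}$ makes $\limfunc{VI}(\mathcal{Q},\mathbf{F}_{\tau,\mathbf{x}^{\star}})$ and $\limfunc{VI}(\mathcal{Q},\mathbf{F})$ coincide at $\mathbf{x}^{\star}$. Your side remark that monotonicity of $\mathcal{G}$ is not actually needed for this equivalence is accurate; the paper invokes it only to guarantee that $\mathcal{G}_{\tau,\mathbf{y}}$ has a unique solution $\mathbf{S}_{\tau}(\mathbf{y})$, which lets it phrase the statement as a fixed-point identity but is not essential to the cancellation argument itself.
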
\vspace{-0.3cm}

\noindent \begin{proof}See Appendix \ref{Proof-of-Proposition_G_and_G_tau_x}.\end{proof}

\noindent \indent Lemma \ref{lem: monotone tau P} and Proposition
\ref{Proposition_G_and_G_tau_x} open the way to the design of convergent
distributed algorithms for monotone NEPs, as shown next. Let us choose
$\tau$ being large enough so that $\mathcal{G}_{\tau,\mathbf{y}}$
is a \emph{$P{}_{\boldsymbol{\Upsilon}}$} NEP (cf. Lemma \ref{lem: monotone tau P}).
It follows from Theorem \ref{Theo_existence_uniquenessNE} that $\mathcal{G}_{\tau,\mathbf{y}}$
has a unique solution, denoted by $\mathbf{S}_{\tau}(\mathbf{y})$.
Using $\mathbf{S}_{\tau}(\mathbf{y})$, Proposition \ref{Proposition_G_and_G_tau_x}
can be restated as follows: $\mathbf{x}^{\star}$ is a solution of
$\mathcal{G}$ if and only if it is a fixed point of $\mathbf{S}_{\tau}(\mathbf{\bullet})$,
i.e., $\mathbf{x}^{\star}=\mathbf{S}_{\tau}(\mathbf{x}^{\star})$.
It seems then natural to compute the solutions of $\mathcal{G}$ using
the fixed-point-type iteration $\mathbf{x}^{(n+1)}=\mathbf{S}_{\tau}(\mathbf{\mathbf{x}}^{(n)})$,
starting from a feasible point $\mathbf{x}^{(0)}$; which corresponds
to solving the sequence of NEPs $\mathcal{G}_{\tau,\mathbf{x}^{(n)}}$
for $n=0,1,\ldots$. If $\tau$ is sufficiently large {[}e.g., as
in (\ref{eq:upper_bound_on_tau}){]}, each $\mathcal{G}_{\tau,\mathbf{x}^{(n)}}$
is a \emph{$P{}_{\boldsymbol{\Upsilon}}$} NEP (cf. Lemma \ref{lem: monotone tau P}),
and thus its unique solution can be computed in a distributed way
with convergence guarantee by the asynchronous best-response algorithm
described in Algorithm \ref{async_best-response_algo} (cf. Theorem
\ref{Theo-async_best-response_NEP}). The above discussion motivates
the following algorithm for computing the solutions of a monotone
NEP, whose convergence properties are given in Theorem \ref{Exact ProxDecAlg_conv_theo}
below.

\begin{algo}{Proximal Decomposition Algorithm (PDA)} S$\textbf{Data}:$
Let $\tau>0$ be given. \\[1pt] \texttt{$\mbox{(\mbox{S.0})}:$}
Choose any feasible $\mathbf{x}^{(0)}\in\mathcal{Q}$ and set $n=0$.

\texttt{$\mbox{(S.1)}:$} \texttt{If} $\mathbf{x}^{(n)}$ satisfies
a suitable termination criterion: STOP.

\texttt{$\mbox{(S.2)}:$} Solve the game $\mathcal{G}_{\tau,\mathbf{x}^{(n)}}$
and set $\mathbf{x}^{(n+1)}\triangleq\mathbf{S}_{\tau}(\mathbf{x}^{(n)})$

\texttt{$\mbox{(S.3)}:$} $n\leftarrow n+1$; go to \texttt{$\mbox{(S.1)}.$}
\label{alg:PDA}\end{algo}

\begin{theorem}\label{Exact ProxDecAlg_conv_theo} Let $\mathcal{G}=\left\langle \mathcal{Q},\,\mathbf{f}\right\rangle $
be a monotone NEP with a nonempty solution set. Suppose that $\tau$
is large enough so that $\boldsymbol{\Upsilon}_{\mathbf{F}_{\tau,\,\mathbf{y}}}$
is a P matrix. Then, Algorithm \ref{alg:PDA} is well defined, and
the sequence $\{\mathbf{x}^{(n)}\}_{n=0}^{\infty}$ generated by the
algorithm converges to a solution of the game $\mathcal{G}$. \end{theorem}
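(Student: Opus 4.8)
The plan is to recognize Algorithm \ref{alg:PDA} as the classical proximal point iteration applied to the monotone problem $\mathrm{VI}(\mathcal{Q},\mathbf{F})$ and then run the standard Fej\'er-monotonicity argument. The first, easy, ingredient is well-definedness: by Lemma \ref{lem: monotone tau P} the hypothesis that $\boldsymbol{\Upsilon}_{\mathbf{F}_{\tau,\mathbf{y}}}$ is a P matrix makes $\mathcal{G}_{\tau,\mathbf{y}}$ a $P_{\boldsymbol{\Upsilon}}$ NEP for \emph{every} center $\mathbf{y}\in\mathbb{R}^n$; hence, by Theorem \ref{Theo_existence_uniquenessNE}(d) it has a \emph{unique} NE $\mathbf{S}_\tau(\mathbf{y})$, so Step (S.2) is unambiguous and the sequence $\{\mathbf{x}^{(n)}\}$ is generated without obstruction. (As a side remark, each inner game $\mathcal{G}_{\tau,\mathbf{x}^{(n)}}$ being $P_{\boldsymbol{\Upsilon}}$, its unique solution is exactly what Algorithm \ref{async_best-response_algo} computes, by Theorem \ref{Theo-async_best-response_NEP}; but this concerns implementability, not the convergence proof itself.)

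Recall that $\mathbf{S}_\tau(\mathbf{y})$ is, by Proposition \ref{VI_reformulation}, the unique solution of $\mathrm{VI}(\mathcal{Q},\mathbf{F}_{\tau,\mathbf{y}})$ with $\mathbf{F}_{\tau,\mathbf{y}}(\mathbf{x})=\mathbf{F}(\mathbf{x})+\tau(\mathbf{x}-\mathbf{y})$, so Proposition \ref{Proposition_G_and_G_tau_x} says precisely that the fixed points of $\mathbf{S}_\tau$ coincide with the (nonempty, by hypothesis) solution set of $\mathcal{G}$. The analytic core is a firm-nonexpansiveness estimate for $\mathbf{S}_\tau$: writing $\mathbf{u}_k=\mathbf{S}_\tau(\mathbf{y}_k)$, $k=1,2$, I would insert $\mathbf{u}_2$ and $\mathbf{u}_1$ as test points in the two defining variational inequalities, add the two resulting inequalities, and use monotonicity of $\mathbf{F}$ on $\mathcal{Q}$ (Definition \ref{Def_monotone_NEP}) to kill the $\mathbf{F}$-terms; after rearranging the $\tau$-part this gives
\[
\|\mathbf{u}_1-\mathbf{u}_2\|^2\;\le\;(\mathbf{u}_1-\mathbf{u}_2)^T(\mathbf{y}_1-\mathbf{y}_2),
\]
so in particular $\mathbf{S}_\tau$ is nonexpansive, hence continuous. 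Specializing $\mathbf{y}_1=\mathbf{u}_1=\mathbf{x}^\star$ to a fixed point and $\mathbf{y}_2=\mathbf{x}^{(n)}$, $\mathbf{u}_2=\mathbf{x}^{(n+1)}$, and using $2\mathbf{a}^T\mathbf{b}=\|\mathbf{a}\|^2+\|\mathbf{b}\|^2-\|\mathbf{a}-\mathbf{b}\|^2$, I get
\[
\|\mathbf{x}^{(n+1)}-\mathbf{x}^\star\|^2\;\le\;\|\mathbf{x}^{(n)}-\mathbf{x}^\star\|^2-\|\mathbf{x}^{(n+1)}-\mathbf{x}^{(n)}\|^2 .
\]

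From here the conclusion is routine. The last display shows $\{\|\mathbf{x}^{(n)}-\mathbf{x}^\star\|\}$ is nonincreasing (so $\{\mathbf{x}^{(n)}\}$ is bounded, living in a compact subset of $\mathcal{Q}$) and, telescoping, $\sum_n\|\mathbf{x}^{(n+1)}-\mathbf{x}^{(n)}\|^2<\infty$, whence $\mathbf{x}^{(n+1)}-\mathbf{x}^{(n)}\to\mathbf{0}$. Let $\bar{\mathbf{x}}$ be the limit of any convergent subsequence $\{\mathbf{x}^{(n_k)}\}$; continuity of $\mathbf{S}_\tau$ gives $\mathbf{x}^{(n_k+1)}=\mathbf{S}_\tau(\mathbf{x}^{(n_k)})\to\mathbf{S}_\tau(\bar{\mathbf{x}})$, while $\mathbf{x}^{(n_k+1)}-\mathbf{x}^{(n_k)}\to\mathbf{0}$ forces $\mathbf{S}_\tau(\bar{\mathbf{x}})=\bar{\mathbf{x}}$, i.e.\ $\bar{\mathbf{x}}$ solves $\mathcal{G}$ by Proposition \ref{Proposition_G_and_G_tau_x}. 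Finally, taking $\mathbf{x}^\star=\bar{\mathbf{x}}$ in the monotone-nonincreasing quantity $\|\mathbf{x}^{(n)}-\bar{\mathbf{x}}\|$, which has $0$ as a subsequential limit, shows the whole sequence converges to $\bar{\mathbf{x}}$.

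I expect the only real obstacle to be the firm-nonexpansiveness inequality for $\mathbf{S}_\tau$: choosing the right test vectors in the two variational inequalities and handling the $\tau(\mathbf{x}-\mathbf{y})$ terms exactly, where the fact that $\tau$ is a \emph{constant} regularization parameter (not a varying one) is what keeps the estimate clean and the fixed-point characterization of Proposition \ref{Proposition_G_and_G_tau_x} directly usable. Everything else --- well-definedness from Lemma \ref{lem: monotone tau P} plus Theorem \ref{Theo_existence_uniquenessNE}, and the Fej\'er/asymptotic-regularity wrap-up --- is standard bookkeeping. One should double-check that no compactness of $\mathcal{Q}$ is silently used: boundedness of $\{\mathbf{x}^{(n)}\}$ comes for free from the Fej\'er inequality once the solution set is nonempty, which is exactly the standing hypothesis.
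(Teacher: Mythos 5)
Your proof is correct and follows the same route the paper relies on: the paper omits the argument entirely, delegating it to \cite[Th.\ 12.3.9]{Facchinei-Pang_FVI03} (convergence of the inexact, over-relaxed proximal point method for monotone VIs), and what you have written out is precisely the standard proof of that result specialized to the exact, unrelaxed case --- well-definedness via Lemma \ref{lem: monotone tau P} and Theorem \ref{Theo_existence_uniquenessNE}(d), firm nonexpansiveness of $\mathbf{S}_{\tau}$ from the two variational inequalities plus monotonicity of $\mathbf{F}$, and the Fej\'er-monotonicity wrap-up using Proposition \ref{Proposition_G_and_G_tau_x} to identify fixed points with Nash equilibria. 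All steps check out, including your observation that boundedness of the iterates needs only the nonemptiness of the solution set and not compactness of $\mathcal{Q}$.
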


\noindent \indent Algorithm \ref{alg:PDA} is of great conceptual
interest, but its applicability is limited, unless one is able to
easily compute $\mathbf{S}_{\tau}(\mathbf{x}^{(n)})$. Although there
are interesting problems in which this can be done efficiently (see
Sec. \ref{sec:Applications}), in general one is expected to solve
a number of \emph{$P{}_{\boldsymbol{\Upsilon}}$} NEPs, each of them
requiring an infinite iterative method to compute each $\mathbf{S}_{\tau}(\mathbf{x}^{(n)})$.
To overcome this issue, we propose next a variant of Algorithm \ref{alg:PDA},
in which suitable approximations of $\mathbf{S}_{\tau}(\mathbf{x}^{(n)})$
can be used. Algorithm \ref{alg:APDA} below describes such a variant,
where we have added a further degree of freedom in the updating rule:
the new iteration $\mathbf{x}^{(n+1)}$ is not necessarily given by
(an approximation of) $\mathbf{S}_{\tau}(\mathbf{x}^{(n)})$, but
lies instead on the line connecting the old iteration $\mathbf{x}^{(n)}$
to (the approximation of) $\mathbf{S}_{\tau}(\mathbf{x}^{(n)})$.\vspace{-0.4cm}

\noindent \begin{algo}{Approximate Proximal Decomposition Algorithm
(APDA)} S$\textbf{Data}:$ Let $\{\varepsilon^{(n)}\}_{n=0}^{\infty}$,
$\{\eta^{(n)}\}_{n=0}^{\infty}$ and $\tau>0$ be given.\\[1pt] \texttt{$\mbox{(\mbox{S.0})}:$}
Choose any feasible $\mathbf{x}^{(0)}\in\mathcal{Q}$ and set $n=0$.

\texttt{$\mbox{(S.1)}:$} \texttt{If} $\mathbf{x}^{(n)}$ satisfies
a suitable termination criterion: STOP.

\texttt{$\mbox{(S.2)}:$} Solve the game $\mathcal{G}_{\tau,\mathbf{x}^{(n)}}$
within the accuracy $\varepsilon^{(n)}$: Find a $\mathbf{z}^{(n)}$
s.t. $\|\mathbf{z}^{(n)}-\mathbf{S}_{\tau}(\mathbf{x}^{(n)})\|\leq\varepsilon^{(n)}$.

\texttt{$\mbox{(S.3)}:$} Set $\mathbf{x}^{(n+1)}\triangleq(1-\eta^{(n)})\mathbf{x}^{(n)}+\eta^{(n)}\mathbf{z}^{(n)}$.

\texttt{$\mbox{(S.4)}:$} $n\leftarrow n+1$; go to \texttt{$\mbox{(S.1)}.$}
\label{alg:APDA}\end{algo}

\noindent \indent The error term $\varepsilon^{(n)}$ measures the
accuracy used at iteration $n$ in computing the solution $\mathbf{S}_{\tau}(\mathbf{x}^{(n)})$
of $\mathcal{G}_{\tau,\mathbf{x}^{(n)}}$. The parameter $\eta^{(n)}$
instead, introduces a memory in the updating rule, it establishes
where exactly we move along the line passing through the old iterations
$\mathbf{x}^{(n)}$ and $\mathbf{z}^{(n)}$. Note that if we take
$\varepsilon^{(n)}=0$ and $\eta^{(n)}=1$ for all $n$, Algorithm
\ref{alg:APDA} reduces to Algorithm \ref{alg:PDA}. The advantage
of Algorithm \ref{alg:APDA} with respect to Algorithm \ref{alg:PDA}
is that $\mathbf{z}^{(n)}$ can be computed in a finite number of
steps, so that Algorithm \ref{alg:APDA} becomes implementable in
practice. Obviously, the errors $\varepsilon^{(n)}$'s and the parameters
$\eta^{(n)}$'s must be chosen according to some suitable conditions,
if one wants to guarantee convergence. These conditions are established
in the following theorem.\vspace{-0.2cm}

\begin{theorem}\label{ProxDecAlg_conv_theo} Let $\mathcal{G}=\left\langle \mathcal{Q},\,\mathbf{f}\right\rangle $
be a monotone NEP with a nonempty solution set. Suppose that $\tau$
is large enough so that $\boldsymbol{\Upsilon}_{\mathbf{F}_{\tau,\,\mathbf{y}}}$
is a P-matrix. Choose $\{\varepsilon^{(n)}\}\subset[0,\infty)$ such
that $\sum_{n=1}^{\infty}\varepsilon^{(n)}<\infty$ and $\{\eta^{(n)}\}\subset[R_{m},R_{M}]$,
with $0<R_{m}\leq R_{M}<2$. Then, Algorithm \ref{alg:APDA} is well
defined, and the sequence $\{\mathbf{x}^{(n)}\}_{n=0}^{\infty}$ generated
by the algorithm converges to a solution of $\mathcal{G}$. \end{theorem}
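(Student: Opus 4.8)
The plan is to recognize Algorithm \ref{alg:APDA} as an inexact relaxed proximal-point iteration for the monotone VI$(\mathcal{Q},\mathbf{F})$. By Lemma \ref{lem: monotone tau P} and Theorem \ref{Theo_existence_uniquenessNE}, for the chosen $\tau$ each game $\mathcal{G}_{\tau,\mathbf{y}}$ is a $P_{\boldsymbol{\Upsilon}}$ NEP and hence has a unique solution $\mathbf{S}_\tau(\mathbf{y})$, so the map $\mathbf{S}_\tau:\mathcal{Q}\to\mathcal{Q}$ is well defined; and by Proposition \ref{Proposition_G_and_G_tau_x} its fixed points are exactly the solutions of $\mathcal{G}$, a set assumed nonempty. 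Well-definedness of the algorithm is then immediate: at each $n$ the point $\mathbf{S}_\tau(\mathbf{x}^{(n)})$ exists and a feasible $\mathbf{z}^{(n)}$ with $\|\mathbf{z}^{(n)}-\mathbf{S}_\tau(\mathbf{x}^{(n)})\|\le\varepsilon^{(n)}$ exists (e.g.\ $\mathbf{z}^{(n)}=\mathbf{S}_\tau(\mathbf{x}^{(n)})$, and in practice Algorithm \ref{async_best-response_algo} reaches such accuracy in finitely many steps by Theorem \ref{Theo-async_best-response_NEP}).

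The first substantive step is to establish that $\mathbf{S}_\tau$ is firmly nonexpansive: for $\mathbf{y}_1,\mathbf{y}_2\in\mathcal{Q}$, writing $\mathbf{x}_k=\mathbf{S}_\tau(\mathbf{y}_k)$, one has $\langle\mathbf{x}_1-\mathbf{x}_2,\mathbf{y}_1-\mathbf{y}_2\rangle\ge\|\mathbf{x}_1-\mathbf{x}_2\|^2$. This comes from writing the defining inequalities of VI$(\mathcal{Q},\mathbf{F}_{\tau,\mathbf{y}_1})$ and VI$(\mathcal{Q},\mathbf{F}_{\tau,\mathbf{y}_2})$ with $\mathbf{F}_{\tau,\mathbf{y}}(\mathbf{x})=\mathbf{F}(\mathbf{x})+\tau(\mathbf{x}-\mathbf{y})$, testing the first at $\mathbf{x}_2$ and the second at $\mathbf{x}_1$, adding, and using monotonicity of $\mathbf{F}$ (which holds since $\mathcal{G}$ is a monotone NEP) to discard the nonnegative term $\langle\mathbf{x}_1-\mathbf{x}_2,\mathbf{F}(\mathbf{x}_1)-\mathbf{F}(\mathbf{x}_2)\rangle$. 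Taking $\mathbf{y}_2=\mathbf{x}^\star$ a solution of $\mathcal{G}$ (so $\mathbf{x}_2=\mathbf{S}_\tau(\mathbf{x}^\star)=\mathbf{x}^\star$) and expanding $\|\mathbf{x}^{(n)}-\mathbf{S}_\tau(\mathbf{x}^{(n)})\|^2$ yields, for every $n$, the inequality $\|\mathbf{S}_\tau(\mathbf{x}^{(n)})-\mathbf{x}^\star\|^2\le\|\mathbf{x}^{(n)}-\mathbf{x}^\star\|^2-\|\mathbf{x}^{(n)}-\mathbf{S}_\tau(\mathbf{x}^{(n)})\|^2$. In particular $\mathbf{S}_\tau$ is nonexpansive, hence continuous.

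Next comes the quasi-Fej\'er bookkeeping. Fix a solution $\mathbf{x}^\star$, set $\mathbf{u}^{(n)}:=\mathbf{S}_\tau(\mathbf{x}^{(n)})$ and let $\widetilde{\mathbf{x}}^{(n+1)}:=(1-\eta^{(n)})\mathbf{x}^{(n)}+\eta^{(n)}\mathbf{u}^{(n)}$ be the exact iterate. Expanding $\|\widetilde{\mathbf{x}}^{(n+1)}-\mathbf{x}^\star\|^2$ via the identity $\|(1-\lambda)\mathbf{a}+\lambda\mathbf{b}\|^2=(1-\lambda)\|\mathbf{a}\|^2+\lambda\|\mathbf{b}\|^2-\lambda(1-\lambda)\|\mathbf{a}-\mathbf{b}\|^2$ and substituting the displayed inequality above gives $\|\widetilde{\mathbf{x}}^{(n+1)}-\mathbf{x}^\star\|^2\le\|\mathbf{x}^{(n)}-\mathbf{x}^\star\|^2-\eta^{(n)}(2-\eta^{(n)})\|\mathbf{x}^{(n)}-\mathbf{u}^{(n)}\|^2$; since $\eta^{(n)}\in[R_m,R_M]\subset(0,2)$ there is $c>0$ with $\eta^{(n)}(2-\eta^{(n)})\ge c$, and in particular $\|\widetilde{\mathbf{x}}^{(n+1)}-\mathbf{x}^\star\|\le\|\mathbf{x}^{(n)}-\mathbf{x}^\star\|$. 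Because $\|\mathbf{x}^{(n+1)}-\widetilde{\mathbf{x}}^{(n+1)}\|\le\eta^{(n)}\varepsilon^{(n)}\le R_M\varepsilon^{(n)}$, summability $\sum_n\varepsilon^{(n)}<\infty$ first forces $\{\mathbf{x}^{(n)}\}$, hence $\{\widetilde{\mathbf{x}}^{(n+1)}\}$, to be bounded, say by $D$; then squaring $\|\mathbf{x}^{(n+1)}-\mathbf{x}^\star\|\le\|\widetilde{\mathbf{x}}^{(n+1)}-\mathbf{x}^\star\|+R_M\varepsilon^{(n)}$ yields $\|\mathbf{x}^{(n+1)}-\mathbf{x}^\star\|^2\le\|\mathbf{x}^{(n)}-\mathbf{x}^\star\|^2-c\|\mathbf{x}^{(n)}-\mathbf{u}^{(n)}\|^2+\gamma^{(n)}$ with $\gamma^{(n)}:=2R_MD\,\varepsilon^{(n)}+R_M^2(\varepsilon^{(n)})^2$ and $\sum_n\gamma^{(n)}<\infty$.

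Finally I would invoke the elementary lemma that $a_{n+1}\le a_n-b_n+\gamma_n$ with $a_n,b_n\ge0$ and $\sum_n\gamma_n<\infty$ implies $\{a_n\}$ converges and $\sum_n b_n<\infty$. Applied to the last inequality for a fixed solution $\mathbf{x}^\star$, this gives convergence of $\{\|\mathbf{x}^{(n)}-\mathbf{x}^\star\|\}$ and $\|\mathbf{x}^{(n)}-\mathbf{S}_\tau(\mathbf{x}^{(n)})\|\to0$. Boundedness furnishes a subsequence $\mathbf{x}^{(n_k)}\to\bar{\mathbf{x}}$; continuity of $\mathbf{S}_\tau$ together with $\|\mathbf{x}^{(n_k)}-\mathbf{S}_\tau(\mathbf{x}^{(n_k)})\|\to0$ gives $\bar{\mathbf{x}}=\mathbf{S}_\tau(\bar{\mathbf{x}})$, so $\bar{\mathbf{x}}$ solves $\mathcal{G}$ by Proposition \ref{Proposition_G_and_G_tau_x}; then using convergence of $\{\|\mathbf{x}^{(n)}-\mathbf{x}^\star\|\}$ with $\mathbf{x}^\star=\bar{\mathbf{x}}$ along the subsequence forces the whole sequence to converge to $\bar{\mathbf{x}}$. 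The only delicate point is the ordering in the error analysis: one must extract boundedness of $\{\mathbf{x}^{(n)}\}$ from $\sum_n\varepsilon^{(n)}<\infty$ \emph{before} the cross term produced by squaring the triangle inequality can be absorbed into the summable perturbation $\gamma^{(n)}$; the rest is routine.
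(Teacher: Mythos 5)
Your proof is correct. Note, though, that the paper does not actually prove Theorem \ref{ProxDecAlg_conv_theo} in-house: it observes that $\mathcal{G}$ is equivalent to the monotone $\limfunc{VI}(\mathcal{Q},\mathbf{F})$ with nonempty solution set, that Step S.2 is well defined under the P property of $\boldsymbol{\Upsilon}_{\mathbf{F}_{\tau,\mathbf{y}}}$, and then defers entirely to \cite[Th.\ 12.3.9]{Facchinei-Pang_FVI03} on inexact relaxed proximal-point methods for monotone VIs. What you have written is, in effect, a self-contained derivation of exactly the mechanism behind that citation: firm nonexpansiveness of the proximal map $\mathbf{S}_{\tau}$ obtained by testing the two regularized VIs against each other and discarding the monotone term, the relaxation identity giving the quasi-Fej\'er inequality with the factor $\eta^{(n)}(2-\eta^{(n)})\geq c>0$, absorption of the summable inexactness into a summable perturbation (with the correct ordering: boundedness first, then the cross term), and the standard limit-point/Fej\'er argument to upgrade subsequential to full convergence. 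Your route buys self-containedness and makes explicit where each hypothesis ($\sum_n\varepsilon^{(n)}<\infty$, $\eta^{(n)}\in[R_m,R_M]\subset(0,2)$, monotonicity, nonemptiness of $\limfunc{SOL}(\mathcal{Q},\mathbf{F})$) enters, at the cost of reproving a known result. One small correction: since the iterates $\mathbf{x}^{(n)}$ need not be feasible (the paper itself points this out), you should regard $\mathbf{S}_{\tau}$ as a map from all of $\mathbb{R}^{n}$ into $\mathcal{Q}$ --- Lemma \ref{lem: monotone tau P} and Proposition \ref{Proposition_G_and_G_tau_x} are stated for arbitrary $\mathbf{y}\in\mathbb{R}^{n}$, and your firm-nonexpansiveness computation goes through verbatim for $\mathbf{y}_{1},\mathbf{y}_{2}\in\mathbb{R}^{n}$, so nothing else changes; this also guarantees that any limit point, being a limit of the points $\mathbf{S}_{\tau}(\mathbf{x}^{(n)})\in\mathcal{Q}$, is feasible.
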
\vspace{-0.2cm}

The proof of Theorem \ref{ProxDecAlg_conv_theo} (and thus also Theorem
\ref{Exact ProxDecAlg_conv_theo}) is a consequence of the following
facts and thus is omitted: i) \cite[Th. 12..3.9]{Facchinei-Pang_FVI03};
ii) The observation that $\mathcal{G}$ is equivalent to the VI$(\mathcal{Q},\mathbf{F})$
(Proposition \ref{VI_reformulation}), with $\mathbf{F}=(\nabla_{\mathbf{x}_{i}}f_{i})_{i=1}^{I}$,
and the VI$(\mathcal{Q},\mathbf{F})$ has a solution; and iii) Under
the P property of $\boldsymbol{\Upsilon}_{\mathbf{F}_{\tau,\,\mathbf{y}}}$,
Step S.2 of Algorithm \ref{alg:APDA} (and Algorithm \ref{alg:PDA})
is well defined (Theorem \ref{Theo-async_best-response_NEP}).

It is interesting to remark that for sake of simplicity we assumed
$\tau$ to be a fixed number. However, $\tau$ can be varied from
iteration to iteration provided that $\tau\in(\bar{{\tau}},\tau^{\max}]$,
where $\tau^{\max}$ is any finite number. We also note that the sequence
$\{\mathbf{x}^{(n)}\}$ generated by Algorithm \ref{alg:APDA} may
not be feasible, but all the limit points are feasible. If one is
interested in maintaining feasibility throughout the iterates, it
is enough to choose $\eta^{(n)}\leq1$ and compute in Step 2 a feasible
$\mathbf{z}^{(n)}$, which can be done, e.g., by applying Algorithm
\ref{async_best-response_algo} to $\mathcal{G}_{\tau,\mathbf{x}^{(n)}}$. 

While the utility of having possibly inexact solutions in Step S.2
of Algorithm \ref{alg:APDA} is apparent, the usefulness of Step S.3
is less evident. This kind of ``averaging'' is known as over-relaxation
and has its roots in classical successive over-relaxation methods
for solving systems of linear equations \cite[Sec. 7.4]{Ortega-Rheinboldt_book}.
In our context, the extra degree of freedom offered by Step S.3 can
bring numerical improvements; see, e.g., \cite{Facchinei-Pang_FVI03}.

\noindent \indent Algorithm \ref{alg:APDA} is conceptually a double
loop scheme wherein at each (outer) iteration $n$, given $\mathbf{x}^{(n)}$,
one needs to compute the approximation $\mathbf{z}^{(n)}$, which
requires an inner iterative process. Since the condition $\sum_{n=1}^{\infty}\varepsilon^{(n)}<\infty$
implies $\varepsilon^{(n)}\downarrow0$, when the iterations progress,
$\mathbf{S}_{\tau}(\mathbf{x}^{(n)})$ has to be estimated with an
increasing accuracy. However, in practice, this is not a problem since
when iterations progress, $\{\mathbf{x}^{(n)}\}$ usually converges,
implying $\|\mathbf{x}^{(n+1)}-\mathbf{x}^{(n)}\|\to0$. One can then
use $\mathbf{x}^{(n)}$ as a good approximation to initialize any
(inner) procedure in Step 2 to compute $\mathbf{z}^{(n)}$. It turns
out that, in spite of the increasing precision requirements, in practice
a suitable $\mathbf{z}^{(n)}$ in Step 2 can be computed very easily.

Finally, observe that a natural choice for computing $\mathbf{z}^{(n)}$
in Step 2 of Algorithm \ref{alg:APDA} is Algorithm \ref{async_best-response_algo}.
When this choice is made, Algorithm \ref{alg:APDA} also becomes an
asynchronous method, having all the desired features described in
the previous section (see Remark \ref{Rmk_flexibility_bestREspAlg}).
The only difference with Algorithm \ref{async_best-response_algo}
is that, in Algorithm \ref{alg:APDA}, ``from time to time'' (precisely
when the inner termination test $\|\mathbf{z}^{(n)}-\mathbf{S}_{\tau}(\mathbf{x}^{(n)})\|\leq\varepsilon^{(n)}$
in Step 2 is satisfied) the objective function of the players are
changed by updating the regularizing term from $\frac{\tau}{2}\|\mathbf{x}_{i}-\mathbf{x}_{i}^{(n)}\|^{2}$
to $\frac{\tau}{2}\|\mathbf{x}_{i}-\mathbf{x}_{i}^{(n+1)}\|^{2}$,
which generally requires some coordination among the players to establish
when a satisfactory approximation $\mathbf{z}^{(n)}$ has been reached.
The remark below discusses issues related to this aspect.\vspace{-0.2cm}

\begin{remark}[On the inner termination criterion] \emph{\label{Remark_termination}We
have seen that, in Step 2 of Algorithm \ref{alg:APDA}, the players
must be able to decide whether $\|\mathbf{z}^{(n)}-\mathbf{S}_{\tau}(\mathbf{x}^{(n)})\|\,\leq\,\varepsilon^{(n)}$
holds. This can be easily done if one uses a synchronous Jacobi version
of Algorithm \ref{async_best-response_algo};} \emph{in fact one can
readily estimate the number of iterations needed to achieve the accuracy
$\varepsilon^{(n)}$ using (\ref{eq:error_bound}), where $\boldsymbol{{\Gamma}}_{\mathbf{F}}$
is replaced by $\boldsymbol{{\Gamma}}_{\mathbf{F}_{\tau}}$. However,
this estimate can be very conservative and, in any case, it is not
applicable if an asynchronous version of Algorithm \ref{async_best-response_algo}
is used. In the following we suggest a different, simple, distributed
protocol to decide whether $\|\mathbf{z}^{(n)}-\mathbf{S}_{\tau}(\mathbf{x}^{(n)})\|\,\leq\,\varepsilon^{(n)}$,
which can be used in both synchronous and asynchronous implementations
of Algorithm \ref{async_best-response_algo}. }

\emph{Observe preliminarily that an error bound on the distance of
the current iteration $\mathbf{z}^{(n)}$ from the solution $\mathbf{S}_{\tau}(\mathbf{x}^{(n)})$
of $\mathcal{G}_{\tau,\mathbf{x}^{(n)}}$ can be obtained by solving
a convex (quadratic) problem (see, e.g., \cite[Prop. 6.3.1]{Facchinei-Pang_FVI03},
\cite[Prop. 6.3.7]{Facchinei-Pang_FVI03}). For example, under the
P properties of $\boldsymbol{\Upsilon}_{\mathbf{F}_{\tau,\mathbf{x}^{(n)}}}$
defined in (\ref{eq:Upsilon_F_tau}), the following error bound holds
for the game $\mathcal{G}_{\tau,\mathbf{x}^{(n)}}$ \cite[Prop. 6.3.1]{Facchinei-Pang_FVI03}:
a (finite) constant $c>0$ exists such that} 
\begin{equation}
\|\mathbf{z}-\mathbf{S}_{\tau}(\mathbf{x}^{(n)})\|\,\leq c\,\|\mathbf{F}_{\tau}^{\text{\emph{nat}}}\left(\mathbf{z}\right)\|,\quad\forall\mathbf{z}\in\mathcal{Q},\label{eq:error_bound_1}
\end{equation}
\emph{where $\mathbf{F}_{\tau}^{\text{{nat}}}\left(\mathbf{z}\right)\triangleq\mathbf{z}-\Pi_{\mathcal{Q}}\left(\mathbf{z}-\mathbf{F}(\mathbf{z})-\tau\,\mathbf{z}\right)$,
with $\Pi_{\mathcal{Q}}\left(\mathbf{x}\right)$ denoting the Euclidean
projection of $\mathbf{x}$ onto the closed and convex set $\mathcal{Q}$.
Note that, since $\mathcal{Q}$ has a Cartesian structure, $\mathbf{F}_{\tau}^{\text{{nat}}}\left(\mathbf{z}\right)$
can be partitioned as $\mathbf{F}_{\tau}^{\text{{nat}}}\left(\mathbf{z}\right)=\left(\left[\mathbf{F}_{\tau}^{\text{{nat}}}\left(\mathbf{z}\right)\right]_{i}\right)_{i=1}^{I}$,
where each $\left[\mathbf{F}_{\tau}^{\text{{nat}}}\left(\mathbf{z}\right)\right]_{i}=\mathbf{z}_{i}-\Pi_{\mathcal{Q}_{i}}\left(\mathbf{z}_{i}-\mathbf{F}_{i}(\mathbf{z})-\tau\,\mathbf{z}_{i}\right)$
can be locally computed by the associated player $i$ by solving a
quadratic program, as long as $\mathbf{F}_{i}(\mathbf{z})$ is available
at the player side.}%
\footnote{In many practical applications, as those considered in Sec. \ref{sub:Some-motivating-examples}
and Sec. \ref{Sec_applications}, each $\left[\mathbf{F}_{\tau}^{\text{{nat}}}\left(\mathbf{z}\right)\right]_{i}$
can be computed by local measurements from the players. 
}

\emph{Using (\ref{eq:error_bound_1}), the implementation of Step
2 of Algorithm \ref{alg:APDA} can be obtained as follows. Each player
$i$ choses preliminarily a suitable local termination sequence $\{\varepsilon_{i}^{(n)}\}_{n}\subset[0,\infty)$
such that $\sum_{n=1}^{\infty}\varepsilon_{i}^{(n)}<\infty$; the
termination criterion of each player $i$ becomes then $\left\Vert \left[\mathbf{F}_{\tau}^{\text{{nat}}}\left(\mathbf{x}^{(n)}\right)\right]_{i}\right\Vert \leq\varepsilon_{i}^{(n)}$,
which can be locally implemented. Once the desired local accuracy
is reached by all the players, they can all update the center of their
regularization. Note that this protocol guarantees that the requirement
on the sequence $\varepsilon^{(n)}$ in Step 2 as stated in Theorem
\ref{ProxDecAlg_conv_theo} is met, since $\varepsilon^{(n)}\triangleq\sum_{i=1}^{I}\varepsilon_{i}^{(n)}$
satisfies $\sum_{n=1}^{\infty}\varepsilon^{(n)}<\infty$.} \hfill{}$\square$\end{remark}\vspace{-0.7cm}

\begin{remark}[On the communications overhead] \emph{The last issue
to address for a practical implementation of the termination protocols
discussed in Remark \ref{Remark_termination} is how the players can
detect the others having reached the desired accuracy in Step 2 of
Algorithm \ref{alg:APDA}. If one uses a synchronous Jacobi version
of Algorithm \ref{async_best-response_algo} in Step 2 and the joint
feasible set $\mathcal{Q}$ is bounded, conceptually there is no need
of any information exchange, since each player can locally estimate
the number of iterations needed to reach the accuracy $\varepsilon^{(n)}$
as discussed in Remark \ref{Rmk_rate}. However, even when possible,
this approach is probably too conservative, since the estimated $\overline{{n}}$
can be unnecessarily large. In practice, and whenever one is not using
a synchronous Jacobi method, the termination criterion (\ref{eq:error_bound_1})
can be easily implemented, at the cost of limited signaling, performing
the following protocol. Each user sends out one bit when $\left\Vert \left[\mathbf{F}_{\tau}^{\text{{nat}}}\left(\mathbf{x}^{(n)}\right)\right]_{i}\right\Vert \leq\varepsilon_{i}^{(n)}$.
Once one and therefore all players receive bits from all the others,
they can update their regularization. Note that in the context of
distributed algorithms for the solution of optimization problems,
VIs, and games, this is a minimal signaling requirement; see, e.g.,
\cite{Palomar-Chiang_ACTran07-Num,ConvexSumSeparable-2} and \cite[Ch. 1.3, 1.4]{Bertsekas_Book-Parallel-Comp},
\cite[Ch. 3.5.7]{Bertsekas_Book-Parallel-Comp}, \cite[Ch. 8]{Bertsekas_Book-Parallel-Comp}
for some representative examples. Furthermore, we observe that in
many practical applications exchanging one bit is viable. For instance,
in the CR systems considered in Sec. \ref{sub:Some-motivating-examples}
and Sec. \ref{Sec_applications}, this can be done using network control
channels. }

\emph{Finally, we conclude this remark suggesting a simple heuristic
that does not require any signaling: each user updates its regularization
only after experiencing no changes in $\left\Vert \left[\mathbf{F}_{\tau}^{\text{{nat}}}\left(\mathbf{x}^{(n)}\right)\right]_{i}\right\Vert $
(or his cost function) for a prescribed number of iterations. While
this heuristic does not guarantee theoretical convergence, its effectiveness
in many practical scenarios, as those considered in Sec. \ref{sec:Applications},
is rather apparent.} \hfill{}$\square$ \end{remark}\vspace{-0.5cm}

\subsection{Equilibrium selection for monotone NEPs\label{sub:Equilibrium-Selection-Monotone-NEP}
\vspace{-0.1cm}}

In the previous section we discussed distributed algorithms for the
computation of a solution of monotone NEPs. A feature of these algorithms
is that they converge under mild conditions that do not imply the
uniqueness of the NE of the NEP. In the presence of multiple equilibria,
however, the proposed algorithms do not allow to perform any selection
of the solution they reach, but they may converge in principle to
\emph{any }NE of the game; which makes the achievable system performance
unpredictable. It would be interesting instead to be able to select,
among all the solutions of the game, the one(s) that satisfies some
additional criterion. We refer to this problem as {\em equilibrium
selection problem}. In this section, we address this issue; the outcome
will be a novel set of distributed algorithms along with their convergence
properties that solve the equilibrium selection problem; this additional
feature comes at the price of a (moderate) increase of the complexity
in computing the players' best-response solution and signaling among
the players.

Let us introduce first an informal description of the algorithm. Let
$\mathcal{G}=<\mathcal{Q},\,\mathbf{f}>$ be a monotone NEP and let
$\sol(\mathcal{Q},\,\mathbf{f})$ denote its solution set, assumed
to be nonempty without loss of generality. Recall that since $\mathcal{G}$
is monotone, $\sol(\mathcal{Q},\,\mathbf{f})$ is always convex (cf.
Theorem \ref{Theo_existence_uniquenessNE}). Stated in mathematical
terms, the equilibrium selection problem consists in solving the following
bi-level optimization problem:\vspace{-0.2cm}

\begin{equation}
\begin{array}{ll}
\limfunc{minimize}\limits _{\mathbf{x}} & \phi(\mathbf{x})\\[5pt]
\text{subject to} & \mathbf{x}\in\sol(\mathcal{Q},\,\mathbf{f}),
\end{array}\medskip\vspace{-0.2cm}\label{eq:VI-C min}
\end{equation}
where the function $\phi:\mathbb{R}^{n}\rightarrow\mathbb{R}$ is
assumed to be continuously differentiable and convex. The function
$\phi$ thus defines the additional criterion according to which one
wants to select a solution in the set of the NE of $\mathcal{G}$:
solving \eqref{eq:VI-C min} indeed corresponds to choosing the NE
of $\mathcal{G}$ that minimizes $\phi$. Note that, under the monotonicity
of $\mathcal{G}$, \eqref{eq:VI-C min} is a convex optimization problem
{[}$\sol(\mathcal{Q},\,\mathbf{f})$ is convex{]}. However, standard
solution techniques cannot be applied because the feasible set $\sol(\mathcal{K},\mathbf{f})$
is only implicitly defined and, in general, it is not expressed as
a standard system of inequalities. To overcome this difficulty, and
in the same spirit of the previous section, instead of attacking problem
\eqref{eq:VI-C min} directly, we propose to solve a sequence of standard
regularized NEPs (``standard'' means a game whose players' feasible
sets are not of an implicit type, and therefore can be solved by classic
methods, like Algorithm \ref{async_best-response_algo}). Each standard
regularized game has the following structure $\mathcal{G}_{\tau,\varepsilon,\mathbf{y}}=\left\langle \mathcal{Q},\,(f_{i}+\varepsilon\,\phi+(\tau/2)\|\bullet-\mathbf{y}_{i}\|^{2})_{i=1}^{I}\right\rangle $,
where $\varepsilon$ and $\tau$ are fixed positive constants and
$\mathbf{y}\triangleq(\mathbf{y}_{i})_{i=1}^{I}$ is a given point
in $\mathbb{R}^{n}$ with each $\mathbf{y}_{i}\in\mathbb{R}^{n_{i}}$;
$\mathcal{G}_{\tau,\varepsilon,\mathbf{y}}$ is a NEP wherein each
player $i$, anticipating $\mathbf{x}_{-i}\in\mathcal{Q}_{-i}$, solves
the following convex optimization problem:\vspace{-0.2cm}

\begin{equation}
\begin{array}{ll}
\limfunc{minimize}\limits _{\mathbf{x}_{i}} & f_{i}(\mathbf{x}_{i},\,\mathbf{x}_{-i})+\varepsilon\,\phi(\mathbf{x}_{i},\,\mathbf{x}_{-i})+\frac{\tau}{2}\|\mathbf{x}_{i}-\mathbf{y}_{i}\|^{2}\\[5pt]
\text{subject to} & \mathbf{x}_{i}\in{\cal Q}_{i}.
\end{array}\label{eq:game tau epsilon}
\end{equation}
Note that the players' problems in this game differ from \eqref{eq:game tau}
in the presence of the additional term $\varepsilon\phi(\mathbf{x})$
in the objective function. It is not surprising that $\phi$ appears
in the players' objective functions; roughly speaking, it represents
the additional amount of information to be included in the game to
``drive'' the system toward the desired solution.

Proceeding as in the previous section, we can now establish the connection
between the regularized NEPs $\mathcal{G}_{\tau,\varepsilon,\mathbf{y}}$
and the equilibrium selection problem \eqref{eq:VI-C min}. Fist of
all note that, in the setting of problem \eqref{eq:VI-C min}, the
NEPs $\mathcal{G}_{\tau,\varepsilon,\mathbf{y}}$ is equivalent to
the VI$(\mathcal{Q},\mathbf{F}_{\tau,\varepsilon,\mathbf{y}})$ where
$\mathbf{F}_{\tau,\varepsilon,\mathbf{y}}\triangleq\mathbf{F}+\varepsilon\,\nabla\phi+\tau\,(\mathbf{I}-\mathbf{y})$
and $\mathbf{I}:\mathbf{x}\mapsto\mathbf{x}$ is the identity map.
Denoting by $\boldsymbol{\Upsilon}_{\mathbf{F}_{\tau,\varepsilon,\mathbf{y}}}$
the matrix defined in (\ref{eq:Upsilon_matrix}) and associated to
$\mathbf{F}_{\tau,\varepsilon,\mathbf{y}}$, Lemma \ref{lem: monotone tau epsilon P}
below shows that there exists a sufficiently large $\tau$ such that
$\boldsymbol{\Upsilon}_{\mathbf{F}_{\tau,\varepsilon,\mathbf{y}}}$
is a P matrix, implying that $\mathcal{G}_{\tau,\varepsilon,\mathbf{y}}$
is a \emph{$P{}_{\boldsymbol{\Upsilon}}$} NEP.\vspace{-0.1cm}

\begin{lemma}\label{lem: monotone tau epsilon P} Let $\mathcal{G}=\left\langle \mathcal{Q},\,\mathbf{f}\right\rangle $
be a monotone NEP; let $\phi:\mathbb{R}^{n}\mapsto\mathbb{R}$ be
a continuously differentiable function on $\mathcal{Q}$ whose gradient
$\nabla\phi$ is Lipschitz continuous on $\mathcal{Q}$, with constant
$L_{\phi}$; and let $\bar{\varepsilon}>0$ be given. For any fixed
$\mathbf{y}\in\mathbb{R}^{n}$, the game $\mathcal{G}_{\tau,\varepsilon,\mathbf{y}}=\left\langle \mathcal{Q},\,(f_{i}+\varepsilon\phi+(\tau/2)\|\bullet-\mathbf{y}_{i}\|^{2})_{i=1}^{I}\right\rangle $
with $\varepsilon\in[0,\bar{\varepsilon}]$ is a \emph{$P{}_{\boldsymbol{\Upsilon}}$}
NEP for every $\tau$ larger than $\bar{\tau}_{\bar{\varepsilon}}$
\emph{(}independent on $\mathbf{y}$ and $\varepsilon$\emph{)}\vspace{-0.3cm}
\begin{equation}
\bar{\tau}_{\bar{\varepsilon}}\triangleq\max_{1\leq i\leq I}\left\{ \sum_{j\neq i}\beta_{ij}^{\max}-\alpha_{i}^{\min}\right\} +(I-1)\,\bar{\varepsilon}\, L_{\phi},\vspace{-0.2cm}\label{eq:upper_bound_tau_epsilon}
\end{equation}
where $\beta_{ij}^{\max}$'s and $\alpha_{i}^{\min}$'s are defined
in (\ref{eq:def_alpha_and_beta_Jac}).\end{lemma}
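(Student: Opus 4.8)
The plan is to reduce the claim to the already-established Lemma~\ref{lem: monotone tau P} by a perturbation argument on the matrix $\boldsymbol{\Upsilon}_{\mathbf{F}_{\tau,\varepsilon,\mathbf{y}}}$. The key point is to quantify how much the diagonal-dominance quantities $\alpha_i^{\min}$ and $\beta_{ij}^{\max}$ of $\mathbf{F}$ change when we pass to $\mathbf{F}_{\tau,\varepsilon,\mathbf{y}} = \mathbf{F} + \varepsilon\,\nabla\phi + \tau(\mathbf{I}-\mathbf{y})$. First I would recall that the map $\mathbf{x}\mapsto\tau(\mathbf{x}-\mathbf{y})$ contributes exactly $\tau\mathbf{I}$ to the Jacobian, hence contributes $\tau$ to each $\alpha_i^{\min}$ and $0$ to each off-diagonal block (this is precisely (\ref{eq:Upsilon_F_tau})). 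So the only new ingredient relative to Lemma~\ref{lem: monotone tau P} is the effect of the $\varepsilon\,\nabla\phi$ term.

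Next I would bound the perturbation coming from $\varepsilon\,\phi$. Write $\mathbf{F}^{\phi}_{\varepsilon} \triangleq \mathbf{F} + \varepsilon\nabla\phi$. Its block Jacobian is $\mathbf{J}_j\mathbf{F}_i + \varepsilon\,\nabla^2_{\mathbf{x}_j\mathbf{x}_i}\phi$. Since $\nabla\phi$ is $L_\phi$-Lipschitz on $\mathcal{Q}$, the full Hessian $\nabla^2\phi(\mathbf{x})$ has spectral norm at most $L_\phi$ on $\mathcal{Q}$, and therefore every block $\nabla^2_{\mathbf{x}_j\mathbf{x}_i}\phi(\mathbf{x})$ has operator norm at most $L_\phi$ as well. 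Consequently, with the choice $\mathbf{C}_i = \mathbf{I}$ used in defining $\bar\tau$ in (\ref{eq:upper_bound_on_tau}), for the perturbed map the diagonal quantity obeys $\alpha_i^{\min}(\mathbf{F}^{\phi}_\varepsilon) \ge \alpha_i^{\min}(\mathbf{F}) - \varepsilon L_\phi$ (the symmetric part of a sum satisfies $\lambda_{\text{least}}(A+B)\ge\lambda_{\text{least}}(A)-\|B\|$), and the off-diagonal quantity obeys $\beta_{ij}^{\max}(\mathbf{F}^{\phi}_\varepsilon) \le \beta_{ij}^{\max}(\mathbf{F}) + \varepsilon L_\phi$ by the triangle inequality for the operator norm. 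Uniformity over $\varepsilon\in[0,\bar\varepsilon]$ follows by replacing $\varepsilon$ with its worst case $\bar\varepsilon$.

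Now I would simply invoke Lemma~\ref{lem: monotone tau P} applied to the game whose players' functions are $f_i + \varepsilon\phi$: that game is a $P_{\boldsymbol{\Upsilon}}$ NEP whenever $\tau$ exceeds
\[
\max_{1\le i\le I}\left\{\sum_{j\neq i}\beta_{ij}^{\max}(\mathbf{F}^{\phi}_\varepsilon) - \alpha_i^{\min}(\mathbf{F}^{\phi}_\varepsilon)\right\}.
\]
Substituting the two bounds above, each summand $\sum_{j\neq i}\beta_{ij}^{\max} - \alpha_i^{\min}$ increases by at most $(I-1)\,\bar\varepsilon\,L_\phi + \bar\varepsilon L_\phi$... — here I would be careful: the $-\alpha_i^{\min}$ term can grow by $+\bar\varepsilon L_\phi$ and each of the $I-1$ off-diagonal terms by $+\bar\varepsilon L_\phi$, giving a total increase of at most $I\,\bar\varepsilon L_\phi$. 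To match the stated bound $(I-1)\bar\varepsilon L_\phi$ exactly, one should note that when $i=j$ there is no off-diagonal block contributed by $\phi$ in that row beyond what is already counted, and that the Hessian block $\nabla^2_{\mathbf{x}_i\mathbf{x}_i}\phi$ is positive semidefinite by convexity of $\phi$, so it only helps $\alpha_i^{\min}$; thus the diagonal term does \emph{not} degrade and only the $I-1$ off-diagonal blocks contribute, yielding the claimed $\bar\tau_{\bar\varepsilon} = \bar\tau + (I-1)\bar\varepsilon L_\phi$. Since the resulting threshold is independent of $\mathbf{y}$ (as in Lemma~\ref{lem: monotone tau P}) and of $\varepsilon$ (we used the uniform bound $\bar\varepsilon$), the proof is complete.

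The main obstacle is the bookkeeping in the last step: one must use \emph{convexity} of $\phi$ (so $\nabla^2_{\mathbf{x}_i\mathbf{x}_i}\phi\succeq\mathbf{0}$, hence the diagonal $\alpha_i^{\min}$ is not harmed) to get the sharp coefficient $(I-1)$ rather than $I$ in (\ref{eq:upper_bound_tau_epsilon}); everything else is a direct citation of Lemma~\ref{lem: monotone tau P} together with elementary norm inequalities. $\hfill\square$
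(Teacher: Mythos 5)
Your proof is correct and is precisely the argument the paper leaves implicit: Lemma \ref{lem: monotone tau epsilon P} is stated without proof as a direct extension of Lemma \ref{lem: monotone tau P}, obtained by reading off how the definition (\ref{eq:Upsilon_matrix}) of $\boldsymbol{\Upsilon}_{\mathbf{F}}$ changes under the perturbation $\mathbf{F}\mapsto\mathbf{F}+\varepsilon\nabla\phi+\tau(\mathbf{I}-\mathbf{y})$ (with $\mathbf{C}_i=\mathbf{I}$) and then invoking the diagonal-dominance criterion of Proposition \ref{Cor_SF_for_Pmat} with $\mathbf{w}=\mathbf{1}$; your bookkeeping reproduces exactly the threshold (\ref{eq:upper_bound_tau_epsilon}).

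Your closing remark is the one substantive point, and you resolved it the right way: the coefficient $(I-1)$ genuinely requires $\nabla^{2}_{\mathbf{x}_i\mathbf{x}_i}\phi\succeq\mathbf{0}$, i.e.\ convexity of $\phi$ (plus twice differentiability, which Definition \ref{Def_monotone_NEP}(iii) via Assumption 2 implicitly demands of the regularized game anyway). Under the lemma's literal hypotheses --- only a Lipschitz gradient --- the diagonal quantity $\alpha_i^{\min}$ can itself degrade by up to $\bar\varepsilon L_{\phi}$, and only the weaker threshold with coefficient $I$ (in place of $I-1$) is guaranteed; a $2\times 2$ example with $\nabla^{2}\phi$ indefinite shows the $(I-1)$ constant can actually fail. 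Since $\phi$ is assumed convex in Theorem \ref{the:mod}, where the lemma is applied, your reading supplies the missing hypothesis rather than introducing a gap.
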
\vspace{-0.2cm}

In the setting of Lemma \ref{lem: monotone tau epsilon P}, $\mathcal{G}_{\tau,\varepsilon,\mathbf{y}}$
is a \emph{$P{}_{\boldsymbol{\Upsilon}}$} NEP and thus has a unique
solution, denoted by $\mathbf{S}_{\tau,\varepsilon}(\mathbf{y})$
{[}cf. Theorem \ref{Theo_existence_uniquenessNE}{]}; such a $\mathbf{S}_{\tau,\varepsilon}(\mathbf{y})$
can be computed with convergence guaranteed using Algorithm \ref{async_best-response_algo}
on $\mathcal{G}_{\tau,\varepsilon,\mathbf{y}}$. The solution of the
original equilibrium selection problem \eqref{eq:VI-C min} can be
recovered using the fixed-point-type iteration $\mathbf{x}^{(n+1)}=\mathbf{S}_{\tau,\varepsilon^{(n)}}(\mathbf{\mathbf{x}}^{(n)})$,
starting from a feasible point $\mathbf{x}^{(0)}$ and by suitably
varying $\varepsilon^{(n)}$; which corresponds to solve the sequence
of NEPs $\mathcal{G}_{\tau,\varepsilon^{(n)},\mathbf{x}^{(n)}}$ for
$n=0,1,\ldots$. This procedure is made formal in Algorithm \ref{algo2}
below.

\begin{algo}{Proximal-Tikhonov Regularization Algorithm (PTRA)}
S$\textbf{Data}:$ Let $\{\varepsilon^{(n)}\}\downarrow0$ and $\tau>0$
be given. \\[1pt] \texttt{$\mbox{(\mbox{S.0})}:$} Choose any feasible
$\mathbf{x}^{(0)}\in\mathcal{Q}$ and set $n=0$. \\[1pt] 
\texttt{$\mbox{(\mbox{S.1})}:$} If $\mathbf{x}^{(n)}$ satisfies
a suitable termination criterion, \texttt{STOP}.\\[1pt] 
\texttt{$\mbox{(\mbox{S.2})}:$} Set $\mathbf{x}^{(n+1)}$ to be the
solution of the game $\mathcal{G}_{\tau,\varepsilon^{(n)},\mathbf{x}^{(n)}}$.
\\[1pt] \texttt{$\mbox{(\mbox{S.3})}:$} Set $n\leftarrow n+1$ and
return to \texttt{$\mbox{(\mbox{S.1})}$}. \label{algo2} \end{algo}

Note that, since in Algorithm \ref{algo2} the sequence $\{\varepsilon^{(n)}\}$
converges to zero, there always exists an $\bar{\varepsilon}>0$ such
that $\varepsilon^{(n)}\in[0,\bar{\varepsilon}]$, implying by Lemma
\ref{lem: monotone tau epsilon P} that for a sufficiently large $\tau$
all the games $\mathcal{G}_{\tau,\varepsilon^{(n)},\mathbf{x}^{(n)}}$
in the Step 2 of the algorithm are \emph{$P{}_{\boldsymbol{\Upsilon}}$}
NEPs and thus have a unique solution; this makes the sequence $\{\mathbf{x}^{(n)}\}_{n=0}^{\infty}$
generated by Algorithm \ref{algo2} well defined. The convergence
properties of the algorithm are given in the following theorem.\vspace{-0.1cm}

\begin{theorem}\label{the:mod} Let $\mathcal{G}=\left\langle \mathcal{Q},\,\mathbf{f}\right\rangle $
be a monotone NEP with a nonempty solution set $\sol(\mathcal{Q},\,\mathbf{f})$.
Consider the equilibrium selection problem \eqref{eq:VI-C min} and
suppose that i) $\phi$ is continuously differentiable and convex
on $\mathcal{Q}$; ii) the level sets of $\phi$ on $\sol(\mathcal{Q},\,\mathbf{f})$
are bounded; and iii) $\nabla\phi$ is Lipschitz continuous on \emph{$\mathcal{Q}$},
with constant $L_{\phi}$. Moreover, suppose that $\tau$ is large
enough so that $\mathcal{G}_{\tau,\varepsilon^{(n)},\mathbf{x}^{(n)}}$
is a \emph{$P{}_{\boldsymbol{\Upsilon}}$} NEP for any $n$, and choose
the sequence $\{\varepsilon^{(n)}\}$ such that $\varepsilon^{(n)}>0$
for all $n$, $\{\varepsilon^{(n)}\}\downarrow0$, and $\sum_{n=0}^{\infty}\,\varepsilon^{(n)}=\infty$.
Then Algorithm \ref{algo2} is well defined; the sequence $\{\mathbf{x}^{(n)}\}_{n=0}^{\infty}$
is bounded; and every of its limit points is a solution of \eqref{eq:VI-C min}.
\end{theorem}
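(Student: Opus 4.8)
The plan is to interpret the iteration $\mathbf{x}^{(n+1)}=\mathbf{S}_{\tau,\varepsilon^{(n)}}(\mathbf{x}^{(n)})$ as an \emph{inexact Tikhonov/proximal regularization} applied to the monotone $\limfunc{VI}(\mathcal{Q},\mathbf{F})$ and to invoke the established theory of such methods (see \cite[Ch.~12]{Facchinei-Pang_FVI03}), combined with a classical asymptotic argument à la Tikhonov for selecting the least-$\phi$ solution. Throughout I write $\mathcal{S}\triangleq\sol(\mathcal{Q},\mathbf{f})$, which is nonempty, closed and convex by Theorem~\ref{Theo_existence_uniquenessNE}(b) since $\mathbf{F}$ is monotone. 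First I would record that, by Lemma~\ref{lem: monotone tau epsilon P} and the hypothesis that $\varepsilon^{(n)}\downarrow 0$ (hence eventually $\varepsilon^{(n)}\le\bar\varepsilon$ for any fixed $\bar\varepsilon>0$) together with $\tau$ large, each game $\mathcal{G}_{\tau,\varepsilon^{(n)},\mathbf{x}^{(n)}}$ is a $P_{\boldsymbol\Upsilon}$ NEP; by Theorem~\ref{Theo_existence_uniquenessNE}(d) it has the unique solution $\mathbf{S}_{\tau,\varepsilon^{(n)}}(\mathbf{x}^{(n)})$, which by Theorem~\ref{Theo-async_best-response_NEP} can indeed be computed (so the algorithm is well defined). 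By Proposition~\ref{VI_reformulation}, $\mathbf{x}^{(n+1)}$ is characterized by the VI
\begin{equation}
\left(\mathbf{x}-\mathbf{x}^{(n+1)}\right)^{T}\!\left(\mathbf{F}(\mathbf{x}^{(n+1)})+\varepsilon^{(n)}\nabla\phi(\mathbf{x}^{(n+1)})+\tau\,(\mathbf{x}^{(n+1)}-\mathbf{x}^{(n)})\right)\ge 0,\quad\forall\mathbf{x}\in\mathcal{Q}.
\label{eq:PTRA_VI_char}
\end{equation}

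The core of the argument is a Fejér-monotonicity estimate. Fix any $\mathbf{x}^{\star}\in\mathcal{S}$. Insert $\mathbf{x}=\mathbf{x}^{\star}$ in \eqref{eq:PTRA_VI_char} and use monotonicity of $\mathbf{F}$ (so that $(\mathbf{x}^{(n+1)}-\mathbf{x}^{\star})^{T}(\mathbf{F}(\mathbf{x}^{(n+1)})-\mathbf{F}(\mathbf{x}^{\star}))\ge 0$) and the fact that $\mathbf{x}^{\star}$ solves $\limfunc{VI}(\mathcal{Q},\mathbf{F})$ (so $(\mathbf{x}^{(n+1)}-\mathbf{x}^{\star})^{T}\mathbf{F}(\mathbf{x}^{\star})\ge 0$); combining with convexity of $\phi$, a short manipulation of the identity $2(\mathbf{x}^{(n+1)}-\mathbf{x}^{(n)})^{T}(\mathbf{x}^{(n+1)}-\mathbf{x}^{\star})=\|\mathbf{x}^{(n+1)}-\mathbf{x}^{\star}\|^{2}-\|\mathbf{x}^{(n)}-\mathbf{x}^{\star}\|^{2}+\|\mathbf{x}^{(n+1)}-\mathbf{x}^{(n)}\|^{2}$ yields an inequality of the shape
\begin{equation}
\|\mathbf{x}^{(n+1)}-\mathbf{x}^{\star}\|^{2}\;\le\;\|\mathbf{x}^{(n)}-\mathbf{x}^{\star}\|^{2}-\|\mathbf{x}^{(n+1)}-\mathbf{x}^{(n)}\|^{2}+\frac{2\varepsilon^{(n)}}{\tau}\big(\phi(\mathbf{x}^{\star})-\phi(\mathbf{x}^{(n+1)})\big).
\label{eq:PTRA_Fejer}
\end{equation}
From \eqref{eq:PTRA_Fejer}, with $\mathbf{x}^{\star}$ taken to be a minimizer of $\phi$ over $\mathcal{S}$ (which exists because the level sets of $\phi$ on $\mathcal{S}$ are bounded and $\mathcal{S}$ is closed), one extracts: (i) boundedness of $\{\mathbf{x}^{(n)}\}$ — here one must handle the sign of $\phi(\mathbf{x}^{\star})-\phi(\mathbf{x}^{(n+1)})$, which is where hypothesis (ii) and the Lipschitz bound (iii) on $\nabla\phi$ do the work of keeping the $\sum\varepsilon^{(n)}=\infty$ perturbation from blowing up; (ii) $\sum_n\|\mathbf{x}^{(n+1)}-\mathbf{x}^{(n)}\|^{2}<\infty$, hence $\|\mathbf{x}^{(n+1)}-\mathbf{x}^{(n)}\|\to 0$. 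Passing to the limit along a convergent subsequence $\mathbf{x}^{(n_k)}\to\bar{\mathbf{x}}$ in \eqref{eq:PTRA_VI_char}, using $\varepsilon^{(n)}\to 0$, $\|\mathbf{x}^{(n+1)}-\mathbf{x}^{(n)}\|\to 0$ and continuity of $\mathbf{F},\nabla\phi$, shows $\bar{\mathbf{x}}$ solves $\limfunc{VI}(\mathcal{Q},\mathbf{F})$, i.e.\ $\bar{\mathbf{x}}\in\mathcal{S}$.

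It remains to upgrade "limit point in $\mathcal{S}$" to "limit point solves \eqref{eq:VI-C min}", i.e.\ attains $\min_{\mathcal{S}}\phi$. This is the standard Tikhonov trick: sum \eqref{eq:PTRA_Fejer} (with $\mathbf{x}^{\star}=\arg\min_{\mathcal{S}}\phi$) over $n$ to get $\sum_n \varepsilon^{(n)}\big(\phi(\mathbf{x}^{(n+1)})-\phi(\mathbf{x}^{\star})\big)<\infty$; since $\sum_n\varepsilon^{(n)}=\infty$, this forces $\liminf_n\phi(\mathbf{x}^{(n)})\le\phi(\mathbf{x}^{\star})=\min_{\mathcal{S}}\phi$, and for \emph{every} limit point $\bar{\mathbf{x}}\in\mathcal{S}$ lower semicontinuity of $\phi$ along an appropriate subsequence (chosen so $\phi(\mathbf{x}^{(n)})\to\liminf$) gives $\phi(\bar{\mathbf{x}})\le\min_{\mathcal{S}}\phi$, hence $\phi(\bar{\mathbf{x}})=\min_{\mathcal{S}}\phi$; combined with $\bar{\mathbf{x}}\in\mathcal{S}$ this is exactly optimality for \eqref{eq:VI-C min}. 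I expect the main obstacle to be step (i)–(ii) above: controlling the accumulated perturbation term $\sum\varepsilon^{(n)}(\phi(\mathbf{x}^{\star})-\phi(\mathbf{x}^{(n+1)}))$ when its sign is unfavorable, so that boundedness of $\{\mathbf{x}^{(n)}\}$ is genuinely secured rather than assumed; this is precisely where the boundedness of the level sets of $\phi$ on $\mathcal{S}$ (hypothesis (ii)) and the global Lipschitz continuity of $\nabla\phi$ (hypothesis (iii)) are indispensable, and the bookkeeping mirrors \cite[Th.~12.2.3 and Th.~12.3.9]{Facchinei-Pang_FVI03}. One subtlety worth flagging is that $\{\mathbf{x}^{(n)}\}$ need not lie in $\mathcal{S}$ (indeed the intermediate iterates solve perturbed games), so Fejér monotonicity alone does not give convergence of the whole sequence — only boundedness and the limit-point statement claimed in the theorem, which is all that is asserted.
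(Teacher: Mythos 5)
Your overall strategy is the right one and is in the same family as the paper's proof: the paper also starts from the VI characterization of $\mathbf{x}^{(n+1)}$, uses monotonicity of $\mathbf{F}$ and convexity of $\phi$ to obtain a one-step descent inequality on a squared distance, and then exploits $\sum_n\varepsilon^{(n)}=\infty$ to force optimality of $\phi$. Your inequality \eqref{eq:PTRA_Fejer} is correctly derived. However, there are two genuine gaps. The first is the one you flag yourself: the perturbation term $\frac{2\varepsilon^{(n)}}{\tau}\bigl(\phi(\mathbf{x}^{\star})-\phi(\mathbf{x}^{(n+1)})\bigr)$ can be positive (the iterates are not in $\sol(\mathcal{Q},\mathbf{f})$, so $\phi(\mathbf{x}^{(n+1)})$ may lie below the constrained minimum $\phi(\mathbf{x}^{\star})$), and since $\sum_n\varepsilon^{(n)}=\infty$ these contributions are not summable a priori. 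Saying that hypotheses ii) and iii) ``do the work'' is not a proof; controlling $\phi$ along the iterates requires boundedness of $\{\mathbf{x}^{(n)}\}$, which is exactly what you are trying to establish, so the argument as written is circular. This is precisely where the paper departs from the plain Fej\'er scheme: it keeps the \emph{linearized} quantity $V^{(n+1)}=\nabla\phi(\mathbf{x}^{(n+1)})^{T}(P_{\mathcal{S}}(\mathbf{x}^{(n)})-\mathbf{x}^{(n+1)})$ instead of bounding it by a function difference, and runs a three-case analysis on the sign of $V^{(n)}$; in the problematic case ($V^{(n)}>0$ infinitely often with a net increase of the distance), convexity turns $V^{(n)}>0$ into $\phi(\mathbf{x}^{(n)})\le\phi(P_{\mathcal{S}}(\mathbf{x}^{(n-1)}))=\min_{\sol}\phi$, which is what yields boundedness of the offending subsequence.

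The second gap is one you do not flag: your Tikhonov step only delivers $\liminf_n\phi(\mathbf{x}^{(n)})\le\min_{\mathcal{S}}\phi$, i.e.\ optimality of the limit point of \emph{one} particular subsequence (the one realizing the liminf). It does not follow that \emph{every} limit point $\bar{\mathbf{x}}$ satisfies $\phi(\bar{\mathbf{x}})\le\min_{\mathcal{S}}\phi$; a different convergent subsequence could accumulate at a point of $\sol(\mathcal{Q},\mathbf{f})$ with strictly larger $\phi$, and you cannot invoke full Fej\'er monotonicity with respect to the bilevel solution set to rule this out because the perturbations are not summable. The paper sidesteps this by choosing as Lyapunov function $\delta^{(n)}=\tfrac12\,\mathrm{dist}(\mathbf{x}^{(n)},\mathcal{S})^{2}$, where $\mathcal{S}$ is the solution set of the \emph{selection} problem \eqref{eq:VI-C min}, and proving that the whole sequence $\{\delta^{(n)}\}$ converges to zero; the ``every limit point'' claim then follows immediately. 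To repair your proof you would either have to adopt this moving-projection Lyapunov function or add an argument showing that the set of limit points is a singleton, neither of which is present in your write-up.
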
 \vspace{-0.2cm}\begin{proof} See Appendix \ref{sec:Proof-of-Theorem_NE_Selection}.
\end{proof} 

Theorem \ref{the:mod} guarantees convergence of the algorithm under
mild assumptions. Conditions on $\phi$ are pretty standard; in particular,
assumptions i) and ii) together with the monotonicity of $\mathcal{G}$
state that the optimization problem \eqref{eq:VI-C min} is convex
and admits a solution; whereas iii) guarantees that there exists a
finite (large enough) $\tau$ such that $\mathcal{G}_{\tau,\varepsilon^{(n)},\mathbf{x}^{(n)}}$
is a \emph{$P{}_{\boldsymbol{\Upsilon}}$} NEP (cf. Lemma \ref{lem: monotone tau epsilon P}).
The assumptions on the sequence $\{\varepsilon^{(n)}\}_{n=1}^{\infty}$
are also rather weak and require $\varepsilon^{(n)}$ to go to zero,
but not too fast; which can be satisfied, e.g., by taking $\varepsilon^{(n)}=1/(1+na)$,
with $n=0,1,2,\ldots$ and $a$ being any positive constant. This
assumption on $\varepsilon^{(n)}$ is not new and has already been
used in a few papers dealing with the combination of Tikhonov and
proximal regularization; we refer the reader to \cite{Cabot_SIAM05}
and references therein for a wider discussion on this point.

The implementation of Algorithm \ref{algo2} requires the ability
of solving at each round $n$ the \emph{$P{}_{\boldsymbol{\Upsilon}}$}
NEP $\mathcal{G}_{\tau,\varepsilon^{(n)},\mathbf{x}^{(n)}}$. If one
is interested in distributed solution schemes, Algorithm \ref{async_best-response_algo}
applied to $\mathcal{G}_{\tau,\varepsilon^{(n)},\mathbf{x}^{(n)}}$
is the natural choice. Note that the convergence conditions of the
algorithm applied to $\mathcal{G}_{\tau,\varepsilon^{(n)},\mathbf{x}^{(n)}}$
as stated in Theorem \ref{Theo-async_best-response_NEP} are always
met, provided $\tau$ is large enough; see, e.g., (\ref{eq:upper_bound_tau_epsilon})
in Lemma \ref{lem: monotone tau epsilon P}.

As in the previous section, note that, unless one has simple ways
to compute the solutions of the games $\mathcal{G}_{\tau,\varepsilon^{(n)},\mathbf{x}^{(n)}}$,
Algorithm \ref{algo2} requires at each step the exact computation
of the solution of the regularized games $\mathcal{G}_{\tau,\varepsilon^{(n)},\mathbf{x}^{(n)}}$
(inner loop), which in principle requires a conceptually infinite
procedure. While in practice this might not be a problem, it still
leaves open the question of whether, paralleling the results of the
previous section, one can develop versions of Algorithm \ref{algo2}
where inexact solutions are used in the Step 2. The answer to this
question is positive, but the corresponding theory is rather complex
and, for sake of simplicity, we prefer to omit it here; the interest
reader can work it out using results in \cite{FacchineiPangScutariLampariello_MP11}.\vspace{-0.2cm}

\subsection{A bird's-eye view}

In the previous three sections we proposed several distributed algorithms
for real player-convex NEPs, which are applicable to different scenarios.
Fig. \ref{fig1_Roadmap} summarizes the results obtained so far, showing
that, in spite of apparent diversities, all the algorithms belong
to a same family. 
\begin{figure}[h]
\vspace{-1cm}
\center\includegraphics[height=12cm]{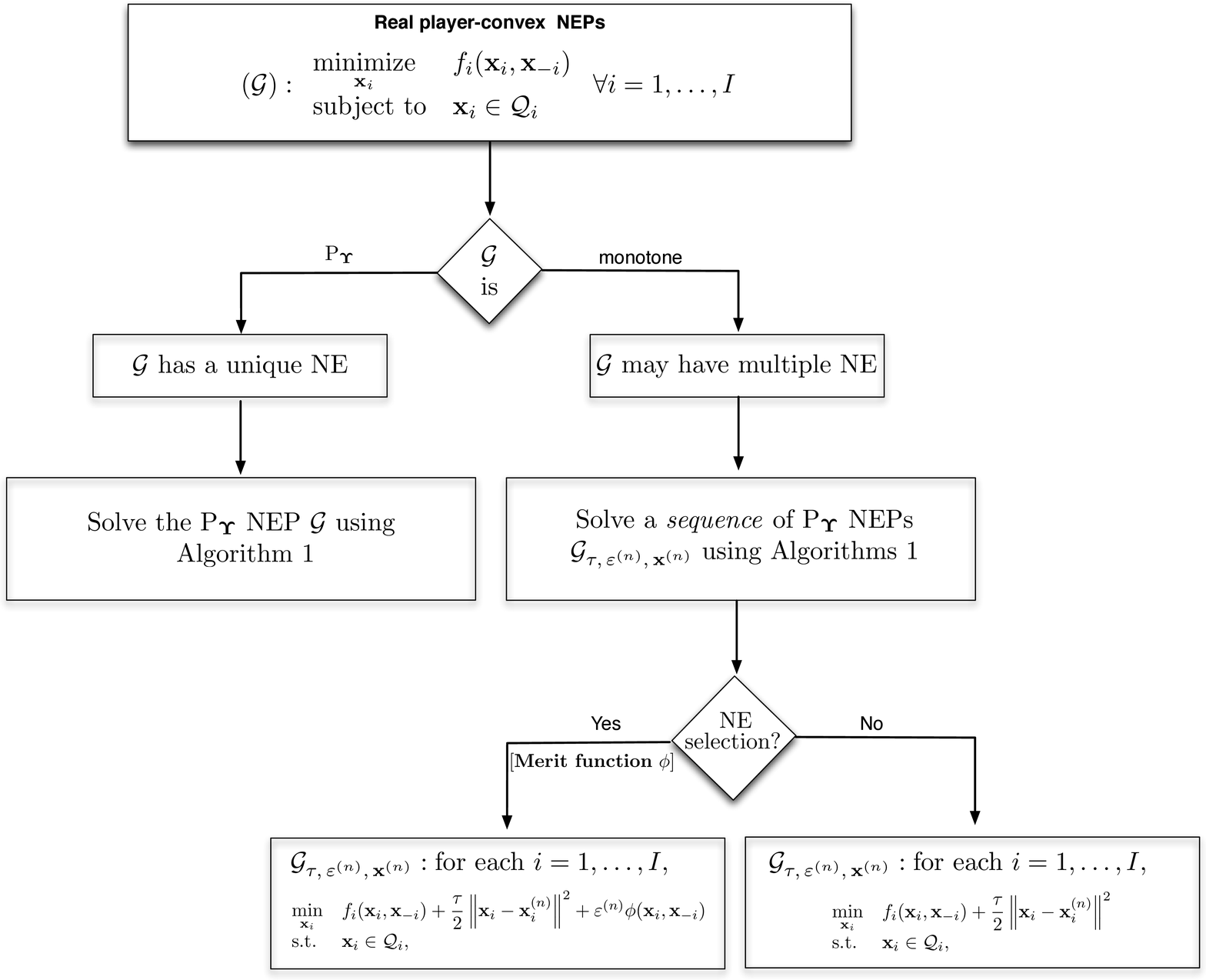}\vspace{-0.4cm}

\caption{{\small{The roadmap of the proposed distributed solution methods for
real player-convex NEPs}}}

{\small{\label{fig1_Roadmap} }} \vspace{-0.2cm}
\end{figure}

Conceptually, what we have proposed is indeed a unified algorithm,
where the users can explicitly choose the degree of desired cooperation
and signaling, converging to solutions having different performance,
namely: i) \emph{any one} NE, when there is no (or very limited) cooperation
among users, and ii) the ``best'' NE (according to an outer merit
function $\phi$), at the cost of more coordination. The choice of
one scheme in favor to the other as well as the merit function $\phi$
will depend then on the trade-off between signaling and performance
that the users are willing to exchange/achieve. The core of the proposed
solution methods can be summarized in the following unified updating
rule: at iteration $n$, the optimal strategy of each user $i$ is
\begin{equation}
\mathbf{x}_{i}^{(n+1)}=\underset{\mathbf{x}_{i}\in\mathcal{Q}_{i}}{\text{argmin}}\left\{ f_{i}\left(\mathbf{x}_{i},\mathbf{x}_{-i}^{(n)}\right)+\pi_{i}^{(\nu_{n})}\left(\mathbf{x}_{i},\,\mathbf{x}_{-i}^{(n)}\right)+\dfrac{\tau}{2}\,\left\Vert \mathbf{x}_{i}-\mathbf{x}_{i}^{(\nu_{n})}\right\Vert ^{2}\right\} \label{eq:unified_algorithm}
\end{equation}
where the first term $f_{i}(\mathbf{x}_{i},\mathbf{x}_{-i}^{(n)})$
is the usual term in an iterative best-response algorithm, the second
term $\pi_{i}^{(\nu_{n})}(\mathbf{x}_{i},\,\mathbf{x}_{-i}^{(n)})$
(whose update is performed at iteration $\nu_{n}$) can be interpreted
as a nonlinear pricing in the objective function of the users, and
the third term $\dfrac{\tau}{2}\,\left\Vert \mathbf{x}_{i}-\mathbf{x}_{i}^{(\nu_{n})}\right\Vert ^{2}$
is a (proximal) regularization. Observe that there are two iteration
indexes: $n$ is the main discrete-time unit, whereas $\nu_{n}$ is
increased every few discrete-time units (e.g., if $\nu_{n}=\left\lfloor n/10\right\rfloor $,
then $\nu_{n}$ is updated every 10 discrete-time units).

The price function $\pi_{i}^{(\nu_{n})}$ can be interpreted as a
measure of the ``altruism/selfishness'' of the users and represents
the trade-off factor between signaling and performance. Indeed, \textcolor{black}{we
may have the following: } 
\begin{itemize}
\item \textcolor{black}{$\pi_{i}^{(\nu_{n})}=0$ (no cooperation): The users
are not willing to cooperate; the best one can get is converge to
}\textcolor{black}{\emph{any one}}\textcolor{black}{{} solution of
the game (i.e., with no control on the quality of the solution); this
is guaranteed }\textcolor{black}{\emph{even in the presence of multiple
equilibria}}\textcolor{black}{{} if the NEP is monotone (}\emph{$P{}_{\boldsymbol{\Upsilon}}$}\textcolor{black}{);} 
\item $\pi_{i}^{(\nu_{n})}\neq0$ (some cooperation): The users may exchange
some signaling in the form of pricing through the function $\pi_{i}^{(\nu_{n})}(\mathbf{x}_{i},\,\mathbf{x}_{-i}^{(n)})=\varepsilon^{(\nu_{n})}\,\phi(\mathbf{x}_{i},\,\mathbf{x}_{-i}^{(n)})$
and converge to the NE\textcolor{black}{{} that minimizes the merit
function $\phi(\mathbf{x})$; convergence is guaranteed if the NEP
is monotone.} \vspace{-0.2cm}
\end{itemize}
\begin{remark}[Role of pricing] \emph{It is important to remark that
the pricing term $\pi_{i}^{(\nu_{n})}$ does not need to be linear;
moreover it has a well understood role in the system optimization.
This is a major departure from current literature that uses }linear\emph{
pricing as a heuristic to improve the performance of a NE in power
control games (see, e.g., \cite{Saraydar-Mandayam-Goodman_TCOM02_pricing}
for scalar power control problems); in these works there is neither
a proof of convergence of the modified game nor a theoretical explanation
of the performance improvement due to pricing.} \emph{As a direct
product of our framework, we obtain instead a clear understanding
of the meaning of the pricing; for example, a linear price in the
form $\varepsilon^{(\nu_{n})}\,\boldsymbol{{\pi}}_{i}^{T}\mathbf{x}_{i}$,
with $\varepsilon^{(\nu_{n})}\rightarrow0$, corresponds to the selection
of a NE that minimizes the linear function $\sum_{i}\boldsymbol{{\pi}}_{i}^{T}\mathbf{x}_{i}$,
resulting likely in better system performance.}\vspace{-0.2cm}\end{remark}

\section{Variational Inequalities and Games in the Complex Domain\label{sec:VI_complex_domain}\vspace{-0.2cm}
}

The results presented so far apply to real NEPs. However, in many
applications, e.g., in digital communications, array processing, and
signal processing, the variables involved in the optimization are
complex. For instance, in the MIMO problems introduced in Sec. \ref{sub:Some-motivating-examples},
players' optimization variables are complex matrices. For these applications
the reformulation of the problem into the real domain is awkward,
very difficult to handle, and generally leads to final conditions
that cannot be easily interpreted in terms of the original complex
setup. Indeed, this ``natural'' approach has long been abandoned
in the signal processing and communication communities, since it has
shown to be inadequate. It seems instead more convenient to work directly
in the complex domain. This requires the use of some sophisticated
tools hinging on involved analytic developments. However, once one
has mastered these tools, the prize is a very smooth and immediate
generalization of \emph{all} the results developed in the previous
section, thus easily providing a whole set of new methods for the
solution of NEPs in the complex domain. 

In order to follow this plan in this section we first recall some
basic results about the so-called Wirtinger calculus, prompted by
the lack of a well-established notation and definitions for $\mathbb{R}$-matrix
derivatives; two good tutorials on the subject are \cite{Are_book_MatrixDiff,Delgado_Cderivative09}.
We then proceed to the development of several new technical tools
that are then applied to the study of VIs and NEPs in the complex
domain. We start introducing in Sec. \ref{sub:The-minimum-principle}
the minimum principle for constrained convex optimization problems
in the domain of complex matrices, generalizing the already known
complex gradient-vanishing conditions obtained in \cite{Are_book_MatrixDiff}
for the \emph{unconstrained} case. As intermediate result, we also
introduce a Taylor expansion of real-valued functions of complex matrices
that is amenable to our MIMO applications. The second important contribution
is given in Sec. \ref{sub:NEP-and-VI_complex_domain}, where, after
introducing the VI problem in the complex domain and the associated
monotonicity and P properties, we provide new matrix conditions for
these properties to hold. These conditions are the natural generalization
of those obtained in Sec. \ref{sec:Nash-Equilibrium-Problems}, provided
that a new definition of Jacobian matrix for complex-valued matrix
functions as well as a tailored concept of positive (semi-)definiteness
are used. Finally, in Sec. \ref{sub:NEP_complex_domain}, we establish
the connection between VIs and NEPs in the complex domain, and discuss
its main implications. \vspace{-0.3cm}

\subsection{$\mathbb{R}$-matrix derivatives \label{sub:Preliminaries}\vspace{-0.2cm}}

In practical applications, we often deal with optimization of real-valued
functions $f:\mathbb{C}\ni z\mapsto f(z)\in\mathbb{R}$ of a complex
variable $z$ that are not differentiable in $\mathbb{C}$ (termed
also $\mathbb{C}$-differentiable or holomorphic).%
\footnote{It is a known fact that nonconstant \emph{real-valued} functions (of
complex variables) are not $\mathbb{C}$-differentiable.%
} However, the same univariate function $f:\mathbb{C}\rightarrow\mathbb{R}$
can also be viewed as a bivariate function of its real and imaginary
components, i.e., $f(z)=g(z_{R},z_{I})$, where $g:\mathbb{R}^{2}\mapsto\mathbb{R}$
is a real-valued function of the real variables $z_{R}\triangleq\text{{Re}}(z)$
and $z_{I}\triangleq\text{{Im}}(z)$. This way, one may be able to
replace the nonexistence of the $\mathbb{C}$-derivative of $f$ with
the existence of the real partial derivatives of $g(z_{R},z_{I})$,
which is actually what one needs to compute a stationary point of
the function. This motivates the introduction of the so-called $\mathbb{R}$-derivative
and conjugate $\mathbb{R}$-derivative of $f:\mathbb{C}\rightarrow\mathbb{R}$
at $z_{0}\in\mathbb{C}$, formally defined as 
\begin{equation}
\frac{{\partial}f}{\partial z}(z_{0})\triangleq\dfrac{{1}}{2}\,\left.\left(\frac{{\partial}f(z)}{\partial z_{R}}-j\frac{{\partial}f(z)}{\partial z_{I}}\right)\right|_{z=z_{0}}\quad\mbox{and}\quad\frac{{\partial}f}{\partial z^{\ast}}(z_{0})\triangleq\dfrac{{1}}{2}\,\left.\left(\frac{{\partial}f(z)}{\partial z_{R}}+j\frac{{\partial}f(z)}{\partial z_{I}}\right)\right|_{z=z_{0}},\label{eq:complex_derivatives-1}
\end{equation}
respectively, where $j=\sqrt{{-1}}$. Note that the derivatives above
must be interpreted formally, because $z$ and its conjugate $z^{\ast}$
in (\ref{eq:complex_derivatives-1}) are treated as they were mutually
independent; the derivatives $\frac{{\partial}f}{\partial z_{R}}$
and $\frac{{\partial}f}{\partial z_{I}}$ represent instead the true
(non-formal) partial derivatives of $f$ viewed as a bivariate function
of $z_{R}$ and $z_{I}$, i.e., $f=\check{f}(z_{R},z_{I})$. When
$\frac{{\partial}f}{\partial z_{R}}$ and $\frac{{\partial}f}{\partial z_{I}}$
exist (and are continuous), implying that (\ref{eq:complex_derivatives-1})
is well-defined, we say that $f$ is $\mathbb{R}$-differentiable
(or continuously $\mathbb{R}$-differentiable); similarly to the real
case, when we say that a function $f:\mathcal{K}\rightarrow\mathbb{R}$
is $\mathbb{R}$-differentiable (or continuously $\mathbb{R}$-differentiable)
on the closed set $\mathcal{K}$, we mean that the function is so
on an open set containing $\mathcal{K}$.

The $\mathbb{R}$-derivatives defined in (\ref{eq:complex_derivatives-1})
for a real-valued function can be naturally extended to \emph{complex-valued}
functions of a complex argument, that is, $f:\mathbb{C}\rightarrow\mathbb{C}$;
formally we still have (\ref{eq:complex_derivatives-1}), but now
$f(z)=\check{f}(z_{R},z_{I})\triangleq\check{f}_{R}(z_{R},z_{I})+j\cdot\check{f}_{I}(z_{R},z_{I})$,
with $\check{f}:\mathbb{R}^{2}\mapsto\mathbb{C}$ and $\check{f}_{R},\check{f}_{I}:\mathbb{R}^{2}\mapsto\mathbb{R}$,
and by ${\partial}f/\partial z_{R}$ we mean ${\partial}f/\partial z_{R}\triangleq{\partial}\check{f}_{R}/{\partial}z_{R}+j\cdot{\partial}\check{f}_{I}/{\partial}z_{R}$
(similarly for ${\partial}f/\partial z_{I}$).

When $f$ is a (complex-valued) scalar function of complex \emph{matrices,}
that is $f:\mathbb{C}^{n\times m}\rightarrow\mathbb{C}$, we have
$n\cdot m$ component-wise $\mathbb{R}$-derivatives $\frac{{\partial}f}{\partial\left(\mathbf{Z}\right)_{ij}}$
and $n\cdot m$ conjugate $\mathbb{R}$-derivatives $\frac{{\partial}f}{\partial\left(\mathbf{Z}^{\ast}\right)_{ij}}$.
The question naturally arises how to order these $n\cdot m$ complex
terms; obviously this can be done in many ways. It is worthwhile noticing
that, even though they all contain the same $n\cdot m$ derivatives,
not all definitions have the same properties; for instance for some
of them a useful chain rule does not exist. Next, we introduce two
definitions, both useful for our derivations and widely used in the
literature \cite{Are_book_MatrixDiff}; in the former definition,
the $n\cdot m$ (conjugate) $\mathbb{R}$-derivatives are displayed
in the same order as $(\mathbf{Z})_{ij}$ and $(\mathbf{Z}^{\ast})_{ij}$
appear in $\mathbf{Z}$ and $\mathbf{Z}^{\ast}$, whereas in the latter
we arrange all the elements in a row vector. Given $f:\mathbb{C}^{n\times m}\rightarrow\mathbb{C}$,
the (matrix) gradient and co(njugate)-gradient of $f$ at $\mathbf{Z}_{0}\in\mathbb{C}^{n\times m}$
are defined as
\begin{equation}
\begin{array}{l}
\,\,\nabla_{\mathbf{Z}}f(\mathbf{Z}_{0})\triangleq\left.\dfrac{{\partial}\, f(\mathbf{Z})}{\partial\mathbf{Z}}\right|_{\mathbf{Z}=\mathbf{Z}_{0}}\quad\mbox{with}\quad\left[\dfrac{{\partial}\, f}{\partial\mathbf{Z}}\right]_{ij}=\dfrac{{\partial}\, f}{\partial\,(\mathbf{Z})_{ij}},\quad\,\,\,\forall i=1,\ldots,n\quad\mbox{and}\quad j=1,\dots,m\bigskip\\
\nabla_{\mathbf{\mathbf{Z}^{\ast}}}f(\mathbf{Z}_{0})\triangleq\left.\dfrac{{\partial}f(\mathbf{Z})}{\partial\mathbf{Z^{\ast}}}\right|_{\mathbf{Z}=\mathbf{Z}_{0}}\quad\mbox{with}\quad\left[\dfrac{{\partial}\, f}{\partial\mathbf{Z}^{\ast}}\right]_{ij}=\dfrac{{\partial}\, f}{\partial\,(\mathbf{Z}^{\ast})_{ij}},\quad\forall i=1,\ldots,n\quad\mbox{and}\quad j=1,\dots,m,
\end{array}\label{eq:complex_matrix_derivative}
\end{equation}
where $\frac{{\partial}f}{\partial(\mathbf{Z})_{ij}}$ and $\frac{{\partial}f}{\partial(\mathbf{Z})_{ij}^{\ast}}$
are the $\mathbb{R}$-derivative and conjugate $\mathbb{R}$-derivative
of the complex-valued function $f$ w.r.t. $\left(\mathbf{Z}\right)_{ij}$
and $\left(\mathbf{Z}^{\ast}\right)_{ij}$, respectively. Note that
$\nabla_{\mathbf{Z}}f(\mathbf{Z}_{0})$ and $\nabla_{\mathbf{Z}^{\ast}}f(\mathbf{Z}_{0})$
are matrices having the same size of $\mathbf{Z}$. Alternatively,
one can arrange the elements $\frac{{\partial}f}{\partial(\mathbf{Z})_{ij}}$
and $\frac{{\partial}f}{\partial(\mathbf{Z}^{\ast})_{ij}}$ in a row
vector, and define $D_{\mathbf{Z}}f(\mathbf{Z})$ and $D_{\mathbf{Z^{\ast}}}f(\mathbf{Z})$
at $\mathbf{Z}_{0}\in\mathbb{C}^{n\times m}$ as 
\begin{equation}
\begin{array}{l}
D_{\mathbf{Z}}f(\mathbf{Z}_{0})\triangleq\left.\dfrac{{\partial}\, f(\mathbf{Z})}{\partial\,\text{{vec}}\left(\mathbf{Z}\right)^{T}}\right|_{\mathbf{Z}=\mathbf{Z}_{0}}=\text{{vec}}\left(\nabla_{\mathbf{Z}}f(\mathbf{Z}_{0})\right)^{T}\,\,\mbox{and\,}\,\, D_{\mathbf{Z}^{\ast}}f(\mathbf{Z}_{0})\triangleq\left.\dfrac{{\partial}\, f(\mathbf{Z})}{\partial\,\text{{vec}}\left(\mathbf{Z}^{\ast}\right)^{T}}\right|_{\mathbf{Z}=\mathbf{Z}_{0}}=\text{{vec}}\left(\nabla_{\mathbf{Z}^{\ast}}f(\mathbf{Z}_{0})\right)^{T}\end{array},\label{eq:Jacobian_scalar_f}
\end{equation}
where $\text{{vec}}\left(\mathbf{A}\right)^{T}$ stands for $\left(\text{{vec}}\left(\mathbf{A}\right)\right)^{T}$.
For (complex-valued) \emph{matrix} functions of complex matrices,
$\mathbf{F}^{\mathbb{C}}:\mathbb{C}^{n\times m}\rightarrow\mathbb{C}^{p\times q}$,
we arrange the $pq\cdot nm$ (conjugate) $\mathbb{R}$-derivatives
in the following $pq\times nm$ matrices 
\begin{equation}
\hspace{-0.2cm}\begin{array}{c}
D_{\mathbf{Z}}\mathbf{F}^{\mathbb{C}}(\mathbf{Z}_{0})\triangleq\left.\dfrac{{\partial}\,\text{{vec}}\left(\mathbf{F}^{\mathbb{C}}(\mathbf{Z})\right)}{\partial\,\text{{vec}}\left(\mathbf{Z}\right)^{T}}\right|_{\mathbf{Z}=\mathbf{Z}_{0}},\,\,\mbox{with}\,\,\left[\dfrac{{\partial}\,\text{{vec}}\left(\mathbf{F}^{\mathbb{C}}\right)}{\partial\,\text{{vec}}\left(\mathbf{Z}\right)^{T}}\right]_{ij}=\dfrac{{\partial}\,\left[\text{{vec}}\left(\mathbf{F}^{\mathbb{C}}\right)\right]_{i}}{\partial\,\left[\text{{vec}}\left(\mathbf{Z}\right)\right]_{j}},\,\,\forall i=1,\ldots,pq\,\,\mbox{and}\,\, j=1,\dots,nm,\bigskip\\
D_{\mathbf{Z}^{\ast}}\mathbf{F}^{\mathbb{C}}(\mathbf{Z}_{0})\triangleq\left.\dfrac{{\partial}\,\text{{vec}}\left(\mathbf{F}^{\mathbb{C}}(\mathbf{Z})\right)}{\partial\,\text{{vec}}\left(\mathbf{Z}^{\ast}\right)^{T}}\right|_{\mathbf{Z}=\mathbf{Z}_{0}},\,\,\mbox{with}\,\,\left[\dfrac{{\partial}\,\text{{vec}}\left(\mathbf{F}^{\mathbb{C}}\right)}{\partial\,\text{{vec}}\left(\mathbf{Z}^{\ast}\right)^{T}}\right]_{ij}=\dfrac{{\partial}\,\left[\text{{vec}}\left(\mathbf{F}^{\mathbb{C}}\right)\right]_{i}}{\partial\,\left[\text{{vec}}\left(\mathbf{Z}^{\ast}\right)\right]_{j}},\,\,\forall i=1,\ldots,pq\,\,\mbox{and}\,\, j=1,\dots,nm.
\end{array}\label{eq:Jacobians_complex_derivatives}
\end{equation}
 The $pq\times nm$ matrices $D_{\mathbf{Z}}\mathbf{F}^{\mathbb{C}}$
and $D_{\mathbf{Z}^{\ast}}\mathbf{F}^{\mathbb{C}}$ are called Jacobian
and conjugate Jacobian of $ $$\mathbf{F}^{\mathbb{C}}$. Note that
when $\mathbf{F}^{\mathbb{C}}$ is a \emph{scalar} function of $\mathbf{Z}$,
i.e., $\mathbf{F}^{\mathbb{C}}(\mathbf{Z})=f(\mathbf{Z})$ with $f:\mathbb{C}^{n\times m}\rightarrow\mathbb{C}$,
definitions (\ref{eq:Jacobians_complex_derivatives}) reduce to (\ref{eq:Jacobian_scalar_f}).
Practical rules to compute $\mathbb{R}$-derivatives and conjugate
$\mathbb{R}$-derivatives introduced above can be found in \cite{Are_book_MatrixDiff}.\vspace{-0.2cm}

\subsection{The minimum principle\label{sub:The-minimum-principle}}

Let $\mathbb{C}^{n\times m}$ be the space of complex $n\times m$
matrices, and let $\mathcal{K}\subseteq\mathbb{C}^{n\times m}$ be
a closed and convex set. We consider the optimization problem 
\begin{equation}
\begin{array}{ll}
\limfunc{minimize}\limits _{\mathbf{Z}} & f(\mathbf{Z})\\[5pt]
\text{subject to} & \mathbf{Z}\in\mathcal{K},
\end{array}\label{eq:Opt_complex}
\end{equation}
where $f:\mathcal{K}\rightarrow\mathbb{R}$ is a real-valued convex
and continuously $\mathbb{R}$-differentiable function on $\mathcal{K}$.
At the basis of the minimum principle there is the first-order Taylor
expansion of $f$ at $\mathbf{Z}_{0}\in\mathcal{K}$ as proved in
Appendix \ref{sec:Proof-of-Lemma_min_principle}: 
\begin{align}
f(\mathbf{\mathbf{Z}_{0}}+\Delta\mathbf{Z})-f(\mathbf{Z}_{0})\simeq & \,\text{2\,\text{{Re}}\ensuremath{\left(\,\text{{tr}}\left(\left(\nabla_{\mathbf{Z}}f(\mathbf{Z}_{0})\right)^{T}\Delta\mathbf{Z}\right)\,\right)}}\triangleq2\,\left\langle \Delta\mathbf{Z},\,\nabla_{\mathbf{Z}^{\ast}}f(\mathbf{Z}_{0})\right\rangle \label{eq:Taylor_coomplex_domain_r2}
\end{align}
where we used $\left(\nabla_{\mathbf{Z}}f\right)^{\ast}=\nabla_{\mathbf{\mathbf{Z}^{\ast}}}f$
since $f$ is real {[}see (\ref{eq:complex_derivatives-1}){]}, and
we introduced the inner product $\left\langle \bullet,\,\bullet\right\rangle :\mathbb{C}^{n\times n}\times\mathbb{C}^{n\times n}\rightarrow\mathbb{R}$,
defined as 
\begin{equation}
\left\langle \mathbf{A},\,\mathbf{B}\right\rangle \triangleq\text{{Re}}\left(\,\text{{tr}}\left(\mathbf{A}^{H}\mathbf{B}\right)\,\right).\label{eq:inner_product_matrix}
\end{equation}
Note that the norm induced by the inner product $\left\langle \bullet,\,\bullet\right\rangle $
is the Frobenius norm, i.e., $\left\langle \mathbf{A},\,\mathbf{A}\right\rangle =\text{Tr}(\mathbf{A}^{H}\mathbf{A})=\left\Vert \mathbf{A}\right\Vert _{F}^{2}.$
Using (\ref{eq:Taylor_coomplex_domain_r2}) we can now introduce the
minimum principle as given next.\vspace{-0.2cm}

\begin{lemma}\label{Lemma_min_principle}Given the convex optimization
problem (\ref{eq:Opt_complex}) in the setting above, $\mathbf{X}\in\mathcal{K}$
is an optimal solution of (\ref{eq:Opt_complex}) if and only if $\mathbf{X}$
satisfies $\left\langle \mathbf{Z}-\mathbf{X},\,\nabla_{\mathbf{Z}^{\ast}}f\mathbf{(X)}\right\rangle \geq0$
for all $\mathbf{Z}\in\mathcal{K}$. \end{lemma}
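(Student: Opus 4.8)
The plan is to mimic the classical real-variable proof of the minimum principle for convex optimization over a closed convex set, substituting the real inner product $(\mathbf{x},\mathbf{y})\mapsto\mathbf{x}^T\mathbf{y}$ with the real-valued inner product $\langle\mathbf{A},\mathbf{B}\rangle=\operatorname{Re}(\operatorname{tr}(\mathbf{A}^H\mathbf{B}))$ on $\mathbb{C}^{n\times m}$, viewed as a real Hilbert space of real dimension $2nm$. The crucial link that makes this work is the first-order Taylor expansion \eqref{eq:Taylor_coomplex_domain_r2}: writing $\nabla f\triangleq\nabla_{\mathbf{Z}^{\ast}}f(\mathbf{X})$ for brevity, we have $f(\mathbf{X}+\mathbf{V})-f(\mathbf{X})=2\langle\mathbf{V},\nabla f\rangle+o(\|\mathbf{V}\|_F)$ as $\|\mathbf{V}\|_F\to 0$. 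Since $f$ is $\mathbb{R}$-differentiable, this is exactly the statement that the real-linear functional $\mathbf{V}\mapsto 2\langle\mathbf{V},\nabla f\rangle$ is the (Fréchet) derivative of $f$ at $\mathbf{X}$ with respect to the Frobenius metric.

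First I would prove the ``if'' direction (sufficiency): suppose $\langle\mathbf{Z}-\mathbf{X},\nabla f\rangle\geq 0$ for all $\mathbf{Z}\in\mathcal{K}$. By convexity of $f$ on $\mathcal{K}$ (here I use that $f$ is convex and $\mathbb{R}$-differentiable), the first-order characterization of convexity gives, for any $\mathbf{Z}\in\mathcal{K}$,
\begin{equation}
f(\mathbf{Z})-f(\mathbf{X})\geq 2\langle\mathbf{Z}-\mathbf{X},\nabla f\rangle\geq 0,\notag
\end{equation}
so $\mathbf{X}$ is a global minimizer. (If one prefers not to invoke the gradient inequality directly, one can instead use the Taylor expansion along the segment $\mathbf{X}+t(\mathbf{Z}-\mathbf{X})$, $t\in[0,1]$, which lies in $\mathcal{K}$ by convexity, and convexity of $t\mapsto f(\mathbf{X}+t(\mathbf{Z}-\mathbf{X}))$, whose right derivative at $t=0$ equals $2\langle\mathbf{Z}-\mathbf{X},\nabla f\rangle$.)

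Next I would prove the ``only if'' direction (necessity): suppose $\mathbf{X}$ is optimal but, for contradiction, there exists $\mathbf{Z}\in\mathcal{K}$ with $\langle\mathbf{Z}-\mathbf{X},\nabla f\rangle<0$. Put $\mathbf{V}\triangleq\mathbf{Z}-\mathbf{X}$ and consider, for $t\in(0,1]$, the point $\mathbf{X}+t\mathbf{V}=(1-t)\mathbf{X}+t\mathbf{Z}\in\mathcal{K}$ by convexity. The Taylor expansion \eqref{eq:Taylor_coomplex_domain_r2} gives
\begin{equation}
f(\mathbf{X}+t\mathbf{V})-f(\mathbf{X})=2t\,\langle\mathbf{V},\nabla f\rangle+o(t),\notag
\end{equation}
and since $\langle\mathbf{V},\nabla f\rangle<0$, for all sufficiently small $t>0$ the right-hand side is strictly negative, contradicting the optimality of $\mathbf{X}$. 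Hence no such $\mathbf{Z}$ exists, which is the claimed variational inequality.

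The main obstacle is not conceptual but bookkeeping: one must be careful that the Taylor expansion \eqref{eq:Taylor_coomplex_domain_r2} is a genuine first-order expansion with an $o(\|\Delta\mathbf{Z}\|_F)$ remainder (this is exactly what ``$\mathbb{R}$-differentiable'' buys us, and it is established in Appendix~\ref{sec:Proof-of-Lemma_min_principle}), and that passing to the real viewpoint $\mathbb{C}^{n\times m}\cong\mathbb{R}^{2nm}$ with the Frobenius inner product is an \emph{isometric} identification under which $f$ becomes an ordinary convex $C^1$ function and $2\nabla_{\mathbf{Z}^{\ast}}f(\mathbf{X})$ becomes its ordinary gradient. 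Once that identification is in place, the lemma is literally \cite[Prop.~1.4.2]{Facchinei-Pang_FVI03} (the real minimum principle), and nothing further is needed; I would state this explicitly rather than re-derive the real result. A minor point worth flagging in the write-up: the first-order convexity inequality used in the ``if'' direction should be justified in the $\mathbb{R}$-differentiable setting, but this again follows immediately from the real reformulation.
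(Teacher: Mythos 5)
Your proposal is correct and follows essentially the same route as the paper: the paper's Appendix proof consists precisely of establishing the first-order Taylor expansion \eqref{eq:Taylor_coomplex_domain_r2} via the isometric identification $\mathbb{C}^{n\times m}\cong\mathbb{R}^{2nm}$ (under which $2\nabla_{\mathbf{Z}^{\ast}}f$ becomes the ordinary real gradient), and then declares that the rest follows the standard real minimum principle. You simply spell out the standard sufficiency/necessity steps that the paper omits, which is a fair division of labor and introduces no gap.
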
\vspace{-0.3cm}\begin{proof}See
Appendix \ref{sec:Proof-of-Lemma_min_principle}.\end{proof}

It is interesting to observe that if the optimal solution $\mathbf{X}$
is in the interior of $\mathcal{K}$ {[}e.g., the optimization problem
(\ref{eq:Opt_complex}) is unconstrained, implying $\mathcal{K}=\mathbb{C}^{n\times m}$),
then the above optimality conditions reduce to $\nabla_{\mathbf{Z}^{\ast}}f\mathbf{(X)}=\mathbf{0}$,
or equivalently $\nabla_{\mathbf{Z}}f\mathbf{(X)}=\mathbf{0}$, which
are the well-established complex gradient-vanishing conditions obtained
in \cite{Are_book_MatrixDiff} for the \emph{unconstrained} minimization.
We conclude this section with an example of application of the minimum
principle, which is instrumental for the analysis in Sec. \ref{sec:Applications}.\vspace{-0.3cm}

\begin{example}[An application of the minimum principle]\label{Example_MIMO_complex_min_princ}
\emph{Consider the following single-user rate maximization problem
\begin{equation}
\begin{array}{ll}
\limfunc{maximize}\limits _{\mathbf{Z}} & f(\mathbf{Z})\triangleq\log\det\left(\mathbf{R}_{n}+\mathbf{H}\mathbf{Z}\mathbf{H}^{H}\right)\\[5pt]
\text{subject to} & \mathbf{Z}\in\mathcal{K},
\end{array}\label{eq:rate_max}
\end{equation}
where $\mathbf{R}_{n}\in\mathbb{C}^{m\times m}$ is a positive definite
matrix, $\mathbf{H}\in\mathbb{C}^{m\times n}$$ $, and $\mathcal{K}$
is any convex and compact subset of the $n\times n$ complex positive
semidefinite matrices (assumed to be nonempty). Note that $f(\mathbf{Z})$
is a concave (real-valued) function on the feasible set $\mathcal{K}$
but is not real if defined on $\mathbb{C}^{n\times n}$. Since we
are interested in minimizing $f$ over $\mathcal{K}$, in order to
apply the minimum principle, one approach we can follow is to consider
without loss of generality the modified function $\tilde{f}:\widetilde{\mathcal{K}}\rightarrow\mathbb{R}$,
defined as $\tilde{f}(\mathbf{Z})\triangleq2\,\text{{Re}}\left(f(\mathbf{Z})\right)$,
where $\mathcal{K}\subset\widetilde{\mathcal{K}}\subseteq\mathbb{C}^{n\times n}$
is any open set over which $f(\mathbf{Z})$ is well defined (it is
sufficient that} $\det\left(\mathbf{R}_{n}+\mathbf{H}\mathbf{Z}\mathbf{H}^{H}\right)\neq0$);\emph{
indeed, $\tilde{f}$ coincides with $f$ over $\mathcal{K}$, but
it is real everywhere (in its domain). Moreover, $\tilde{f}$ is $\mathbb{R}$-differentiable
on $\widetilde{\mathcal{K}}$.}%
\footnote{The introduction of the auxiliary function $\tilde{{f}}$ might appear
an unnecessary complication, which needs clarification. The original
function $f$ is defined over a (sub)set of positive semidefinite
matrices. The theory of matrix derivatives introduced in this paper
cannot be applied however to functions of matrices having a structure.
The function $\tilde{{f}}$ is introduced just to overcome this issue;
indeed, it is defined over an open set of \emph{unpatterned} matrices
while being equal to $f$ over the set of interest. An alternative
approach would be working directly with the original $f$ and using
the so-called complex (patterned) generalized derivatives \cite{Are_book_MatrixDiff}.
However, up to date there are no rules to compute matrix derivatives
over \emph{arbitrary} manifolds, which strongly limits the applicability
of this methodology in practice. This motivates the former approach. %
}\emph{ The conjugate (matrix) $\mathbb{R}$-derivative of $\tilde{f}$
at $\mathbf{Z}_{0}\in\widetilde{\mathcal{K}}$ is (see Appendix \ref{App: Example 24-Intermediate})
\begin{equation}
\nabla_{\mathbf{\mathbf{Z}^{\ast}}}\tilde{{f}}(\mathbf{Z}_{0})=\mathbf{H}^{H}(\mathbf{R}_{n}+\mathbf{H}\mathbf{Z}_{0}^{H}\mathbf{H}^{H})^{-1}\mathbf{H}.\label{eq:nabl_f_log}
\end{equation}
 Introducing $\mathbf{G}:\mathcal{K}\rightarrow\mathbb{C}^{n\times n}$,
defined as $\mathbf{G}=\mathbf{G}(\mathbf{Z})\triangleq-\nabla_{\mathbf{\mathbf{Z}^{\ast}}}\tilde{f}(\mathbf{Z})=-\mathbf{H}^{H}(\mathbf{R}_{n}+\mathbf{H}\mathbf{Z}\mathbf{H}^{H})^{-1}\mathbf{H}$
(note that $\mathbf{Z}\in\mathcal{K}$ and thus $\mathbf{Z}=\mathbf{Z}^{H}$),
and invoking Lemma \ref{Lemma_min_principle}, the optimization problem
(\ref{eq:rate_max}) is then equivalent to the minimum principle:
find a $\mathbf{Z}\in\mathcal{K}$ such that $\left\langle \mathbf{Y}-\mathbf{Z},\,\mathbf{G}(\mathbf{Z})\right\rangle \geq0$,
for all $\mathbf{Y}\in\mathcal{K}$. \hfill{}$\square$}\end{example}\vspace{-0.5cm}

\subsection{The VI problem in the complex domain\label{sub:NEP-and-VI_complex_domain}}

With the developments of the previous section at hand, we can now
introduce the definition of the VI problem in the domain of complex
matrices, termed the \emph{complex VI problem}. Similarly to the real
case (cf. Appendix \ref{sec:A-Theory-of_partitioned_VI}), one can
think of the VI problem as the generalization of the minimum principle
(cf. Lemma \ref{Lemma_min_principle}), where the co-gradient $\nabla_{\mathbf{Z}^{\ast}}f$
is replaced with a complex-valued matrix mapping. The formal definition
is given next. \vspace{-0.5cm}

\begin{definition}\label{Def_VI_complex}Given a convex and closed
set $\mathcal{K}\subseteq\mathbb{C}^{n\times m}$ and a complex-valued
matrix function $\mathbf{F}^{\mathbb{C}}(\mathbf{Z}):\mathcal{K}\ni\mathbf{Z}\rightarrow\mathbb{C}^{n\times m}$,
the \emph{complex VI} problem, denoted by $\mathcal{\text{\emph{{VI}}}}\left(\mathcal{K},\mathbf{F}^{\mathbb{C}}\right)$,
consists in finding a point $\mathbf{Z}\in\mathcal{K}$ such that
$\left\langle \mathbf{Y}-\mathbf{Z},\mathbf{\mathbf{F}^{\mathbb{C}}}(\mathbf{Z})\right\rangle \geq0$
for all $\mathbf{Y}\in\mathcal{K}$. The solution set of the $\text{\emph{{VI}}}\left(\mathcal{K},\mathbf{F}^{\mathbb{C}}\right)$
is denoted by $\sol\left(\mathcal{K},\mathbf{F}^{\mathbb{C}}\right).$\end{definition}

When $\mathcal{K}$ has a Cartesian structure, i.e., $\mathcal{K}\triangleq\prod_{i=1}^{I}\mathcal{K}_{i}$
with each $\mathcal{K}_{i}\subseteq\mathbb{\mathbb{C}}^{n_{i}\times m_{i}}$,
we write $\mathbf{F}^{\mathbb{C}}(\mathbf{Z})\triangleq\left(\mathbf{F}_{i}^{\mathbb{C}}(\mathbf{Z})\right)_{i=1}^{I}$
and $\mathbf{Z}\triangleq(\mathbf{Z}_{i})_{i=1}^{I}$, with $\mathbf{F}_{i}^{\mathbb{C}}(\mathbf{Z}):\mathcal{K}\rightarrow\mathbb{C}^{n_{i}\times m_{i}}$
and $\mathbf{Z}_{i}\in\mathbb{C}^{n_{i}\times m_{i}}$. In such a
case, with a slight abuse of notation, we will still use for the partitioned
$\text{{VI}}\left(\mathcal{K},\mathbf{\mathbf{F}^{\mathbb{C}}}\right)$
the compact notation $\left\langle \mathbf{Y}-\mathbf{Z},\mathbf{\mathbf{F}^{\mathbb{C}}}(\mathbf{Z})\right\rangle $,
by meaning $\sum_{i=1}^{I}\left\langle \mathbf{Y}_{i}-\mathbf{Z}_{i},\mathbf{\mathbf{F}^{\mathbb{C}}}_{i}(\mathbf{Z})\right\rangle $.
Moreover, the definitions of $D_{\mathbf{Z}}\mathbf{F}^{\mathbb{C}}(\mathbf{Z})$
and $D_{\mathbf{Z}^{\ast}}\mathbf{F}^{\mathbb{C}}(\mathbf{Z})$ as
given in (\ref{eq:Jacobians_complex_derivatives}) depend in principle
on the ordering according to which the components of $\mathbf{F}^{\mathbb{C}}(\mathbf{Z})$
and $\mathbf{Z}$ are grouped in the vec operator. For our purposes,
the following ordering is the most convenient, which is tacitly assumed
throughout the paper: $\text{{vec}}\left((\mathbf{F}_{i}^{\mathbb{C}}(\mathbf{Z}))_{i=1}^{I}\right)\triangleq\text{\,}\left[\mbox{{vec}}(\mathbf{F}_{1}^{\mathbb{C}}(\mathbf{Z}))^{T},\ldots,\mbox{{vec}}(\mathbf{F}_{I}^{\mathbb{C}}(\mathbf{Z}))^{T}\right]^{T}$
and $\text{{vec}}\left((\mathbf{Z}_{i})_{i=1}^{I}\right)\triangleq\left[\mbox{{vec}}(\mathbf{Z}_{1})^{T},\ldots,\mbox{{vec}}(\mathbf{Z}_{I})^{T}\right]^{T}$.
\vspace{-0.5cm}

\subsection{\noindent Monotonicity and P properties of $\text{{VI}}\left(\mathcal{K},\mathbf{F}^{\mathbb{C}}\right)$\vspace{-0.1cm}}

\noindent We can now readily extend the definitions of monotonicity
and P property for real-valued vector functions (see Definition \ref{Def_monotonicity}
in Appendix A) to complex-value matrix maps $\mathbf{F}^{\mathbb{C}}$;
the aforementioned definitions are in fact formally the same, with
the only difference that the scalar product and the Euclidean norm
are replaced with the inner product $\left\langle \bullet,\,\bullet\right\rangle $
defined in (\ref{eq:inner_product_matrix}) and the Frobenius norm,
respectively. The non-trivial task is instead to derive easy conditions
to check guaranteeing these properties. These conditions are indeed
instrumental to study convergence of algorithms for complex NEPs.
The interesting result we prove next is that we can obtain necessary
and sufficient conditions for a continuously ($\mathbb{R}$-)differentiable
$\mathbf{F}^{\mathbb{C}}(\mathbf{Z})$ to be a monotone function or
(sufficient conditions to be) a P function on $\mathcal{K}$ that
are formally equivalent to those obtained for real-valued vector functions
$\mathbf{F}(\mathbf{x})$ {[}cf. (\ref{monotonicity-convexity-connection})
and Proposition \ref{monotonicity}{]}, provided that we introduce
a novel definition of Jacobian matrix suitable for complex-valued
functions of complex variables; such a Jacobian will contain both
$\mathbb{R}$-derivatives and conjugate $\mathbb{R}$-derivatives
of $\mathbf{F}^{\mathbb{C}}(\mathbf{Z})$.

Given the complex $\text{{VI}}\left(\mathcal{K},\mathbf{\mathbf{F}^{\mathbb{C}}}\right)$,
suppose that $\mathbf{F}^{\mathbb{C}}(\mathbf{Z})$ is a continuously
($\mathbb{R}$-)differentiable matrix function on $\mathcal{K}$.
Then, the $nm\times nm$ Jacobian matrices $D_{\mathbf{Z}}\mathbf{F}^{\mathbb{C}}(\mathbf{Z})$
and $D_{\mathbf{Z}^{\ast}}\mathbf{F}^{\mathbb{C}}(\mathbf{Z})$ in
(\ref{eq:Jacobians_complex_derivatives}) are well-defined at $\mathbf{Z}\in\mathcal{K}$.
Let us introduce the $2nm\times2nm$ matrix $\mathbf{JF}^{\mathbb{C}}(\mathbf{Z})$,
defined as 
\begin{equation}
\mathbf{JF}^{\mathbb{C}}(\mathbf{Z})\triangleq\dfrac{{1}}{2}\,\left[\begin{array}{ll}
D_{\mathbf{Z}}\mathbf{F}^{\mathbb{C}}(\mathbf{Z}) & D_{\mathbf{Z}^{\ast}}\mathbf{F}^{\mathbb{C}}(\mathbf{Z})\\
D_{\mathbf{Z}}\left(\mathbf{F}^{\mathbb{C}}(\mathbf{Z})^{\ast}\right) & D_{\mathbf{Z}^{\ast}}\left(\mathbf{F}^{\mathbb{C}}(\mathbf{Z})^{\ast}\right)
\end{array}\right],\label{eq:Augmented_Jacobian}
\end{equation}
which we call ``augmented Jacobian'' for obvious reasons. For notational
simplicity, in the sequel we will write $D_{\mathbf{Z}}\mathbf{F}^{\mathbb{C}}(\mathbf{Z})^{\ast}$
and $D_{\mathbf{Z}^{\ast}}\mathbf{F}^{\mathbb{C}}(\mathbf{Z})^{\ast}$
for $D_{\mathbf{Z}}\left(\mathbf{F}^{\mathbb{C}}(\mathbf{Z})^{\ast}\right)$
and $D_{\mathbf{Z}^{\ast}}\left(\mathbf{F}^{\mathbb{C}}(\mathbf{Z})^{\ast}\right)$,
respectively. Note that the following relationships hold between the
blocks of $\mathbf{JF}^{\mathbb{C}}(\mathbf{Z})$: $\left(D_{\mathbf{Z}}\mathbf{F}^{\mathbb{C}}(\mathbf{Z})\right)^{\ast}=D_{\mathbf{Z}^{\ast}}\mathbf{F}^{\mathbb{C}}(\mathbf{Z})^{\ast}$
and $\left(D_{\mathbf{Z}^{\ast}}\mathbf{F}^{\mathbb{C}}(\mathbf{Z})\right)^{\ast}=D_{\mathbf{Z}}\mathbf{F}^{\mathbb{C}}(\mathbf{Z})^{\ast}$.
Finally, under the assumption that $\mathbf{F}^{\mathbb{C}}(\mathbf{Z})$
and $\mathcal{K}$ have a partitioned structure and $\mathbf{F}^{\mathbb{C}}(\mathbf{Z})$
has bounded ($\mathbb{R}$)-derivatives on $\mathcal{K},$ let us
introduce the ``condensed'' $I\times I$ matrix $\boldsymbol{{\Upsilon}}_{\mathbf{F}^{\mathbb{C}}}$
given by 
\begin{equation}
\left[\boldsymbol{{\Upsilon}}_{\mathbf{F}^{\mathbb{C}}}\right]_{ij}\,\triangleq\,\left\{ \begin{array}{lcl}
\kappa_{i}^{\min}, &  & \mbox{if }i=j,\\
-\xi_{ij}^{\max}, &  & \mbox{otherwise},
\end{array}\right.\label{eq:Upsilon_matrix_complexVI}
\end{equation}
with 
\begin{equation}
\kappa_{i}^{\min}\,\triangleq\,\inf_{\mathbf{Z}\in\mathcal{K}}\,\lambda_{\text{{least}}}\left(\mathbf{A}_{i}^{H}\,\mathbf{J}_{i}\mathbf{F}_{i}^{\mathbb{C}}(\mathbf{Z})\,\mathbf{A}_{i}\right)\,\quad\mbox{and}\quad\xi_{ij}^{\max}\,\triangleq\,\sup_{\mathbf{Z}\in\mathcal{K}}\,\left\Vert \mathbf{A}_{i}^{H}\,\mathbf{J}_{j}\mathbf{F}_{i}^{\mathbb{C}}(\mathbf{Z})\,\mathbf{A}_{j}\right\Vert _{F},\label{eq:def_alpha_and_beta_Jac_complex_VI}
\end{equation}
where $\mathbf{J}_{i}\mathbf{F}_{i}^{\mathbb{C}}(\mathbf{Z})$ and
$\mathbf{J}_{j}\mathbf{F}_{i}^{\mathbb{C}}(\mathbf{Z})$ represent
the augmented Jacobians of $\mathbf{F}_{i}^{\mathbb{C}}(\mathbf{Z})$
as defined in (\ref{eq:Augmented_Jacobian}), whose $\mathbb{R}$-derivatives
are taken with respect to $\mathbf{Z}_{i}$ and $\mathbf{Z}_{j}$
(and their conjugates), respectively; $\mathbf{A}_{i}\in\mathbb{C}^{2n_{i}m_{i}\times2n_{i}m_{i}}$
are nonsingular arbitrary matrices; and $\left\Vert \mathbf{A}\right\Vert _{F}$
denotes the Frobenius norm of $\mathbf{A}$. As we show shortly, $\mathbf{JF}^{\mathbb{C}}(\mathbf{Z})$
and $\boldsymbol{{\Upsilon}}_{\mathbf{F}^{\mathbb{C}}}$ play for
complex VIs the same role as $\mathbf{J}\mathbf{F}$ and $\boldsymbol{{\Upsilon}}_{\mathbf{F}}$
introduced in Sec. \ref{sub:Existence-and-uniqueness_NE} for real
VIs. 

Before stating the main results (Propositions \ref{monotonicity_complexVI}
and \ref{VI_monotonicity_closed_sets}), we need to introduce a novel
relaxed definition of (uniformly) positive (semi-)definiteness for
matrices in the form (\ref{eq:Augmented_Jacobian}), which takes explicitly
into account the special structure of those matrices. Instead of checking
the sign of the quadratic form $\mathbf{y}^{H}\mathbf{JF}^{\mathbb{C}}(\mathbf{Z})\mathbf{y}$
for arbitrary $\mathbf{y}\in\mathbb{C}^{2nm}$, it turns out that
one can restrict the check to structured vectors in the form $\mathbf{y}=[\mathbf{y}_{1},\mathbf{y}_{1}^{\ast}]$
for all $\mathbf{y}_{1}\in\mathbb{C}^{nm}$, which is actually the
size of the vector space where $\mathbf{Z}$ lies. This motivates
the following definition of ``augmented'' (uniformly) positive (semi-)definiteness
for matrices in the form of (\ref{eq:Augmented_Jacobian}).\vspace{-0.2cm}

\begin{definition}\label{The-augmented-Jacobian}The augmented Jacobian
$\mathbf{JF}^{\mathbb{C}}(\mathbf{Z})$ is said to be: \vspace{-0.1cm}
\begin{description}
\item [{i)}] \emph{augmented }positive semidefinite on $\mathcal{K}$ if
for all $\mathbf{Y}\in\mathbb{C}^{n\times m}$ and $\mathbf{Z}\in\mathcal{K}$,
\begin{equation}
\text{\emph{{vec}}}([\mathbf{\mathbf{Y},\mathbf{Y}^{\ast}}])^{H}\mathbf{JF}^{\mathbb{C}}(\mathbf{Z})\,\emph{{vec}}([\mathbf{\mathbf{Y},\mathbf{Y}^{\ast}}])\geq0;\vspace{-0.2cm}\label{eq:aug_psd}
\end{equation}

\item [{ii)}] \emph{augmented }positive definite on $\mathcal{K}$ if for
all $\mathbf{0}\neq\mathbf{Y}\in\mathbb{C}^{n\times m}$ and $\mathbf{Z}\in\mathcal{K}$,
the inequality in (\ref{eq:aug_psd}) is strict; \vspace{-0.2cm}
\item [{iii)}] uniformly \emph{augmented }positive definite on $\mathcal{K}$
with constant $c>0$ if for all $\mathbf{Y}\in\mathbb{C}^{n\times m}$
and $\mathbf{Z}\in\mathcal{K}$, there exists a positive constant
$c>0$ such that 
\begin{equation}
\emph{{vec}}([\mathbf{\mathbf{Y},\mathbf{Y}^{\ast}}])^{H}\mathbf{JF}^{\mathbb{C}}(\mathbf{Z})\,\text{\emph{{vec}}}([\mathbf{\mathbf{Y},\mathbf{Y}^{\ast}}])\geq c\,\left\Vert \mathbf{Y}\right\Vert _{F}^{2}.\vspace{-0.2cm}\label{eq:aug_unif_pd}
\end{equation}

\end{description}
For i), ii), and iii) we will write $\mathbf{JF}^{\mathbb{C}}(\mathbf{Z})\overset{\mathcal{A}}{\succeq}\mathbf{0}$,
$\mathbf{JF}^{\mathbb{C}}(\mathbf{Z})\overset{\mathcal{A}}{\succ}\mathbf{0}$,
and $\mathbf{JF}^{\mathbb{C}}(\mathbf{Z})-c\mathbf{I}\overset{\mathcal{A}}{\succeq}\mathbf{0}$,
respectively. \end{definition} \vspace{-0.2cm}

Note that $\mathbf{JF}^{\mathbb{C}}(\mathbf{Z})$ is not Hermitian;
which implies that $\text{{vec}}(\mathbf{W})^{H}\mathbf{JF}^{\mathbb{C}}(\mathbf{Z})\text{{vec}}(\mathbf{W})$
is generally not a real number for arbitrary $\mathbf{W}\in\mathbb{C}^{n\times2m}$.
However, because of the structure of $\mathbf{JF}^{\mathbb{C}}(\mathbf{Z})$
and $\text{{vec}}([\mathbf{\mathbf{Y},\mathbf{Y}^{\ast}}])$, with
$\mathbf{Y}\in\mathbb{C}^{n\times m}$, the quadratic form $\text{{vec}}([\mathbf{\mathbf{Y},\mathbf{Y}^{\ast}}])^{H}\mathbf{JF}^{\mathbb{C}}(\mathbf{Z})\,\text{{vec}}([\mathbf{\mathbf{Y},\mathbf{Y}^{\ast}}])$
introduced in the proposition is always real. Note also that if $\mathbf{JF}^{\mathbb{C}}(\mathbf{Z})$
is positive semidefinite, positive definite, or uniformly positive
definite on $\mathcal{K}$ (and thus Hermitian), then it is also augmented
positive semidefinite, positive definite, or uniformly positive definite,
respectively; but the converse in general is not true (because $\mathbf{JF}^{\mathbb{C}}(\mathbf{Z})$
is not Hermitian). Using Definition \ref{The-augmented-Jacobian},
we can now establish the connection with the monotonicity properties
of $\mathbf{F}^{\mathbb{C}}$.\vspace{-0.1cm}

\begin{proposition} \label{monotonicity_complexVI} Let $\mathbf{F}^{\mathbb{C}}:\mathcal{K}\mathcal{\rightarrow\mathbb{C}}^{n\times m}$
be \emph{(}$\mathbb{R}$-\emph{)}continuously differentiable on the
convex set $\mathcal{K}$. Suppose that $\mathcal{K}$ has nonempty
interior. The following statements hold: \vspace{-0.2cm}
\begin{description}
\item [{{{\rm}(a)}}] $\mathbf{F}^{\mathbb{C}}$ is \emph{monotone}
on $\mathcal{K}$ if and only if \textbf{\emph{$\mathbf{JF}^{\mathbb{C}}(\mathbf{Z})$
}}is augmented positive semidefinite on $\mathcal{K}$; \vspace{-0.2cm}
\item [{{{\rm}(b)}}] If\textbf{\emph{ $\mathbf{JF}^{\mathbb{C}}(\mathbf{Z})$}}\emph{
}is augmented positive definite on $\mathcal{K}$,\emph{ }then\emph{
}$\mathbf{F}^{\mathbb{C}}$ is \emph{strictly monotone} on $\mathcal{K}$;
\vspace{-0.2cm}
\item [{{{\rm}(c)}}] $\mathbf{F}^{\mathbb{C}}$ is \emph{strongly
monotone} on $\mathcal{K}$\emph{ }with constant $c_{\text{{sm}}}>0$
if and only if $\mathbf{JF}^{\mathbb{C}}(\mathbf{Z})$ is uniformly
augmented positive definite on $\mathcal{K}$ with constant $c_{\text{{sm}}}/2$.
\vspace{-0.2cm}
\end{description}
If we assume a Cartesian structure, i.e. $\mathbf{F}^{\mathbb{C}}=(\mathbf{F}_{i}^{\mathbb{C}})_{i=1}^{I}$
and $\mathcal{K}=\prod_{i=1}^{I}\mathcal{K}_{i}$, and bounded \emph{(}$\mathbb{R}$\emph{)}-derivatives
of $\mathbf{F}^{\mathbb{C}}$ on $\mathcal{K},$ then: \vspace{-0.2cm}
\begin{description}
\item [{{(d)}}] If $\boldsymbol{{\Upsilon}}_{\mathbf{F}^{\mathbb{C}}}$
is positive semidefinite/P$_{0}$-matrix, then $\mathbf{F}^{\mathbb{C}}$
is a monotone/P$_{0}$ function on $\mathcal{K}$; \vspace{-0.2cm}
\item [{{(e)}}] If $\boldsymbol{{\Upsilon}}_{\mathbf{F}^{\mathbb{C}}}$
is a P-matrix, then $\mathbf{F}^{\mathbb{C}}$ is a \emph{uniformly
P-function} on $\mathcal{K}$. \vspace{-0.2cm}
\end{description}
\end{proposition}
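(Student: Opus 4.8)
The plan is to obtain the whole statement by \emph{realification}, reducing every claim to the corresponding real-variable result already available in the paper. Identify $\mathbb{C}^{n\times m}$ with $\mathbb{R}^{2nm}$ through $\mathbf{Z}\mapsto\widehat{\mathbf{z}}\triangleq(\mathrm{Re}\,\mathrm{vec}(\mathbf{Z})^{T},\,\mathrm{Im}\,\mathrm{vec}(\mathbf{Z})^{T})^{T}$. Then $\mathcal{K}$ corresponds to a closed convex set $\widehat{\mathcal{K}}\subset\mathbb{R}^{2nm}$ with nonempty interior, and the $(\mathbb{R}\text{-})$continuously differentiable map $\mathbf{F}^{\mathbb{C}}$ to a continuously differentiable $\widehat{\mathbf{F}}:\widehat{\mathcal{K}}\to\mathbb{R}^{2nm}$. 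Since the inner product $\langle\mathbf{A},\mathbf{B}\rangle=\mathrm{Re}\,\mathrm{tr}(\mathbf{A}^{H}\mathbf{B})$ of (\ref{eq:inner_product_matrix}) is exactly the Euclidean inner product $\widehat{\mathbf{a}}^{T}\widehat{\mathbf{b}}$ of the realified vectors and the Frobenius norm is the Euclidean norm, $\mathrm{VI}(\mathcal{K},\mathbf{F}^{\mathbb{C}})$ is literally the same problem as $\mathrm{VI}(\widehat{\mathcal{K}},\widehat{\mathbf{F}})$; in particular $\mathbf{F}^{\mathbb{C}}$ is monotone / strictly monotone / strongly monotone / $\mathrm{P}_{0}$ / P / uniformly P on $\mathcal{K}$ if and only if $\widehat{\mathbf{F}}$ has the same property on $\widehat{\mathcal{K}}$, with identical constants. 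It therefore remains only to rewrite the real conditions in complex notation.

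For parts (a)--(c), apply the open-set equivalences (\ref{monotonicity-convexity-connection}) to $\widehat{\mathbf{F}}$ on $\mathrm{int}\,\widehat{\mathcal{K}}$ and extend them to $\widehat{\mathcal{K}}$ by continuity of the Jacobian $\mathbf{J}\widehat{\mathbf{F}}$ (this is the only place the nonempty-interior hypothesis is used, and it is exactly the closed-set refinement recorded in Proposition \ref{VI_monotonicity_closed_sets}). To translate $\mathbf{J}\widehat{\mathbf{F}}$ into $\mathbf{JF}^{\mathbb{C}}$, introduce the $2nm\times2nm$ complexification matrix $\mathbf{T}$ determined by $(\mathrm{vec}(\mathbf{Z})^{T},\mathrm{vec}(\mathbf{Z}^{*})^{T})^{T}=\mathbf{T}\,\widehat{\mathbf{z}}$, which satisfies $\mathbf{T}^{H}\mathbf{T}=2\mathbf{I}$. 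Differentiating the identity $(\mathrm{vec}(\mathbf{F}^{\mathbb{C}})^{T},\mathrm{vec}((\mathbf{F}^{\mathbb{C}})^{*})^{T})^{T}=\mathbf{T}\,\widehat{\mathbf{F}}$, using the chain rule and $\widehat{\mathbf{z}}=\mathbf{T}^{-1}(\mathrm{vec}(\mathbf{Z})^{T},\mathrm{vec}(\mathbf{Z}^{*})^{T})^{T}$, the four Wirtinger blocks assemble precisely into (\ref{eq:Augmented_Jacobian}) and give
\[
2\,\mathbf{JF}^{\mathbb{C}}(\mathbf{Z})=\mathbf{T}\,\mathbf{J}\widehat{\mathbf{F}}(\widehat{\mathbf{z}})\,\mathbf{T}^{-1}.
\]
Consequently, for every $\mathbf{Y}\in\mathbb{C}^{n\times m}$ with realification $\widehat{\mathbf{y}}$, using $\mathrm{vec}([\mathbf{Y},\mathbf{Y}^{*}])=\mathbf{T}\,\widehat{\mathbf{y}}$ and $\mathbf{T}^{H}\mathbf{T}=2\mathbf{I}$,
\[
\mathrm{vec}([\mathbf{Y},\mathbf{Y}^{*}])^{H}\,\mathbf{JF}^{\mathbb{C}}(\mathbf{Z})\,\mathrm{vec}([\mathbf{Y},\mathbf{Y}^{*}])=\widehat{\mathbf{y}}^{T}\,\mathbf{J}\widehat{\mathbf{F}}(\widehat{\mathbf{z}})\,\widehat{\mathbf{y}},\qquad\|\mathbf{Y}\|_{F}^{2}=\|\widehat{\mathbf{y}}\|^{2}.
\]
Since $\widehat{\mathbf{y}}\mapsto\mathbf{T}\,\widehat{\mathbf{y}}$ is a bijection of $\mathbb{R}^{2nm}$ onto the structured vectors $\mathrm{vec}([\mathbf{Y},\mathbf{Y}^{*}])$, the requirement that these quadratic forms be $\geq0$ (resp. $>0$ for $\mathbf{Y}\neq\mathbf{0}$, resp. $\geq c\|\mathbf{Y}\|_{F}^{2}$) for all $\mathbf{Y}$ and all $\mathbf{Z}\in\mathcal{K}$ is exactly the augmented positive (semi-)definiteness of $\mathbf{JF}^{\mathbb{C}}(\mathbf{Z})$ in the sense of Definition \ref{The-augmented-Jacobian}; combining with the realification equivalences of the first paragraph proves (a), (b), (c). This is also where the relaxed Definition \ref{The-augmented-Jacobian} --- testing only on vectors of the form $[\mathbf{Y},\mathbf{Y}^{*}]$ --- is seen to be exactly what the real theory dictates.

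For parts (d)--(e), realify block by block: $\mathcal{K}=\prod_{i}\mathcal{K}_{i}$ becomes $\widehat{\mathcal{K}}=\prod_{i}\widehat{\mathcal{K}}_{i}$ and $\mathbf{F}^{\mathbb{C}}=(\mathbf{F}_{i}^{\mathbb{C}})_{i}$ becomes $\widehat{\mathbf{F}}=(\widehat{\mathbf{F}}_{i})_{i}$. Choosing the basis-change matrices in (\ref{eq:def_alpha_and_beta_Jac_complex_VI}) of the structured form $\mathbf{A}_{i}=\mathbf{T}_{i}\widehat{\mathbf{C}}_{i}$ with $\widehat{\mathbf{C}}_{i}$ an arbitrary real nonsingular matrix (the additional freedom in $\mathbf{A}_{i}$ being immaterial once test vectors are structured), the block identity $\mathbf{J}_{j}\mathbf{F}_{i}^{\mathbb{C}}=\tfrac{1}{2}\mathbf{T}_{i}\,\mathbf{J}_{j}\widehat{\mathbf{F}}_{i}\,\mathbf{T}_{j}^{-1}$ together with $\mathbf{T}_{i}^{H}\mathbf{T}_{i}=2\mathbf{I}$ yields $\mathbf{A}_{i}^{H}\mathbf{J}_{j}\mathbf{F}_{i}^{\mathbb{C}}\mathbf{A}_{j}=\widehat{\mathbf{C}}_{i}^{T}\mathbf{J}_{j}\widehat{\mathbf{F}}_{i}\widehat{\mathbf{C}}_{j}$, so that (with the norm conventions read consistently) $\kappa_{i}^{\min}$ and $\xi_{ij}^{\max}$ coincide with the $\alpha_{i}^{\min}$ and $\beta_{ij}^{\max}$ of $\widehat{\mathbf{F}}$ in (\ref{eq:def_alpha_and_beta_Jac}), and hence $\boldsymbol{\Upsilon}_{\mathbf{F}^{\mathbb{C}}}=\boldsymbol{\Upsilon}_{\widehat{\mathbf{F}}}$. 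Applying Proposition \ref{monotonicity}(d)--(e) to $\widehat{\mathbf{F}}$ and transporting the conclusion back through the realification equivalence gives (d) and (e). (Alternatively, (d)--(e) can be obtained by mimicking verbatim the quadratic-form estimates in the proof of Proposition \ref{monotonicity}, with $\langle\cdot,\cdot\rangle$ and $\|\cdot\|_{F}$ in place of the real inner product and norm.)

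The main obstacle is the two-level factor bookkeeping intrinsic to Wirtinger calculus. First, establishing $2\,\mathbf{JF}^{\mathbb{C}}=\mathbf{T}\,\mathbf{J}\widehat{\mathbf{F}}\,\mathbf{T}^{-1}$ rigorously at the level of the correctly-ordered $\mathrm{vec}$-operators requires carefully tracking how the conjugate variable $\mathbf{Z}^{*}$ and the conjugate function $(\mathbf{F}^{\mathbb{C}})^{*}$ enter the four blocks of (\ref{eq:Augmented_Jacobian}). Second, one must verify that the factor $\tfrac{1}{2}$ built into (\ref{eq:Augmented_Jacobian}) cancels against $\mathbf{T}^{H}\mathbf{T}=2\mathbf{I}$ so that the quadratic-form and norm identities above --- and the condensed entries $\kappa_{i}^{\min},\xi_{ij}^{\max}$ --- come out with precisely the constants stated in (a)--(e), including reconciling the Frobenius-norm convention of (\ref{eq:def_alpha_and_beta_Jac_complex_VI}) with that of (\ref{eq:def_alpha_and_beta_Jac}). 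A lesser, routine point is the passage of the equivalences (\ref{monotonicity-convexity-connection}) from $\mathrm{int}\,\widehat{\mathcal{K}}$ to $\widehat{\mathcal{K}}$, which is why the hypothesis that $\mathcal{K}$ has nonempty interior is imposed.
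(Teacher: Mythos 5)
Your realification strategy is sound and is genuinely different from the paper's route. The paper does not realify: it proves the more general Proposition \ref{VI_monotonicity_closed_sets} (of which this proposition is the special case $\mathcal{S}_{\mathcal{K}}=\mathbb{C}^{n\times m}$) entirely in the complex domain, by first establishing the mean-value identity $\langle \mathbf{Z}_2-\mathbf{Z}_1,\mathbf{F}^{\mathbb{C}}(\mathbf{Z}_2)-\mathbf{F}^{\mathbb{C}}(\mathbf{Z}_1)\rangle=\mathrm{vec}([\Delta\mathbf{Z},\Delta\mathbf{Z}^{\ast}])^{H}\mathbf{JF}^{\mathbb{C}}(\mathbf{Z}(\bar t))\,\mathrm{vec}([\Delta\mathbf{Z},\Delta\mathbf{Z}^{\ast}])$ via the Wirtinger chain rule applied to $h(t)=\langle\Delta\mathbf{Y},\mathbf{F}^{\mathbb{C}}(\mathbf{Z}_1+t\,\Delta\mathbf{Z})\rangle$ (sufficiency), and then a difference-quotient argument at (relative-)interior points plus continuity (necessity). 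Your transfer identity $2\,\mathbf{JF}^{\mathbb{C}}=\mathbf{T}\,\mathbf{J}\widehat{\mathbf{F}}\,\mathbf{T}^{-1}$ with $\mathbf{T}^{H}\mathbf{T}=2\mathbf{I}$ is correct and buys a transparent explanation of why Definition \ref{The-augmented-Jacobian} tests only structured vectors $[\mathbf{Y},\mathbf{Y}^{\ast}]$; the paper's intrinsic route buys the relative-interior generalization that is needed later for sets of Hermitian matrices (empty interior), a case your argument would have to be redone for. One circularity to watch: the closed-set extension of (\ref{monotonicity-convexity-connection}) that you invoke is precisely what the paper defers to this section, so you must supply the (standard) difference-quotient argument yourself rather than cite it.

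Two concrete loose ends remain. First, the constant in (c): your own displayed identities $\mathrm{vec}([\mathbf{Y},\mathbf{Y}^{\ast}])^{H}\mathbf{JF}^{\mathbb{C}}(\mathbf{Z})\,\mathrm{vec}([\mathbf{Y},\mathbf{Y}^{\ast}])=\widehat{\mathbf{y}}^{T}\mathbf{J}\widehat{\mathbf{F}}(\widehat{\mathbf{z}})\,\widehat{\mathbf{y}}$ and $\|\mathbf{Y}\|_{F}^{2}=\|\widehat{\mathbf{y}}\|^{2}$ yield the equivalence with constant $c_{\mathrm{sm}}$, not $c_{\mathrm{sm}}/2$, so the assertion that the constants "come out precisely as stated" contradicts your own computation; you must either locate a missing factor or explicitly flag the discrepancy (the paper's necessity step (\ref{eq:strongly_monotonicity_F_complex_1}) is where the $1/2$ enters, when the sum $\mathrm{vec}(\mathbf{Y})^{H}\mathrm{vec}(\Delta)+\mathrm{vec}(\mathbf{Y}^{\ast})^{H}\mathrm{vec}(\Delta^{\ast})=2\langle\mathbf{Y},\Delta\rangle$ is converted to the inner product). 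Second, in (d)--(e) the matrices $\mathbf{A}_i$ in (\ref{eq:def_alpha_and_beta_Jac_complex_VI}) are arbitrary nonsingular complex matrices; restricting to $\mathbf{A}_i=\mathbf{T}_i\widehat{\mathbf{C}}_i$ proves the claim only for that subfamily, and the parenthetical "the additional freedom is immaterial" is an assertion, not an argument --- for general $\mathbf{A}_i$ the quantities $\kappa_i^{\min}$ and $\xi_{ij}^{\max}$ do not reduce to the real $\alpha_i^{\min}$ and $\beta_{ij}^{\max}$ of any realified map. Your fallback (repeating the estimates in the proof of Proposition \ref{monotonicity} with $\langle\cdot,\cdot\rangle$, $\|\cdot\|_{F}$, and the blockwise mean-value identity) is what actually covers arbitrary $\mathbf{A}_i$, and should be promoted from a parenthetical remark to the proof of (d)--(e).
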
\vspace{-0.1cm}\begin{proof} See Appendix \ref{sub:Proof-of-Proposition_monotonicity_complexVI}.\end{proof}

The above proposition is the generalization of (\ref{monotonicity-convexity-connection})
and Proposition \ref{monotonicity} to complex VIs. Note that, if
the set $\mathcal{K}$ has empty interior, necessary conditions in
(a) and (c) generally do not hold, whereas sufficient conditions in
(a)-(c) may be too restrictive. Since some of the optimization problems
of our interest have feasible sets that fall into this class \textcolor{black}{{[}e.g.,
think of the set of Hermitian matrices{]}}, it is worth extending
Proposition \ref{monotonicity_complexVI} to sets with empty interior.
The next result is valid for arbitrary (nonempty) convex sets. \vspace{-0.2cm}

\begin{proposition} \label{VI_monotonicity_closed_sets} Consider
the setting of Proposition \ref{monotonicity_complexVI}, but with
$\mathcal{K}$ being any nonempty convex subset of $\mathbb{C}^{n\times m}$.
Let $\mathcal{S}_{\mathcal{K}}$ be the subspace that is parallel
to the affine hull of $\mathcal{K}$.%
\footnote{We recall that, given a subset $\mathcal{K}$ of $\mathbb{C}^{n\times m}$,
the affine hull of $\mathcal{K}$, denoted by $\text{{Aff}}(\mathcal{K})$,
is the set of all affine combinations of elements in $\mathcal{K}$,
that is $\text{{Aff}}(\mathcal{K})\triangleq\left\{ \mathbf{Y}\in\mathbb{C}^{n\times m}\,:\,\mathbf{Y}=\sum_{i=1}^{k}\alpha_{i}\mathbf{X}_{i},\,\, k>0,\,\mathbf{X}_{i}\in\mathcal{K},\,\alpha_{i}\in\mathbb{R},\,\sum_{i=1}^{k}\alpha_{i}=1\right\} $. %
} The following statements hold: \vspace{-0.2cm}
\begin{description}
\item [{{{\rm}(a)}}] $\mathbf{F}^{\mathbb{C}}$ is \emph{monotone}
on $\mathcal{K}$ if and only if for all $\mathbf{Y}\in\mathbb{C}^{n\times m}$
such that $\mathbf{Y}\in\mathcal{S}_{\mathcal{K}}$ and $\mathbf{Z}\in\mathcal{K}$,
it holds $\text{\emph{{vec}}}\left([\mathbf{Y},\mathbf{Y}^{\ast}]\right)^{H}\mathbf{JF}^{\mathbb{C}}(\mathbf{Z})\,\text{\emph{{vec}}}\left([\mathbf{Y},\mathbf{Y}^{\ast}]\right)\geq0$;\vspace{-0.2cm}
\item [{{{\rm}(b)}}] If for all $\mathbf{0}\neq\mathbf{Y}\in\mathbb{C}^{n\times m}$
such that $\mathbf{Y}\in\mathcal{S}_{\mathcal{K}}$ and $\mathbf{Z}\in\mathcal{K}$,
it holds $\text{\emph{{vec}}}\left([\mathbf{Y},\mathbf{Y}^{\ast}]\right)^{H}\mathbf{JF}^{\mathbb{C}}(\mathbf{Z})\,\text{\emph{{vec}}}\left([\mathbf{Y},\mathbf{Y}^{\ast}]\right)>0$,
then\emph{ }$\mathbf{F}^{\mathbb{C}}$ is \emph{strictly monotone}
on $\mathcal{K}$; \vspace{-0.2cm}
\item [{{{\rm}(c)}}] $\mathbf{F}^{\mathbb{C}}$ is \emph{strongly
monotone} on $\mathcal{K}$\emph{ }with constant $c_{\text{{sm}}}>0$
if and only for all $\mathbf{Y}\in\mathbb{C}^{n\times m}$ such that
$\mathbf{Y}\in\mathcal{S}_{\mathcal{K}}$ and $\mathbf{Z}\in\mathcal{K}$,
it holds $\text{\emph{{vec}}}\left([\mathbf{Y},\mathbf{Y}^{\ast}]\right)^{H}\mathbf{JF}^{\mathbb{C}}(\mathbf{Z})\,\text{\emph{{vec}}}\left([\mathbf{Y},\mathbf{Y}^{\ast}]\right)\geq$\textbf{\emph{$\,(c_{\text{{sm}}}/2)\,\left\Vert \mathbf{Y}\right\Vert _{F}^{2}$}}.
\vspace{-0.2cm}
\end{description}
If we assume a Cartesian product structure, i.e. $\mathbf{F}^{\mathbb{C}}=(\mathbf{F}_{i}^{\mathbb{C}})_{i=1}^{I}$
and $\mathcal{K}=\prod_{i=1}^{I}\mathcal{K}_{i}$, and bounded \emph{(}$\mathbb{R}$\emph{)}-derivatives
of $\mathbf{F}^{\mathbb{C}}(\mathbf{Z})$ on $\mathcal{K},$ then
statements (d) and (e) of Proposition \ref{monotonicity_complexVI}
hold.

\end{proposition}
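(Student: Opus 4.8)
The plan is to retrace the proof of Proposition~\ref{monotonicity_complexVI} (Appendix~\ref{sub:Proof-of-Proposition_monotonicity_complexVI}) almost verbatim, replacing throughout the ambient direction space $\mathbb{C}^{nm}$ by the subspace $\mathcal{S}_{\mathcal{K}}$. The argument rests entirely on the integral (mean-value) identity established there: for any $\mathbf{Z}_{1},\mathbf{Z}_{2}\in\mathcal{K}$, with $\boldsymbol{\Delta}\triangleq\mathbf{Z}_{1}-\mathbf{Z}_{2}$,
\[
\big\langle\mathbf{Z}_{1}-\mathbf{Z}_{2},\,\mathbf{F}^{\mathbb{C}}(\mathbf{Z}_{1})-\mathbf{F}^{\mathbb{C}}(\mathbf{Z}_{2})\big\rangle=\int_{0}^{1}\text{vec}([\boldsymbol{\Delta},\boldsymbol{\Delta}^{\ast}])^{H}\,\mathbf{JF}^{\mathbb{C}}(\mathbf{Z}_{2}+t\boldsymbol{\Delta})\,\text{vec}([\boldsymbol{\Delta},\boldsymbol{\Delta}^{\ast}])\,dt,
\]
together with two elementary facts about a nonempty convex set: (i) for $\mathbf{Z}_{1},\mathbf{Z}_{2}\in\mathcal{K}$ the whole segment $\{\mathbf{Z}_{2}+t\boldsymbol{\Delta}:t\in[0,1]\}$ lies in $\mathcal{K}$ and the secant direction $\boldsymbol{\Delta}$ belongs to $\mathcal{S}_{\mathcal{K}}$; and (ii) the relative interior $\text{ri}(\mathcal{K})$ is nonempty, $\text{cl}(\text{ri}(\mathcal{K}))\supseteq\mathcal{K}$, and for every $\mathbf{Z}\in\text{ri}(\mathcal{K})$ and $\mathbf{Y}\in\mathcal{S}_{\mathcal{K}}$ one has $\mathbf{Z}+\varepsilon\mathbf{Y}\in\mathcal{K}$ for all small $\varepsilon>0$.

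The sufficiency halves of (a)--(c) follow at once: by (i) the integrand above is evaluated only at points of $\mathcal{K}$ and only against structured vectors built from $\boldsymbol{\Delta}\in\mathcal{S}_{\mathcal{K}}$, so the assumed sign condition on $\text{vec}([\mathbf{Y},\mathbf{Y}^{\ast}])^{H}\mathbf{JF}^{\mathbb{C}}(\mathbf{Z})\,\text{vec}([\mathbf{Y},\mathbf{Y}^{\ast}])$ for $\mathbf{Y}\in\mathcal{S}_{\mathcal{K}}$ and $\mathbf{Z}\in\mathcal{K}$ makes the integrand $\geq0$ (resp.\ $>0$ whenever $\boldsymbol{\Delta}\neq\mathbf{0}$, for (b); resp.\ $\geq(c_{\text{sm}}/2)\|\boldsymbol{\Delta}\|_{F}^{2}$, for (c)), and integrating yields monotonicity / strict monotonicity / strong monotonicity with the constant asserted in Proposition~\ref{monotonicity_complexVI}, the factors $\tfrac{1}{2}$ being precisely those built into (\ref{eq:Augmented_Jacobian}) and into the identification of $\langle\bullet,\bullet\rangle$ with one half of the Hermitian inner product of augmented vectors.

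For the necessity halves of (a) and (c) I would use the relative interior. Fix $\mathbf{Z}\in\text{ri}(\mathcal{K})$ and $\mathbf{Y}\in\mathcal{S}_{\mathcal{K}}$; by (ii) the point $\mathbf{Z}+\varepsilon\mathbf{Y}$ is feasible for small $\varepsilon>0$. Writing the (strong) monotonicity inequality at the pair $\mathbf{Z}$, $\mathbf{Z}+\varepsilon\mathbf{Y}$, using real-bilinearity of $\langle\bullet,\bullet\rangle$ to extract the factor $\varepsilon$, dividing by $\varepsilon^{2}$, and letting $\varepsilon\downarrow0$, the first-order expansion used in Proposition~\ref{monotonicity_complexVI} gives $\text{vec}([\mathbf{Y},\mathbf{Y}^{\ast}])^{H}\mathbf{JF}^{\mathbb{C}}(\mathbf{Z})\,\text{vec}([\mathbf{Y},\mathbf{Y}^{\ast}])\geq0$ (resp.\ $\geq(c_{\text{sm}}/2)\|\mathbf{Y}\|_{F}^{2}$) for every $\mathbf{Z}\in\text{ri}(\mathcal{K})$. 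Since $\mathbf{F}^{\mathbb{C}}$ is continuously $\mathbb{R}$-differentiable, $\mathbf{Z}\mapsto\mathbf{JF}^{\mathbb{C}}(\mathbf{Z})$ is continuous, so the inequality passes from $\text{ri}(\mathcal{K})$ to its closure and hence to all of $\mathcal{K}$ by (ii). This settles (a), (b), (c).

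Parts (d) and (e) need essentially no extra work: in the proof of Proposition~\ref{monotonicity_complexVI} the nonempty-interior hypothesis is invoked only to obtain the \emph{necessity} implications in (a) and (c), whereas (d) and (e) are pure sufficiency statements, proved by lower-bounding for each block $i$ the term $\langle\mathbf{Z}_{i}-\mathbf{W}_{i},\mathbf{F}^{\mathbb{C}}_{i}(\mathbf{Z})-\mathbf{F}^{\mathbb{C}}_{i}(\mathbf{W})\rangle$ through the segment integral and the definitions (\ref{eq:def_alpha_and_beta_Jac_complex_VI}) of $\kappa_{i}^{\min}$ and $\xi_{ij}^{\max}$, and then invoking the P/P$_{0}$-matrix estimates for $\boldsymbol{\Upsilon}_{\mathbf{F}^{\mathbb{C}}}$ exactly as in the real case (Proposition~\ref{monotonicity}(d)--(e)); every segment involved lies in $\mathcal{K}$ by convexity, so that argument transcribes unchanged. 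The one delicate point is the necessity extension above: one must make sure $\mathbf{F}^{\mathbb{C}}$ and its first $\mathbb{R}$-derivatives are defined and continuous on a neighbourhood of each point of $\mathcal{K}$ --- guaranteed by the standing convention that ``$\mathbb{R}$-differentiable on $\mathcal{K}$'' means ``$\mathbb{R}$-differentiable on an open set containing $\mathcal{K}$'' --- and to work throughout with $\text{ri}(\mathcal{K})$ rather than the topological interior, since $\mathcal{K}$ need not be full-dimensional. Everything else is a literal reading of the interior case with $\mathbb{C}^{nm}$ replaced by $\mathcal{S}_{\mathcal{K}}$.
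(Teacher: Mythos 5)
Your proposal is correct and follows essentially the same route as the paper's own proof: sufficiency via the mean-value identity $\left\langle \Delta\mathbf{Z},\,\mathbf{F}^{\mathbb{C}}(\mathbf{Z}_{2})-\mathbf{F}^{\mathbb{C}}(\mathbf{Z}_{1})\right\rangle =\text{vec}([\Delta\mathbf{Z},\Delta\mathbf{Z}^{\ast}])^{H}\mathbf{JF}^{\mathbb{C}}(\mathbf{Z}(\bar{t}))\,\text{vec}([\Delta\mathbf{Z},\Delta\mathbf{Z}^{\ast}])$ along the segment (you use the integral form, the paper a single mean-value point $\bar{t}$ --- an immaterial variation), and necessity via the directional-derivative limit at points of $\text{ri}(\mathcal{K})$ followed by a continuity/closure argument, with (d)--(e) obtained by the block decomposition exactly as in the real case.
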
\vspace{-0.2cm}\begin{proof} See Appendix \ref{sub:Proof-of-Proposition_monotonicity_complexVI}.\end{proof}\smallskip{}

A set $\mathcal{K}$ of special interest for our applications (cf.
Sec. \ref{sub:Some-motivating-examples}) is the set of complex $n\times n$
positive semidefinite matrices (and thus Hermitian). This set has
empty interior, implying that one needs to use Proposition \ref{VI_monotonicity_closed_sets}.
It is not difficult to see that the affine hull of such a $\mathcal{K}$
is the set of Hermitian matrices, which is already a subspace. Therefore,
when Proposition \ref{VI_monotonicity_closed_sets} applies to such
a $\mathcal{K}$, the matrices $\mathbf{Y}$ are restricted to the
set of Hermitian matrices. It is worth observing that, when $\mathcal{K}$
has nonempty interior, Proposition \ref{VI_monotonicity_closed_sets}
reduces to Proposition \ref{monotonicity_complexVI}; indeed, we have
$\text{{Aff}}(\mathcal{K})=\mathbb{C}^{n\times m}$, and thus $S_{\mathcal{K}}=\mathbb{C}^{n\times m}$.

Using Proposition \ref{monotonicity_complexVI} (or Proposition \ref{VI_monotonicity_closed_sets})
and building on the structure of $\mathbf{JF}^{\mathbb{C}}$ one can
obtain sufficient conditions for $\mathbf{JF}^{\mathbb{C}}$ to be
augmented positive (semi-)definite or uniformly positive semidefinite,
similarly to what we have done in Sec. \ref{sub:Existence-and-uniqueness_NE}
for real valued vector functions $\mathbf{F}$; one can then extend
the solution analysis and methods developed for the $\text{{VI}}(\mathcal{Q},\mathbf{F})$
to the complex $\text{{VI}}\left(\mathcal{K},\mathbf{\mathbf{F}^{\mathbb{C}}}\right)$;
because of the space limitation, we leave these tasks to the reader.
In Sec. \ref{sec:Applications}, we will show an instance of these
conditions when specialized to the MIMO games along with their physical
interpretations.

We conclude this section by applying Proposition \ref{monotonicity_complexVI}
(or Proposition \ref{VI_monotonicity_closed_sets}) to the conjugate
gradient of real-valued functions of complex variables {[}cf. (\ref{eq:Opt_complex}){]}.
The result is a set of novel necessary and sufficient conditions for
a (continuously $\mathbb{R}$-differentiable) real-valued function
of complex variables to be (strictly) convex or strongly convex, in
terms of $\mathbb{R}$-derivatives. This provides an easy way to check
convexity directly in the complex domain. In order to apply Propositions
\ref{monotonicity_complexVI} or \ref{VI_monotonicity_closed_sets},
we need the following intermediate result, which can be proved using
the Taylor expansion (\ref{eq:Taylor_coomplex_domain_r2}) and similar
approach used in the real case. Given a continuously $\mathbb{R}$-differentiable
real-valued function $f:\mathbb{C}^{n\times m}\rightarrow\mathbb{R}$
, $f$ is convex, strictly convex, or strongly convex on $\mathcal{K}$
if and only if its conjugate gradient $\nabla_{\mathbf{Z}^{\ast}}f$
is monotone, strictly monotone, or strongly monotone on $\mathcal{K}$,
respectively. Using Proposition \ref{monotonicity_complexVI} (or
Proposition \ref{VI_monotonicity_closed_sets}), the convexity properties
of $f(\mathbf{Z})$ can be then restated in terms of properties of
the augmented Jacobian matrix $\mathbf{JF}^{\mathbb{C}}(\mathbf{Z})$
of $\mathbf{\mathbf{F}^{\mathbb{C}}}(\mathbf{Z})$, with $\mathbf{\mathbf{F}^{\mathbb{C}}}(\mathbf{Z})=\nabla_{\mathbf{Z}^{\ast}}f(\mathbf{Z})$,
which we term \emph{augmented Hessian} of $f$, $\mathcal{H}_{\mathbf{Z}\mathbf{Z}^{\ast}}f(\mathbf{Z})$,
given by {[}cf. (\ref{eq:Augmented_Jacobian}){]}: 
\begin{equation}
\mathcal{H}_{\mathbf{Z}\mathbf{Z}^{\ast}}f(\mathbf{Z})\triangleq\dfrac{{1}}{2}\,\left[\begin{array}{cc}
D_{\mathbf{Z}}\left(\nabla_{\mathbf{Z}^{\ast}}f({\mathbf{Z}})\right) & D_{\mathbf{Z}^{\ast}}\left(\nabla_{\mathbf{Z}^{\ast}}f({\mathbf{Z}})\right)\\
D_{\mathbf{Z}}\left(\left(\nabla_{\mathbf{Z}^{\ast}}f({\mathbf{Z}})\right)^{\ast}\right) & D_{\mathbf{Z}^{\ast}}\left(\left(\nabla_{\mathbf{Z}^{\ast}}f({\mathbf{Z}})\right)^{\ast}\right)
\end{array}\right]\triangleq\dfrac{{1}}{2}\,\left[\begin{array}{ll}
\nabla_{\mathbf{Z}\mathbf{Z}^{\ast}}^{2}f({\mathbf{Z}}) & \nabla_{\mathbf{Z^{\ast}}\mathbf{Z}^{\ast}}^{2}f({\mathbf{Z}})\\
\nabla_{\mathbf{Z}\mathbf{Z}}^{2}f({\mathbf{Z}}) & \nabla_{\mathbf{Z^{\ast}}\mathbf{Z}}^{2}f({\mathbf{Z}})
\end{array}\right].\vspace{0.2cm}\label{eq:Hessian_augmented}
\end{equation}
Note that {[}cf. $\!\!$(\ref{eq:Jacobians_complex_derivatives}){]}
$\nabla_{\mathbf{Z}\mathbf{Z}^{\ast}}^{2}f({\mathbf{Z}})=\left(\nabla_{\mathbf{Z^{\ast}}\mathbf{Z}}^{2}f({\mathbf{Z}})\right)^{\ast}$
and $\nabla_{\mathbf{Z^{\ast}}\mathbf{Z}^{\ast}}^{2}f({\mathbf{Z}})=\left(\nabla_{\mathbf{Z}\mathbf{Z}}^{2}f({\mathbf{Z}})\right)^{\ast}$.
It follows from Proposition \ref{monotonicity_complexVI} applied
to $\mathbf{\mathbf{F}^{\mathbb{C}}}(\mathbf{Z})=\nabla_{\mathbf{Z}^{\ast}}f(\mathbf{Z})$
that $\mathcal{H}_{\mathbf{Z}\mathbf{Z}^{\ast}}f(\mathbf{Z})$ plays
the role of the classical Hessian matrix of $f$: let $\mathcal{K}\subseteq\mathbb{C}^{n\times m}$
be any convex set with nonempty interior, then \vspace{-0.3cm}
 
\begin{equation}
\begin{array}{lll}
f(\mathbf{Z})\mbox{ is convex on \ensuremath{\mathcal{K}}} & \,\,\Leftrightarrow\,\,\  & \mathcal{H}_{\mathbf{Z}\mathbf{Z}^{\ast}}f(\mathbf{Z})\overset{\mathcal{A}}{\succeq}\mathbf{0},\,\,\forall\mathbf{Z}\in\mathcal{K};\\
f(\mathbf{Z})\mbox{ is strictly convex \ensuremath{\mathcal{K}}} & \,\,\Leftarrow\,\,\  & \mathcal{H}_{\mathbf{Z}\mathbf{Z}^{\ast}}f(\mathbf{Z})\overset{\mathcal{A}}{\succ}\mathbf{0},\,\,\forall\mathbf{Z}\in\mathcal{K};\\
f(\mathbf{Z})\mbox{ is strongly convex on \ensuremath{\mathcal{K}}} & \,\,\Leftrightarrow\,\,\  & \mathcal{H}_{\mathbf{Z}\mathbf{Z}^{\ast}}f(\mathbf{Z})-c_{\text{{sm}}}\,\mathbf{I}\overset{\mathcal{A}}{\succeq}\mathbf{0},\,\,\forall\mathbf{Z}\in\mathcal{K}\,\mbox{and some }c_{\text{{sm}}}>0.
\end{array}\label{eq:convexity_complex_case}
\end{equation}

If the set $\mathcal{K}$ has empty interior, conditions (\ref{eq:convexity_complex_case})
are replaced by those in Proposition \ref{VI_monotonicity_closed_sets}
applied to $\mathbf{JF}^{\mathbb{C}}(\mathbf{Z})=\mathcal{H}_{\mathbf{Z}\mathbf{Z}^{\ast}}f(\mathbf{Z})$;
we leave this easy task to the reader. Note that our conditions in
(\ref{eq:convexity_complex_case}) (and Proposition \ref{VI_monotonicity_closed_sets})
generalize those obtained in \cite[Prop. 1.2.6 and Exercise 1.8]{BertsekasNedicOzdaglar_book_convex03}
for real-valued functions of real variables.

\noindent\textbf{Example \ref{Example_MIMO_complex_min_princ} Revisited}.
Going back to the optimization problem (\ref{eq:rate_max}), we can
recover the well-known concavity property of $f(\mathbf{Z})$ on the
compact and convex set $\mathcal{K}$ by a direct application of Proposition
\ref{VI_monotonicity_closed_sets}. The expression of the augmented
Hessian of $f(\mathbf{Z})$ will be also used in Sec. \ref{sub:The-MIMO-case_revised}
to study MIMO games. 

Let $\widetilde{\mathcal{K}}$ be any open set containing $\mathcal{K}$
over which $f(\mathbf{Z})$ is well defined, and let $\tilde{f}:\mathcal{\widetilde{\mathcal{K}}}\rightarrow\mathbb{R}$
be $\tilde{f}(\mathbf{Z})\triangleq2\,\text{{Re}}\left(f(\mathbf{Z})\right)$.
Since $\tilde{f}=f$ on $\mathcal{K}$, concavity of $f$ on $\mathcal{K}$
follows from that of $ $$\tilde{f}$ on $\mathcal{K}$. Since $\mathcal{K}$
has empty interior, one needs to use Proposition \ref{VI_monotonicity_closed_sets}.
Observing that, for the specific set $\mathcal{K}$ under consideration,
the set $S_{\mathcal{K}}$ in Proposition \ref{VI_monotonicity_closed_sets}
is $S_{\mathcal{K}}=\{\mathbf{X}\in\mathbb{C}^{n\times n}\,:\,\mathbf{X}=\mathbf{X}^{H}\}$,
it is sufficient to show that\vspace{-0.2cm} 
\begin{equation}
-\left[\begin{array}{c}
\text{{vec}}(\mathbf{Y})\\
\text{{vec}}(\mathbf{Y}^{\ast})
\end{array}\right]^{H}\mathcal{H}_{\mathbf{Z}\mathbf{Z}^{\ast}}\tilde{{f}}(\mathbf{Z})\left[\begin{array}{c}
\text{{vec}}(\mathbf{Y})\\
\text{{vec}}(\mathbf{Y}^{\ast})
\end{array}\right]\geq0,\quad\forall\mathbf{Z}\in\mathcal{K}\,\mbox{and}\,\forall\mathbf{Y}=\mathbf{Y}^{H},\label{eq:concavity-Hessian_f}
\end{equation}
where $\mathcal{H}_{\mathbf{Z}\mathbf{Z}^{\ast}}\tilde{{f}}(\mathbf{Z}_{\mathcal{K}})$
is the augmented Hessian of $\tilde{f}(\mathbf{Z})$. In Appendix
\ref{App: Example 24-Intermediate}, we show that 
\begin{equation}
\mathcal{H}_{\mathbf{Z}\mathbf{Z}^{\ast}}\tilde{{f}}(\mathbf{Z})=-\dfrac{{1}}{2}\,\left[\begin{array}{cc}
\mathbf{0} & \left[\mathbf{G}(\mathbf{Z})^{T}\otimes\mathbf{G}(\mathbf{Z})\right]\mathbf{K}_{n^{2}n^{2}}\\
\left[\mathbf{G}(\mathbf{Z})^{H}\otimes\mathbf{G}^{\ast}(\mathbf{Z})\right]\mathbf{K}_{n^{2}n^{2}} & \mathbf{0}
\end{array}\right]\label{eq:augmented_Hessian_f_tilde}
\end{equation}
where $\mathbf{G}(\mathbf{Z})\triangleq\mathbf{H}^{H}\left(\mathbf{R}_{n}+\mathbf{H}\mathbf{Z}^{H}\mathbf{H}^{H}\right)^{-1}\mathbf{H}$
and $\mathbf{K}_{n^{2}n^{2}}$ is an $n^{2}\times n^{2}$ permutation
matrix such that $\text{{vec}}(\mathbf{Z}^{T})=\mathbf{K}_{n^{2}n^{2}}\text{{vec}}(\mathbf{Z})$
(also termed commutation matrix \cite[Def. 2.9]{Are_book_MatrixDiff}).
Using (\ref{eq:augmented_Hessian_f_tilde}), condition (\ref{eq:concavity-Hessian_f})
becomes
\begin{align}
0 & \leq\left[\begin{array}{c}
\text{{vec}}(\mathbf{Y})\\
\text{{vec}}(\mathbf{Y}^{\ast})
\end{array}\right]^{H}\left[\begin{array}{cc}
\mathbf{0} & \mathbf{G}(\mathbf{Z})^{T}\otimes\mathbf{G}(\mathbf{Z})\\
\mathbf{G}(\mathbf{Z})^{H}\otimes\mathbf{G}(\mathbf{Z})^{\ast} & \mathbf{0}
\end{array}\right]\left[\begin{array}{c}
\mathbf{K}_{n^{2}n^{2}}\,\text{{vec}}(\mathbf{Y})\\
\mathbf{K}_{n^{2}n^{2}}\,\text{{vec}}(\mathbf{Y}^{\ast})
\end{array}\right]\medskip\nonumber \\
 & =\left[\begin{array}{c}
\text{{vec}}(\mathbf{Y})\\
\text{{vec}}(\mathbf{Y}^{\ast})
\end{array}\right]^{H}\left[\begin{array}{cc}
\mathbf{G}(\mathbf{Z})^{T}\otimes\mathbf{G}(\mathbf{Z}) & \mathbf{0}\\
\mathbf{0} & \mathbf{G}(\mathbf{Z})^{H}\otimes\mathbf{G}(\mathbf{Z})^{\ast}
\end{array}\right]\left[\begin{array}{c}
\text{{vec}}(\mathbf{Y})\\
\text{{vec}}(\mathbf{Y}^{\ast})
\end{array}\right],\quad\quad\forall\mathbf{Z}\in\mathcal{K}\,\mbox{and}\,\forall\mathbf{Y}=\mathbf{Y}^{H},\label{eq:psd_augmented_hessian}
\end{align}
where in the equality we used the property $\mathbf{K}_{n^{2}n^{2}}\text{{vec}}(\mathbf{Z})=\text{{vec}}(\mathbf{Z}^{T})$
and $\mathbf{Y}=\mathbf{Y}^{H}$. It turns our that (\ref{eq:psd_augmented_hessian})
is satisfied if $\mathbf{G}(\mathbf{Z})^{T}\otimes\mathbf{G}(\mathbf{Z})$
is positive semidefinite for all $\mathbf{Z}\in\mathcal{K}$. Since
$\mathbf{G}(\mathbf{Z})^{T}\otimes\mathbf{G}(\mathbf{Z})$ is Hermitian
on $\mathcal{K}$, it is sufficient to check that the minimum eigenvalue
of $\mathbf{G}(\mathbf{Z})^{T}\otimes\mathbf{G}(\mathbf{Z})$, denoted
by $\lambda_{\text{{min}}}\left(\mathbf{G}(\mathbf{Z})^{T}\otimes\mathbf{G}(\mathbf{Z})\right)$,
is nonnegative for all $\mathbf{Z}$ on $\mathcal{K}$. The result
follows from $\lambda_{\text{{min}}}\left(\mathbf{G}(\mathbf{Z})^{T}\otimes\mathbf{G}(\mathbf{Z})\right)=\lambda_{\text{{min}}}\left(\mathbf{G}(\mathbf{Z})^{T}\right)\cdot\lambda_{\text{{min}}}\left(\mathbf{G}(\mathbf{Z})\right)=\lambda_{\text{{min}}}\left(\mathbf{G}(\mathbf{Z})\right)^{2}\geq0$
for all $\mathbf{Z}\in\mathcal{K}$, which proves concavity of $\tilde{f}(\mathbf{Z})$
on $\mathcal{K}$ and thus of $f(\mathbf{Z})$ on $\mathcal{K}$.

It is worth observing that while $-\mathcal{H}_{\mathbf{Z}\mathbf{Z}^{\ast}}f(\mathbf{Z})$
satisfies (\ref{eq:concavity-Hessian_f}) {[}Proposition \ref{VI_monotonicity_closed_sets}(a){]},
it is not positive (semi-) definite, showing that the latter condition
may be too restrictive for checking the convexity of a (real-valued)
function of complex variables. This strengths the importance of the
proposed new concept of augmented positive (semi-)definiteness (Definition
\ref{The-augmented-Jacobian}) and the role of Propositions \ref{monotonicity_complexVI}
and \ref{VI_monotonicity_closed_sets}. \hfill{}$\square$

\subsection{NEPs in the complex domain\label{sub:NEP_complex_domain}\vspace{-0.2cm}}

\noindent We can now establish the formal connection between complex
NEPs and complex VIs. Let $\mathcal{G}_{\mathbb{C}}\triangleq\left\langle \mathcal{K},\mathbf{f}\right\rangle $
be a complex NEP where each player controls a complex matrix $\mathbf{Z}_{i}\in\mathbb{C}^{n_{i}\times m_{i}}$
that must belong to the player's feasible set $\mathcal{K}_{i}\subseteq\mathbb{C}^{n_{i}\times m_{i}}$;
the cost function of each player is denoted by $f_{i}:\mathcal{K}\rightarrow\mathbb{R}$;
and the joint strategy set of the game is $\mathcal{K}=\prod_{i}\mathcal{K}_{i}$.
We also write $\mathbf{Z}\triangleq(\mathbf{Z}_{i})_{i=1}^{I}$, $\mathbf{Z}_{-i}\triangleq(\mathbf{Z}_{1},\ldots,\mathbf{Z}_{i-1},\mathbf{Z}_{i+1},\ldots\mathbf{Z}_{I})$,
and $\mathcal{K}_{-i}\triangleq\prod_{j\neq i}\mathcal{K}_{j}$. The
NEP problem $\mathcal{G}_{\mathbb{C}}$ consists then, for each player
$i=1,\ldots,I$, in solving the following convex optimization problem:
given $\mathbf{Z}_{-i}\in\mathcal{K}_{-i}$, 
\begin{equation}
\begin{array}{ll}
\limfunc{minimize}\limits _{\mathbf{Z}_{i}} & f_{i}(\mathbf{Z}_{i},\mathbf{Z}_{-i})\\[5pt]
\text{subject to} & \mathbf{Z}_{i}\in\mathcal{K}_{i}.
\end{array}\label{eq:NEP_complex}
\end{equation}
Building on $ $Lemma \ref{Lemma_min_principle}, it is not difficult
to prove the following.\vspace{-0.3cm}

\begin{proposition}\label{Lemma_NEP_complex_VI} Given the complex
NEP $\mathcal{G}_{\mathbb{C}}\triangleq\left\langle \mathcal{K},\mathbf{f}\right\rangle $
, suppose that for each player $\ensuremath{i}$ the following hold:\vspace{-0.3cm}

\begin{description}
\item [{{i)}}] the (nonempty) strategy set $\ensuremath{\mathcal{K}_{i}}$
is closed and convex;\vspace{-0.3cm}

\item [{{ii)}}] the payoff function $f_{i}(\mathbf{Z}_{i},\mathbf{Z}_{-i})$
is convex and continuously $\mathbb{R}$-differentiable in $\ensuremath{\mathbf{Z}_{i}}$
for every fixed $\mathbf{Z}_{-i}$. \vspace{-0.3cm}

\end{description}
Then, the complex NEP $\mathcal{G}_{\mathbb{C}}$ is equivalent to
the complex \emph{VI}$(\mathcal{K},\mathbf{G}^{\mathbb{C}})$, where
$\mathbf{G}^{\mathbb{C}}(\mathbf{Z})\triangleq\left(\nabla_{\mathbf{Z}_{i}^{\ast}}f_{i}(\mathbf{Z})\right)_{i=1}^{I}$.\end{proposition}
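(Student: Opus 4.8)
The plan is to mimic exactly the proof of Proposition \ref{VI_reformulation} (the real case), replacing the real minimum principle by its complex analogue, Lemma \ref{Lemma_min_principle}, and the Euclidean inner product by the inner product $\langle\bullet,\bullet\rangle$ defined in (\ref{eq:inner_product_matrix}). The equivalence to be shown is: $\mathbf{Z}^{\star}=(\mathbf{Z}_i^{\star})_{i=1}^I\in\mathcal{K}$ is a NE of $\mathcal{G}_{\mathbb{C}}$ if and only if $\mathbf{Z}^{\star}$ solves $\text{VI}(\mathcal{K},\mathbf{G}^{\mathbb{C}})$ with $\mathbf{G}^{\mathbb{C}}(\mathbf{Z})\triangleq(\nabla_{\mathbf{Z}_i^{\ast}}f_i(\mathbf{Z}))_{i=1}^I$.

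First I would establish the per-player characterization. Fix $i$ and fix the rivals' strategies at $\mathbf{Z}_{-i}^{\star}\in\mathcal{K}_{-i}$. By hypothesis ii), the map $\mathbf{Z}_i\mapsto f_i(\mathbf{Z}_i,\mathbf{Z}_{-i}^{\star})$ is convex and continuously $\mathbb{R}$-differentiable on the closed convex set $\mathcal{K}_i$, so the single-player optimization problem (\ref{eq:NEP_complex}) is exactly an instance of (\ref{eq:Opt_complex}). Applying Lemma \ref{Lemma_min_principle} (with $f=f_i(\cdot,\mathbf{Z}_{-i}^{\star})$ and $\mathcal{K}=\mathcal{K}_i$) gives: $\mathbf{Z}_i^{\star}$ is a best-response of player $i$ against $\mathbf{Z}_{-i}^{\star}$ (i.e.\ satisfies (\ref{eq:NE}) in matrix form) if and only if
\[
\langle\,\mathbf{Z}_i-\mathbf{Z}_i^{\star},\ \nabla_{\mathbf{Z}_i^{\ast}}f_i(\mathbf{Z}_i^{\star},\mathbf{Z}_{-i}^{\star})\,\rangle\ \geq\ 0\qquad\forall\,\mathbf{Z}_i\in\mathcal{K}_i.
\]
Here I should be slightly careful about notation: $\nabla_{\mathbf{Z}_i^{\ast}}f_i$ denotes the conjugate $\mathbb{R}$-derivative of $f_i$ with respect to its $i$-th block variable, holding the others fixed, which is precisely the $i$-th component $\mathbf{G}^{\mathbb{C}}_i(\mathbf{Z})$ of $\mathbf{G}^{\mathbb{C}}(\mathbf{Z})$ evaluated at $(\mathbf{Z}_i^{\star},\mathbf{Z}_{-i}^{\star})$.

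Next I would assemble the blockwise conditions into the partitioned complex VI, using the Cartesian structure $\mathcal{K}=\prod_i\mathcal{K}_i$. For the ``only if'' direction: if $\mathbf{Z}^{\star}$ is a NE, the displayed inequality holds for each $i$; given an arbitrary $\mathbf{Z}=(\mathbf{Z}_i)_{i=1}^I\in\mathcal{K}$, each $\mathbf{Z}_i\in\mathcal{K}_i$, so summing over $i$ yields $\sum_{i=1}^I\langle\mathbf{Z}_i-\mathbf{Z}_i^{\star},\mathbf{G}^{\mathbb{C}}_i(\mathbf{Z}^{\star})\rangle\geq0$, which is exactly $\langle\mathbf{Z}-\mathbf{Z}^{\star},\mathbf{G}^{\mathbb{C}}(\mathbf{Z}^{\star})\rangle\geq0$ in the compact partitioned notation introduced after Definition \ref{Def_VI_complex}; hence $\mathbf{Z}^{\star}\in\sol(\mathcal{K},\mathbf{G}^{\mathbb{C}})$. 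For the ``if'' direction, starting from an arbitrary $\mathbf{Z}^{\star}$ solving the complex VI, I would recover the $i$-th blockwise inequality by the standard test-point trick: fix an arbitrary $\mathbf{W}_i\in\mathcal{K}_i$ and take $\mathbf{Z}\triangleq(\mathbf{Z}_1^{\star},\ldots,\mathbf{Z}_{i-1}^{\star},\mathbf{W}_i,\mathbf{Z}_{i+1}^{\star},\ldots,\mathbf{Z}_I^{\star})\in\mathcal{K}$; all summands except the $i$-th vanish because $\mathbf{Z}_j-\mathbf{Z}_j^{\star}=\mathbf{0}$ for $j\neq i$, leaving $\langle\mathbf{W}_i-\mathbf{Z}_i^{\star},\mathbf{G}^{\mathbb{C}}_i(\mathbf{Z}^{\star})\rangle\geq0$. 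Since $\mathbf{W}_i\in\mathcal{K}_i$ was arbitrary and $\mathbf{G}^{\mathbb{C}}_i(\mathbf{Z}^{\star})=\nabla_{\mathbf{Z}_i^{\ast}}f_i(\mathbf{Z}_i^{\star},\mathbf{Z}_{-i}^{\star})$, Lemma \ref{Lemma_min_principle} (the ``if'' part) gives that $\mathbf{Z}_i^{\star}$ solves (\ref{eq:NEP_complex}) against $\mathbf{Z}_{-i}^{\star}$; as $i$ was arbitrary, $\mathbf{Z}^{\star}$ is a NE.

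I do not anticipate a genuine obstacle here — the argument is essentially a transcription of the real case. The one point that requires a line of care is confirming that the partitioned inner product used implicitly for the complex $\text{VI}(\mathcal{K},\mathbf{G}^{\mathbb{C}})$ decomposes additively across blocks exactly as the Cartesian structure demands; but this is built into the convention $\langle\mathbf{Y}-\mathbf{Z},\mathbf{F}^{\mathbb{C}}(\mathbf{Z})\rangle\triangleq\sum_{i=1}^I\langle\mathbf{Y}_i-\mathbf{Z}_i,\mathbf{F}^{\mathbb{C}}_i(\mathbf{Z})\rangle$ fixed right after Definition \ref{Def_VI_complex}, so nothing new needs to be proved. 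The proof is therefore short enough that it could reasonably be stated as ``a direct consequence of Lemma \ref{Lemma_min_principle} and the Cartesian structure of $\mathcal{K}$, paralleling the proof of Proposition \ref{VI_reformulation},'' with the test-point argument spelled out for completeness.
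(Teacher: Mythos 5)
Your proof is correct and follows exactly the route the paper intends: the paper gives no detailed argument, simply asserting that the proposition follows from Lemma \ref{Lemma_min_principle}, and your write-up (per-player application of the complex minimum principle plus the standard test-point argument exploiting the Cartesian structure of $\mathcal{K}$) is precisely the transcription of the real-case Proposition \ref{VI_reformulation} that the authors have in mind. No gaps.
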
\vspace{-0.3cm}

We now have all the tools necessary to extend the developments in
Sec. \ref{sec:Nash-Equilibrium-Problems} and \ref{sub:Distributed-algorithms-for_NEP}
to the solution of the complex $\mathcal{G}_{\mathbb{C}}$. In fact,
by using the results in this section about monotonicity/P properties
of $\text{{VI}}(\mathcal{K},\mathbf{G}^{\mathbb{C}})$ we can verbatim
mimic the developments of Sec. \ref{sec:Nash-Equilibrium-Problems}
and \ref{sub:Distributed-algorithms-for_NEP}. We remark that this
possibility was not obvious and is instead a result of careful choices
about the way to deal with complex functions. Because of space limitations,
we do not provide here all the analytic developments; however, in
the next section we will illustrate them on the specific MIMO games
introduced in Sec. \ref{sub:The-MIMO-case}.\vspace{-0.2cm}

\section{Noncooperative Games Over Interference Channels Revisited \label{sec:Applications}\vspace{-0.2cm}}

\label{Sec_applications} In this section, we focus on the application
of the general theory developed in the previous sections to some concrete
examples of practical interest. In particular, we show how the real/complex
NEPs introduced in Sec. \ref{sub:Some-motivating-examples} can be
naturally casted in the proposed framework and thus efficiently solved.
The main result in the SISO case is a novel iterative water-filling
like algorithm where the users can choose the degree of desired cooperation
via local pricing, converging to solutions having different performance/signaling
trade-off; we also prove that the best-response of each player has
a multi-level waterfilling-like expression and provide an efficient
algorithm for its computation. We then extend our analysis to MIMO
games and obtain similar results. Numerical experiments show the superiority
of our novel distributed algorithms with respect to plain noncooperative
solutions as well as very good performance with respect to centralized
schemes.\vspace{-0.2cm}

\subsection{The SISO case\label{sub:Game G_the-SISO}}

We study here the game $\mathcal{G}_{\texttt{{siso}}}=\left\langle \mathcal{P^{\,\texttt{{siso}}}},\,(r_{i})_{i=1}^{I}\right\rangle $
introduced in (\ref{eq:max_rate_game}). The VI function associated
with $\mathcal{G}_{\texttt{siso}}$\textbf{ }is $\mathbf{G}(\mathbf{p})\triangleq(\mathbf{G}_{i}(\mathbf{p}))_{i=1}^{I}:\mathcal{P^{\,\texttt{{siso}}}}\rightarrow\mathbb{R}^{N\, I}$
, where each $\mathbf{G}_{i}(\mathbf{p})$ is defined as 
\begin{equation}
\mathbf{G}_{i}(\mathbf{p})\triangleq-\nabla_{\mathbf{p}_{i}}r_{i}(\mathbf{p})=\left(-\dfrac{{|H_{ii}(k)|^{2}}}{\sigma_{i}^{2}(k)+\sum_{j\neq i}|H_{ij}(k)|^{2}p_{j}(k)}\right)_{k=1}^{N}.\label{F_VI}
\end{equation}
Note that in this section, due to the nature of the problems at hand,
we called the VI mapping $\mathbf{G}$ instead of $\mathbf{F}$ used
previously, and the VI variables $\mathbf{p}$ instead of $\mathbf{x}$
as used previously.

According to Proposition \ref{monotonicity}, the monotonicity/P properties
of $\mathbf{G}$ are related to the matrices $\mathbf{{J}}\mathbf{G}_{\text{{low}}}$
and $\mathbf{\boldsymbol{\Upsilon}}_{\mathbf{G}}$ defined in (\ref{eq:def_lower_of_comparison_of_Jacobian})
and (\ref{eq:Upsilon_matrix}). We recall that the matrices $\mathbf{B}$
and $\mathbf{C}_{i}$'s in the definition of $\mathbf{{J}}\mathbf{G}_{\text{{low}}}$
and $\mathbf{\boldsymbol{\Upsilon}}_{\mathbf{G}}$ represent a degree
of freedom that one can use; in this case it is convenient to make
the following choices. Let us rearrange the components of $\mathbf{p}$
by subcarriers, meaning that the vector $\mathbf{p}=(\mathbf{p}_{i})_{i=1}^{I}$
is permuted into $\mathbf{\bar{{\mathbf{p}}}}=(\bar{{\mathbf{p}}}(k))_{k=1}^{N}$,
with $\bar{{\mathbf{p}}}(k)=(p_{i}(k))_{i=1}^{I}$; it is not difficult
to see that $\bar{{\mathbf{p}}}$ can be written as $\bar{{\mathbf{p}}}=\mathbf{P}\mathbf{p}$,
where $\mathbf{P}$ is a permutation matrix such that $[\mathbf{P}]_{ij}=1$
if $j=[(i\,\text{{mod}\,\ }I)-1]N+\left\lceil i/I\right\rceil \,\text{{mod}}(I\cdot N)$,
and $[\mathbf{P}]_{ij}=0$ otherwise. Using this new ordering for
the variables, matrix $\mathbf{{J}}\mathbf{G}$ becomes $\mathbf{P}^{T}\mathbf{{J}}\mathbf{G}\mathbf{P}$;
$\mathbf{{J}}\mathbf{G}_{\text{{low}}}$ is then obtained from $\mathbf{P}^{T}\mathbf{{J}}\mathbf{G}\mathbf{P}$
according to (\ref{eq:def_lower_of_comparison_of_Jacobian}), where
$\mathbf{B}\triangleq\text{{Diag}}\{\left(\mathbf{B}(k)\right)_{k=1}^{N}\}$
is a block diagonal matrix, with each block $\mathbf{B}(k)\in\mathbb{R}^{I\times I}$
being a positive diagonal matrix with the $i$-th entry equal to $\left[\mathbf{B}(k)\right]_{ii}\triangleq\sigma_{i}^{2}(k)/|H_{ii}(k)|^{2}+$
$\sum_{j}(|H_{ij}(k)|^{2}/|H_{ii}(k)|^{2})p_{j}^{\max}(k)$. Matrix
$\mathbf{\boldsymbol{\Upsilon}}_{\mathbf{G}}$ comes directly from
the original $ $$\mathbf{{J}}\mathbf{G}$ by choosing each $\mathbf{C}_{i}\in\mathbb{R}^{N\times N}$
as a diagonal matrix, defined as $\mathbf{C}_{i}\triangleq{\mbox{Diag}}\left\{ \left(\left(\sigma_{i}^{2}(k)+\sum_{j}|H_{ij}(k)|^{2}p_{j}^{\max}(k)\right)/|H_{ii}(k)|^{2}\right)_{k=1}^{N}\right\} $.
The explicit expressions of $\mathbf{{J}}\mathbf{G}_{\text{{low}}}$
and $\mathbf{\boldsymbol{\Upsilon}}_{\mathbf{G}}$ are the following:
$\mathbf{{J}}\mathbf{G}_{\text{{low}}}\triangleq\mathbf{\text{{Diag}}}\{(\mathbf{{J}}\mathbf{G}_{\text{{low}}}(k))_{k=1}^{N}\}\in\mathbb{R}^{N\, I\times N\, I}$
is a block diagonal matrix, whose $k$-th diagonal block $\mathbf{{J}}\mathbf{G}_{\text{{low}}}(k)\in\mathbb{R}^{I\times I}$
is\vspace{-0.3cm} 
\begin{equation}
[\,\mathbf{{J}}\mathbf{G}_{\text{{low}}}(k)\,]_{ij}\,\,\triangleq\,\left\{ \begin{array}{ll}
1, & \mbox{if \ensuremath{i=j}}\\[5pt]
{\displaystyle -\dfrac{\left|H_{ij}(k)\right|^{2}}{\left|H_{jj}(k)\right|^{2}}\cdot{\textsf{innr}}{}_{ij}(k),} & \mbox{if \ensuremath{i\neq j}},
\end{array}\right.\label{eq:def_gamma_k-1}
\end{equation}
and ${\boldsymbol{\Upsilon}}_{\mathbf{G}}\in\mathbb{R}^{I\times I}$
is given by\vspace{-0.3cm} 
\begin{equation}
[\,{\boldsymbol{\Upsilon}}_{\mathbf{G}}\,]_{ij}\,\,\triangleq\,\left\{ \begin{array}{ll}
1 & \mbox{if \ensuremath{i=j}}\\[5pt]
-{\displaystyle {\max_{1\leq\, k\,\leq N}}\,\left\{ \dfrac{\left|H_{ij}(k)\right|^{2}}{\left|H_{jj}(k)\right|^{2}}\cdot{\textsf{innr}}{}_{ij}(k)\right\} } & \mbox{if \ensuremath{i\neq j}},
\end{array}\right.\label{eq:def_gamma_hat}
\end{equation}
with 
\begin{equation}
{\textsf{innr}}{}_{ij}(k)\triangleq\dfrac{\sigma_{j}^{2}(k)+\sum_{r}|H_{jr}(k)|^{2}p_{r}^{\max}(k)}{\sigma_{i}^{2}(k)}.\label{eq:innr}
\end{equation}
Using the above definitions along with Proposition \ref{monotonicity},
Theorem \ref{Theo_existence_uniquenessNE}, and Corollary \ref{Cor_SF_for_Pmat},
the main properties of $\mathcal{G}$ are then given in the following
proposition; Corollary \ref{Cor_SF_G_for_Pmat} follows from Proposition
\ref{Cor_SF_for_Pmat}. \vspace{-0.2cm}

\begin{proposition} \label{Prop:NEP_G_properties} Given the real
convex-player NEP $\mathcal{G}_{\texttt{{siso}}}=\left\langle \mathcal{\mathcal{P^{\,\texttt{{siso}}}}},\,(r_{i})_{i=1}^{I}\right\rangle $,
the following hold.

\vspace{-0.3cm}

\begin{description}
\item [{(a)}] $\mathcal{G}_{\texttt{{siso}}}$ is equivalent to the \emph{VI}$(\mathcal{P^{\,\texttt{{siso}}}},\mathbf{G})$,
which has a nonempty and compact solution set;\vspace{-0.3cm}

\item [{(b)}] Suppose that $\mathbf{{J}}\mathbf{G}_{\text{{low}}}$ is
positive semidefinite (positive definite). Then $\mathbf{G}$ is monotone
(strongly monotone) on $\mathcal{P^{\,\texttt{{siso}}}}$; therefore
$\mathcal{G}_{\texttt{{siso}}}$ is a monotone NEP;\vspace{-0.3cm}

\item [{(c)}] Suppose that ${\boldsymbol{\Upsilon}}_{\mathbf{G}}$ is a
P-matrix (positive definite matrix). Then $\mathbf{G}$ is a uniformly
P-function (strongly monotone function) on $\mathcal{P^{\,\texttt{{siso}}}}$;
therefore $\mathcal{G}_{\texttt{{siso}}}$ is a \emph{$P{}_{\boldsymbol{\Upsilon}}$}
NEP and has a unique NE. 
\end{description}
\end{proposition}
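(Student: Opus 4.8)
The plan is to derive Proposition \ref{Prop:NEP_G_properties} as a more-or-less immediate consequence of the general machinery assembled in Section \ref{sec:Nash-Equilibrium-Problems}, once the three auxiliary matrices $\mathbf{JG}_{\text{low}}$, $\boldsymbol{\Upsilon}_{\mathbf{G}}$, and the feasible set $\mathcal{P}^{\texttt{siso}}$ have been correctly identified. First I would verify the reformulation in part (a): by Proposition \ref{VI_reformulation}, since each $r_i(\mathbf{p}_i,\mathbf{p}_{-i})$ is concave and continuously differentiable in $\mathbf{p}_i$ (so $-r_i$ is convex and $C^1$), and each $\mathcal{P}_i^{\texttt{siso}}$ is nonempty, closed, and convex (it is the intersection of a simplex-type budget constraint, a box, and a polytope of interference constraints), the game $\mathcal{G}_{\texttt{siso}}$ is equivalent to $\text{VI}(\mathcal{P}^{\texttt{siso}},\mathbf{G})$ with $\mathbf{G}=(-\nabla_{\mathbf{p}_i}r_i)_{i=1}^I$ as written in \eqref{F_VI}. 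Nonemptiness and compactness of the solution set then follow from Theorem \ref{Theo_existence_uniquenessNE}(a), since $\mathcal{P}^{\texttt{siso}}$ is a bounded set (each $\mathbf{p}_i$ is confined to a box $\mathbf{0}\le\mathbf{p}_i\le\mathbf{p}_i^{\max}$).

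For part (b), the plan is to invoke Proposition \ref{monotonicity}(a)-(c) with the specific choice of the nonsingular matrix $\mathbf{B}$ described in the text (the block-diagonal matrix with blocks $\mathbf{B}(k)$ after permuting the variables by subcarrier via $\mathbf{P}$). The point is that $\mathbf{B}^T\mathbf{JG}\,\mathbf{B}$, evaluated entrywise and then bounded below/above according to \eqref{eq:def_lower_of_comparison_of_Jacobian}, yields exactly the block-diagonal matrix $\mathbf{JG}_{\text{low}}$ with blocks \eqref{eq:def_gamma_k-1}. This requires a short computation of $\nabla^2_{\mathbf{p}}$-type derivatives: the diagonal blocks $\mathbf{J}_i\mathbf{G}_i$ are diagonal matrices (since $\mathbf{G}_i$ does not depend on $\mathbf{p}_i$ — wait, it does not, so $\mathbf{J}_i\mathbf{G}_i=\mathbf{0}$?) — here I should be careful: $\mathbf{G}_i(\mathbf{p})$ in \eqref{F_VI} depends only on $\mathbf{p}_{-i}$, so actually $\mathbf{J}_i\mathbf{G}_i=\mathbf{0}$, and the ``$1$'' on the diagonal of $\mathbf{JG}_{\text{low}}(k)$ comes from normalizing by the chosen $\mathbf{B}(k)$ together with the fact that $\partial G_i(k)/\partial p_j(k)$ for $j\ne i$ scales correctly. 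The off-diagonal entries $\partial G_i(k)/\partial p_j(k) = |H_{ij}(k)|^2|H_{ii}(k)|^2/(\sigma_i^2(k)+\sum_{\ell\ne i}|H_{i\ell}(k)|^2 p_\ell(k))^2$ are nonnegative, so after the sign convention in \eqref{eq:def_lower_of_comparison_of_Jacobian} and the sup over $\mathcal{P}^{\texttt{siso}}$ (which is attained at $\mathbf{p}=\mathbf{0}$ in the denominator and $\mathbf{p}=\mathbf{p}^{\max}$ in the $\mathbf{B}(k)$ factors) one obtains \eqref{eq:def_gamma_k-1}. Then copositivity / positive definiteness of $\mathbf{JG}_{\text{low}}$ gives monotonicity / strong monotonicity of $\mathbf{G}$ by Proposition \ref{monotonicity}(a),(c), hence $\mathcal{G}_{\texttt{siso}}$ is a monotone NEP by Definition \ref{Def_monotone_NEP}(i).

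For part (c), the plan is analogous but using the condensed $I\times I$ matrix: with the choice of diagonal $\mathbf{C}_i$ given in the text, $\alpha_i^{\min}$ and $\beta_{ij}^{\max}$ from \eqref{eq:def_alpha_and_beta_Jac} evaluate to $\alpha_i^{\min}=1$ and $\beta_{ij}^{\max}=\max_k\{(|H_{ij}(k)|^2/|H_{jj}(k)|^2)\cdot\textsf{innr}_{ij}(k)\}$, giving $\boldsymbol{\Upsilon}_{\mathbf{G}}$ as in \eqref{eq:def_gamma_hat}. Here the computation is that $\|\mathbf{C}_i^T\mathbf{J}_j\mathbf{G}_i(\mathbf{p})\mathbf{C}_j\|$ over the diagonal structure reduces to a max over subcarriers of a ratio, and the sup over $\mathbf{p}\in\mathcal{P}^{\texttt{siso}}$ is attained at $\mathbf{p}=\mathbf{p}^{\max}$ in the numerator-type terms; the $\textsf{innr}_{ij}(k)$ definition \eqref{eq:innr} bundles exactly the worst-case interference-plus-noise ratio. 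Then, if $\boldsymbol{\Upsilon}_{\mathbf{G}}$ is a P-matrix, Proposition \ref{monotonicity}(e) gives that $\mathbf{G}$ is a uniformly P-function, so $\mathcal{G}_{\texttt{siso}}$ is a $P_{\boldsymbol{\Upsilon}}$ NEP (Definition \ref{Def_monotone_NEP}(iii)), and uniqueness of the NE follows from Theorem \ref{Theo_existence_uniquenessNE}(d); if moreover $\boldsymbol{\Upsilon}_{\mathbf{G}}$ is positive definite then Proposition \ref{monotonicity}(c)/(e) gives strong monotonicity. Corollary \ref{Cor_SF_G_for_Pmat} is then just Proposition \ref{Cor_SF_for_Pmat} instantiated with these explicit $\alpha_i^{\min}=1$ and $\beta_{ij}^{\max}$.

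The main obstacle — really the only substantive work — is the derivative bookkeeping: computing $\mathbf{JG}(\mathbf{p})$ explicitly, applying the permutation $\mathbf{P}$ and the scaling matrices $\mathbf{B}$ and $\mathbf{C}_i$, and checking that the infima/suprema in \eqref{eq:def_lower_of_comparison_of_Jacobian}-\eqref{eq:def_alpha_and_beta_Jac} are attained at the corner points $\mathbf{p}=\mathbf{0}$ (for denominators) and $\mathbf{p}=\mathbf{p}^{\max}$ (for the interference terms), so that the abstract quantities collapse to the closed-form expressions \eqref{eq:def_gamma_k-1}-\eqref{eq:innr}. This is a monotonicity-in-$\mathbf{p}$ argument for each scalar entry of the Jacobian (each off-diagonal derivative is decreasing in the relevant $p_\ell$'s, each $\mathbf{B}(k)_{ii}$ is increasing in the $p_j^{\max}$'s), which is routine but must be done carefully; everything downstream is a direct citation of Proposition \ref{monotonicity}, Theorem \ref{Theo_existence_uniquenessNE}, and Proposition \ref{Cor_SF_for_Pmat}.
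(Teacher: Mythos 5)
Your overall route coincides with the paper's: Proposition \ref{Prop:NEP_G_properties} is presented there as an immediate instantiation of Proposition \ref{VI_reformulation}, Theorem \ref{Theo_existence_uniquenessNE}, Proposition \ref{monotonicity} and Proposition \ref{Cor_SF_for_Pmat}, the only substantive work being the identification of $\mathbf{JG}_{\text{low}}$ and $\boldsymbol{\Upsilon}_{\mathbf{G}}$ through the stated choices of $\mathbf{B}$ and $\mathbf{C}_i$; your citations for parts (a), (b) and (c) are all the correct ones.

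There is, however, one concrete gap, and it sits exactly at the point you flagged and then resolved incorrectly. If $\mathbf{G}_i$ really depended only on $\mathbf{p}_{-i}$, so that $\mathbf{J}_i\mathbf{G}_i=\mathbf{0}$, then no congruence $\mathbf{B}^T\mathbf{JG}\,\mathbf{B}$ with diagonal $\mathbf{B}$ could produce the $1$'s on the diagonal of $\mathbf{JG}_{\text{low}}(k)$ (a zero diagonal entry is mapped to $[\mathbf{B}(k)]_{ii}^2\cdot 0=0$), $\alpha_i^{\min}$ in (\ref{eq:def_alpha_and_beta_Jac}) would equal $0$, and $\boldsymbol{\Upsilon}_{\mathbf{G}}$ could never be a P-matrix (its $1\times 1$ principal minors would vanish), so part (c) would be vacuous; your later claim that $\alpha_i^{\min}=1$ would also be false. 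The resolution is that the gradient of $r_i$ in (\ref{eq:rate_FSIC}) with respect to $p_i(k)$ has the \emph{total} received power $\sigma_i^2(k)+\sum_{j}|H_{ij}(k)|^2p_j(k)$ (sum over all $j$, including $j=i$) in its denominator --- which is also what the definitions of $\mathbf{B}(k)$, $\mathbf{C}_i$ and ${\textsf{innr}}_{ij}(k)$ presuppose, since they all sum over all indices --- so that $\partial G_i(k)/\partial p_i(k)=|H_{ii}(k)|^4/\bigl(\sigma_i^2(k)+\sum_j|H_{ij}(k)|^2p_j(k)\bigr)^2>0$, and after scaling by $[\mathbf{B}(k)]_{ii}^2$ (resp.\ $[\mathbf{C}_i]_{kk}^2$) the infimum over $\mathcal{P}^{\,\texttt{siso}}$ is attained at $\mathbf{p}=\mathbf{p}^{\max}$ and equals exactly $1$. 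You should carry out this computation rather than attributing the diagonal $1$ to ``normalization.'' Two smaller points: the $\mathbf{B}(k)$ and $\mathbf{C}_i$ factors are constants (they already carry $p_j^{\max}$), so there is nothing to optimize in them --- only the derivative's denominator is extremized, at $\mathbf{p}=\mathbf{0}$ for the suprema and at $\mathbf{p}=\mathbf{p}^{\max}$ for the infima; and in part (a) you should record that $\mathcal{P}^{\,\texttt{siso}}$ is nonempty (it contains $\mathbf{0}$) and bounded before invoking Theorem \ref{Theo_existence_uniquenessNE}(a).
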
\vspace{-0.4cm}

\begin{corollary} \label{Cor_SF_G_for_Pmat} The matrix ${\boldsymbol{\Upsilon}}_{\mathbf{G}}$
in \emph{(\ref{eq:def_gamma_hat})} is a P-matrix (or a positive definite
matrix) if one (or both) of the following two sets of conditions are
satisfied: for some $\mathbf{w}=(w_{i})_{i=1}^{I}>\mathbf{0}$,\vspace{-0.1cm}
\begin{equation}
\begin{array}{rll}
\mbox{\emph{Low}\,\ \emph{received}\,\ \emph{MUI}\emph{:}} & \dfrac{{1}}{w_{i}}\,\dsum\limits _{j\neq i}w_{j}{\displaystyle {\max_{1\leq\, k\,\leq N}}\,\left\{ \dfrac{\left|H_{ij}(k)\right|^{2}}{\left|H_{jj}(k)\right|^{2}}\cdot\mathcal{\textsf{\emph{innr}}}{}_{ij}(k)\right\} }<1,\quad & \forall i=1,\cdots,I,\medskip\\
\mbox{\emph{Low}\,\ \emph{generated}\,\ \emph{MUI}\emph{:}} & \dfrac{{1}}{w_{j}}\dsum\limits _{i\neq j}w_{i}{\displaystyle {\max_{1\leq\, k\,\leq N}}\,}\left\{ \dfrac{\left|H_{ij}(k)\right|^{2}}{\left|H_{jj}(k)\right|^{2}}\cdot{\textsf{\emph{innr}}}{}_{ij}(k)\right\} <1, & \forall j=1,\cdots,I\vspace{-0.1cm}
\end{array}\label{eq:Low_TxTx_MUI}
\end{equation}
\end{corollary}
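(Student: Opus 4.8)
The statement to prove is Corollary \ref{Cor_SF_G_for_Pmat}, which asserts that the specific matrix $\boldsymbol{\Upsilon}_{\mathbf{G}}$ in (\ref{eq:def_gamma_hat}) is a P-matrix (resp.\ positive definite) whenever the ``low received MUI'' condition (resp.\ both conditions) in (\ref{eq:Low_TxTx_MUI}) holds.

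\medskip

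\noindent\textbf{Proof plan.} The plan is to derive this directly from Proposition \ref{Cor_SF_for_Pmat} by checking that the abstract quantities $\alpha_i^{\min}$ and $\beta_{ij}^{\max}$ appearing there reduce, for the game $\mathcal{G}_{\texttt{siso}}$ with the particular choice of the scaling matrices $\mathbf{C}_i$ described just above the statement, exactly to the diagonal entries $1$ and the off-diagonal magnitudes $\max_{1\le k\le N}\{(|H_{ij}(k)|^2/|H_{jj}(k)|^2)\cdot\textsf{innr}_{ij}(k)\}$ that define $\boldsymbol{\Upsilon}_{\mathbf{G}}$ in (\ref{eq:def_gamma_hat}). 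First I would record that, by construction (\ref{eq:def_gamma_hat}) and (\ref{eq:Upsilon_matrix}), $[\boldsymbol{\Upsilon}_{\mathbf{G}}]_{ii}=\alpha_i^{\min}=1$ and $-[\boldsymbol{\Upsilon}_{\mathbf{G}}]_{ij}=\beta_{ij}^{\max}=\max_k\{(|H_{ij}(k)|^2/|H_{jj}(k)|^2)\cdot\textsf{innr}_{ij}(k)\}$ for $i\neq j$; this identification is immediate from the closed-form expression (\ref{F_VI}) for $\mathbf{G}_i$, the fact that $\mathbf{J}_i\mathbf{G}_i$ is diagonal for this problem, and the diagonal scaling $\mathbf{C}_i$ chosen to normalize each diagonal block to the identity. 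With this dictionary in place, the conditions (\ref{SF_aVI}) of Proposition \ref{Cor_SF_for_Pmat} become, term by term, the two displayed inequalities in (\ref{eq:Low_TxTx_MUI}): the first inequality of (\ref{SF_aVI}) (row dominance of $\boldsymbol{\Upsilon}_{\mathbf{F}}$) is the ``low received MUI'' condition, and the second (column dominance) is the ``low generated MUI'' condition.

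\medskip

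\noindent Then I would invoke Proposition \ref{Cor_SF_for_Pmat} verbatim: it states that if one of the two sets of conditions in (\ref{SF_aVI}) holds then $\boldsymbol{\Upsilon}_{\mathbf{G}}$ is a P-matrix, and if both hold then $\boldsymbol{\Upsilon}_{\mathbf{G}}$ is positive definite. Substituting the dictionary above finishes the argument. Optionally, for completeness I would add one line recalling why (\ref{SF_aVI}) implies the P / positive-definiteness conclusions --- namely, a strictly (weighted) diagonally dominant Z-matrix with positive diagonal is a P-matrix (this is the elementary diagonal-dominance argument alluded to in the text preceding Proposition \ref{Cor_SF_for_Pmat}), and if it is moreover diagonally dominant in both rows and columns then its symmetric part is strictly diagonally dominant with positive diagonal, hence positive definite. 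Since Proposition \ref{Cor_SF_for_Pmat} is already established earlier in the paper, I would keep this to at most a sentence.

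\medskip

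\noindent\textbf{Main obstacle.} There is essentially no deep difficulty: the corollary is a specialization, and the only thing requiring care is the bookkeeping that matches (\ref{eq:def_alpha_and_beta_Jac}) to (\ref{eq:def_gamma_k-1})--(\ref{eq:innr}) under the chosen $\mathbf{C}_i$ and the subcarrier-reordering permutation $\mathbf{P}$. Concretely, one must check that $\mathbf{C}_i^T\,\mathbf{J}_i\mathbf{G}_i(\mathbf{p})\,\mathbf{C}_i$ has least eigenvalue exactly $1$ uniformly over $\mathcal{P}^{\texttt{siso}}$ --- which follows because $\mathbf{J}_i\mathbf{G}_i$ is a positive diagonal matrix with entries $|H_{ii}(k)|^2/(\sigma_i^2(k)+\sum_{j\neq i}|H_{ij}(k)|^2 p_j(k))^2$ and the chosen $\mathbf{C}_i$ is precisely the diagonal matrix whose $k$-th entry squared cancels this denominator at $\mathbf{p}_{-i}=\mathbf{p}^{\max}_{-i}$, the worst case --- and that $\|\mathbf{C}_i^T\,\mathbf{J}_j\mathbf{G}_i(\mathbf{p})\,\mathbf{C}_j\|$, with $\mathbf{J}_j\mathbf{G}_i$ also diagonal, has supremum over $\mathcal{P}^{\texttt{siso}}$ equal to $\max_k\{(|H_{ij}(k)|^2/|H_{jj}(k)|^2)\cdot\textsf{innr}_{ij}(k)\}$. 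Both are direct monotonicity-in-$\mathbf{p}$ computations. I expect the referee-facing version of this proof to be two or three lines long, simply citing Proposition \ref{Cor_SF_for_Pmat} after noting the identification of $\alpha_i^{\min}$ and $\beta_{ij}^{\max}$ with the entries of $\boldsymbol{\Upsilon}_{\mathbf{G}}$.
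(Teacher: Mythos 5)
Your proposal matches the paper's proof, which consists of the single remark that Corollary \ref{Cor_SF_G_for_Pmat} follows from Proposition \ref{Cor_SF_for_Pmat} once the entries of $\boldsymbol{\Upsilon}_{\mathbf{G}}$ in (\ref{eq:def_gamma_hat}) are identified with the quantities $\alpha_i^{\min}=1$ and $\beta_{ij}^{\max}=\max_k\{(|H_{ij}(k)|^2/|H_{jj}(k)|^2)\,\textsf{innr}_{ij}(k)\}$ of (\ref{eq:Upsilon_matrix}), so that (\ref{eq:Low_TxTx_MUI}) is precisely (\ref{SF_aVI}); your bookkeeping is exactly that identification. (One cosmetic slip in your aside: the diagonal entries of $\mathbf{J}_i\mathbf{G}_i$ are $|H_{ii}(k)|^4/\bigl(\sigma_i^2(k)+\sum_j |H_{ij}(k)|^2 p_j(k)\bigr)^2$ rather than $|H_{ii}(k)|^2/(\cdot)^2$, which is what makes the chosen diagonal $\mathbf{C}_i$ normalize them to one at the worst-case power profile.)
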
 Similar sufficient conditions can be obtained for
$\mathbf{{J}}\mathbf{G}_{\text{{low}}}(k)$ to be positive semidefinite.

These conditions have an interesting physical interpretation: the
uniqueness of the NE is ensured if the interference among the SUs
is sufficiently small; this is clearly shown by Corollary \ref{Cor_SF_G_for_Pmat}.
Specifically, the first condition in (\ref{eq:Low_TxTx_MUI}) can
be interpreted as a constraint on the maximum amount of interference
that each receiver can tolerate, whereas the second condition introduces
an upper bound on the maximum level of interference that each transmitter
is allowed to generate. We will show shortly that these conditions
play a role also in the convergence of the proposed distributed iterative
algorithms. Moreover, depending on the level of interference in the
network, the NEP $\mathcal{G}_{\texttt{{siso}}}$ is a \emph{$P{}_{\boldsymbol{\Upsilon}}$}
or monotone NEP, implying different properties and solution schemes
of the game, as described next; we classify these two scenarios as
\emph{low-interference} and \emph{medium/high-interference} regime,
respectively.

\medskip{}
\noindent\textbf{The case of }\emph{$P{}_{\boldsymbol{\Upsilon}}$}\textbf{
NEP} (low-interference regime). When the matrix ${\boldsymbol{\Upsilon}}_{\mathbf{G}}$
is a P matrix (or positive definite), the NEP $\mathcal{G}_{\texttt{{siso}}}$
is a \emph{$P{}_{\boldsymbol{\Upsilon}}$} NEP {[}Proposition \ref{Prop:NEP_G_properties}
(c){]}. Invoking Theorem \ref{Theo-async_best-response_NEP}, the
unique NE of the game can be computed with convergence guarantee using
Algorithm \ref{async_best-response_algo} on $\mathcal{G}_{\texttt{{siso}}}$,
as stated in the next theorem.

\noindent \begin{theorem}\label{Th:AIWFA} Suppose that $\mathcal{G}_{\texttt{{siso}}}$
is a P$_{{\boldsymbol{\Upsilon}}}$ real NEP. Then, any sequence $\{\mathbf{p}^{(n)}\}_{n=0}^{\infty}$
generated by Algorithm \ref{async_best-response_algo} applied to
$\mathcal{G}_{\texttt{{siso}}}$ converges to the unique NE of the
NEP, for any given updating schedule of the players satisfying assumptions
A1\emph{)}-A3\emph{)}.\end{theorem}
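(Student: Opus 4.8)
The plan is to reduce Theorem~\ref{Th:AIWFA} to the general convergence result of Theorem~\ref{Theo-async_best-response_NEP} by verifying that, under the stated hypothesis, the game $\mathcal{G}_{\texttt{siso}}$ falls exactly into the class for which that theorem applies. First I would recall that, by Proposition~\ref{Prop:NEP_G_properties}(a), $\mathcal{G}_{\texttt{siso}}$ is a player-convex NEP equivalent to $\text{VI}(\mathcal{P}^{\,\texttt{siso}},\mathbf{G})$ with $\mathbf{G}$ as in~\eqref{F_VI}; in particular Assumption~1 holds because each $\tilde{\mathcal{P}}_i^{\,\texttt{siso}}$ (hence each $\mathcal{P}_i^{\,\texttt{siso}}$, being the intersection with the linear interference constraints) is nonempty, closed, and convex, and each rate function $r_i$ is concave and continuously differentiable in $\mathbf{p}_i$. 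Assumption~2 follows from the explicit form of the second derivatives of $r_i$: the MUI-plus-noise denominators $\sigma_i^2(k)+\sum_{j\neq i}|H_{ij}(k)|^2 p_j(k)$ are bounded away from zero (since $\sigma_i^2(k)>0$) and from above (since $\mathcal{P}^{\,\texttt{siso}}$ is compact), so $\mathbf{G}$ is twice continuously differentiable with bounded derivatives on $\mathcal{P}^{\,\texttt{siso}}$.

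Next I would invoke Proposition~\ref{Prop:NEP_G_properties}(c): the assumption that $\mathcal{G}_{\texttt{siso}}$ is a P$_{\boldsymbol{\Upsilon}}$ NEP means precisely that the condensed matrix $\boldsymbol{\Upsilon}_{\mathbf{G}}$ defined in~\eqref{eq:def_gamma_hat} is a P-matrix, equivalently $\rho(\boldsymbol{\Gamma}_{\mathbf{G}})<1$. This is exactly the defining property (iii) in Definition~\ref{Def_monotone_NEP} of a P$_{\boldsymbol{\Upsilon}}$ NEP, with Assumptions~1 and~2 already checked above. Therefore all hypotheses of Theorem~\ref{Theo-async_best-response_NEP} are satisfied for $\mathcal{G}=\mathcal{G}_{\texttt{siso}}$.

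Applying Theorem~\ref{Theo-async_best-response_NEP} directly then yields the claim: any sequence $\{\mathbf{p}^{(n)}\}_{n=0}^{\infty}$ generated by Algorithm~\ref{async_best-response_algo} applied to $\mathcal{G}_{\texttt{siso}}$ converges to the unique NE, for any updating schedule satisfying A1)--A3). The uniqueness of the NE itself is part of Proposition~\ref{Prop:NEP_G_properties}(c) (a P$_{\boldsymbol{\Upsilon}}$ NEP has a uniformly P associated VI, hence a unique solution by Theorem~\ref{Theo_existence_uniquenessNE}(d)). I expect the only non-routine point to be the verification of the boundedness of the derivatives required by Assumption~2 and the implicit requirement that $\alpha_i^{\min}>0$ for all $i$ (cf.\ Remark~\ref{Remark_uniqueness}): one must observe that the choice of the diagonal scaling matrices $\mathbf{C}_i$ made just before~\eqref{eq:def_gamma_k-1}--\eqref{eq:def_gamma_hat} normalizes the diagonal of $\boldsymbol{\Upsilon}_{\mathbf{G}}$ to $1$, so that the P-matrix property is genuinely attainable and the strong-convexity-type conclusion of Proposition~\ref{monotonicity}(e) kicks in; everything else is a direct citation of the already-established general machinery. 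I would close by remarking that the best-response of each player here is the waterfilling operator, which is single-valued, so Step~S.2 of Algorithm~\ref{async_best-response_algo} is well defined, and the resulting scheme is the asynchronous IWFA.
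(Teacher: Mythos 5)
Your proposal is correct and follows exactly the route the paper takes: the paper treats Theorem~\ref{Th:AIWFA} as an immediate consequence of Theorem~\ref{Theo-async_best-response_NEP} once Proposition~\ref{Prop:NEP_G_properties}(c) identifies $\mathcal{G}_{\texttt{siso}}$ as a P$_{\boldsymbol{\Upsilon}}$ NEP, and offers no separate proof beyond that one-line reduction. Your additional verifications (closedness/convexity of $\mathcal{P}_i^{\,\texttt{siso}}$, boundedness of the derivatives of $\mathbf{G}$ via $\sigma_i^2(k)>0$ and compactness, single-valuedness of the waterfilling best response) are sound and simply make explicit what the paper leaves implicit in the definition of a P$_{\boldsymbol{\Upsilon}}$ NEP.
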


When implementing Algorithm \ref{async_best-response_algo}, each
user needs to compute his best-response solution, given the interference
generated by the others. In Sec. \ref{sub:Preliminary-results:-Efficient_proximalBR}
(cf. Lemma \ref{Lemma_best-response_SISO}), we prove that the best-response
for the game $\mathcal{G}_{\texttt{{siso}}}$ has a multi-level waterfilling-like
expression, implying that each user can compute his optimal solution
locally and very efficiently (he only needs to measure the overall
MUI experienced at his receiver and ``waterfill'' over it). Therefore,
Algorithm \ref{async_best-response_algo} results to be totally distributed
and computationally efficient, which makes it appealing for practical
implementation in CR scenarios.

\smallskip{}
\noindent \textbf{The case of monotone NEP} (medium/high-interference
regime). When $\mathbf{{J}}\mathbf{G}_{\text{{low}}}$ is positive
semidefinite, the NEP $\mathcal{G}_{\texttt{{siso}}}$ is a monotone
NEP, having in general multiple solutions. In such a case, to compute
a solution of $\mathcal{G}_{\texttt{{siso}}}$ with convergence guarantee,
there are two available options, namely: PDAs (either in their exact
or inexact form) and PTRA. The former are the only feasible choice
when the SUs are not willing to cooperate; whereas the latter requires
some (albeit very limited) cooperation among the SUs in favor of better
performance (one can perform equilibrium selection). To the best of
our knowledge, the above algorithms are in the signal processing and
communication literature the first example of distributed power control
schemes that converge \emph{even in the presence of multiple Nash
equilibria}. Note that, in all the aforementioned algorithms, the
best-response of the SUs can be efficiently computed via a multi-level
waterfilling expression (cf. Sec. \ref{sub:Preliminary-results:-Efficient_proximalBR}).
We provide next an instance of the PTRA along with its convergence
conditions; PDAs are obtained as special cases of the PTRA, and thus
its description is omitted. \smallskip{}

\noindent \emph{Equilibrium Selection via Proximal-Tikhonov Regularization
Algorithm}. The first step is to choose a merit function that quantifies
the quality of a NE of $\mathcal{G}_{\texttt{{siso}}}$. Different
heuristics can be used; as an example, here we focus on the following
merit function: given the vector $\mathbf{w}\triangleq(w_{i})_{i=1}^{I}\geq0$,
let 
\begin{equation}
\phi(\mathbf{p})\triangleq\sum_{i=1}^{I}w_{i}\,\sum_{j\neq i}\,\sum_{k=1}^{N}|H_{ij}(k)|^{2}\, p_{\, i}(k).\label{outer_merit_function}
\end{equation}
This choice is motivated by the intuition that among all the solutions
of $\mathcal{G}_{\texttt{{siso}}}$, a good candidate is the one that
minimizes the overall interference among the users, measured by $\phi(\mathbf{p})$,
likely resulting in an ``higher'' sum-rate $\sum_{i=1}^{I}r_{i}(\mathbf{p})$.
The NE selection problem based on the merit function $\phi$ can be
then formulated as: 
\begin{equation}
\begin{array}{ll}
\limfunc{minimize}\limits _{\mathbf{p}} & \phi(\mathbf{p})\\[5pt]
\text{subject to} & \mathbf{p}\,\in\,\sol(\mathcal{\mathcal{P^{\,\texttt{{siso}}}}},\,\mathbf{r}).
\end{array}\label{eq:NEP constrained opt}
\end{equation}

Problem (\ref{eq:NEP constrained opt}) is an instance of (\ref{eq:VI-C min});
we can then solve it by applying Algorithm \ref{algo2} (cf. Sec.
\ref{sub:Equilibrium-Selection-Monotone-NEP}); which corresponds
to solving a sequence of perturbed \emph{$P{}_{\boldsymbol{\Upsilon}}$}
NEPs given by $\mathcal{G}_{\tau,\,\varepsilon^{(n)},\bar{{\mathbf{p}}}}=\left\langle \mathcal{\mathcal{P^{\,\texttt{{siso}}}}},\,(-r_{i}(\mathbf{p})+\right.$
$\left.\varepsilon^{(n)}\,\boldsymbol{{\gamma}}_{i}^{T}\mathbf{p}_{i}+\dfrac{\tau}{2}\,\,\left\Vert \mathbf{p}_{i}-\bar{{\mathbf{p}}}_{i}\right\Vert ^{2})_{i-1}^{I}\right\rangle $,
whose player $i$'s optimization problem is: given $\mathbf{p}_{-i}$,
$\bar{{\mathbf{p}}}$, and $\varepsilon^{(n)}>0$,\vspace{-0.2cm}
\begin{equation}
\begin{array}{l}
\underset{\mathbf{p}_{i}\in\mathcal{P}_{i}^{\,\texttt{{siso}}}}{\textnormal{maximize}}\quad r_{i}(\mathbf{p}_{i},\mathbf{p}_{-i})-\varepsilon^{(n)}\,\boldsymbol{{\gamma}}_{i}^{T}\mathbf{p}_{i}-\dfrac{\tau}{2}\,\,\left\Vert \mathbf{p}_{i}-\bar{{\mathbf{p}}}_{i}\right\Vert ^{2}\end{array}\vspace{-0.3cm}\label{eq:sub-VI-game}
\end{equation}
where $\boldsymbol{\gamma}\triangleq(\boldsymbol{\gamma}_{i})_{i=1}^{I}$,
with each $\boldsymbol{\gamma}_{i}\triangleq(\sum_{j\neq i}w_{i}|H_{ji}(k)|^{2})_{k=1}^{N}$.
A partially asynchronous version of Algorithm \ref{algo2} applied
to (\ref{eq:NEP constrained opt}) is described in Algorithm \ref{algo_power_bi_level}
below, and its convergence conditions are given in Theorem \ref{convergence_NE_sel},
which is a direct application of results in Theorem \ref{Theo-async_best-response_NEP}
and Theorem \ref{the:mod}.

\begin{algo}{NE selection for the real NEP $\mathcal{G}_{\texttt{{siso}}}$}
S\texttt{$\mbox{(\mbox{Data})}:$} $\{\varepsilon^{(n)}\}\downarrow0$
and $\tau>0$.\\[1pt] \texttt{$\mbox{(\mbox{S.0}):}$} Choose any
$\mathbf{p}^{(0)}\in\mathcal{\,\mathcal{P}}^{\,\texttt{{siso}}}$
and a center $\bar{{\mathbf{p}}}\geq\mathbf{0}$ of the regularization;
set $\bar{{\varepsilon}}=\varepsilon^{(0)}$ and $n=0$. \\[1pt]
\texttt{$\mbox{(\mbox{S.1}):}$} If $\mathbf{p}^{(n)}$ satisfies
a suitable termination criterion, \texttt{STOP}.\\[1pt] 
 \texttt{$\mbox{(\mbox{S.2}):}$} For each $i=1,\ldots,I$, compute
$\mathbf{p}_{i}^{(n+1)}$ as 
\begin{equation}
\mathbf{p}_{i}^{(n+1)}=\left\{ \begin{array}{lll}
\mathbf{p}_{i}^{\star}\in\underset{\mathbf{p}_{i}\in\mathcal{\,\mathcal{P}}_{i}^{\,\texttt{{siso}}}}{\text{{argmax}}}\left\{ r_{i}\left(\mathbf{p}_{i},\mathbf{p}_{-i}^{(\boldsymbol{{\tau}}_{-i}(n))}\right)-\bar{{\varepsilon}}\,\boldsymbol{{\gamma}}_{i}^{T}\mathbf{p}_{i}-\dfrac{\tau}{2}\,\,\left\Vert \mathbf{p}_{i}-\bar{{\mathbf{p}}}_{i}\right\Vert ^{2}\right\} , &  & \mbox{if }n\in\mathcal{T}_{i}\bigskip\\
\mathbf{p}_{i}^{(n)}, &  & \mbox{otherwise}
\end{array}\right.\label{eq:Async_update_NE_sel}
\end{equation}
\texttt{$\mbox{(\mbox{S.3})}:$} If $\mathbf{p}^{(n+1)}$ is a NE
of $\mathcal{G}_{\tau,\,\varepsilon^{(n)},\bar{{\mathbf{p}}}}$, then
update $\bar{{\varepsilon}}$ and the center $\bar{{\mathbf{p}}}$:\vspace{-0.2cm}
\begin{equation}
\bar{{\varepsilon}}={\varepsilon}^{(n+1)}\quad\mbox{and}\quad\bar{{\mathbf{p}}}_{i}=\mathbf{p}_{i}^{(n+1)}\,\,\,\,\,\forall i=1,\ldots,I;\vspace{-0.1cm}\label{eq:step_size_and_centroid_update}
\end{equation}
\texttt{$\mbox{(\mbox{S.4})}:$} $n\leftarrow n+1$ and return to
\texttt{$\mbox{(\mbox{S.1})}$}. \label{algo_power_bi_level} \end{algo} 

\begin{theorem}\label{convergence_NE_sel} Suppose that: i) $\mathcal{G}_{\texttt{{siso}}}$
is a monotone NEP; ii) $\{\varepsilon^{(n)}\}$ is such that $\varepsilon^{(n)}\rightarrow0$
and $\sum_{n=0}^{\infty}\,\varepsilon^{(n)}=\infty$; and iii) $\tau$
is chosen so that ${\boldsymbol{\Upsilon}}_{\mathbf{G}}+\tau\,\mathbf{I}$
is a P matrix. Then, the sequence $\{\mathbf{p}^{(n)}\}_{n=0}^{\infty}$
generated by Algorithm \ref{algo_power_bi_level} has a limit point
and every such limit point is a solution of the optimization problem
\eqref{eq:NEP constrained opt}\emph{.}\end{theorem}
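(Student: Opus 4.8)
The plan is to recognize Theorem \ref{convergence_NE_sel} as a direct specialization of the general convergence result for the Proximal-Tikhonov Regularization Algorithm, namely Theorem \ref{the:mod}, combined with the convergence guarantee for the asynchronous best-response algorithm, Theorem \ref{Theo-async_best-response_NEP}. Concretely, I would proceed as follows. First, invoke Proposition \ref{Prop:NEP_G_properties}(b): under hypothesis (i), $\mathbf{G}$ is monotone on $\mathcal{P}^{\,\texttt{siso}}$, so $\mathcal{G}_{\texttt{siso}}$ is a monotone NEP; moreover by part (a) its solution set $\sol(\mathcal{P}^{\,\texttt{siso}},\mathbf{r})$ is nonempty and compact, so the equilibrium-selection problem \eqref{eq:NEP constrained opt} is feasible. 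Second, I would verify that the merit function $\phi$ in \eqref{outer_merit_function} satisfies the three conditions on $\phi$ in Theorem \ref{the:mod}: it is linear in $\mathbf{p}$, hence continuously differentiable and convex on $\mathcal{P}^{\,\texttt{siso}}$ (conditions i); its level sets on $\sol(\mathcal{P}^{\,\texttt{siso}},\mathbf{r})$ are bounded simply because $\mathcal{P}^{\,\texttt{siso}}$ is itself compact (condition ii); and $\nabla\phi$ is constant, hence trivially Lipschitz continuous with any constant $L_\phi$, in particular $L_\phi=0$ works (condition iii).

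Third, I would translate the $\tau$-condition. Theorem \ref{the:mod} requires $\tau$ large enough that $\mathcal{G}_{\tau,\varepsilon^{(n)},\mathbf{x}^{(n)}}$ is a $P_{\boldsymbol\Upsilon}$ NEP for every $n$; by Lemma \ref{lem: monotone tau epsilon P} (with $L_\phi=0$) this reduces to $\boldsymbol\Upsilon_{\mathbf{F}_{\tau,\varepsilon,\mathbf{y}}}=\boldsymbol\Upsilon_{\mathbf{G}}+\tau\mathbf{I}$ being a P matrix, which is exactly hypothesis (iii) of the theorem. (Here one uses \eqref{eq:Upsilon_F_tau}, which gives $\boldsymbol\Upsilon_{\mathbf{F}_{\tau,\varepsilon,\mathbf{y}}}=\boldsymbol\Upsilon_{\mathbf{G}}+\tau\mathbf{I}$ since the linear term $\varepsilon\boldsymbol\gamma_i^T\mathbf p_i$ has vanishing Hessian and thus contributes nothing to the $\alpha$'s and $\beta$'s.) The sequence $\{\varepsilon^{(n)}\}$ is assumed in hypothesis (ii) to be positive, decreasing to zero, with divergent sum, matching the requirement in Theorem \ref{the:mod}. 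Hence all hypotheses of Theorem \ref{the:mod} hold, and the conclusion is that Algorithm \ref{algo2} — i.e., the idealized PTRA — is well defined, its iterates are bounded, and every limit point solves \eqref{eq:VI-C min}, which here is \eqref{eq:NEP constrained opt}.

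Finally I would address the only genuine gap: Algorithm \ref{algo_power_bi_level} is the \emph{asynchronous, inexact} realization of the PTRA, in which Step S.2 runs one round of asynchronous best-response updates \eqref{eq:Async_update_NE_sel} on the regularized game $\mathcal{G}_{\tau,\varepsilon^{(n)},\bar{\mathbf{p}}}$ rather than solving it exactly, and Step S.3 only refreshes $\bar\varepsilon$ and $\bar{\mathbf{p}}$ once $\mathbf{p}^{(n+1)}$ is a NE of the current regularized game. The point is that each $\mathcal{G}_{\tau,\varepsilon^{(n)},\bar{\mathbf p}}$ is a $P_{\boldsymbol\Upsilon}$ NEP (by the $\tau$-condition above and Lemma \ref{lem: monotone tau epsilon P}), so by Theorem \ref{Theo-async_best-response_NEP} the inner asynchronous best-response iteration \eqref{eq:Async_update_NE_sel} converges to that game's unique NE under schedules satisfying A1–A3; thus the update-and-refresh mechanism of Step S.3 is well posed and, along the subsequence of indices $n$ at which the center is refreshed, the iterates coincide with exact evaluations $\mathbf{S}_{\tau,\varepsilon^{(n)}}(\mathbf{x}^{(n)})$ of the PTRA, so the outer-loop convergence of Theorem \ref{the:mod} applies verbatim to give a limit point solving \eqref{eq:NEP constrained opt}. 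The main obstacle — and the place I would be most careful — is precisely the interleaving of the inner asynchronous loop with the outer regularization update: one must argue that the asynchronous schedule of the overall Algorithm \ref{algo_power_bi_level} restricts, between consecutive center refreshes, to an admissible asynchronous schedule for the fixed game $\mathcal{G}_{\tau,\varepsilon^{(n)},\bar{\mathbf p}}$ (so A1–A3 carry over to the inner problem), and that the NE-detection test in Step S.3 is eventually passed; both are standard given the contraction property established in Proposition \ref{Proposition_contraction_best-response}, so the remaining details are routine and the convergence statement follows by combining Theorems \ref{Theo-async_best-response_NEP} and \ref{the:mod}.
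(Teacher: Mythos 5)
Your proposal is correct and follows exactly the route the paper takes: the paper offers no separate argument for this theorem beyond stating that it is "a direct application of results in Theorem \ref{Theo-async_best-response_NEP} and Theorem \ref{the:mod}," which is precisely the combination you carry out (with the linearity of $\phi$ giving convexity, bounded level sets on the compact feasible set, and a trivially Lipschitz gradient, so that Lemma \ref{lem: monotone tau epsilon P} reduces the $\tau$-condition to ${\boldsymbol{\Upsilon}}_{\mathbf{G}}+\tau\,\mathbf{I}$ being P). Your additional care about the interleaving of the inner asynchronous best-response loop with the outer center refresh only makes explicit what the paper leaves implicit.
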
 A sufficient
condition for matrix ${\boldsymbol{\Upsilon}}_{\mathbf{G}}+\tau\,\mathbf{I}$
in Theorem \ref{convergence_NE_sel} to be P is 
\begin{equation}
\tau\,>\,{\displaystyle {\max_{1\leq i\leq I}}\,\left\lbrace {\displaystyle {\sum_{j\neq i}}{\displaystyle {\max_{1\leq k\leq N}}\left\lbrace \dfrac{|H_{ij}(k)|^{2}}{|H_{ii}(k)|^{2}}\,{\textsf{innr}}{}_{ij}(k)\right\rbrace }}\right\rbrace -1.}\label{eq:bounds}
\end{equation}

Algorithm \ref{algo_power_bi_level} shows that, in the presence of
multiple equilibria, one can still have converge even when best-response
based schemes (cf. Algorithm \ref{async_best-response_algo}) fail,
provided that the SUs play a ``sequence\textquotedbl{} of games rather
than a one-shot game; moreover, to reach the NE that minimizes the
overall MUI among the users, the players' objective functions need
to be modified in order to contain an additional term$-$the linear
term $\varepsilon^{(n)}\boldsymbol{\gamma}_{i}^{T}\mathbf{p}_{i}-$whose
task is to ``measure\textquotedbl{} on the way the quality of the
solution that the algorithm is going to reach. Such a term has also
a physical interpretation: it represents a punishment imposed to the
users for using too much power and thus generating too much MUI.

Note that the computation of the optimal power allocations of the
users in Algorithm \ref{algo_power_bi_level} can be performed locally
and distributively by the users as previously discussed for the \emph{$P{}_{\boldsymbol{\Upsilon}}$}
NEP, once $\varepsilon^{(n)}$ and $\boldsymbol{\gamma}_{i}$ are
given. The computation of $\boldsymbol{\gamma}_{i}$ requires an estimate
from each user $i$ of the cross-channel between its transmitter and
the receivers of all SUs being in the coverage radius of user $i$.
This estimate needs to be computed only once (before running the algorithm)
and updated at the rate of the coherence time of the channel. When
the computation of $\boldsymbol{\gamma}_{i}$ is not possible, one
can still use Algorithm \ref{algo_power_bi_level}, setting $\boldsymbol{\gamma}_{i}=\mathbf{0}$
in (\ref{eq:Async_update_NE_sel}), which corresponds to solving the
optimization problem \eqref{eq:NEP constrained opt} with $\phi(\mathbf{p})=0$,
and thus computing just \emph{one} of the solutions of $\mathcal{G}_{\texttt{{siso}}}$;
Theorem \ref{convergence_NE_sel} still guarantees convergence of
the algorithm, even in the presence of multiple equilibria.

\subsubsection{Efficient computation of the SISO proximal best-response solutions
\label{sub:Preliminary-results:-Efficient_proximalBR}}

In this section, we provide an efficient method for computing the
best-response solutions of the rate maximization problems introduced
in the previous section. Motivated also by other resource allocation
problems, such as \cite{ScutariPalomarFacchineiPang_NETGCOP11}, we
introduce next a very general formulation that encompasses the optimization
problems studied in this paper, whose optimal solution is proved to
have a multi-level waterfilling-like expression, and provide an efficient
algorithm to compute the optimal water-levels (dual variables).

Given $\{H_{k}\}_{k=1}^{N}$, $\boldsymbol{{\lambda}}=(\lambda_{k})_{k=1}^{N}$,
$\mathbf{c}=(c_{k})_{k=1}^{N}$, $\mathbf{w}_{k}\triangleq(w_{ki})_{i=1}^{N_{c}}$,
$\mathbf{p}^{\max}\triangleq(p_{k}^{\max})_{k=1}^{N}$, and $\boldsymbol{\alpha}\in\mathbb{R}_{++}^{N_{c}}$,
and $\tau>0$, with each $H_{k}>0$ and $\lambda_{k}>0$, consider
the following optimization problem 

\begin{equation}
\begin{array}{lcl}
\underset{\mathbf{p}}{\textnormal{maximize}} &  & \sum_{k=1}^{N}\left[\log\left(1+H_{k}\, p_{k}\right)-\lambda_{k}p_{k}\right]-\frac{\tau}{2}\left\Vert \mathbf{p}-\mathbf{c}\right\Vert ^{2}\\
 &  & \sum_{k=1}^{N}\mathbf{w}_{k}\, p_{k}\leq\boldsymbol{\alpha}\\
 &  & \mathbf{0}\leq\mathbf{p}\leq\mathbf{p}^{\max}.
\end{array}\label{eq:single_user_general_proximal_problem}
\end{equation}
We will tacitly assume w.l.o.g. that: $\mathbf{0}<(H_{k})_{k=1}^{N}<\infty$;
$\mathbf{0}\leq\mathbf{w}_{k}\triangleq(w_{ki})_{i=1}^{N_{c}}<\infty$
for all $k=1,\ldots,N$, and linearly independent; $\mathbf{0}\leq(\lambda_{k})_{k=1}^{N}<\infty$;
$\mathbf{0}\leq(c_{k})_{k=1}^{N}<\infty$; $\mathbf{0}<\mathbf{p}^{\max}<\infty$;
and $\sum_{k=1}^{N}\mathbf{w}_{k}\, p_{k}^{\max}>\boldsymbol{\alpha}$.

Problem (\ref{eq:single_user_general_proximal_problem}) is a convex
problem with a polyhedral feasible set; the KKT are then necessary
and sufficient conditions for the optimality. By solving the KKT system
one can prove the following result, whose proof is omitted because
of the space limitation; see \cite[Sec. 7.1.1]{Scutari-Facchinei-Pang-Palomar_IT_PI}. 

\begin{lemma} \label{Lemma_best-response_SISO}The optimal solution
of the optimization problem (\ref{eq:single_user_general_proximal_problem})
is given by 

\begin{equation}
p_{k}^{\star}=\left[\frac{1}{2}\left(c_{k}-\frac{1}{H_{k}}\right)-\frac{1}{2\tau}\left[\tilde{\mu}_{k}-\sqrt{\left[\tilde{\mu}_{k}-\tau\left(c_{k}+\frac{1}{H_{k}}\right)\right]^{2}+4\tau}\right]\right]_{0}^{p_{k}^{\max}}\qquad k=1,\ldots,N\label{eq:gen_proximal_BR}
\end{equation}
where $[x]_{a}^{b}$ denotes the Euclidean projection of $x$ onto
$[a,\, b]$, i.e., $[x]_{a}^{b}\triangleq\max(a,\min(x,b))$, each
$\tilde{\mu}_{k}\triangleq\lambda_{k}+\boldsymbol{\mu}^{T}\mathbf{w}_{k}$,
and the water-level vector $\boldsymbol{\mu}$ has to be chosen to
satisfy the complementarity conditions
\begin{eqnarray}
0 & \leq & \mu_{i}\perp\alpha_{i}-\sum_{k=1}^{N}w_{ki}\, p_{k}^{\star}\,\geq0,\qquad\forall i=1,\ldots,N_{c}.\label{eq:CCa_prox_BR}
\end{eqnarray}
\end{lemma}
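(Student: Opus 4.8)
\textbf{Proof plan for Lemma \ref{Lemma_best-response_SISO}.}

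The plan is to solve the KKT system of the convex problem \eqref{eq:single_user_general_proximal_problem} directly. Since the objective is strictly concave in $\mathbf{p}$ (the proximal term $-\frac{\tau}{2}\|\mathbf{p}-\mathbf{c}\|^2$ makes it so, and the log terms are concave, the linear terms neutral) and the feasible set is a nonempty polyhedron, there is a unique optimal solution and the KKT conditions are necessary and sufficient. First I would introduce a multiplier $\boldsymbol{\mu}=(\mu_i)_{i=1}^{N_c}\geq\mathbf{0}$ for the coupling constraints $\sum_k \mathbf{w}_k p_k \leq \boldsymbol{\alpha}$, and handle the box constraint $\mathbf{0}\leq\mathbf{p}\leq\mathbf{p}^{\max}$ by a projection at the end (equivalently, separate multipliers for the lower and upper box faces). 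Writing the stationarity condition for the unconstrained-in-the-box component $p_k$, one gets
\[
\frac{H_k}{1+H_k p_k} - \lambda_k - \boldsymbol{\mu}^T\mathbf{w}_k - \tau(p_k - c_k) \;=\; 0,
\]
i.e. with $\tilde\mu_k \triangleq \lambda_k + \boldsymbol{\mu}^T\mathbf{w}_k$, the scalar equation $\frac{H_k}{1+H_k p_k} = \tilde\mu_k + \tau(p_k-c_k)$. This is a quadratic in $p_k$ once cleared of denominators: $\tau H_k p_k^2 + \big(\tau + H_k(\tilde\mu_k - \tau c_k)\big)p_k + (\tilde\mu_k - \tau c_k - H_k) = 0$ (modulo sign bookkeeping). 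Solving by the quadratic formula and selecting the root that lies in the feasible direction (the one with the "$+\sqrt{\cdot}$" branch, which is the one that is increasing and well-defined as $\tau\to\infty$ or as the interference vanishes) yields exactly the closed-form expression
\[
p_k = \frac{1}{2}\Big(c_k - \frac{1}{H_k}\Big) - \frac{1}{2\tau}\Big[\tilde\mu_k - \sqrt{[\tilde\mu_k - \tau(c_k + \tfrac{1}{H_k})]^2 + 4\tau}\Big].
\]
Projecting this unconstrained stationary value onto $[0, p_k^{\max}]$ gives \eqref{eq:gen_proximal_BR}; the projection is legitimate precisely because the objective, restricted to the $k$-th coordinate, is concave and separable once $\boldsymbol{\mu}$ is fixed, so the box-constrained maximizer is the clamp of the interior maximizer.

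It remains to characterize $\boldsymbol{\mu}$: the complementarity slackness for the coupling constraints reads $0 \leq \mu_i \perp \alpha_i - \sum_k w_{ki} p_k^\star \geq 0$ for each $i$, which is \eqref{eq:CCa_prox_BR}. I would note that existence of such a $\boldsymbol{\mu}$ follows from the fact that the KKT system is solvable (the problem is convex, feasible — the hypothesis $\sum_k \mathbf{w}_k p_k^{\max} > \boldsymbol{\alpha}$ together with $\mathbf{p}=\mathbf{0}$ feasible gives a Slater point — and has bounded feasible set), and that the linear independence assumption on the $\mathbf{w}_k$'s guarantees the usual constraint qualification so the multiplier is well-behaved; uniqueness of $p_k^\star$ then follows from strict concavity, while $\boldsymbol{\mu}$ need only satisfy the stated complementarity system. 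The dependence $p_k^\star = p_k^\star(\boldsymbol{\mu})$ is nondecreasing and continuous in each $\mu_i$ (visible from the explicit formula), so \eqref{eq:CCa_prox_BR} is a standard monotone complementarity problem in $\boldsymbol{\mu}$, which is what makes the water-filling interpretation precise — this is also the hook for the "efficient algorithm to compute the water-levels" promised in the section heading.

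The main obstacle I anticipate is purely algebraic: correctly clearing the denominator $1 + H_k p_k$, keeping track of the sign of $H_k$ (positive by hypothesis) and of $\tau$ (positive), and then identifying which of the two quadratic roots is the correct maximizer. The wrong root typically corresponds to $p_k < -1/H_k$, i.e. a pole of the original stationarity equation, and must be discarded; verifying that the "$+\sqrt{\cdot}$" branch is always the admissible one (e.g. by checking the limiting behavior as $\tilde\mu_k\to\infty$, where $p_k^\star$ should tend to its lower clamp, or by a second-derivative/concavity argument) is the one place where care is genuinely needed rather than routine. Everything else — convexity, KKT sufficiency, Slater's condition, the projection step — is standard, so I would cite \cite[Sec. 7.1.1]{Scutari-Facchinei-Pang-Palomar_IT_PI} for the detailed computation, exactly as the paper does, and present only the setup and the root-selection remark in the main text.
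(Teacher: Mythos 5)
Your proposal is correct and follows the same route the paper indicates for this lemma: writing the KKT system (which is necessary and sufficient since the problem is convex with a polyhedral feasible set), reducing stationarity to a per-carrier quadratic, selecting the root on the branch $p_k>-1/H_k$, clamping to the box, and leaving $\boldsymbol{\mu}$ to the complementarity system \eqref{eq:CCa_prox_BR}; the quadratic you write down and the discriminant simplification do reproduce \eqref{eq:gen_proximal_BR} exactly. The paper itself omits the detailed computation and defers to \cite[Sec. 7.1.1]{Scutari-Facchinei-Pang-Palomar_IT_PI}, so your setup plus the root-selection remark is precisely the content it intends.
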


The computation of the water-level $\boldsymbol{\mu}$ in (\ref{eq:gen_proximal_BR})
so that the complementarity conditions in (\ref{eq:CCa_prox_BR})
are satisfied can be done efficiently using the multiple nested bisection
method described in Algorithm \ref{alg:multiple_bisection_gen_prox_BR}. 

The basic idea of the algorithm is to employ a bisection algorithm
in $\mu_{1}$; then, for a given $\mu_{1}$, use a bisection algorithm
in $\mu_{2}$; then in $\mu_{3}$, and so on. For the $i$th bisection
level, the interval can be chosen as $\left[0,\max_{k}\left\{ \left(H_{k}-\tau c_{k}-\lambda_{k}-\sum_{j<i}\mu_{j}w_{kj}\right)/w_{ki}\right\} \right]$.
The convergence of the nested bisection method is given in the following
proposition, whose proof is omitted and can be found in \cite[Sec. 7.1.1]{Scutari-Facchinei-Pang-Palomar_IT_PI}.

\begin{proposition} \label{prop:convergence_multiple_bisection_gen_prox_BR}
Algorithm \ref{alg:multiple_bisection_gen_prox_BR} converges in no
more than 
\[
\prod_{i}\left\lceil \log_{2}\left(\max_{k}\left\{ \left(H_{k}-\tau c_{k}-\lambda_{k}\right)/w_{ki}\right\} /\epsilon\right)\right\rceil 
\]
iterations, where $\epsilon$ is the desired accuracy in the computation
of the parameter $\boldsymbol{\mu}$.\end{proposition}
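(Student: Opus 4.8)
The plan is to reduce the claim to counting the leaf evaluations of a cascade of $N_c$ one‑dimensional bisections, each operating on an explicit bracketing interval whose length does not depend on the outer iterates.

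First I would record the monotonicity that makes each bisection well posed. From the closed form \eqref{eq:gen_proximal_BR}, the scalar map $\tilde\mu_k\mapsto p_k^\star$ is continuous and nonincreasing, and since $\mathbf{w}_k\ge\mathbf{0}$ and $\tilde\mu_k=\lambda_k+\boldsymbol{\mu}^T\mathbf{w}_k$, each $p_k^\star$ is nonincreasing in every component $\mu_i$ of $\boldsymbol{\mu}$; consequently every residual $g_i(\boldsymbol{\mu})\triangleq\alpha_i-\sum_{k=1}^N w_{ki}\,p_k^\star(\boldsymbol{\mu})$ is continuous and nondecreasing in $\boldsymbol{\mu}$. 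Because the feasible set of \eqref{eq:single_user_general_proximal_problem} is polyhedral and its objective is strongly concave (the proximal term $-\tfrac{\tau}{2}\|\mathbf{p}-\mathbf{c}\|^2$), the KKT system is necessary and sufficient and the complementarity conditions \eqref{eq:CCa_prox_BR} are solvable; the isotonicity of $(g_i)_i$ is what allows the solution to be located one coordinate at a time.

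Second I would justify the bracketing interval. With $\mu_1,\dots,\mu_{i-1}$ frozen at their current outer values and $\mu_{i+1},\dots,\mu_{N_c}$ slaved to their own complementarity conditions, the level‑$i$ residual is a monotone function of $\mu_i$ on $[0,U_i]$, where $U_i=\max_k\{(H_k-\tau c_k-\lambda_k-\sum_{j<i}\mu_j w_{kj})/w_{ki}\}$ is the interval specified in the setup of Algorithm~\ref{alg:multiple_bisection_gen_prox_BR}. One checks — the routine computation deferred to \cite{Scutari-Facchinei-Pang-Palomar_IT_PI} — that at $\mu_i=U_i$ every $\tilde\mu_k$ is pushed past the threshold at which the box constraint $p_k\ge0$ becomes active, so the $i$‑th constraint $\sum_k w_{ki}p_k^\star\le\alpha_i$ holds with slack and hence $g_i(U_i)>0$; since $g_i$ is monotone, either $g_i(0)\ge0$ (and then $\mu_i=0$ is the complementary value) or the root of $g_i(\cdot)=0$ lies in $(0,U_i)$, so the level‑$i$ bisection converges. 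Because $\sum_{j<i}\mu_j w_{kj}\ge0$, we have $U_i\le L_i\triangleq\max_k\{(H_k-\tau c_k-\lambda_k)/w_{ki}\}$, a bound independent of the outer iterates, so driving the level‑$i$ uncertainty below $\epsilon$ costs at most $\lceil\log_2(L_i/\epsilon)\rceil$ halvings. Finally I would unroll the recursion: Algorithm~\ref{alg:multiple_bisection_gen_prox_BR} executes, for every trial value of $\mu_1$, a complete solve of the level‑$2$ problem, which in turn triggers a complete solve of level $3$ for every trial value of $\mu_2$, and so on down to level $N_c$; hence the number of innermost evaluations, and therefore the number of iterations, is the product of the per‑level bisection counts $\prod_i\lceil\log_2(L_i/\epsilon)\rceil$, which is exactly the asserted bound.

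The step I expect to be the real obstacle is the claim, used above, that "the level‑$i$ residual is a monotone function of $\mu_i$'': raising $\mu_i$ forces the inner solver to lower $\mu_{i+1}^\star,\dots,\mu_{N_c}^\star$, and composing the isotone residual $g_i$ with these antitone inner‑solution maps need not, a priori, be monotone. The way through is to note that at the inner optimum each higher‑level resource $j>i$ is either priced at $\mu_j^\star=0$ (so lowering it has no effect) or is pinned exactly at capacity $\sum_k w_{kj}p_k^\star=\alpha_j$ and stays pinned after re‑solving; a substitution/KKT argument then shows that the consumption $\sum_k w_{ki}p_k^\star$ of resource $i$ still moves monotonically with $\mu_i$. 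I would lean here on \cite{Scutari-Facchinei-Pang-Palomar_IT_PI} and on the standard theory of monotone (supermodular) complementarity problems, since the present proposition only asserts the iteration count once the nested bisection is known to be valid.
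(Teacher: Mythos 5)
The paper does not actually contain a proof of this proposition: it is stated with the remark that the proof ``is omitted and can be found in \cite[Sec.\ 7.1.1]{Scutari-Facchinei-Pang-Palomar_IT_PI}'', so there is no in-paper argument to compare yours against. On its own merits, your counting argument is the natural one and is correct as far as the stated bound goes: the level-$i$ bisection runs on $[0,U_i]$ with $U_i\leq\max_k\{(H_k-\tau c_k-\lambda_k)/w_{ki}\}$ because $\sum_{j<i}\mu_j w_{kj}\geq 0$, the stopping rule $\overline{\mu}_i-\underline{\mu}_i\leq\epsilon$ is reached after at most $\left\lceil \log_2\left(U_i/\epsilon\right)\right\rceil$ halvings, and since every trial value at level $i$ triggers a complete solve at level $i+1$, the innermost evaluation count is the product over $i$. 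You are also right to single out the one genuinely nontrivial ingredient, namely that the bisection test at each level is valid: that $\mu_i\mapsto\sum_k w_{ki}\,p_k^{\star}$, with the deeper multipliers re-solved to satisfy their own complementarity conditions, is monotone and brackets a root on $[0,U_i]$. Your complementarity sketch (each deeper resource is either priced at zero or pinned at capacity) is the standard way to argue this, but as written it remains a sketch; note, however, that this is exactly the content the paper itself defers to the companion reference, so your proof is no less complete than the paper's treatment. Two minor technicalities worth a sentence if you write this up fully: the expression $\left\lceil \log_2\left(\max_k\{(H_k-\tau c_k-\lambda_k)/w_{ki}\}/\epsilon\right)\right\rceil$ is only meaningful when that maximum is positive (otherwise $\mu_i=0$ and the level degenerates), and the division presumes $w_{ki}>0$ for the maximizing $k$; neither affects the substance of the bound.
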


\begin{algo}{Multiple nested bisection algorithm for the computation
of the proximal best-response in (\ref{eq:gen_proximal_BR}).} S\texttt{$\mbox{(\mbox{S.0})}:$}
Choose some accuracy $\epsilon$.

\texttt{$\mbox{(S.1)}:$} Set $\underline{\mu}_{1}=0$ and $\overline{\mu}_{1}=\max_{k}\left\{ \left(H_{k}-\tau c_{k}-\lambda_{k}\right)/w_{k1}\right\} $.

\texttt{$\mbox{(S.2)}:$} Set $\mu_{1}=\left(\underline{\mu}_{1}+\overline{\mu}_{1}\right)/2$.

\texttt{$\mbox{(S.3)}:$} Solve for $\mu_{2},\ldots$

\texttt{\qquad{}$\mbox{(S-2.1)}:$} Set $\underline{\mu}_{2}=0$
and $\overline{\mu}_{2}=\max_{k}\left\{ \left(H_{k}-\tau c_{k}-\lambda_{k}-\sum_{j<2}w_{kj}\mu_{j}\right)/w_{k2}\right\} $.

\texttt{\qquad{}$\mbox{(S-2.2)}:$} Set $\mu_{2}=\left(\underline{\mu}_{2}+\overline{\mu}_{2}\right)/2$.

\texttt{\qquad{}$\mbox{(S-2.3)}:$} Solve for $\mu_{3},\ldots$

\texttt{\qquad{}\qquad{}$\mbox{(S-3.1)}:$} Set $\underline{\mu}_{3}=0$
and

\texttt{\qquad{}\qquad{}\qquad{}\qquad{}\qquad{}}$\overline{\mu}_{3}=\max_{k}\left\{ \left(H_{k}-\tau c_{k}-\lambda_{k}-\sum_{j<3}w_{kj}\mu_{j}\right)/w_{k3}\right\} $.

\texttt{\qquad{}\qquad{}$\mbox{(S-3.2)}:$} Set $\mu_{3}=\left(\underline{\mu}_{3}+\overline{\mu}_{3}\right)/2$.

\texttt{\qquad{}\qquad{}$\mbox{(S-3.3)}:$} Solve for $\mu_{4},\ldots$\vspace{-0.2cm}

$\qquad\qquad\qquad\qquad\qquad\qquad\qquad\qquad\qquad\qquad\qquad$\vdots{}

\texttt{\qquad{}\qquad{}$\mbox{(S-3.4)}:$} Using (\ref{eq:gen_proximal_BR}),
if $\sum_{k=1}^{N}w_{k3}p_{k}^{\star}<\alpha_{3}$ then set $\overline{\mu}_{3}=\mu_{3}$,
otherwise set $\underline{\mu}_{3}=\mu_{3}$.

\texttt{\qquad{}\qquad{}$\mbox{(S-3.5)}:$ }If $\overline{\mu}_{3}-\underline{\mu}_{3}>\epsilon$,
then go to \texttt{$\mbox{(S-3.2)}$}.

\texttt{\qquad{}$\mbox{(S-2.4)}:$} Using (\ref{eq:gen_proximal_BR}),
if $\sum_{k=1}^{N}w_{k2}p_{k}^{\star}<\alpha_{2}$ then set $\overline{\mu}_{2}=\mu_{2}$,
otherwise set $\underline{\mu}_{2}=\mu_{2}$.

\texttt{\qquad{}$\mbox{(S-2.5)}:$ }If $\overline{\mu}_{2}-\underline{\mu}_{2}>\epsilon$,
then go to \texttt{$\mbox{(S-2.2)}$}.

\texttt{$\mbox{(S.4)}:$} Using (\ref{eq:gen_proximal_BR}), if $\sum_{k=1}^{N}w_{k1}p_{k}^{\star}<\alpha_{1}$
then set $\overline{\mu}_{1}=\mu_{1}$, otherwise set $\underline{\mu}_{1}=\mu_{1}$.

\texttt{$\mbox{(S.5)}:$} If $\overline{\mu}_{1}-\underline{\mu}_{1}>\epsilon$
then go to \texttt{$\mbox{(S.2)}$}; otherwise \texttt{STOP}.

\label{alg:multiple_bisection_gen_prox_BR}\end{algo}

\subsection{The MIMO case\label{sub:The-MIMO-case_revised}\vspace{-0.1cm}}

Let us consider now MIMO games. Given $\mathcal{G}_{\texttt{{mimo}}}$,
let us assume w.l.o.g. that all the matrices $\mathbf{Q}_{i}$ in
the game have the same dimensions $n_{T}\times n_{T}$. It follows
from Proposition \ref{Lemma_min_principle} (see also Example \ref{Example_MIMO_complex_min_princ})
that $\mathcal{G}_{\texttt{{mimo}}}$ is equivalent to the complex
VI$(\mathcal{P}^{\texttt{{mimo}}},\mathbf{F^{\mathbb{C}}})$ where
$\mathbf{F^{\mathbb{C}}}(\mathbf{Q})\triangleq\left(\mathbf{F}_{i}^{\mathbb{C}}(\mathbf{Q})\right)_{i=1}^{I}:\mathcal{P}^{\texttt{{mimo}}}\rightarrow\mathbb{C}^{Qn_{T}\times n_{T}}$,
with each 
\begin{equation}
\mathbf{F}_{i}^{\mathbb{C}}(\mathbf{Q})\triangleq-\left(\nabla_{\mathbf{Q}_{i}}R_{i}(\mathbf{Q})\right)^{\ast}=-\mathbf{H}_{ii}^{H}\left(\mathbf{R}_{n_{i}}+\sum\limits _{j=1}^{I}\mathbf{H}_{ij}\mathbf{Q}_{j}\mathbf{H}_{ij}^{H}\right)^{^{-1}}\mathbf{H}_{ii}.\label{eq:F_MIMO_G}
\end{equation}
According to Proposition \ref{monotonicity_complexVI}, the monotonicity/P
properties of $\mathbf{F}^{\mathbb{C}}(\mathbf{Q})$ on $\mathcal{P}^{\texttt{{mimo}}}$
are related to the properties of the augmented Jacobian matrix $\mathbf{JF}^{\mathbb{C}}(\mathbf{Q})$.
To obtain sufficient conditions easy to be checked, let us introduce
the following $I\times I$ matrix $\boldsymbol{{\Upsilon}}_{\mathbf{F}^{\mathbb{C}}}^{\text{\texttt{{mimo}}}}\in\mathbb{R}^{I\times I}$
obtained by properly ``condensing'' $\mathbf{JF}^{\mathbb{C}}(\mathbf{Q})$
(see Appendix \ref{sec:Proof-of-Proposition_MIMOGame}), and defined
as (we implicitly assume that all the channel matrices $\mathbf{H}_{ii}$
are full column rank): 
\begin{equation}
[\,\boldsymbol{{\Upsilon}}_{\mathbf{F}^{\mathbb{C}}}^{\text{\texttt{{mimo}}}}\,]_{ij}\,\,\triangleq\,\left\{ \begin{array}{ll}
1 & \mbox{if \ensuremath{i=j}}\\[5pt]
-{\displaystyle \rho\left(\mathbf{H}_{ii}^{\dagger H}\mathbf{H}_{ij}^{H}\mathbf{H}_{ij}\mathbf{H}_{ii}^{\dagger}\right)\cdot{\textsf{INNR}}{}_{ij}} & \mbox{if \ensuremath{i\neq j}},
\end{array}\right.\label{eq:Upsilon_F_Q}
\end{equation}
where $\mathbf{A}^{\dagger}$ denotes the Moore\textendash{}Penrose
pseudoinverse of $\mathbf{A}$ {[}since $\mathbf{H}_{ii}$ are full
column rank, we have $\mathbf{H}_{ii}^{\dagger}=\left(\mathbf{H}_{ii}^{H}\mathbf{H}_{ii}\right)^{-1}\mathbf{H}_{ii}^{H}${]},
and ${\textsf{INNR}}{}_{ij}$ is defined as\vspace{-0.2cm} 
\begin{equation}
{\textsf{INNR}}{}_{ij}\triangleq\dfrac{\rho\left(\mathbf{R}_{n_{i}}+\sum\limits _{j=1}^{I}P_{j}\mathbf{H}_{ij}\mathbf{H}_{ij}^{H}\right)}{\lambda_{\mbox{least}}(\mathbf{R}_{n_{i}})}.\label{eq:INNR_MIMO}
\end{equation}

Using Proposition \ref{Lemma_min_principle} and Proposition \ref{monotonicity_complexVI},
we obtain the following characterization for $\mathcal{G}_{\texttt{{mimo}}}$.\vspace{-0.2cm}

\begin{proposition}\label{Prop:MIMO_Game} Given the complex NEP
\emph{$\mathcal{G}_{\texttt{{mimo}}}=\left\langle \mathcal{\mathcal{P}^{\texttt{{mimo}}}},\,(R_{i})_{i=1}^{I}\right\rangle $},
the following hold.

\vspace{-0.3cm}

\begin{description}
\item [{(a)}] \emph{$\mathcal{G}_{\texttt{{mimo}}}$} is equivalent to
the complex \emph{$\text{{VI}}(\mathcal{P}^{\texttt{{mimo}}},\,\mathbf{\mathbf{F}^{\mathbb{C}}})$},
which has a nonempty and compact solution set;\vspace{-0.3cm}

\item [{(b)}] Suppose that \emph{$\boldsymbol{{\Upsilon}}_{\mathbf{F}^{\mathbb{C}}}^{\text{\texttt{{mimo}}}}$}
is positive semidefinite. Then $\mathbf{F}^{\mathbb{C}}$ is monotone
on \emph{$\mathcal{P}^{\texttt{{mimo}}}$}; therefore, \emph{$\mathcal{G}_{\texttt{{mimo}}}$
is a monotone complex NEP;} \vspace{-0.3cm}

\item [{(c)}] Suppose that \emph{$\boldsymbol{{\Upsilon}}_{\mathbf{F}^{\mathbb{C}}}^{\text{\texttt{{mimo}}}}$}
is a P-matrix (positive definite matrix). Then $\mathbf{F}^{\mathbb{C}}$
is a uniformly P-function (strongly monotone function) on \emph{$\mathcal{P}^{\texttt{{mimo}}}$};
therefore, \emph{$\mathcal{G}_{\texttt{{mimo}}}$} is a complex P$_{\boldsymbol{{\Upsilon}}}$
NEP and thus has a unique NE. \vspace{-0.2cm}
\end{description}
\end{proposition}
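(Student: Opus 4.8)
The plan is to derive all three statements from the complex-VI machinery of Section~\ref{sec:VI_complex_domain}, with the bulk of the work being the explicit computation and ``condensation'' of the augmented Jacobian of $\mathbf{F}^{\mathbb{C}}$.

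\textbf{Part (a).} I would first invoke Proposition~\ref{Lemma_NEP_complex_VI}, whose hypotheses are immediate here: each $\mathcal{P}_i^{\texttt{mimo}}$ is nonempty (it contains $\mathbf{Q}_i=\mathbf{0}$, which trivially satisfies the null, trace and shaping constraints and lies in $\mathcal{Q}_i$), closed, and convex (the PSD cone, the linear equalities $\mathbf{U}_i^H\mathbf{Q}_i=\mathbf{0}$, the trace and average-power linear inequalities, the convex maps $\mathbf{Q}_i\mapsto\lambda_{\max}(\mathbf{F}_{pi}^H\mathbf{Q}_i\mathbf{F}_{pi})$, and the closed convex $\mathcal{Q}_i$); moreover $R_i(\cdot,\mathbf{Q}_{-i})$ is continuously $\mathbb{R}$-differentiable and concave, the latter established in Example~\ref{Example_MIMO_complex_min_princ} revisited once one writes $R_i=\log\det(\mathbf{R}_{-i}+\mathbf{H}_{ii}\mathbf{Q}_i\mathbf{H}_{ii}^H)-\log\det\mathbf{R}_{-i}$, which is of the $\log\det$ form treated there. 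Hence $-R_i$ is convex and $\mathcal{G}_{\texttt{mimo}}$ is equivalent to the complex $\text{VI}(\mathcal{P}^{\texttt{mimo}},\mathbf{F}^{\mathbb{C}})$ with $\mathbf{F}^{\mathbb{C}}$ as in (\ref{eq:F_MIMO_G}). Since each $\mathcal{P}_i^{\texttt{mimo}}$ is bounded ($\mathbf{Q}_i\succeq\mathbf{0}$ and $\text{tr}(\mathbf{Q}_i)\le P_i$ force $\mathbf{0}\preceq\mathbf{Q}_i\preceq P_i\mathbf{I}$) and closed, $\mathcal{P}^{\texttt{mimo}}$ is compact and $\mathbf{F}^{\mathbb{C}}$ is continuous, so the complex counterpart of Theorem~\ref{Theo_existence_uniquenessNE}(a) (which, as noted in Section~\ref{sub:NEP_complex_domain}, carries over verbatim) gives a nonempty compact solution set.

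\textbf{Parts (b)--(c).} These will follow from Proposition~\ref{monotonicity_complexVI}(d)--(e) once I show that, for a suitable choice of the nonsingular scaling matrices $\mathbf{A}_i$ in (\ref{eq:def_alpha_and_beta_Jac_complex_VI}), the condensed matrix $\boldsymbol{\Upsilon}_{\mathbf{F}^{\mathbb{C}}}$ attached to $\mathbf{F}^{\mathbb{C}}$ is dominated by the explicit matrix $\boldsymbol{\Upsilon}^{\texttt{mimo}}_{\mathbf{F}^{\mathbb{C}}}$ of (\ref{eq:Upsilon_F_Q}) in the sense that its diagonal entries are not smaller and its off-diagonal magnitudes not larger; the PSD/$\text{P}_0$/P properties then transfer along this comparison (using the $\rho(\boldsymbol{\Gamma})<1$ characterization of P-matrices among $Z$-matrices recalled in Section~\ref{sub:Existence-and-uniqueness_NE}), and the uniqueness in (c) is the complex version of Theorem~\ref{Theo_existence_uniquenessNE}(d). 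Writing $\mathbf{W}_i(\mathbf{Q})\triangleq\mathbf{R}_{n_i}+\sum_{k}\mathbf{H}_{ik}\mathbf{Q}_k\mathbf{H}_{ik}^H$, we have $\mathbf{F}_i^{\mathbb{C}}=-\mathbf{H}_{ii}^H\mathbf{W}_i^{-1}\mathbf{H}_{ii}$, i.e.\ the map $\mathbf{G}$ of Example~\ref{Example_MIMO_complex_min_princ} precomposed with the linear maps $\mathbf{Q}_k\mapsto\mathbf{H}_{ik}\mathbf{Q}_k\mathbf{H}_{ik}^H$; applying the chain rule to the augmented-Hessian formula (\ref{eq:augmented_Hessian_f_tilde}) gives each block $\mathbf{J}_j\mathbf{F}_i^{\mathbb{C}}(\mathbf{Q})$ in closed form as an anti-block-diagonal array of Kronecker products built from $\mathbf{M}_{ij}\triangleq\mathbf{H}_{ii}^H\mathbf{W}_i^{-1}\mathbf{H}_{ij}$ and the commutation matrix $\mathbf{K}_{n_T^2 n_T^2}$, while the diagonal block $\mathbf{J}_i\mathbf{F}_i^{\mathbb{C}}$ is exactly the augmented Hessian of the convex function $-R_i(\cdot,\mathbf{Q}_{-i})$ and hence augmented positive (semi)definite.

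I would then choose $\mathbf{A}_i$ to absorb $\mathbf{H}_{ii}$ — the augmented-space analogue of $\mathbf{H}_{ii}^{\dagger}$, legitimate since $\mathbf{H}_{ii}$ is full column rank — so that $\mathbf{A}_i^H\mathbf{J}_i\mathbf{F}_i^{\mathbb{C}}\mathbf{A}_i$ is normalized; this forces $\kappa_i^{\min}\ge1$, matching the diagonal $1$ of $\boldsymbol{\Upsilon}^{\texttt{mimo}}_{\mathbf{F}^{\mathbb{C}}}$ after lower-bounding $\lambda_{\text{least}}$ through the eigenvalues of $\mathbf{M}_{ii}$. For the off-diagonal terms, the Frobenius norm of the normalized block factors through $\rho(\mathbf{H}_{ii}^{\dagger H}\mathbf{H}_{ij}^H\mathbf{H}_{ij}\mathbf{H}_{ii}^{\dagger})$ times the conditioning of $\mathbf{W}_i$; bounding $\mathbf{W}_i$ over $\mathcal{P}^{\texttt{mimo}}$ by $\lambda_{\text{least}}(\mathbf{R}_{n_i})\mathbf{I}\preceq\mathbf{W}_i\preceq\mathbf{R}_{n_i}+\sum_k P_k\mathbf{H}_{ik}\mathbf{H}_{ik}^H$ (using $\mathbf{Q}_k\preceq P_k\mathbf{I}$) yields precisely $\xi_{ij}^{\max}\le\rho(\mathbf{H}_{ii}^{\dagger H}\mathbf{H}_{ij}^H\mathbf{H}_{ij}\mathbf{H}_{ii}^{\dagger})\cdot\textsf{INNR}_{ij}$ with $\textsf{INNR}_{ij}$ as in (\ref{eq:INNR_MIMO}). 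This establishes the domination and completes (b) and (c).

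\textbf{Main obstacle.} The hard part is purely computational: correctly differentiating the inverse-of-a-sum map to obtain the augmented Jacobian blocks $\mathbf{J}_j\mathbf{F}_i^{\mathbb{C}}$ (the Kronecker/commutation-matrix bookkeeping, with careful handling of conjugates), and then performing the condensation with the right $\mathbf{A}_i$ so that the infimum/supremum over the compact feasible set land on the clean quantities appearing in (\ref{eq:Upsilon_F_Q})--(\ref{eq:INNR_MIMO}) rather than on looser surrogates — in particular making the interference-plus-noise covariance bound tight enough to reproduce $\textsf{INNR}_{ij}$ exactly.
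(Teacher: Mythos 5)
Your part (a) and the overall skeleton of (b)--(c) (compute the augmented Jacobian of $\mathbf{F}^{\mathbb{C}}$, exploit its Kronecker structure, normalize away $\mathbf{H}_{ii}$, and bound the cross terms over the compact feasible set so as to dominate $\boldsymbol{{\Upsilon}}_{\mathbf{F}^{\mathbb{C}}}^{\text{\texttt{{mimo}}}}$) coincide with the paper's argument. The gap is in the mechanism you commit to for the condensation. You propose to reach (\ref{eq:Upsilon_F_Q}) through Proposition \ref{monotonicity_complexVI}(d)--(e), i.e., through the condensed matrix built from \emph{fixed} nonsingular scalings $\mathbf{A}_i$ and \emph{Frobenius} norms as in (\ref{eq:def_alpha_and_beta_Jac_complex_VI}). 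That route cannot land on (\ref{eq:Upsilon_F_Q}): first, the off-diagonal entry $\xi_{ij}^{\max}=\sup_{\mathbf{Q}}\|\mathbf{A}_i^{H}\,\mathbf{J}_j\mathbf{F}_i^{\mathbb{C}}(\mathbf{Q})\,\mathbf{A}_j\|_F$ is sandwiched between the two \emph{different} scalings $\mathbf{A}_i$ and $\mathbf{A}_j$, so if $\mathbf{A}_j$ ``absorbs $\mathbf{H}_{jj}$'' the $(i,j)$ entry unavoidably contains $\mathbf{H}_{jj}$, whereas $[\boldsymbol{{\Upsilon}}_{\mathbf{F}^{\mathbb{C}}}^{\text{\texttt{{mimo}}}}]_{ij}$ depends only on $\mathbf{H}_{ii}$ and $\mathbf{H}_{ij}$; second, the relevant block is the Kronecker square $\widetilde{\boldsymbol{\Psi}}_{ij}^{\ast}\otimes\widetilde{\boldsymbol{\Psi}}_{ij}$, whose Frobenius norm equals $\|\widetilde{\boldsymbol{\Psi}}_{ij}\|_F^{2}$ and can exceed the quantity $\rho\left(\widetilde{\boldsymbol{\Psi}}_{ij}^{H}\widetilde{\boldsymbol{\Psi}}_{ij}\right)$ appearing in (\ref{eq:Upsilon_F_Q}) by a factor up to $n_T$; third, when $n_{R_i}\neq n_{T_i}$ there is no nonsingular square $\mathbf{A}_i$ playing the role of an ``augmented analogue of $\mathbf{H}_{ii}^{\dagger}$''.

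The paper sidesteps all of this. Since $\mathbf{F}^{\mathbb{C}}$ in (\ref{eq:F_MIMO_G}) is written as a function of $\mathbf{Q}$ alone, $D_{\mathbf{Q}^{\ast}}\mathbf{F}^{\mathbb{C}}=\mathbf{0}$ and the augmented Jacobian is block-diagonal --- not the anti-diagonal, commutation-matrix form of (\ref{eq:augmented_Hessian_f_tilde}) you anticipate, which arises when the map is written in terms of $\mathbf{Z}^{H}$; the two agree on Hermitian arguments but the bookkeeping differs. Augmented positive semidefiniteness then reduces to ordinary positive semidefiniteness of $D_{\mathbf{Q}}\mathbf{F}^{\mathbb{C}}$; the paper normalizes by the $\mathbf{Q}$-\emph{dependent} matrices $\boldsymbol{\Psi}_{ii}^{\pm1/2}(\mathbf{Q})$ with $\boldsymbol{\Psi}_{ii}=\mathbf{H}_{ii}^{H}\mathbf{S}_i\mathbf{H}_{ii}$, forms a pointwise comparison matrix whose off-diagonal entries are \emph{spectral} norms (which, for Kronecker squares, collapse exactly to $\rho(\widetilde{\boldsymbol{\Psi}}_{ij}^{H}\widetilde{\boldsymbol{\Psi}}_{ij})$), and only then dominates it entrywise by $\boldsymbol{{\Upsilon}}_{\mathbf{F}^{\mathbb{C}}}^{\text{\texttt{{mimo}}}}$ via the chain $\mathbf{P}_{\mathcal{R}(\mathbf{H}_{ii})}\preceq\mathbf{I}$, $\left(\mathbf{H}_{ii}^{H}\mathbf{S}_i\mathbf{H}_{ii}\right)^{-1}\preceq\mathbf{H}_{ii}^{\dagger}\mathbf{S}_i^{-1}\mathbf{H}_{ii}^{\dagger H}$, and $\rho(\mathbf{S}_i)\,\rho(\mathbf{S}_i^{-1})\leq{\textsf{INNR}}_{ij}$. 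You correctly identified the noise-plus-interference covariance bound as the crux, but to close the proof you must replace the fixed-$\mathbf{A}_i$/Frobenius condensation with this direct quadratic-form, spectral-norm argument.
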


\begin{proof} See Appendix \ref{sec:Proof-of-Proposition_MIMOGame}.
\end{proof}

\begin{corollary} \label{Cor_SF_G_for_Pmat_MIMO} The matrix \emph{$\boldsymbol{{\Upsilon}}_{\mathbf{F}^{\mathbb{C}}}^{\text{\texttt{{mimo}}}}$}
is a P (or positive definite) matrix if one (or both) of the following
two sets of conditions are satisfied: for some $\mathbf{w}=(w_{i})_{i=1}^{I}>\mathbf{0}$,
\begin{equation}
\begin{array}{rl}
\mbox{\mbox{\emph{Low}\,\ \emph{received}\,\ \emph{MUI}\emph{:}}\quad} & \dfrac{{1}}{w_{i}}\,\sum\limits _{j\neq i}w_{j}\,\left\{ {\displaystyle \rho\left(\mathbf{H}_{ii}^{\dagger H}\mathbf{H}_{ij}^{H}\mathbf{H}_{ij}\mathbf{H}_{ii}^{\dagger}\right)\cdot\textsf{\emph{INNR}}{}_{ij}}\right\} <1,\quad\forall i=1,\ldots,I\medskip\\
\mbox{\emph{Low}\,\ \emph{generated}\,\ \emph{MUI}\emph{:\quad}} & \dfrac{{1}}{w_{j}}\,\sum\limits _{i\neq j}w_{i}\,\left\{ {\displaystyle \rho\left(\mathbf{H}_{ii}^{\dagger H}\mathbf{H}_{ij}^{H}\mathbf{H}_{ij}\mathbf{H}_{ii}^{\dagger}\right)\cdot\textsf{\emph{INNR}}{}_{ij}}\right\} <1,\quad\forall j=1,\ldots,I\vspace{-0.3cm}
\end{array}\label{eq:Low_TxTx_MUI_MIMO}
\end{equation}
\end{corollary}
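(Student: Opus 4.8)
The plan is to recognize that Corollary~\ref{Cor_SF_G_for_Pmat_MIMO} is the MIMO transcription of the SISO Corollary~\ref{Cor_SF_G_for_Pmat}, and that both are immediate specializations of Proposition~\ref{Cor_SF_for_Pmat}. First I would observe that the matrix $\boldsymbol{{\Upsilon}}_{\mathbf{F}^{\mathbb{C}}}^{\text{\texttt{{mimo}}}}$ displayed in (\ref{eq:Upsilon_F_Q}) is a genuine condensed $\boldsymbol{\Upsilon}$-matrix in the sense of (\ref{eq:Upsilon_matrix_complexVI}): it is a real $I\times I$ matrix whose diagonal entries are $\kappa_i^{\min}=1$ and whose off-diagonal entries are $-\xi_{ij}$ with $\xi_{ij}\triangleq\rho\!\left(\mathbf{H}_{ii}^{\dagger H}\mathbf{H}_{ij}^{H}\mathbf{H}_{ij}\mathbf{H}_{ii}^{\dagger}\right)\cdot\textsf{INNR}_{ij}\ge 0$ (the spectral radius is a nonnegative real, and $\textsf{INNR}_{ij}\ge 0$ since $\mathbf{R}_{n_i}\succ\mathbf{0}$ makes $\lambda_{\text{least}}(\mathbf{R}_{n_i})>0$ in (\ref{eq:INNR_MIMO})). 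The identifications $\kappa_i^{\min}=1$ and $\xi_{ij}^{\max}=\xi_{ij}$ are precisely the ones obtained in the proof of Proposition~\ref{Prop:MIMO_Game} (Appendix~\ref{sec:Proof-of-Proposition_MIMOGame}) through the explicit choice of the free nonsingular matrices $\mathbf{A}_i$ in (\ref{eq:def_alpha_and_beta_Jac_complex_VI}); I would simply invoke it here. Thus $\boldsymbol{{\Upsilon}}_{\mathbf{F}^{\mathbb{C}}}^{\text{\texttt{{mimo}}}}$ is a $Z$-matrix (nonpositive off-diagonal entries) with unit diagonal, i.e., it has exactly the structure of $\boldsymbol{\Upsilon}_{\mathbf{F}}$ in (\ref{eq:Upsilon_matrix}) with $\alpha_i^{\min}=1$ and $\beta_{ij}^{\max}=\xi_{ij}$.

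Second, with $\alpha_i^{\min}=1$ the first block of conditions in (\ref{eq:Low_TxTx_MUI_MIMO}) reads $\tfrac{1}{w_i}\sum_{j\neq i}w_j\,\xi_{ij}<1$ for all $i$, and the second block is the transposed (column) version; these coincide term by term with the two conditions in (\ref{SF_aVI}) for this matrix. Hence the statement follows directly by applying Proposition~\ref{Cor_SF_for_Pmat}: either block makes $\boldsymbol{{\Upsilon}}_{\mathbf{F}^{\mathbb{C}}}^{\text{\texttt{{mimo}}}}$ a P-matrix, and both blocks together make it positive definite.

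Third, in case a self-contained derivation is preferred over citing Proposition~\ref{Cor_SF_for_Pmat}, I would spell out the elementary diagonal-dominance argument. Setting $\mathbf{M}\triangleq\boldsymbol{{\Upsilon}}_{\mathbf{F}^{\mathbb{C}}}^{\text{\texttt{{mimo}}}}$ and $\mathbf{w}=(w_i)_{i=1}^{I}>\mathbf{0}$, the first block of (\ref{eq:Low_TxTx_MUI_MIMO}) is exactly $\mathbf{M}\mathbf{w}>\mathbf{0}$ componentwise and the second block is $\mathbf{M}^{T}\mathbf{w}>\mathbf{0}$. Since $\mathbf{M}$ is a $Z$-matrix, the existence of a positive vector $\mathbf{w}$ with $\mathbf{M}\mathbf{w}>\mathbf{0}$ characterizes $\mathbf{M}$ as a nonsingular M-matrix, and nonsingular M-matrices are P-matrices (all principal minors positive); the same applied to $\mathbf{M}^{T}$, together with the fact that $\mathbf{M}$ is a P-matrix iff $\mathbf{M}^{T}$ is, yields the P-matrix claim from either block. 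When both blocks hold with the same $\mathbf{w}$, the symmetric part $\mathbf{N}\triangleq\tfrac12(\mathbf{M}+\mathbf{M}^{T})$ is a symmetric $Z$-matrix satisfying $\mathbf{N}\mathbf{w}=\tfrac12(\mathbf{M}\mathbf{w}+\mathbf{M}^{T}\mathbf{w})>\mathbf{0}$, hence $\mathbf{N}$ is a symmetric nonsingular M-matrix and therefore positive definite; consequently $\mathbf{x}^{T}\mathbf{M}\mathbf{x}=\mathbf{x}^{T}\mathbf{N}\mathbf{x}>0$ for all $\mathbf{x}\neq\mathbf{0}$, i.e., $\mathbf{M}$ is positive definite.

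I do not expect a genuine obstacle: the only step requiring care is the bookkeeping in the first paragraph, namely certifying that the displayed matrix (\ref{eq:Upsilon_F_Q}) is literally of the condensed form (\ref{eq:Upsilon_matrix_complexVI}) with $\kappa_i^{\min}=1$, which rests on the particular choice of the arbitrary matrices $\mathbf{A}_i$ fixed in the proof of Proposition~\ref{Prop:MIMO_Game}; once that is granted, Corollary~\ref{Cor_SF_G_for_Pmat_MIMO} is an immediate consequence of Proposition~\ref{Cor_SF_for_Pmat} and the elementary M-matrix facts above, exactly as in the real case of Corollary~\ref{Cor_SF_G_for_Pmat}.
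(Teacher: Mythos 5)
Your proof is correct and matches the paper's (implicit) argument: the corollary is the direct specialization of Proposition~\ref{Cor_SF_for_Pmat} to the matrix in (\ref{eq:Upsilon_F_Q}), with $\alpha_i^{\min}=1$ and $\beta_{ij}^{\max}=\rho(\mathbf{H}_{ii}^{\dagger H}\mathbf{H}_{ij}^{H}\mathbf{H}_{ij}\mathbf{H}_{ii}^{\dagger})\cdot\textsf{INNR}_{ij}$, and your third paragraph supplies exactly the ``elementary diagonal dominance'' M-matrix argument that the paper invokes for Proposition~\ref{Cor_SF_for_Pmat} without spelling it out. The only superfluous step is the first paragraph's appeal to the choice of the $\mathbf{A}_i$'s in Appendix~\ref{sec:Proof-of-Proposition_MIMOGame}: the corollary concerns only the explicitly displayed matrix (\ref{eq:Upsilon_F_Q}), which visibly has unit diagonal and nonpositive off-diagonal entries, so no such identification is needed.
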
\medskip{}

It is interesting to observe that conditions for $\mathbf{F}^{\mathbb{C}}$
to be a uniformly P-function on $\mathcal{P}^{\texttt{{mimo}}}$ are
the natural generalization of those obtained for $\mathcal{G}_{\texttt{{siso}}}$
to be a P$_{\boldsymbol{{\Upsilon}}}$ game; they thus have the same
physical interpretation, for which we refer the reader to Sec. \ref{sub:Game G_the-SISO}.
Based on that, in Proposition \ref{Prop:MIMO_Game} we used the same
terminology as in Definition \ref{Def_monotone_NEP}, namely: $\mathcal{G}_{\texttt{{mimo}}}$
is a complex P$_{\boldsymbol{{\Upsilon}}}$ NEP if $\boldsymbol{{\Upsilon}}_{\mathbf{F}^{\mathbb{C}}}^{\text{\texttt{{mimo}}}}$
is a P matrix, whereas is a complex monotone NEP if $\boldsymbol{{\Upsilon}}_{\mathbf{F}^{\mathbb{C}}}^{\text{\texttt{{mimo}}}}$
is a semidefinite matrix. For these two classes of NEPs we can devise
distributed algorithms having the same convergence properties and
features of those developed in Sec. \ref{sub:Best-response-decomposition-algos}
and Sec. \ref{sub:Proximal-decomposition-algorithms_monotone_VI}
for real P$_{\boldsymbol{{\Upsilon}}}$ and monotone NEPs, respectively.
The main results are briefly listed next; proofs are based on the
same techniques used to prove Lemma \ref{Lemma_min_principle} and
Proposition \ref{Prop:MIMO_Game}, and thus are omitted. \medskip{}

\noindent \textbf{The case of }\emph{$P{}_{\boldsymbol{\Upsilon}}$}\textbf{
NEP} $\mathcal{G}_{\texttt{{mimo}}}$ (low-interference regime). When
the matrix $\boldsymbol{{\Upsilon}}_{\mathbf{F}^{\mathbb{C}}}^{\text{\texttt{{mimo}}}}$
is a P matrix, the unique NE of the game can be computed using Algorithm
\ref{async_best-response_algo} on $\mathcal{G}_{\texttt{{mimo}}}$,
as stated in the next theorem.

\begin{theorem}\label{Th:AIWFA_MIMO} Suppose that \emph{$\mathcal{G}_{\texttt{{mimo}}}$}
is a complex P$_{\boldsymbol{{\Upsilon}}}$ NEP. Then, any sequence
$\{\mathbf{Q}^{(n)}\}_{n=0}^{\infty}$ generated by Algorithm \ref{async_best-response_algo}
applied to \emph{$\mathcal{G}_{\texttt{{mimo}}}$} converges to the
unique NE of the NEP, for any given updating schedule of the players
satisfying assumptions A1\emph{)}-A3\emph{)}.\end{theorem}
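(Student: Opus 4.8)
The plan is to establish Theorem~\ref{Th:AIWFA_MIMO} as the exact complex counterpart of Theorem~\ref{Theo-async_best-response_NEP} (hence of Theorem~\ref{Th:AIWFA}), exploiting that, thanks to the inner product $\langle\bullet,\bullet\rangle$ in \eqref{eq:inner_product_matrix}, the Frobenius norm, and the augmented-Jacobian calculus of Sec.~\ref{sec:VI_complex_domain}, the complex $\text{VI}(\mathcal{P}^{\texttt{mimo}},\mathbf{F}^{\mathbb{C}})$ behaves formally like a real partitioned VI on the $2Q n_T^2$-dimensional real inner-product space underlying $\prod_i\mathbb{C}^{n_T\times n_T}$. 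First I would invoke Proposition~\ref{Lemma_NEP_complex_VI} to rewrite $\mathcal{G}_{\texttt{mimo}}$ as $\text{VI}(\mathcal{P}^{\texttt{mimo}},\mathbf{F}^{\mathbb{C}})$ with $\mathbf{F}^{\mathbb{C}}$ as in \eqref{eq:F_MIMO_G}, and Proposition~\ref{Prop:MIMO_Game}(c) to record that, since $\boldsymbol{\Upsilon}_{\mathbf{F}^{\mathbb{C}}}^{\texttt{mimo}}$ is a P-matrix, $\mathbf{F}^{\mathbb{C}}$ is a uniformly P-function and $\mathcal{G}_{\texttt{mimo}}$ has a unique NE $\mathbf{Q}^{\star}$; this is the point to which the iterates must converge.

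The heart of the argument is to show that the best-response map $\mathbf{T}(\mathbf{Q})\triangleq(\mathbf{T}_i(\mathbf{Q}_{-i}))_{i=1}^I$, with $\mathbf{T}_i(\mathbf{Q}_{-i})\triangleq\arg\max_{\mathbf{Q}_i\in\mathcal{P}_i^{\texttt{mimo}}}R_i(\mathbf{Q}_i,\mathbf{Q}_{-i})$, is a block-contraction and that Algorithm~\ref{async_best-response_algo} is precisely its totally asynchronous iteration. I would proceed in four steps. (i) Well-posedness: the P property of $\boldsymbol{\Upsilon}_{\mathbf{F}^{\mathbb{C}}}^{\texttt{mimo}}$ forces each diagonal entry $\kappa_i^{\min}>0$ [cf.\ \eqref{eq:def_alpha_and_beta_Jac_complex_VI} and the discussion around Remark~\ref{Remark_uniqueness}], so $R_i(\bullet,\mathbf{Q}_{-i})$ is strongly concave and $\mathbf{T}_i$ is single-valued and continuous; thus Step~S.2 of the algorithm is unambiguous. (ii) Characterization: by Lemma~\ref{Lemma_min_principle}, $\mathbf{W}_i=\mathbf{T}_i(\mathbf{Q}_{-i})$ is the unique point of $\mathcal{P}_i^{\texttt{mimo}}$ with $\langle \mathbf{Y}_i-\mathbf{W}_i,\ \mathbf{F}_i^{\mathbb{C}}(\mathbf{W}_i,\mathbf{Q}_{-i})\rangle\ge0$ for all $\mathbf{Y}_i\in\mathcal{P}_i^{\texttt{mimo}}$. (iii) Sensitivity: writing this VI at two profiles $\mathbf{Q}_{-i}$ and $\mathbf{Q}'_{-i}$, adding the inequalities with test points swapped, and combining strong monotonicity of $\mathbf{F}_i^{\mathbb{C}}(\bullet,\mathbf{Q}_{-i})$ (constant $\kappa_i^{\min}$) with the Lipschitz bound in $\mathbf{Q}_{-i}$ supplied by $\xi_{ij}^{\max}$ (finite, since the off-diagonal augmented-Jacobian blocks are bounded on $\mathcal{P}^{\texttt{mimo}}$), one gets $\|\mathbf{T}_i(\mathbf{Q}_{-i})-\mathbf{T}_i(\mathbf{Q}'_{-i})\|_F\le\sum_{j\ne i}(\xi_{ij}^{\max}/\kappa_i^{\min})\,\|\mathbf{Q}_j-\mathbf{Q}'_j\|_F$. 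In the condensed vector of block-norms this says $\mathbf{T}$ is nonexpansive with matrix $\boldsymbol{\Gamma}_{\mathbf{F}^{\mathbb{C}}}^{\texttt{mimo}}$ (zero diagonal, entries $\xi_{ij}^{\max}/\kappa_i^{\min}$ off-diagonal, the analogue of \eqref{eq:Upsilon_matrix}); since $\boldsymbol{\Upsilon}_{\mathbf{F}^{\mathbb{C}}}^{\texttt{mimo}}$ is a P-matrix iff $\rho(\boldsymbol{\Gamma}_{\mathbf{F}^{\mathbb{C}}}^{\texttt{mimo}})<1$, there is a positive weight vector $\mathbf{w}$ for which the weighted block-maximum norm $\|\mathbf{Q}\|_{\mathbf{w}}\triangleq\max_i\|\mathbf{Q}_i\|_F/w_i$ makes $\mathbf{T}$ a genuine contraction. (iv) Convergence: the contraction is with respect to a block-maximum norm on a Cartesian product, so the totally asynchronous convergence theorem for block-contractions (\cite[Sec.~6.2]{Bertsekas_Book-Parallel-Comp}) applies under A1)--A3), giving $\mathbf{Q}^{(n)}\to\mathbf{Q}^{\star}$, the unique fixed point of $\mathbf{T}$, which coincides with the NE of $\mathcal{G}_{\texttt{mimo}}$.

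I expect the main obstacle to be step~(iii): proving the contraction estimate directly in the space of complex matrices. One must verify that the chain of inequalities used in the real case (Proposition~\ref{Proposition_contraction_best-response} behind Theorem~\ref{Theo-async_best-response_NEP}) transfers verbatim once all scalar products are read as $\langle\bullet,\bullet\rangle$, all gradients as conjugate $\mathbb{R}$-derivatives, all Hessians/Jacobians as augmented Jacobians, and all lengths as Frobenius norms; the only genuinely new ingredient is the complex first-order development \eqref{eq:Taylor_coomplex_domain_r2} together with Proposition~\ref{monotonicity_complexVI}, which ensure that strong monotonicity of $\mathbf{F}_i^{\mathbb{C}}$ in $\mathbf{Z}_i$ and boundedness of its off-diagonal augmented-Jacobian blocks yield exactly the constants $\kappa_i^{\min}$ and $\xi_{ij}^{\max}$ entering $\boldsymbol{\Upsilon}_{\mathbf{F}^{\mathbb{C}}}^{\texttt{mimo}}$ and $\boldsymbol{\Gamma}_{\mathbf{F}^{\mathbb{C}}}^{\texttt{mimo}}$. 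Everything else---well-posedness, the equivalence $\boldsymbol{\Upsilon}$ P-matrix $\Leftrightarrow\rho(\boldsymbol{\Gamma})<1$, and the appeal to asynchronous contraction theory---is identical to the real development of Sec.~\ref{sub:Best-response-decomposition-algos} and requires no new idea, so the proof can be stated concisely by pointing to those places with the substitutions above.
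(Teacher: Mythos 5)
Your proposal is correct and follows exactly the route the paper intends: the paper omits this proof, stating only that it uses "the same techniques" as the real case, and your argument is precisely the complex transplant of Appendix C (the block-contraction of the best-response map via the minimum principle, the mean-value theorem for the augmented Jacobian, the constants $\kappa_i^{\min}$, $\xi_{ij}^{\max}$, and the equivalence between the P property of $\boldsymbol{\Upsilon}_{\mathbf{F}^{\mathbb{C}}}$ and $\|\boldsymbol{\Gamma}_{\mathbf{F}^{\mathbb{C}}}\|_{\infty,\text{mat}}^{\mathbf{c}}<1$), followed by the totally asynchronous convergence theorem. No gaps; your step (iii) is indeed the only place requiring the new complex machinery of Sec.~\ref{sec:VI_complex_domain} and Appendix H, exactly as you anticipate.
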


The algorithm has the same desired features as the one obtained for
the SISO case; see Sec. \ref{sub:Game G_the-SISO}.\textbf{ }\medskip{}

\noindent \textbf{The case of monotone NEP} $\mathcal{G}_{\texttt{{mimo}}}$
(medium-interference regime).\textbf{ }When $\boldsymbol{{\Upsilon}}_{\mathbf{F}^{\mathbb{C}}}^{\text{\texttt{{mimo}}}}\succeq\mathbf{0}$,
$\mathcal{G}_{\texttt{{mimo}}}$ is a monotone complex NEP; in the
presence of multiple solutions, we need to choose a merit function
assessing the quality of a NE of $\mathcal{G}_{\texttt{{mimo}}}$;
similarly to the SISO case, we consider the overall interference generated
by all the SUs: 
\begin{equation}
\phi(\mathbf{Q})\triangleq\sum_{i=1}^{I}w_{i}\,\sum_{j\neq i}\text{{tr}}\left(\mathbf{H}_{ij}\mathbf{Q}{}_{j}\mathbf{H}_{ij}^{H}\right),\label{outer_merit_function_MIMO}
\end{equation}
where $w_{i}$'s are given positive weights. Building on the equivalence
between the game $\mathcal{G}_{\texttt{{mimo}}}$ and the $\text{{VI}}(\mathcal{P}^{\texttt{{mimo}}},$
$\,\mathbf{F}^{\mathbb{C}})$ (Lemma \ref{Lemma_min_principle}),
the NE selection problem based on the merit function $\phi$ becomes:
\begin{equation}
\begin{array}{ll}
\limfunc{minimize}\limits _{\mathbf{Q}} & \phi(\mathbf{Q})\\[5pt]
\text{subject to} & \mathbf{Q}\,\in\,\sol_{\mathbb{C}}(\mathcal{P}^{\texttt{{mimo}}},\,\mathbf{F}^{\mathbb{C}}).
\end{array}\label{eq:NEP constrained opt_MIMO}
\end{equation}

A solution of (\ref{eq:NEP constrained opt_MIMO}) can be computed
in a distributed way using Algorithm \ref{algo2} applied to (\ref{eq:NEP constrained opt_MIMO});
the interested reader can find the lengthy details in \cite[Sec. 6.2]{Scutari-Facchinei-Pang-Palomar_IT_PI}.
The convergence result is stated in the following theorem. 

\begin{theorem}\label{convergence_NE_sel-MIMO} Given the optimization
problem \eqref{eq:NEP constrained opt_MIMO}, suppose that: i) \emph{$\mathcal{G}_{\texttt{{mimo}}}$}
is a complex monotone  NEP; ii) $\{\varepsilon^{(n)}\}$ is such that
$\varepsilon^{(n)}\rightarrow0$ and $\sum_{n=0}^{\infty}\,\varepsilon^{(n)}=\infty$;
and iii) $\tau$ is chosen so that \emph{$\boldsymbol{{\Upsilon}}_{\mathbf{F}^{\mathbb{C}}}^{\text{\texttt{{mimo}}}}+\tau\,\mathbf{I}$}
is a P matrix. Then, the sequence $\{\mathbf{Q}^{(n)}\}_{n=0}^{\infty}$
generated by Algorithm \ref{algo2} applied to (\ref{eq:NEP constrained opt_MIMO})
has a limit point and every such point is a solution of \eqref{eq:NEP constrained opt_MIMO}\emph{.}\end{theorem}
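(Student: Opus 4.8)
The plan is to obtain Theorem~\ref{convergence_NE_sel-MIMO} as a direct transcription, into the complex domain, of the combination of Theorem~\ref{Theo-async_best-response_NEP} and Theorem~\ref{the:mod}, relying on the ``verbatim mimic'' principle established in Sec.~\ref{sub:NEP_complex_domain}; the work then consists in checking that the hypotheses of the complex counterparts of those two theorems hold in the present setting. First I would record the VI reformulation: by Lemma~\ref{Lemma_min_principle} (equivalently Proposition~\ref{Lemma_NEP_complex_VI}) the game $\mathcal{G}_{\texttt{mimo}}$ is equivalent to the complex $\text{VI}(\mathcal{P}^{\texttt{mimo}},\mathbf{F}^{\mathbb{C}})$ with $\mathbf{F}^{\mathbb{C}}$ as in \eqref{eq:F_MIMO_G}; under hypothesis (i) $\mathbf{F}^{\mathbb{C}}$ is monotone on $\mathcal{P}^{\texttt{mimo}}$ (Proposition~\ref{Prop:MIMO_Game}(b)), so $\sol_{\mathbb{C}}(\mathcal{P}^{\texttt{mimo}},\mathbf{F}^{\mathbb{C}})$ is convex (complex analog of Theorem~\ref{Theo_existence_uniquenessNE}(b)) and, by Proposition~\ref{Prop:MIMO_Game}(a), nonempty and compact. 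Hence \eqref{eq:NEP constrained opt_MIMO} is a genuine convex program over a compact feasible set.

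Next I would verify the conditions on the merit function and the regularization. The $\phi$ in \eqref{outer_merit_function_MIMO} is \emph{linear} in $\mathbf{Q}$; therefore it is convex, continuously $\mathbb{R}$-differentiable on $\mathcal{P}^{\texttt{mimo}}$, and its conjugate gradient $\nabla_{\mathbf{Q}^{\ast}}\phi$ is constant, hence Lipschitz with constant $L_\phi$ (one may take $L_\phi=0$); moreover, since $\sol_{\mathbb{C}}(\mathcal{P}^{\texttt{mimo}},\mathbf{F}^{\mathbb{C}})$ is compact, the level sets of $\phi$ on the solution set are bounded, so conditions (i)--(iii) on $\phi$ in the complex analog of Theorem~\ref{the:mod} are met. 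Because $\phi$ is linear, its augmented Hessian vanishes, so the augmented Jacobian of $\mathbf{F}^{\mathbb{C}}_{\tau,\varepsilon,\mathbf{y}}\triangleq\mathbf{F}^{\mathbb{C}}+\varepsilon\,\nabla_{\mathbf{Q}^{\ast}}\phi+\tau(\mathbf{Q}-\mathbf{y})$ is $\mathbf{JF}^{\mathbb{C}}(\mathbf{Q})$ plus a $\tau$-multiple of the augmented Jacobian of the identity map; condensing exactly as in the real-case computation leading to \eqref{eq:Upsilon_F_tau} (with the same choice of normalizing matrices used to define $\boldsymbol{\Upsilon}^{\texttt{mimo}}_{\mathbf{F}^{\mathbb{C}}}$ in \eqref{eq:Upsilon_F_Q}) then gives $\boldsymbol{\Upsilon}_{\mathbf{F}^{\mathbb{C}}_{\tau,\varepsilon,\mathbf{y}}}=\boldsymbol{\Upsilon}^{\texttt{mimo}}_{\mathbf{F}^{\mathbb{C}}}+\tau\mathbf{I}$, independently of $\mathbf{y}$ and $\varepsilon$ — the complex counterpart of Lemma~\ref{lem: monotone tau epsilon P}. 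Hypothesis (iii) states precisely that this matrix is a P matrix, so by Proposition~\ref{monotonicity_complexVI}(e) $\mathbf{F}^{\mathbb{C}}_{\tau,\varepsilon^{(n)},\mathbf{x}^{(n)}}$ is a uniformly P function and $\mathcal{G}_{\tau,\varepsilon^{(n)},\mathbf{x}^{(n)}}$ is a complex $P_{\boldsymbol{\Upsilon}}$ NEP; by the complex analog of Theorem~\ref{Theo-async_best-response_NEP} it has a unique NE, computable with convergence guarantee by Algorithm~\ref{async_best-response_algo}. Consequently Algorithm~\ref{algo2} applied to \eqref{eq:NEP constrained opt_MIMO} is well defined.

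With these ingredients in place, I would invoke the complex-domain version of Theorem~\ref{the:mod}: since $\{\varepsilon^{(n)}\}$ satisfies $\varepsilon^{(n)}>0$, $\varepsilon^{(n)}\downarrow0$, and $\sum_n\varepsilon^{(n)}=\infty$ (hypothesis (ii)), the sequence $\{\mathbf{Q}^{(n)}\}$ generated by Algorithm~\ref{algo2} is bounded and every limit point solves \eqref{eq:NEP constrained opt_MIMO}. The proof of that complex version is obtained from the proof of Theorem~\ref{the:mod} in Appendix~\ref{sec:Proof-of-Theorem_NE_Selection} by replacing the Euclidean inner product and norm by $\langle\bullet,\bullet\rangle$ in \eqref{eq:inner_product_matrix} and the Frobenius norm, and by using Proposition~\ref{VI_monotonicity_closed_sets} in place of \eqref{monotonicity-convexity-connection}/Proposition~\ref{monotonicity}. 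I expect the main obstacle to be exactly this last substitution: $\mathcal{P}^{\texttt{mimo}}$ consists of Hermitian positive semidefinite matrices and hence has empty interior, so ordinary positive (semi)definiteness of $\mathbf{JF}^{\mathbb{C}}+\tau\mathbf{I}$ is neither available nor needed — one must instead work with the ``augmented'' positive definiteness of Definition~\ref{The-augmented-Jacobian} restricted to the subspace $\mathcal{S}_{\mathcal{K}}$ of Hermitian matrices via Proposition~\ref{VI_monotonicity_closed_sets}, and verify that the contraction estimate for the best-response map underlying Theorem~\ref{Theo-async_best-response_NEP} (Proposition~\ref{Proposition_contraction_best-response}) carries over line by line in the complex norm. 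Once that transcription is certified, the remainder of the argument is routine bookkeeping.
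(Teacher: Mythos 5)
Your proposal is correct and follows essentially the same route the paper intends: the paper omits this proof, stating only that it is obtained by combining the complex-VI machinery of Sec.~\ref{sec:VI_complex_domain} (Proposition \ref{Prop:MIMO_Game}, Propositions \ref{monotonicity_complexVI}--\ref{VI_monotonicity_closed_sets}) with a verbatim transcription of the real-case arguments (Theorems \ref{Theo-async_best-response_NEP} and \ref{the:mod}, Lemma \ref{lem: monotone tau epsilon P}) into the inner-product space $(\mathbb{C}^{n\times m},\langle\bullet,\bullet\rangle)$, and your reconstruction fills in exactly those steps, correctly flagging the empty-interior issue and the linearity of $\phi$ as the points needing care.
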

A sufficient condition for matrix $\boldsymbol{{\Upsilon}}_{\mathbf{F}^{\mathbb{C}}}^{\text{\texttt{{mimo}}}}+\tau\,\mathbf{I}$
in Theorem \ref{convergence_NE_sel-MIMO} to be P is 
\begin{equation}
\tau\,>\,{\displaystyle {\max_{1\leq i\leq I}}\,\left\lbrace {\displaystyle {\sum_{j\neq i}}{\displaystyle \rho\left(\mathbf{H}_{ii}^{\dagger H}\mathbf{H}_{ij}^{H}\mathbf{H}_{ij}\mathbf{H}_{ii}^{\dagger}\right)\cdot{\textsf{INNR}}{}_{ij}}}\right\rbrace -1.}\label{eq:bounds_MIMO}
\end{equation}

\section{Numerical Results\label{sub:Numerical-Results}\vspace{-0.1cm}}

In this section, we compare some of the proposed algorithms solving
NEP $\mathcal{G}_{\texttt{{siso}}}$ in (\ref{eq:max_rate_game})
and $\mathcal{G}_{\texttt{{mimo}}}$ in (\ref{eq:MIMO_Game}) in terms
of achievable rates and convergence speed. We also compare the performance
of our distributed schemes with those achievable computing a stationary
solution of the related sum-rate optimization problem (under the same
power and interference constraints). For the latter, we considered
the pricing-based algorithm proposed in \cite{ScutariPalomarFacchineiPang_NETGCOP11}
for the SISO case, and the solution methods proposed in \cite{ScutariPalomarFacchineiPang_NETGCOP11,KimGiannakisIT11}
for the MIMO setting; we slight modified the schemes in \cite{ScutariPalomarFacchineiPang_NETGCOP11,KimGiannakisIT11}
in order to include the interference constraints we have proposed
in this paper. Algorithms in \cite{ScutariPalomarFacchineiPang_NETGCOP11,KimGiannakisIT11}
are the benchmark methods for this kind of problems. Finally, we contrast
our best-response schemes with gradient-response ones \cite{Yin-Shanbhag-Mehta_TAC10}.

\noindent \textbf{Example \#1}\textbf{\emph{ }}\textbf{(NE selection
vs. no selection)}. In Fig. \ref{fig1a}, we plot the SUs' sum-rate
$\sum_{i=1}^{I}r_{i}(\mathbf{p})$ versus inner iterations (i.e.,
number of overall iterations required by the algorithm to converge),
achieved by the following algorithms applied to the NEP $\mathcal{G}_{\texttt{{siso}}}$:
i) The Jacobi version of the proximal decomposition algorithm described
in Algorithm \ref{alg:PDA} (green line curve); ii) the Jacobi version
of Algorithm \ref{algo_power_bi_level} (blue line curve), where $\phi(\mathbf{p})$
is given by (\ref{outer_merit_function}); iii) the same Algorithm
\ref{algo_power_bi_level} applied to (\ref{eq:NEP constrained opt}),
where $\phi(\mathbf{p})$ in (\ref{outer_merit_function}) is replaced
by $-\phi(\mathbf{p})$ (red line curve); and iv) the Jacobi Dynamic
Pricing-based Algorithm (DPA) reaching a stationary solution of the
sum-rate maximization problem (black line curve) \cite{ScutariPalomarFacchineiPang_NETGCOP11}.
The choice of the merit function $-\phi(\mathbf{p})$ leads to the
selection of the NE solution that maximizes the overall MUI in the
system, which provides a benchmark of the sum-rate variability and
an estimate of the worst-case performance over the set of the Nash
equilibria of the game. Any best-response solution involved in the
optimization problems is computed using the waterfilling-like expression
introduced in Sec. \ref{sub:Preliminary-results:-Efficient_proximalBR}.

The above algorithms are tested in the following setting. We considered
a CR network modeled as a Gaussian parallel IC, composed of $I=25$
active users and two PUs; all the users are randomly placed within
an hexagonal cell; the channels of all the links are simulated as
FIR filter of order $L=10$, where each tap is a zero mean complex
Gaussian random variable with variance equal to $1/L$; the number
of carriers is $N=128$.. We focused on two scenarios, namely low
interference regime (corresponding to $\boldsymbol{{\mathbf{\Upsilon}}}_{\mathbf{G}}$
being a P matrix) and high interference regime (corresponding to $\mathbf{J}\mathbf{G}_{\text{{low}}}$
being positive semidefinite). The thermal noise variance ${\sigma_{i}}^{2}(k)$
is set to one for all $k$ and $i$, and the Signal-to-Noise-Ratio
(SNR) of each user is set to $\text{SNR}_{i}\triangleq10\,\log_{10}\left(P_{i}/{\sigma_{i}}^{2}(k)\right)=5$dB
for all $i$ and $k$. In the interference-temperature limit constraints,
for the sake of simplicity, we set the same interference thresholds
$\boldsymbol{{\alpha}}_{i}=\alpha\,\mathbf{1}$ for all the SUs, with
$\alpha=10^{-3}$ (this choice of $\alpha$ is such that the power
budget constraints of the SUs are not active at any optimal solution).
All the algorithms are initialized by the same starting point, chosen
randomly in the set $\mathcal{P}^{\,\texttt{{siso}}}$, and are terminated
when the Euclidean norm of the error in two consecutive iterations
becomes smaller than $10^{-6}$. In the PTDA, we chose the center
$\bar{{\mathbf{p}}}$ of the regularization randomly in $\mathcal{P}^{\,\texttt{{siso}}}$,
$\tau=3.5$, and $\varepsilon^{(n)}=\varepsilon^{(0)}/(1+10\, n)$,
where $\varepsilon^{(0)}=0.5$; in all the algorithms, the termination
criterion of the inner loop, if any, is the same as the outer loop.
The above choice of the free parameters is the result of some preliminary
tests; however we remark that the proposed algorithm has been observed
to be robust against the variation of such parameters.

The following comments are in order from Fig. \ref{fig1a}. In the
case of multiple NE, the sum-rate performance of the network can vary
significantly over the set of the NE; the relative sum-rate gap between
the ``worst'' and ``best'' NE is more than $90\%$. As expected,
Algorithm \ref{algo_power_bi_level} outperforms Algorithm \ref{alg:PDA},
which validates the use of criterion (\ref{outer_merit_function})
in choosing the NE. Moreover the sum-rate loss with respect to the
DPA is very limited, and more than acceptable if one takes into account
that, to be implemented, the DPA requires a significant signaling
among the users at each iteration. There are scenarios where such
a signaling exchange is not feasible (e.g., when the users are heterogeneous
systems that are not willing to cooperate); in all these cases Algorithm
\ref{algo_power_bi_level} is a good candidate. When the NE of the
game is unique {[}$\boldsymbol{{\Upsilon}}_{\mathbf{G}}$ is a P matrix{]},
as expected, both Algorithms \ref{alg:PDA} and \ref{algo_power_bi_level}
converge to the same sum-rate solution. Interestingly, this solution
seems to coincide also with the one achieved by the DPA. Finally,
note that our algorithms converge quite fast. 

Fig. \ref{fig1b} shows the average performance of algorithms i)-iii)
considered in Fig. \ref{fig1a}. We plotted the average sum-rate versus
the $\text{SNR}\triangleq P$, with $P_{i}=P$ and ${\sigma_{i}^{2}(k)}=1$
for all $i$ and $k$, achievable at the NE reached by Algorithm \ref{alg:PDA}
and Algorithm \ref{algo_power_bi_level}. The curves are averages
over $5000$ random channel realization chosen so that the $\mathbf{J}\mathbf{G}_{\text{{low}}}\succeq\mathbf{0}$.
The rest of the parameters are the same as in Fig. \ref{fig1a}. Fig.
\ref{fig1b} confirms the superior performance of Algorithm \ref{algo_power_bi_level}
with respect to Algorithm \ref{alg:PDA} that does not perform any
equilibrium selection. Finally, it is worth observing that\emph{ }projection-response
algorithms proposed in\emph{ }\cite{Yin-Shanbhag-Mehta_TAC10} and
\cite{Facchinei-Pang_FVI03,Konnov_VI_book} cannot be used to solve
the P$_{\boldsymbol{{\Upsilon}}}$ NEP $\mathcal{G}_{\texttt{{siso}}}$
(unless it is also monotone), even if the game has a unique NE.
\begin{figure*}[t]
\center\subfigure[Sum-rate of the SUs versus inner iterations \label{fig1a}]{\includegraphics[scale=0.25]{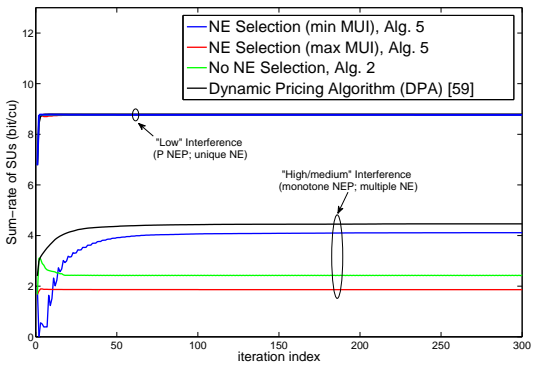}}\hspace{1cm}
\subfigure[Average sum-rate of the SUs versus the SNR \label{fig1b}]{\includegraphics[scale=0.27]{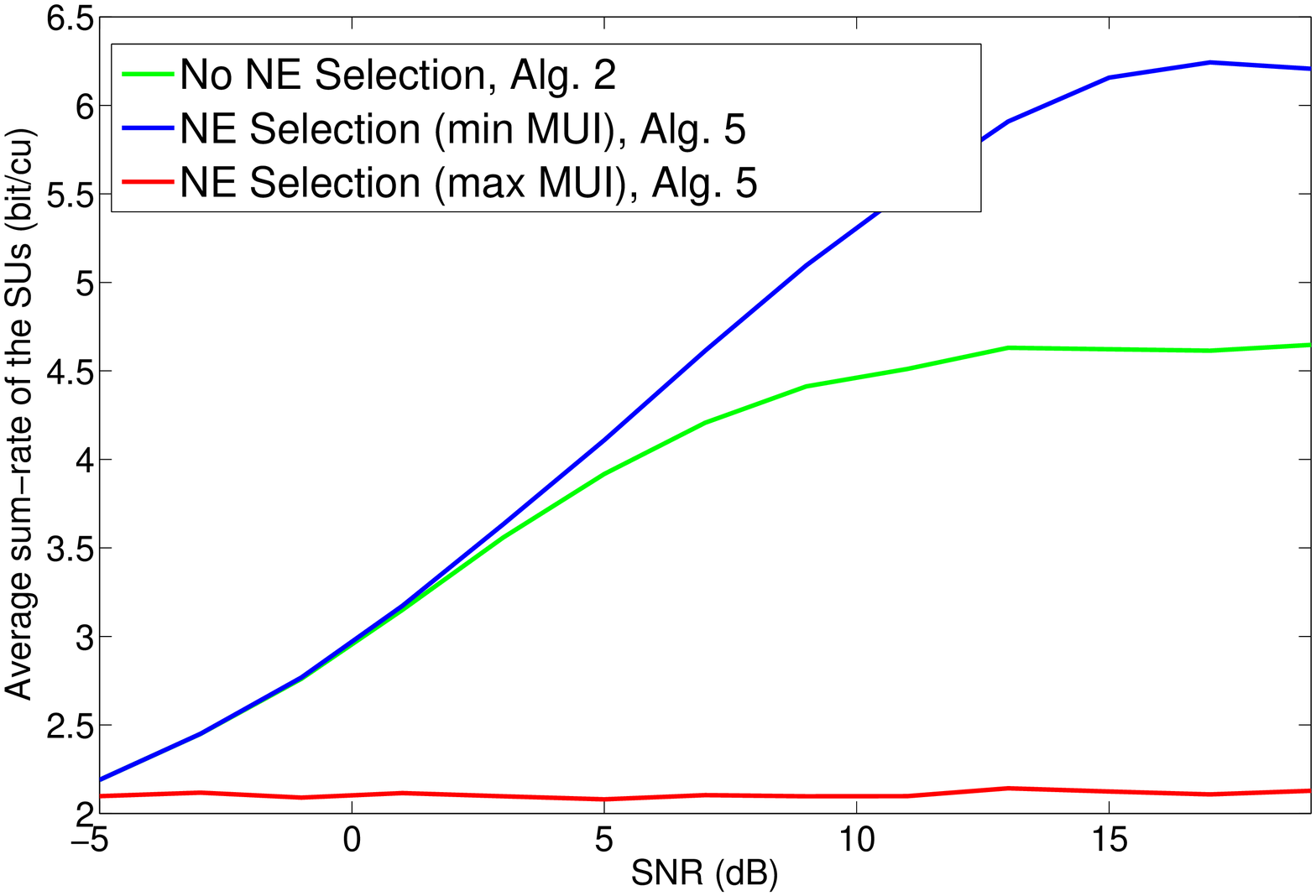}}

\caption{{\small{\label{fig1} Comparison of distributed algorithms solving
$\mathcal{G}_{\texttt{{siso}}}$.}}}
\end{figure*}

\noindent \textbf{Example \#2}\textbf{\emph{ }}\textbf{(Comparison
with gradient-response algorithms for monotone VIs)}\emph{.} In Fig.
\ref{fig:fig_intro} we compare some algorithms solving the game $\mathcal{G}_{\texttt{{siso}}}$
under the monotonicity assumption ($\mathcal{G}_{\texttt{{siso}}}$
is a monotone NEP); the setup is the same of Fig. \ref{fig1}. More
specifically, we plot the SUs rates versus the iteration index achieved
by Algorithm 2 and the Iterative Tikhonov Algorithm recently proposed
in \cite{Yin-Shanbhag-Mehta_TAC10} for solving monotone VIs. In the
latter algorithm we chose the variable step-size sequences $\gamma_{n}=n^{-0.4}$
and $\delta_{n}=n^{-0.49}$ so that (sufficient) conditions for the
convergence given in \cite[Prop. 15.1]{Konnov_VI_book} are satisfied
(we use the same notation as in \cite{Yin-Shanbhag-Mehta_TAC10};
see therein for the details). The figure clearly shows that our best-response-based
scheme converges in a very few iterations, whereas the gradient-response
algorithm needs much more iterations (two orders of magnitude more)
to reach comparable performance. The same convergence properties as
in Fig. \ref{fig:fig_intro} has been experienced for all the channel
realizations we simulated. 
\begin{figure}[H]
\vspace{-0.5cm}\center\includegraphics[height=7.5cm]{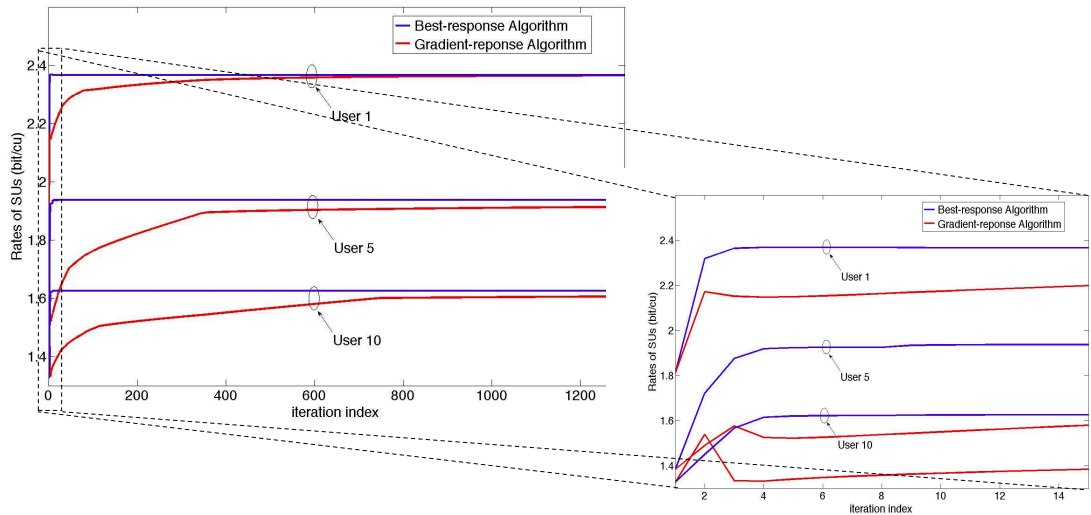}\vspace{-0.4cm}
\caption{{\small{Typical behavior of gradient-response versus best-response
algorithms solving the monotone NEP $\mathcal{G}_{\texttt{{siso}}}$:
rate of three out 25 users versus the iterations achievable by the
gradient-response algorithm \cite{Yin-Shanbhag-Mehta_TAC10} (red-line
curves) and the simultaneous best-response algorithm described in
Algorithm 1 (blue-line curves).}}\label{fig:fig_intro} }
\end{figure}

\noindent \textbf{Example \#3}\textbf{\emph{ }}\textbf{(NE selection
vs. stationary solutions: the MIMO case)}\emph{.} We compare here
some of the proposed algorithms in the MIMO setting. We consider the
same scenario as in Fig. \ref{fig1a}, with the only difference that
now all the transceivers are equipped with three antennas and there
are $I=5$ active SUs. The channels are MIMO frequency-selective (the
order of the channels is $L=10$ and the number of subcarriers is
$N=128$) and are generated in order to guarantee that the matrix
$\boldsymbol{{\Upsilon}}_{\mathbf{F}^{\mathbb{C}}}$ in (\ref{eq:Upsilon_F_Q})
is positive semidefinite, resulting thus in a monotone NEP $\mathcal{G}^{\texttt{{mimo}}}$.
Soft average power shaping interference constraints are imposed to
each SU along the direction of the primary transmitters; all the interference
threshold are assumed to be equal and set to $I_{pi}^{\limfunc{ave}}=10^{-3}$.
The best-response of each user cannot be computed in closed form (unless
the proximal regularization is not included in the objective function),
but can be efficiently computed using any nonlinear programming solvers
(each player's optimization problem is strongly convex). In Fig. \ref{fig5_GMIMO}
we plot the sum-rate versus the inner iteration index achieved by
the Jacobi version of Algorithm 2 (green-line curve), Algorithm \ref{algo2}
based on the merit function $\phi(\mathbf{Q})$ defined in (\ref{outer_merit_function_MIMO})
(blue line curve); and iii) the Gauss-Seidel based algorithm proposed
in \cite{KimGiannakisIT11} to compute stationary solutions of the
sum-rate problem (we adapted the algorithm in \cite{KimGiannakisIT11}
including the interference constraints in the feasible set of the
optimization problem). Fig. \ref{fig5_GMIMO} shows the trade-off
between performance and signaling that is achievable by the three
algorithms: Algorithm 7 implementing a NE selection leads to better
sum-rates than Algorithm 2 at the cost of almost the same signaling
among the SUs of classical MIMO IWFAs (a constant price depending
on the cross-channel matrices needs to be computing by each SU before
running the algorithms); higher sum-rates can be achieved using algorithm
in \cite{KimGiannakisIT11} but at the cost of more signaling among
the SUs (note that in the MIMO case, the scheme \cite{KimGiannakisIT11}
requires the SUs to exchange matrix informations at each iteration).
\begin{figure}[h]
\vspace{-0.3cm}\center \includegraphics[height=6.5cm]{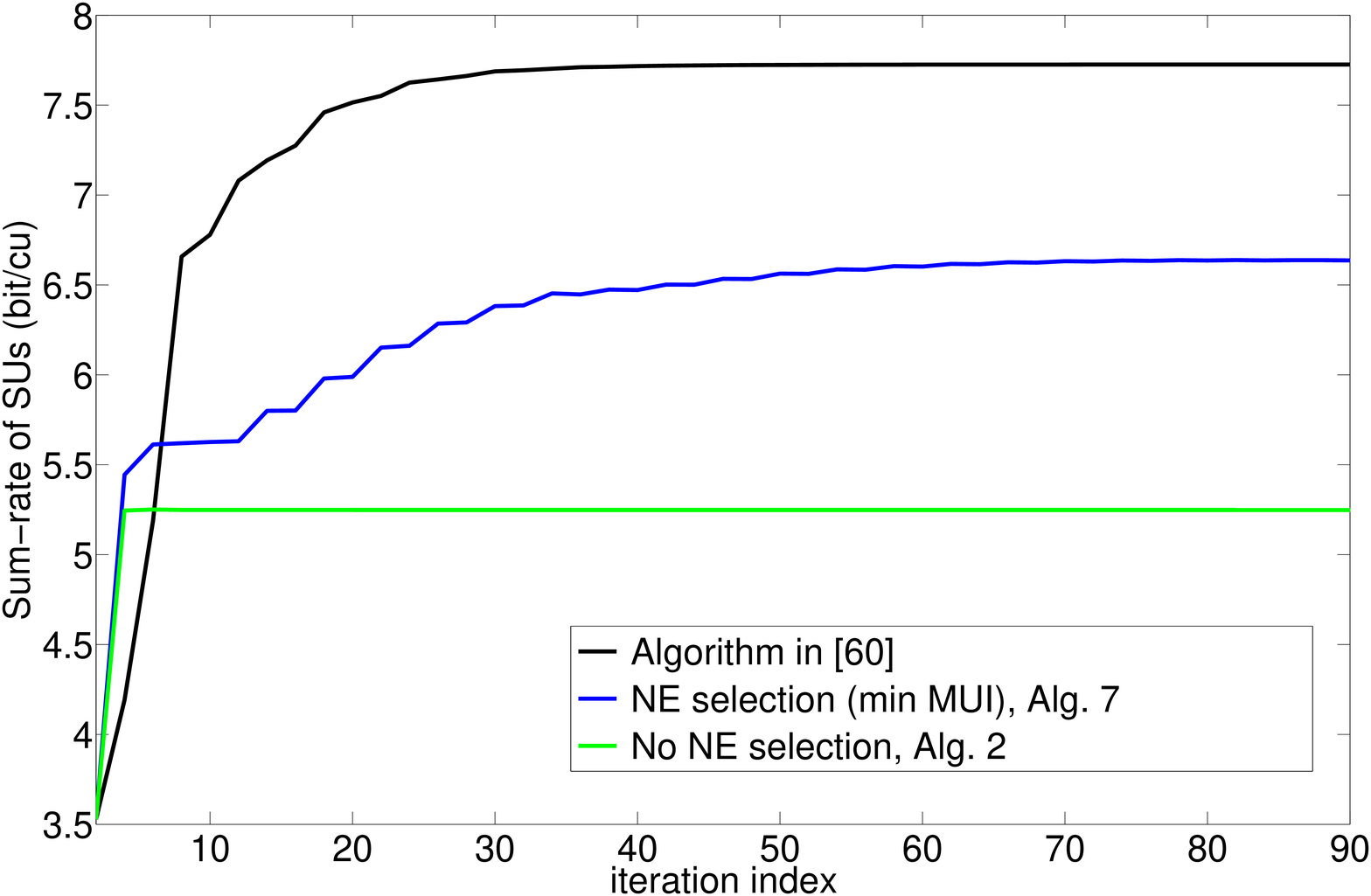}\vspace{-0.6cm}
 \caption{{\small{Comparison of distributed algorithms solving }}$\mathcal{G}_{\texttt{{mimo}}}${\small{:
Sum-rate of the SUs versus inner iterations.}}}

{\small{\label{fig5_GMIMO} }} \vspace{-0.6cm}
\end{figure}

\section{Conclusions\label{sec:Conclusions}\vspace{-0.2cm}}

In this paper, we have proposed a novel method based on VIs suitable
to study and solve general real or complex player-convex NEPs, having
(possibly) multiple solutions. The proposed framework has many desirable
new features, such as: i) it can be applied to real and complex NEPs
having no specific structure and for which the best-response mapping
is not available in closed form or unique; ii) the algorithms proposed
for computing a NE converge under mild conditions that do not imply
the uniqueness of the equilibrium; and iii) in the presence of multiple
equilibria, one can control the quality of the computed solution by
guaranteeing convergence to the ``best'' NE (according to some prescribed
criterion), at the cost of some signaling among the players. These
features make the proposed algorithms applicable to a variety of scenarios
in different fields; the choice of one scheme with respect to the
other will depend on the trade-off between signaling and performance
that the users are willing to exchange/achieve. \textcolor{black}{The
analysis of algorithms for complex NEPs hinges on the definition of
the VI problem in the complex domain; this new class of VIs along
with their properties are introduced and studied for the fist time
in this paper. }

Finally, to have suitable case studies, we applied the proposed framework
to solve some novel NEPs modeling various resource allocation problems
in SISO/MIMO CR systems. The resulting distributed best-response algorithms
were shown to converge even when current schemes proposed in the literature
for related problems fail. Numerical results showed the superiority
of our (distributed) approach with respect to plain noncooperative
solutions as well as good performance with respect to centralized
solutions.\vspace{-0.3cm}

\appendix

\section*{Appendix\vspace{-0.3cm}}

\section{(Partitioned) Variational Inequalities\label{sec:A-Theory-of_partitioned_VI}\vspace{-0.3cm}}

The simplest way to see a VI is as a generalization of the minimum
principle for convex optimization problems, which is recalled next.
Consider a convex optimization problem (in the minimization form),
whose objective function $f:\mathcal{Q}\mapsto\mathbb{R}$ is convex
and continuously differentiable on the feasible set $\mathcal{Q}\subseteq\mathbb{R}^{n}$,%
\footnote{When we say that a (vector-valued) function is continuous or continuously
differentiable on a closed set we mean that the function is so on
an open set containing the closed set.%
} which is a convex and closed subset of $\mathbb{R}^{n}$. A point
$\mathbf{x}^{\star}\in\mathcal{Q}$ is an optimal solution of the
optimization problem if and only if\vspace{-0.2cm} 
\begin{equation}
\left(\mathbf{x}-\mathbf{x}^{\star}\right)^{T}\nabla f\left(\mathbf{x}^{\star}\right)\geq0,\qquad\forall\mathbf{x}\in\mathcal{Q}.\label{eq:minimum_principle}
\end{equation}

The VI problem is a generalization of the minimum principle (\ref{eq:minimum_principle})
where the gradient $\nabla f$ is substituted by a general real-valued
vector function $\mathbf{F}$. More formally, we have the following.
Let $\mathcal{Q}\subseteq\mathbb{R}^{n}$ be a nonempty, closed, and
convex set and let $ $$\mathbf{F}:$ $\mathcal{Q}\to$ $\mathbb{R}$$^{n}$
be a vector-valued real function. The VI ($\mathcal{Q},$ $\mathbb{\mathbf{F)}}$
is the problem of finding a feasible point $\mathbf{x}^{\star}\in\mathcal{Q}$
such that \cite[Def. 1.1.1]{Facchinei-Pang_FVI03} 
\begin{equation}
\left(\mathbf{x}-\mathbf{x}^{\star}\right)^{T}\mathbf{F}\left(\mathbf{x}^{\star}\right)\geq0,\qquad\forall\mathbf{x}\in\mathcal{Q}.\label{eq:VI_def-1}
\end{equation}

\noindent The set of solutions to this problem is denoted by $\limfunc{SOL}(\mathcal{Q},\mathbf{F})$. 

Several standard problems in nonlinear programming, game theory, and
nonlinear analysis can be naturally formulated as a VI problem; many
examples can be found in \cite[Ch. 1]{Facchinei-Pang_FVI03}, \cite{Scutari-Palomar-Facchinei-Pang_SPMag10},
and \cite{Konnov_VI_book}. Below we summarize some known facts and
definitions about VI.\vspace{-0.2cm}

\subsection{Solution analysis\label{sub:Solution-analysis}}

In order to present results about the existence and the structure
of the solution set of a VI, we introduce some function classes. \vspace{-0.3cm}

\begin{definition} \label{Def_monotonicity} \label{Def_P_properties}
A mapping $\mathbf{F}:\mathcal{Q}\mathcal{\rightarrow}\mathbb{R}^{n}$,
with $\mathcal{Q}$ closed and convex, is said to be \vspace{-0.2cm}
\begin{description}
\item [{{{\rm}(i)}}] \emph{monotone} on $\mathcal{Q}$ if for all
pairs $\mathbf{x}$ and $\mathbf{y}$ in \emph{$\mathcal{Q}$},\vspace{-0.2cm}
\begin{equation}
(\,\mathbf{x}-\mathbf{y}\,)^{T}(\,\mathbf{F}(\mathbf{x})-\mathbf{F}(\mathbf{y})\,)\,\geq0;\label{eq:def_monotonicity}
\end{equation}

\item [{{(ii)}}] \emph{strictly monotone} if for all pairs $\mathbf{x}\neq\mathbf{y}$
in \emph{$\mathcal{Q}$ the inequality in (\ref{eq:def_monotonicity})
is strict;} 
\item [{{{\rm}(iii)}}] \emph{strongly monotone} on $\mathcal{Q}$
if there exists a constant $c_{\limfunc{sm}}>0$ such that for all
pairs $\mathbf{x}$ and $\mathbf{y}$ in \emph{$\mathcal{Q}$}, 
\begin{equation}
{\displaystyle (\,\mathbf{x}-\mathbf{y}\,)^{T}(\,\mathbf{F}(\mathbf{x})-\mathbf{F}(\mathbf{y})\,)\,\geq\, c_{\limfunc{sm}}\,\|\,\mathbf{x}-\mathbf{y}\,\|^{2}.}\label{eq:strongly monotone function}
\end{equation}
The constant $c_{\limfunc{sm}}$ is called strong monotonicity constant. 
\end{description}
If we assume a Cartesian product structure, i.e. $\mathbf{F}=(\mathbf{F}_{i}(\mathbf{x}))_{i=1}^{I}$
and $\mathcal{Q}=\prod_{i}\mathcal{Q}_{i}$, then $\mathbf{F}$ is
said to be 
\begin{description}
\item [{{{\rm}(iv)}}] \textsl{\emph{a}}\textsl{ P}$_{0}$\textsl{
function} on $\mathcal{Q}$ if for all pairs of distinct tuples $\mathbf{x}=(\mathbf{x}_{i})_{i=1}^{I}$
and $\mathbf{y}=(\mathbf{y}_{i})_{i=1}^{I}$ in $\mathcal{Q}$, an
index $i$ exists such that $\mathbf{x}_{i}\neq\mathbf{y}_{i}$ and
\begin{equation}
(\,\mathbf{x}_{i}-\mathbf{y}_{i}\,)^{T}\left(\,\mathbf{F}_{i}(\mathbf{x})-\mathbf{F}_{i}(\mathbf{y})\,\right)\,\geq\,0;\label{eq:P_o_def}
\end{equation}

\item [{{{\rm}(v)}}] a \emph{P function }on $\mathcal{Q}$ if for
all pairs of distinct tuples $\mathbf{x}=(\mathbf{x}_{i})_{i=1}^{I}$
and $\mathbf{y}=(\mathbf{y}_{i})_{i=1}^{I}$ in $\mathcal{Q}$, \emph{the
inequality in (\ref{eq:P_o_def}) is strict;} 
\item [{{{\rm}(vi)}}] a \emph{uniformly P function} on $\mathcal{Q}$
if there exists a constant $c_{\limfunc{uP}}>0$ such that for all
pairs $\mathbf{x}=(\mathbf{x}_{i})_{i=1}^{I}$ and $\mathbf{y}=(\mathbf{y}_{i})_{i=1}^{I}$
in \emph{$\mathcal{Q}$}, 
\begin{equation}
{\displaystyle {\max_{1\leq\, i\,\leq Q}}\,(\,\mathbf{x}_{i}-\mathbf{\mathbf{y}}_{i}\,)^{T}(\,\mathbf{F}_{i}(\mathbf{x})-\mathbf{F}_{i}(\mathbf{y})\,)\,\geq\, c_{\limfunc{uP}}\,\|\,\mathbf{x}-\mathbf{y}\,\|^{2}.}\label{eq:Uniformly P function}
\end{equation}
The constant $c_{\limfunc{uP}}$ is called uniformly P constant. 
\end{description}
If a function $\mathbf{F}$ enjoys one of the properties above, we
will also say that the corresponding VI $(\mathcal{Q},\mathbf{F})$
enjoys the property (i.e., if $\mathbf{F}$ is monotone, we say that
the VI is monotone, etc...).

\end{definition}

Note that in the case of affine functions, $\mathbf{F}(\mathbf{x})=\mathbf{M}\mathbf{x}+\mathbf{b}$,
there is no difference between strict monotonicity and strongly monotonicity,
and the uniform P property coincides with the P property. Monotonicity
properties play in the VI realm the same role that convex functions
play in optimization. In fact, we recall that a differentiable function
$f$ is convex, strictly, strongly convex on a convex set $\mathcal{Q}$
if and only if its gradient is monotone, strictly, strongly monotone
on $\mathcal{Q}$. The P properties can be viewed as an extension
of the monotonicity properties tailored to the possible partitioned
structure of the VI; when the partitioned VI has only one block, i.e.,
$I=1$, the P properties collapse to the corresponding monotonicity
properties. In Fig. \ref{fig:Monotonicity} we summarize in a pictorial
way some well established relations between these various classes
along with some of their consequences. Theorem \ref{Theo_existence_uniqueness}
provides instead a formal statement of some existence and uniqueness
results that will be used throughout the paper. 
\begin{figure}[H]
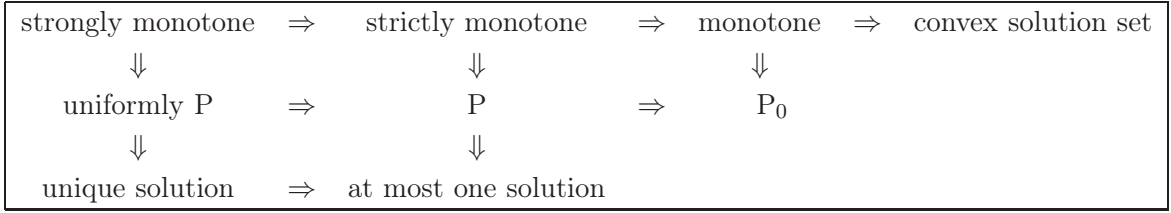

\centering{}%
\begin{tabular}{|ccccccc|}
\hline 
strongly monotone  & $\Rightarrow$  & strictly monotone  & $\Rightarrow$  & monotone  & $\Rightarrow$  & convex solution set \tabularnewline
$\Downarrow$  &  & $\Downarrow$  &  & $\Downarrow$  &  & \tabularnewline
uniformly P  & $\Rightarrow$  & P  & $\Rightarrow$  & \hspace{0.5pc} P$_{0}$  &  & \tabularnewline
$\Downarrow$  &  & $\Downarrow$  &  &  &  & \tabularnewline
unique solution  & $\Rightarrow$  & at most one solution  &  &  &  & \tabularnewline
\hline 
\end{tabular}\caption{Monotonicity and their consequences on VIs.\label{fig:Monotonicity}}
\end{figure}

\begin{theorem}\label{Theo_existence_uniqueness} Given the VI $(\mathcal{Q},\,\mathbf{F})$,
suppose that $\mathcal{Q}$ is closed and convex and $\mathbf{F}$
is continuous on $\mathcal{Q}$. The following statements hold: \vspace{-0.2cm}
\begin{description}
\item [{{(a)}}] The \emph{VI}$(\mathcal{Q},\mathbf{F})$\emph{ }has a
(possibly empty) closed solution set. If $\ensuremath{\mathcal{Q}}$
is bounded, the solution set is nonempty and thus compact \cite[Cor. 2.2.5]{Facchinei-Pang_FVI03};
\vspace{-0.2cm}
\item [{{(b)}}] If $\ensuremath{\mathbf{F}}$ is monotone on $\mathcal{Q}$,
then the \emph{VI}$(\mathcal{Q},\mathbf{F})$ has a (possibly empty)
convex solution set \cite[Th. 2.3.5]{Facchinei-Pang_FVI03}; \vspace{-0.2cm}
\item [{{(c)}}] If $\ensuremath{\mathbf{F}}$ is strictly monotone on
$\mathcal{Q}$, then the\emph{ VI}$(\mathcal{Q},\mathbf{F})$\emph{
}has at most one solution \cite[Th. 2.3.3(a)]{Facchinei-Pang_FVI03};
the same conclusion holds if the\emph{ VI}$(\mathcal{Q},\mathbf{F})$
is partitioned and $\ensuremath{\mathbf{F}}$ is a P function on $\mathcal{Q}$
\cite[Prop. 3.5.10(a)]{Facchinei-Pang_FVI03}; \vspace{-0.2cm}
\item [{{(d)}}] If $\mathbf{F}$ is strongly monotone on $\mathcal{Q}$,
then the VI$(\mathcal{Q},\mathbf{F})$ has a unique solution \cite[Th. 2.3.3(b)]{Facchinei-Pang_FVI03};
the same conclusion holds if the\emph{ VI}$(\mathcal{Q},\mathbf{F})$
is partitioned and $\ensuremath{\mathbf{F}}$ is a uniformly P function
on $\mathcal{Q}$ \cite[Prop. 3.5.10(b)]{Facchinei-Pang_FVI03}. 
\end{description}
\end{theorem}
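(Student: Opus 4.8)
The statement is a compendium of classical facts about finite-dimensional variational inequalities, and the plan is to prove each of (a)--(d) by the standard arguments; in the paper one may equally just quote \cite[Cor.~2.2.5, Th.~2.3.5, Th.~2.3.3, Prop.~3.5.10]{Facchinei-Pang_FVI03}, but the self-contained derivations are short enough to sketch, and that is the approach I would present.

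For (a) I would work with the \emph{natural map} $\mathbf{F}^{\mathrm{nat}}(\mathbf{x})\triangleq\mathbf{x}-\Pi_{\mathcal{Q}}(\mathbf{x}-\mathbf{F}(\mathbf{x}))$, where $\Pi_{\mathcal{Q}}$ is the Euclidean projection onto the closed convex set $\mathcal{Q}$. Since $\Pi_{\mathcal{Q}}$ is nonexpansive, hence continuous, and $\mathbf{F}$ is continuous, $\mathbf{F}^{\mathrm{nat}}$ is continuous, and it is standard that $\mathbf{x}^{\star}\in\mathrm{SOL}(\mathcal{Q},\mathbf{F})$ iff $\mathbf{F}^{\mathrm{nat}}(\mathbf{x}^{\star})=\mathbf{0}$; therefore $\mathrm{SOL}(\mathcal{Q},\mathbf{F})$ is the preimage of $\{\mathbf{0}\}$ under a continuous map and hence closed. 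When $\mathcal{Q}$ is bounded it is compact and convex, the map $\mathbf{x}\mapsto\Pi_{\mathcal{Q}}(\mathbf{x}-\mathbf{F}(\mathbf{x}))$ is a continuous self-map of $\mathcal{Q}$, so Brouwer's theorem gives a fixed point, i.e.\ a solution; a closed subset of a compact set is compact, which yields compactness of $\mathrm{SOL}(\mathcal{Q},\mathbf{F})$.

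For (b) the engine is Minty's Lemma: under continuity and monotonicity of $\mathbf{F}$, the point $\mathbf{x}^{\star}$ solves $\mathrm{VI}(\mathcal{Q},\mathbf{F})$ if and only if it solves the Minty (dual) inequality $(\mathbf{y}-\mathbf{x}^{\star})^{T}\mathbf{F}(\mathbf{y})\geq0$ for all $\mathbf{y}\in\mathcal{Q}$. The Minty solution set equals $\mathcal{Q}\cap\bigcap_{\mathbf{y}\in\mathcal{Q}}\{\mathbf{x}:(\mathbf{y}-\mathbf{x})^{T}\mathbf{F}(\mathbf{y})\geq0\}$, an intersection of $\mathcal{Q}$ with closed half-spaces, hence convex; the equivalence then transfers convexity to $\mathrm{SOL}(\mathcal{Q},\mathbf{F})$. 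Parts (c) and (d) I would do by contradiction: if $\mathbf{x}^{\star}\neq\mathbf{y}^{\star}$ both solve the VI, testing the defining inequality at $\mathbf{y}^{\star}$ for the solution $\mathbf{x}^{\star}$ and at $\mathbf{x}^{\star}$ for $\mathbf{y}^{\star}$, then adding, gives $(\mathbf{x}^{\star}-\mathbf{y}^{\star})^{T}(\mathbf{F}(\mathbf{x}^{\star})-\mathbf{F}(\mathbf{y}^{\star}))\leq0$, contradicting strict (respectively strong) monotonicity. In the partitioned case I would use the Cartesian structure: the VI is equivalent to the family of block inequalities $(\mathbf{z}_{i}-\mathbf{x}^{\star}_{i})^{T}\mathbf{F}_{i}(\mathbf{x}^{\star})\geq0$ for all $\mathbf{z}_{i}\in\mathcal{Q}_{i}$; choosing $\mathbf{z}_{i}=\mathbf{y}^{\star}_{i}$, and symmetrically with the roles of $\mathbf{x}^{\star},\mathbf{y}^{\star}$ exchanged, and adding block by block yields $(\mathbf{x}^{\star}_{i}-\mathbf{y}^{\star}_{i})^{T}(\mathbf{F}_{i}(\mathbf{x}^{\star})-\mathbf{F}_{i}(\mathbf{y}^{\star}))\leq0$ for every $i$, contradicting the P (respectively uniformly-P) property at any index where the two tuples differ. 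Existence in (d) follows from (a) plus a coercivity/truncation argument: strong monotonicity makes $(\mathbf{x}-\mathbf{x}^{0})^{T}\mathbf{F}(\mathbf{x})$ grow superlinearly in $\|\mathbf{x}\|$, so a solution of the VI restricted to a large ball $\mathcal{Q}\cap B_{r}$ (guaranteed by (a)) lies in the interior of $B_{r}$ and hence solves the original VI; uniqueness then comes from the already-proved (c), since strongly monotone implies strictly monotone.

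None of these steps is deep; the only places needing care are the Minty equivalence on a possibly unbounded $\mathcal{Q}$ and the coercivity estimate behind existence in (d), both of which are entirely routine. This is precisely why the theorem is placed in the appendix as a recollection, with the proof delegated to the cited results of \cite{Facchinei-Pang_FVI03}.
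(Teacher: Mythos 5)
Your proposal is correct, and it reconstructs precisely the standard arguments behind the results the paper cites: the paper gives no proof of its own here, delegating each part to \cite[Cor.~2.2.5, Th.~2.3.5, Th.~2.3.3, Prop.~3.5.10]{Facchinei-Pang_FVI03}, and your sketch (natural map plus Brouwer for (a), Minty's lemma for (b), the two-point addition argument for (c)--(d), and coercivity/truncation for existence in (d)) is exactly how those cited results are established. Nothing is missing; the only care points you flag (the Minty equivalence and the coercivity estimate) are handled correctly.
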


\section{Proof of Proposition \ref{monotonicity}\label{sec:Proof-of-Proposition_monotonicity}}

Because of space limitation, we prove only (e); the proof of (a)-(d)
follows similar ideas.

Given $\mathbf{x}\triangleq(\mathbf{x}_{i})_{i=1}^{I},\mathbf{y}\triangleq(\mathbf{y}_{i})_{i=1}^{I}\in\mathcal{Q}$,
with $\mathbf{x}\neq\mathbf{y}$, let define the univariate continuously
differentiable function $\Phi_{i}\,:\,[0,1]\ni t\,\mapsto\mathbb{R}$
as $\Phi_{i}(t)\triangleq\left(\mathbf{x}_{i}-\mathbf{y}_{i}\right)^{T}\left(\mathbf{F}_{i}\left(t\,\mathbf{x}+(1-t)\,\mathbf{y}\right)\right).$
Then, by the mean-value theorem it follows that there exists some
$\bar{t}{}_{i}\in(0,1)$ such that, denoting by $\mathbf{z}_{\bar{{t}}_{i}}=\bar{{t}}_{i}\mathbf{x}+\left(1-\bar{{t}}_{i}\right)\mathbf{y}$,
we have
\[
\hspace{-0.5cm}\left(\mathbf{x}_{i}-\mathbf{y}_{i}\right)^{T}\left(\mathbf{F}_{i}(\mathbf{x})-\mathbf{F}_{i}(\mathbf{y})\right)=\left.\frac{d\Phi_{i}(t)}{dt}\right|_{t=\bar{{t}}_{i}}=\left(\mathbf{x}_{i}-\mathbf{y}_{i}\right)^{T}\,\sum\limits _{j=1}^{I}\mbox{J}_{i}\mathbf{F}_{j}(\mathbf{z}_{\bar{t}_{i}})\,\left(\mathbf{x}_{j}-\mathbf{y}_{j}\right)
\]
 
\begin{eqnarray}
 & \geq & \left(\mathbf{x}_{i}-\mathbf{y}_{i}\right)^{T}\,\mbox{J}_{i}\mathbf{F}_{i}(\mathbf{z}_{\bar{t}_{i}})\,\left(\mathbf{x}_{i}-\mathbf{y}_{i}\right)-\left|\left(\mathbf{x}_{i}-\mathbf{y}_{i}\right)^{T}\,\sum\limits _{j\neq i}\mbox{J}_{i}\mathbf{F}_{j}(\mathbf{z}_{\bar{t}_{i}})\,\left(\mathbf{x}_{j}-\mathbf{y}_{j}\right)\right|\nonumber \\
 & \geq & \left\Vert \mathbf{C}_{i}^{-T}\left(\mathbf{x}_{i}-\mathbf{y}_{i}\right)\right\Vert ^{2}\,\alpha_{i}^{\min}\,-\left\Vert \mathbf{C}_{i}^{-T}\left(\mathbf{x}_{i}-\mathbf{y}_{i}\right)\right\Vert \sum\limits _{j\neq i}\beta_{ij}^{\max}\left\Vert \mathbf{C}_{j}^{-T}\left(\mathbf{x}_{j}-\mathbf{y}_{j}\right)\right\Vert ,\label{eq:Unif_P_property_Jac_last}
\end{eqnarray}
where last inequality follows from the definition of $\alpha_{i}^{\min}$
and $\beta_{ij}^{\max}$ as given in (\ref{eq:def_alpha_and_beta_Jac}),
and $\mathbf{C}_{i}$'s are the set of nonsingular matrices coming
from (\ref{eq:def_alpha_and_beta_Jac}). Introducing $\mathbf{e}=(e_{i})_{i=1}^{I}$,
with each $e_{i}\triangleq\left\Vert \mathbf{C}_{i}^{-1}(\mathbf{x}_{i}-\mathbf{y}_{i})\right\Vert $,
and using the definition of $\mathbf{\boldsymbol{\Upsilon}}_{\mathbf{F}}$,
(\ref{eq:Unif_P_property_Jac_last}) can be written as 
\[
\left(\mathbf{x}_{i}-\mathbf{y}_{i}\right)^{T}\left(\mathbf{F}_{i}(\mathbf{x})-\mathbf{F}_{i}(\mathbf{y})\right)\geq e_{i}\left(\mathbf{\boldsymbol{\Upsilon}}_{\mathbf{F}}\mathbf{e}\right)_{i}.
\]
 Since $\mathbf{\boldsymbol{\Upsilon}}_{\mathbf{F}}$ is a P-matrix,
it follows from \cite[Th. 3.3.4(b)]{Cottle-Pang-Stone_bookLCP92}
that $c(\mathbf{\boldsymbol{\Upsilon}}_{\mathbf{F}})\triangleq\min_{\left\Vert \mathbf{y}\triangleq(y_{i})_{i=1}^{I}\right\Vert ^{2}=1}\max_{i}\, y_{i}(\mathbf{\boldsymbol{\Upsilon}}_{\mathbf{F}}\,\mathbf{y})_{i}>0$.
Therefore, we have 
\begin{equation}
\max_{i=1,\ldots,I}\left(\mathbf{x}_{i}-\mathbf{y}_{i}\right)^{T}\left(\mathbf{F}_{i}(\mathbf{x})-\mathbf{F}_{i}(\mathbf{y})\right)\geq\max_{i=1,\ldots,I}\, e_{i}(\mathbf{\boldsymbol{\Upsilon}}_{\mathbf{F}}\mathbf{e})_{i}\geq\dfrac{c(\mathbf{\boldsymbol{\Upsilon}}_{\mathbf{F}})}{\max_{i=1,\ldots,I}\lambda_{\max}(\mathbf{C}_{i}^{T}\mathbf{C}_{i})}\,{\left\Vert \mathbf{x}-\mathbf{y}\right\Vert ^{2}}.\label{eq:P-constant_Gamma_low}
\end{equation}
Finally the lower bound in (\ref{eq:mod_F_unifP}) can be proved using
\cite[Ex. 5.11.19]{Cottle-Pang-Stone_bookLCP92}, from which one can
readily obtain $c(\mathbf{\boldsymbol{\Upsilon}}_{\mathbf{F}})\geq{\displaystyle \delta(\boldsymbol{\Upsilon}_{\mathbf{F}})/(I\cdot(1+\zeta(\boldsymbol{\Upsilon}_{\mathbf{F}})/\delta(\boldsymbol{\Upsilon}_{\mathbf{F}}))^{2(I-1)})}$.
This completes the proof of (e). \hfill $\square$.\vspace{-0.1cm}

\section{Proof of Theorem \ref{Theo-async_best-response_NEP}\label{Appendix:Proof-of-Theorem-convergence-best_response}\vspace{-0.1cm}}

It is sufficient to show that, under the assumptions of the theorem,
the best-response mapping is a block-contraction (i.e., a contraction
with respect to some block-maximum norm); the latter property indeed
guarantees that conditions of the asynchronous convergence theorem
in \cite[Prop. 2.1]{Bertsekas_Book-Parallel-Comp} are satisfied by
the asynchronous best-response algorithm described in Algorithm \ref{async_best-response_algo}.

We introduce first the following norms: Given the set of nonsingular
matrices $\mathbf{C}_{i}$'s coming from (\ref{eq:def_alpha_and_beta_Jac}),
the block-maximum norm on $\mathbb{\mathbb{R}}^{n},$ with $n=n_{1}+\ldots+n_{I},$
is defined as 
\begin{equation}
\left\Vert \mathbf{x}\right\Vert _{\text{block}}^{\mathbf{w}}\triangleq\max_{i=1,\ldots,I}\frac{\left\Vert \mathbf{C}_{i}^{-1}\mathbf{x}_{i}\right\Vert }{c_{i}},\quad\mbox{for}\quad\mathbf{x}=(\mathbf{x}_{i})_{i=1}^{I}\in\mathbb{R}^{n},\label{block_max_weight_norm}
\end{equation}
where $\left\Vert \mathbf{\bullet}\right\Vert $ is the Euclidean
norm on $\mathbb{R}^{n_{i}}$ and $\mathbf{c}\triangleq[c_{1},\ldots,c_{I}]^{T}>\mathbf{0}$
is any positive weight vector; the (weighted) maximum norm on $\mathbb{\mathbb{R}}^{I}$
is defined as (see, e.g., \cite{Bertsekas_Book-Parallel-Comp}) 
\begin{equation}
\left\Vert \mathbf{x}\right\Vert _{\infty,\text{vec}}^{\mathbf{c}}\triangleq\max_{i=1,\ldots,I}\frac{\left\vert x_{i}\right\vert }{c_{i}},\quad\mbox{for}\quad\mathbf{x\in\mathbb{R}}^{I};\label{weighted_infinity_vector_norm}
\end{equation}
and the\emph{ }matrix norm $\left\Vert \mathbf{\cdot}\right\Vert _{\infty,\text{mat}}^{\mathbf{c}}$
on $\mathbb{R}^{I\times I}$ induced by $\left\Vert \cdot\right\Vert _{\infty,\text{vec}}^{\mathbf{c}}$
is given by 
\begin{equation}
\left\Vert \mathbf{A}\right\Vert _{\infty,\text{mat}}^{\mathbf{c}}\triangleq\max_{i}\frac{1}{c_{i}}\dsum\limits _{j=1}^{I}\left\vert [\mathbf{A}]_{ij}\right\vert c_{j},\quad\mbox{for}\quad\mathbf{A\in\mathbb{R}}^{I\times I}.\label{H_max_weight_norm}
\end{equation}
Under the under the P property of $\boldsymbol{{\Upsilon}}_{\mathbf{F}}$,
let us introduce the best-response mapping 
\begin{equation}
\mathcal{B}(\mathbf{x})\triangleq(\mathcal{B}_{i}(\mathbf{x}_{-i}))_{i=1}^{I}:\mathcal{Q}\ni\mathbf{x}\mapsto\mathcal{Q},\qquad\mbox{with}\qquad\mathcal{B}_{i}(\mathbf{x}_{-i})\triangleq\mbox{argmin}_{\mathbf{x}_{i}\in\mathcal{Q}_{i}}f_{i}(\mathbf{x}_{i},\,\mathbf{x}_{-i}).\label{eq:best-response_mapping}
\end{equation}
The contraction properties of $\mathcal{B}(\mathbf{x})$ are given
in the following, where $\mathbf{\boldsymbol{\Gamma}_{\mathbf{F}}}$
is defined in (\ref{eq:Upsilon_matrix}) {[}note that the P property
of $\boldsymbol{{\Upsilon}}_{\mathbf{F}}$, implies the strong convexity
of each $f_{i}(\cdot,\mathbf{x}_{-i})$ for any $\mathbf{x}_{-i}\in\mathcal{Q}_{-i}$,
and thus $\alpha_{i}^{\min}>0$ for all $i$; hence, $\boldsymbol{\Gamma}_{\mathbf{F}}$
is well-defined{]}.

\begin{proposition} \label{Proposition_contraction_best-response}
Suppose that $\boldsymbol{{\Upsilon}}_{\mathbf{F}}$ in (\ref{eq:Upsilon_matrix})
is a P matrix. Then, the best-response mapping $\mathcal{B}(\mathbf{x})$
is a block-contraction, i.e., there there exists some $\mathbf{c}>\mathbf{0}$
such that 
\begin{equation}
\left\Vert \mathcal{B}(\mathbf{x})-\mathcal{B}(\mathbf{y})\right\Vert _{\text{\emph{block}}}^{\mathbf{c}}\leq\alpha_{\mathbf{c}}\left\Vert \mathbf{x}-\mathbf{y}\right\Vert _{\text{\emph{block}}}^{\mathbf{c}}\qquad\forall\mathbf{x},\mathbf{y}\in\mathcal{Q}\label{eq:contraction_best_response}
\end{equation}
with $\alpha_{\mathbf{c}}\triangleq\left\Vert \mathbf{\boldsymbol{\Gamma}_{\mathbf{F}}}\right\Vert _{\infty,\text{\emph{mat}}}^{\mathbf{c}}<1$.

\end{proposition}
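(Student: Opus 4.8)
\textbf{Proof proposal for Proposition \ref{Proposition_contraction_best-response}.}

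The plan is to estimate the distance between two best-responses by exploiting the optimality (minimum-principle) characterization of each $\mathcal{B}_i(\mathbf{x}_{-i})$ together with the strong convexity of $f_i(\cdot,\mathbf{x}_{-i})$, and then to assemble the per-block estimates into a genuine contraction via a weighted block-maximum norm. First I would fix $\mathbf{x},\mathbf{y}\in\mathcal{Q}$ and write $\mathbf{u}_i\triangleq\mathcal{B}_i(\mathbf{x}_{-i})$, $\mathbf{v}_i\triangleq\mathcal{B}_i(\mathbf{y}_{-i})$. Each $\mathbf{u}_i$ solves a convex program over $\mathcal{Q}_i$, so by the minimum principle $(\mathbf{z}-\mathbf{u}_i)^T\nabla_{\mathbf{x}_i}f_i(\mathbf{u}_i,\mathbf{x}_{-i})\ge 0$ for all $\mathbf{z}\in\mathcal{Q}_i$, and similarly for $\mathbf{v}_i$ with $\mathbf{y}_{-i}$. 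Testing the first inequality at $\mathbf{z}=\mathbf{v}_i$ and the second at $\mathbf{z}=\mathbf{u}_i$ and adding gives
\[
(\mathbf{v}_i-\mathbf{u}_i)^T\big(\nabla_{\mathbf{x}_i}f_i(\mathbf{u}_i,\mathbf{x}_{-i})-\nabla_{\mathbf{x}_i}f_i(\mathbf{v}_i,\mathbf{y}_{-i})\big)\ge 0.
\]
Then I would add and subtract $\nabla_{\mathbf{x}_i}f_i(\mathbf{v}_i,\mathbf{x}_{-i})$ inside the gradient difference, so that one term becomes $(\mathbf{u}_i-\mathbf{v}_i)^T(\nabla_{\mathbf{x}_i}f_i(\mathbf{u}_i,\mathbf{x}_{-i})-\nabla_{\mathbf{x}_i}f_i(\mathbf{v}_i,\mathbf{x}_{-i}))$, which by strong convexity (guaranteed by the P property of $\boldsymbol{\Upsilon}_{\mathbf{F}}$, hence $\alpha_i^{\min}>0$) is bounded below by $\alpha_i^{\min}\|\mathbf{C}_i^{-1}(\mathbf{u}_i-\mathbf{v}_i)\|^2$ after inserting the matrix $\mathbf{C}_i$ from \eqref{eq:def_alpha_and_beta_Jac}; the remaining cross term involves $\nabla_{\mathbf{x}_i}f_i(\mathbf{v}_i,\mathbf{x}_{-i})-\nabla_{\mathbf{x}_i}f_i(\mathbf{v}_i,\mathbf{y}_{-i})$, which I would bound via the mean value theorem applied along the segment joining $\mathbf{x}_{-i}$ and $\mathbf{y}_{-i}$, producing the Jacobians $\mathbf{J}_j\mathbf{F}_i$ and hence the quantities $\beta_{ij}^{\max}$.

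Carrying this out yields, for each $i$,
\[
\alpha_i^{\min}\,\|\mathbf{C}_i^{-1}(\mathbf{u}_i-\mathbf{v}_i)\|^2
\;\le\; \|\mathbf{C}_i^{-1}(\mathbf{u}_i-\mathbf{v}_i)\|\sum_{j\neq i}\beta_{ij}^{\max}\,\|\mathbf{C}_j^{-1}(\mathbf{x}_j-\mathbf{y}_j)\|,
\]
so dividing through by $\alpha_i^{\min}\|\mathbf{C}_i^{-1}(\mathbf{u}_i-\mathbf{v}_i)\|$ gives the scalar inequality $e_i^{\mathcal{B}}\le\sum_{j\neq i}\big(\beta_{ij}^{\max}/\alpha_i^{\min}\big)\,e_j$, where $e_i^{\mathcal{B}}\triangleq\|\mathbf{C}_i^{-1}(\mathbf{u}_i-\mathbf{v}_i)\|$ and $e_j\triangleq\|\mathbf{C}_j^{-1}(\mathbf{x}_j-\mathbf{y}_j)\|$. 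In vector form this is $\mathbf{e}^{\mathcal{B}}\le\boldsymbol{\Gamma}_{\mathbf{F}}\,\mathbf{e}$ componentwise, with $\boldsymbol{\Gamma}_{\mathbf{F}}$ the nonnegative matrix from \eqref{eq:Upsilon_matrix}.

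To conclude, I would invoke the standard fact (Perron--Frobenius, see e.g. \cite[Lemma 13.14]{Cottle-Pang-Stone_bookLCP92}) that since $\boldsymbol{\Upsilon}_{\mathbf{F}}$ is a P-matrix we have $\rho(\boldsymbol{\Gamma}_{\mathbf{F}})<1$, hence there exists a positive vector $\mathbf{c}=(c_i)_{i=1}^I>\mathbf{0}$ and a scalar $\alpha_{\mathbf{c}}\triangleq\|\boldsymbol{\Gamma}_{\mathbf{F}}\|_{\infty,\text{mat}}^{\mathbf{c}}<1$ with $\boldsymbol{\Gamma}_{\mathbf{F}}\mathbf{c}\le\alpha_{\mathbf{c}}\mathbf{c}$ (one may take $\mathbf{c}$ close to the Perron eigenvector). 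Then from $\mathbf{e}^{\mathcal{B}}\le\boldsymbol{\Gamma}_{\mathbf{F}}\mathbf{e}$ and monotonicity of $\boldsymbol{\Gamma}_{\mathbf{F}}$ one gets, for every $i$, $e_i^{\mathcal{B}}/c_i\le\tfrac{1}{c_i}\sum_j[\boldsymbol{\Gamma}_{\mathbf{F}}]_{ij}c_j\cdot\max_k(e_k/c_k)\le\alpha_{\mathbf{c}}\max_k(e_k/c_k)$, i.e. $\|\mathcal{B}(\mathbf{x})-\mathcal{B}(\mathbf{y})\|_{\text{block}}^{\mathbf{c}}\le\alpha_{\mathbf{c}}\|\mathbf{x}-\mathbf{y}\|_{\text{block}}^{\mathbf{c}}$, which is \eqref{eq:contraction_best_response}. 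Once this block-contraction is established, Theorem \ref{Theo-async_best-response_NEP} follows directly by applying the asynchronous convergence theorem \cite[Prop. 2.1]{Bertsekas_Book-Parallel-Comp}, whose hypotheses (a box-like product structure plus a block-contraction in a block-maximum norm) are exactly what we have verified, together with assumptions A1--A3 on the update schedule.

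The main obstacle I anticipate is the careful bookkeeping in the mean-value step: $\mathbf{F}_i$ must be differentiated with respect to the \emph{other} players' variables along a segment in $\mathcal{Q}_{-i}$, and one has to verify that the resulting bound can be written cleanly in terms of the $\beta_{ij}^{\max}$ with the matrices $\mathbf{C}_j$ correctly inserted on both sides; the convexity set $\mathcal{Q}$ being closed (not open) means the differentiation is understood on an open neighborhood, and boundedness of the derivatives (Assumption 2) is what makes the $\beta_{ij}^{\max}$ finite and the argument go through. The passage from the componentwise inequality $\mathbf{e}^{\mathcal{B}}\le\boldsymbol{\Gamma}_{\mathbf{F}}\mathbf{e}$ to the weighted-norm contraction is routine once the P-matrix/spectral-radius equivalence is in hand.
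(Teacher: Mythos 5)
Your proposal is correct and follows essentially the same route as the paper's proof in Appendix C: minimum principle at each best response, cross-testing and adding, a mean-value argument to surface the blocks $\nabla^2_{\mathbf{x}_i\mathbf{x}_i}f_i$ and $\nabla^2_{\mathbf{x}_i\mathbf{x}_j}f_i$ bounded by $\alpha_i^{\min}$ and $\beta_{ij}^{\max}$, the componentwise inequality $\mathbf{e}_{\mathcal{B}}\leq\boldsymbol{\Gamma}_{\mathbf{F}}\,\mathbf{e}$, and finally the equivalence between the P property of $\boldsymbol{\Upsilon}_{\mathbf{F}}$ and $\left\Vert \boldsymbol{\Gamma}_{\mathbf{F}}\right\Vert _{\infty,\text{mat}}^{\mathbf{c}}<1$ for some $\mathbf{c}>\mathbf{0}$. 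The only cosmetic difference is that you split the gradient difference by adding and subtracting $\nabla_{\mathbf{x}_i}f_i(\mathbf{v}_i,\mathbf{x}_{-i})$ and treat the diagonal and cross terms separately, whereas the paper applies the mean-value theorem once along the joint segment; both yield the same per-block bound.
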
\begin{proof}For any two vectors $\mathbf{x},\mathbf{y}\in\mathcal{Q}$
we have by the minimum principle 
\begin{equation}
\begin{array}{c}
\left(\mathbf{z}_{i}-\mathcal{B}_{i}(\mathbf{x})\right)^{T}\nabla_{\mathbf{x}_{i}}f_{i}\left(\mathcal{B}_{i}(\mathbf{x}),\,\mathbf{x}_{-i}\right)\geq0\qquad\forall\mathbf{z}_{i}\in\mathcal{Q}_{i},\quad i=1,\ldots,I,\\
\left(\mathbf{z}_{i}-\mathcal{B}_{i}(\mathbf{y})\right)^{T}\nabla_{\mathbf{x}_{i}}f_{i}\left(\mathcal{B}_{i}(\mathbf{y}),\,\mathbf{y}_{-i}\right)\geq0\qquad\forall\mathbf{z}_{i}\in\mathcal{Q}_{i},\quad i=1,\ldots,I.
\end{array}\label{eq:variationa_principle_in_the_best-response}
\end{equation}
Substituting $\mathbf{z}_{i}=\mathcal{B}_{i}(\mathbf{y})$ into the
former inequality and $\mathbf{z}_{i}=\mathcal{B}_{i}(\mathbf{x})$
into the latter, adding the two resulting inequalities we obtain with
$\hat{\mathbf{z}}{}_{i}\triangleq t_{i}(\mathcal{B}_{i}(\mathbf{y}),\mathbf{y}_{-i})+(1-t_{i})(\mathcal{B}_{i}(\mathbf{x}),\mathbf{x}_{-i})$
and some $t_{i}\in(0,1)$:\vspace{-0.2cm}

\begin{eqnarray}
0 & \leq & \left(\mathcal{B}_{i}(\mathbf{x})-\mathcal{B}_{i}(\mathbf{y})\right)^{T}\left(\nabla_{\mathbf{x}_{i}}f_{i}\left(\mathcal{B}_{i}(\mathbf{y}),\,\mathbf{y}_{-i}\right)-\nabla_{\mathbf{x}_{i}}f_{i}\left(\mathcal{B}_{i}(\mathbf{x}),\,\mathbf{x}_{-i}\right)\right)\nonumber \\
 & = & \left(\mathcal{B}_{i}(\mathbf{x})-\mathcal{B}_{i}(\mathbf{y})\right)^{T}\nabla_{\mathbf{x}_{i}\mathbf{x}_{i}}^{2}f_{i}\left(\hat{\mathbf{z}}\right)\left(\mathcal{B}_{i}(\mathbf{y})-\mathcal{B}_{i}(\mathbf{x})\right)+\left(\mathcal{B}_{i}(\mathbf{x})-\mathcal{B}_{i}(\mathbf{y})\right)^{T}\sum_{j\neq i}\nabla_{\mathbf{x}_{i}\mathbf{x}_{j}}^{2}f_{i}\left(\hat{\mathbf{z}}\right)\left(\mathbf{y}_{j}-\mathbf{x}_{j}\right)\label{eq:bet-response_ineq_row2}
\end{eqnarray}
where the equality follows from the application of the main-value
theorem to the univariate, differentiable, scalar function 
\begin{equation}
[0,\,1]\ni t_{i}\mapsto\left(\mathcal{B}_{i}(\mathbf{x})-\mathcal{B}_{i}(\mathbf{y})\right)^{T}\nabla_{\mathbf{x}_{i}}f_{i}\left(t_{i}(\mathcal{B}_{i}(\mathbf{y}),\mathbf{y}_{-i})+(1-t_{i})(\mathcal{B}_{i}(\mathbf{x}),\mathbf{x}_{-i})\right).\label{eq:scalar_diff_func_for_mean_value_theo}
\end{equation}
Using the definition of $\alpha_{i}^{\min}$ and $\beta_{ij}^{\max}$
as given in (\ref{eq:def_alpha_and_beta_Jac}), we deduce from the
inequality in (\ref{eq:bet-response_ineq_row2}) 
\begin{equation}
\left\Vert \mathcal{B}_{i}(\mathbf{x})-\mathcal{B}_{i}(\mathbf{y})\right\Vert _{i}\alpha_{i}^{\min}\leq\sum_{j\neq i}\beta_{ij}^{\max}\left\Vert \mathbf{x}_{j}-\mathbf{y}_{j}\right\Vert _{j},\qquad i=1,\ldots,I,\label{eq:bet-response_ineq_row3}
\end{equation}
(the inequality in (\ref{eq:bet-response_ineq_row2}) is trivially
satisfied if $\left\Vert \mathcal{B}_{i}(\mathbf{x})-\mathcal{B}_{i}(\mathbf{y})\right\Vert _{i}=0$).
Introducing the vectors $\mathbf{e}_{\mathcal{B}}\triangleq(e_{\mathcal{B}_{i}})_{i=1}^{I}$
and $\mathbf{e}\triangleq(e_{i})_{i=1}^{I}$ with $e_{\mathcal{B}_{i}}\triangleq\left\Vert \mathcal{B}_{i}(\mathbf{x})-\mathcal{B}_{i}(\mathbf{y})\right\Vert _{i}$
and $e_{i}=\left\Vert \mathbf{x}_{i}-\mathbf{y}_{i}\right\Vert _{i}$,
using the definition of $\boldsymbol{\Gamma}_{\mathbf{F}}$ in (\ref{eq:Upsilon_matrix}),
and the fact that $\alpha_{i}^{\min}>0$ for all $i$, (\ref{eq:bet-response_ineq_row3})
can be written in vectorial form as 
\begin{equation}
\mathbf{e}_{\mathcal{B}}\leq\boldsymbol{{\Gamma}}_{\mathbf{F}}\,\mathbf{e},\qquad\forall\mathbf{x},\mathbf{y\in\mathcal{Q}}.\label{eq:bet-response_ineq_row4}
\end{equation}
It follows from (\ref{eq:bet-response_ineq_row4}) that, for any given
$\mathbf{c}>0$, we have 
\begin{equation}
\left\Vert \mathcal{B}(\mathbf{x})-\mathcal{B}(\mathbf{y})\right\Vert _{\text{{block}}}^{\mathbf{c}}=\left\Vert \mathbf{e}_{\mathcal{B}}\right\Vert _{\infty,\text{{vec}}}^{\mathbf{c}}\leq\left\Vert \boldsymbol{{\Gamma}}_{\mathbf{F}}\right\Vert _{\infty,\text{{mat}}}^{\mathbf{c}}\left\Vert \mathbf{e}\right\Vert _{\infty,\text{{vec}}}^{\mathbf{c}}=\left\Vert \boldsymbol{{\Gamma}}_{\mathbf{F}}\right\Vert _{\infty,\text{{mat}}}^{\mathbf{c}}\left\Vert \mathbf{x}-\mathbf{y}\right\Vert _{\text{{block}}}^{\mathbf{c}}\label{eq:bet-response_ineq_row5}
\end{equation}
which proves the inequality in (\ref{eq:contraction_best_response}).
To complete the proof we need to show that $\left\Vert \boldsymbol{{\Gamma}}_{\mathbf{F}}\right\Vert _{\infty,\mbox{mat}}^{\mathbf{c}}<1$
for some $\mathbf{c}>0$. Invoking \cite[Lemma 13.14]{Cottle-Pang-Stone_bookLCP92}
and \cite[Cor. 6.1]{Bertsekas_Book-Parallel-Comp}, we obtain the
desired result: 
\begin{equation}
\boldsymbol{{\Upsilon}}_{\mathbf{F}}\mbox{ is a P-matrix}\qquad\Leftrightarrow\qquad\exists\,\bar{{\mathbf{c}}}>0\quad\mbox{such that }\left\Vert \mathbf{\boldsymbol{{\Gamma}}_{\mathbf{F}}}\right\Vert _{\infty,\text{mat}}^{\bar{{\mathbf{c}}}}<1.\label{eq:equivalence_P-property_contraction}
\end{equation}

\end{proof}\vspace{-0.3cm}

\section{Proof of Proposition \ref{Proposition_G_and_G_tau_x}\label{Proof-of-Proposition_G_and_G_tau_x}}

Since $\mathcal{G}$ is a monotone NEP, the VI associated with the
NEP $\mathcal{G}_{\tau,\mathbf{y}}-$the VI$(\mathcal{Q},\mathbf{F}+\tau(\mathbf{I}-\mathbf{y}))$
with $\mathbf{F}=(\nabla_{\mathbf{x}_{i}}f_{i})_{i=1}^{I}-$is strongly
monotone on $\mathcal{Q}$; it follows by Theorem \ref{Theo_existence_uniquenessNE}(d)
that $\mathcal{G}_{\tau,\mathbf{y}}$ has a unique NE for any given
$\tau>0$ and $\mathbf{y}\in\mathbb{R}^{n}$. Let us denote such a
unique NE by $\mathbf{S}_{\tau}(\mathbf{y})\triangleq\sol(\mathcal{Q},\mathbf{F}+\tau(\mathbf{I}-\mathbf{y}))$.

\noindent \emph{Necessity}: Let $\mathbf{x}^{\star}\in\mathcal{Q}$
be a NE of the monotone NEP $\mathcal{G}$. By Proposition \ref{VI_reformulation},
$\mathbf{x}^{\star}\in\sol(\mathcal{Q},\mathbf{F})$; then the following
hold: $\forall\mathbf{x}\in\mathcal{Q}$, 
\begin{align*}
\left(\mathbf{x}-\mathbf{x}^{\star}\right)^{T}\mathbf{F}(\mathbf{x}^{\star}) & \geq0\,\,\,\Leftrightarrow\,\,\,\left(\mathbf{x}-\mathbf{x}^{\star}\right)^{T}\left(\mathbf{F}(\mathbf{x}^{\star})+\tau(\mathbf{x}^{\star}-\mathbf{x}^{\star})\right)\geq0\,\,\,\Rightarrow\,\,\,\mathbf{x}^{\star}=\sol\left(\mathcal{Q},\mathbf{F}+\tau(\mathbf{I}-\mathbf{x}^{\star})\right)=\mathbf{S}_{\tau}(\mathbf{x}^{\star}),
\end{align*}
implying that $\mathbf{x}^{\star}$ is the unique NE of $\mathcal{G}_{\tau,\mathbf{x}^{\star}}$.

\noindent \emph{Sufficiency}: Let $\mathbf{x}^{\star}$ be a NE of
$\mathcal{G}_{\tau,\mathbf{x}^{\star}}$. Then, we have $\mathbf{x}^{\star}=\mathbf{S}_{\tau}(\mathbf{x}^{\star})$,
which leads to the desired result: $\forall\mathbf{x}\in\mathcal{Q}$,
\begin{align*}
\left(\mathbf{x}-\mathbf{S}_{\tau}(\mathbf{x}^{\star})\right)^{T}\left(\mathbf{F}\left(\mathbf{S}_{\tau}(\mathbf{x}^{\star})\right)+\tau(\mathbf{S}_{\tau}(\mathbf{x}^{\star})-\mathbf{x}^{\star})\right)\geq0\,\,\,\Leftrightarrow\,\,\,\left(\mathbf{x}-\mathbf{x}^{\star}\right)^{T}\mathbf{F}(\mathbf{x}^{\star}) & \geq0.
\end{align*}

\section{Proof of Theorem \ref{the:mod}\label{sec:Proof-of-Theorem_NE_Selection}}

To prove the theorem we hinge on the theory of VIs. We preliminary
observe that the game $\mathcal{G}$ is equivalent to the VI$(\mathcal{Q},\mathbf{F})$,
with $\mathbf{F}=(\nabla_{\mathbf{x}_{i}}f_{i})_{i=1}^{I}$ (Proposition
\ref{VI_reformulation}); $\sol(\mathcal{Q},\mathbf{f})$ is thus
also the solution set of the VI, i.e., $\sol(\mathcal{Q},\mathbf{F})=\sol(\mathcal{Q},\mathbf{f})$.
Moreover, still invoking Proposition \ref{VI_reformulation}, we have
that the game $\mathcal{G}_{\tau,\varepsilon^{(n)},\mathbf{x}^{(n)}}$
in Step 2 of Algorithm \ref{algo2} is equivalent to the VI$(\mathcal{Q},\mathbf{F}^{(n)})$,
where 
\begin{equation}
\mathbf{F}^{(n)}(\mathbf{x})\triangleq\mathbf{F}(\mathbf{x})+\varepsilon^{(n)}\nabla\phi(\mathbf{x})+\tau\,(\mathbf{x}-\mathbf{x}^{(n)}).\label{F_n}
\end{equation}
Observe that, under the assumptions of the theorem, $\mathbf{F}^{(n)}$
is strongly monotone {[}Definition \ref{Def_monotonicity}(iii){]},
implying that the VI$(\mathcal{Q},\mathbf{F}^{(n)})$ has a unique
solution {[}Theorem \ref{Theo_existence_uniqueness}(c){]} and thus
$\mathbf{x}^{(n+1)}$ in Step 2 of Algorithm \ref{algo2} is well
defined at each iteration. Moreover, denoting by $\mathcal{S}$ the
solution set of \eqref{eq:VI-C min}, assumptions of the theorem,
ensure that $\mathcal{S}$ is nonempty, bounded, and convex. Let us
introduce for each $n$, 
\[
\delta^{(n)}\,\triangleq\,\dfrac{1}{2}\,\text{{dist}}(\mathbf{x}^{(n)},\mathcal{S})\,=\dfrac{1}{2}\,\|\mathbf{x}^{(n)}-{P}_{\mathcal{S}}(\mathbf{x}^{(n)})\|^{2},
\]
where $P_{\mathcal{S}}(\mathbf{y})\triangleq\text{argmin}_{\mathbf{x}\in\mathcal{S}}\|\mathbf{x}-\mathbf{y}\|$
denotes the Euclidean projection on the nonempty, closed, and convex
set $\mathcal{S}$. Then, to prove the theorem it suffices to show
that the sequence $\{\delta^{(n)}\}$ converges to zero. Observe first
that, since $\mathbf{x}^{(n+1)}$ at Step 2 is the solution of the
game $\mathcal{G}_{\tau,\varepsilon^{(n)},\mathbf{x}^{(n)}}$$-$the
VI$(\mathcal{K},\mathbf{F}^{(n)})$$-$we get, for any $\mathbf{y}\in\mathcal{Q}$,\vspace{-0.2cm}
\begin{equation}
\left[\,\mathbf{F}(\mathbf{x}^{(n+1)})+\varepsilon^{(n)}\nabla\phi(\mathbf{x}^{(n+1)})\,\right]^{T}(\,\mathbf{y}-\mathbf{x}^{(n+1)}\,)\geq\,\tau\,(\,\mathbf{x}^{(n)}-\mathbf{x}^{(n+1)}\,)^{T}(\,\mathbf{y}-\mathbf{x}^{(n+1)}\,).\label{eq:pass1}
\end{equation}

We can write 
\begin{align}
\begin{array}{lll}
\delta^{(n+1)}-\delta^{(n)} & = & \dfrac{1}{2}\,\|\,\mathbf{x}^{(n+1)}-P_{\mathcal{S}}(\mathbf{x}^{(n+1)})\,\|^{2}-\dfrac{1}{2}\,\|\,\mathbf{x}^{(n)}-P_{\mathcal{S}}(\mathbf{x}^{(n)})\,\|^{2}\smallskip\\
 & \overset{(a)}{\leq} & \dfrac{1}{2}\,\|\mathbf{x}^{(n+1)}-P_{\mathcal{S}}(\mathbf{x}^{(n)})\|^{2}-\dfrac{1}{2}\,\|\,\mathbf{x}^{(n)}-P_{\mathcal{S}}(\mathbf{x}^{(n)})\,\|^{2}\smallskip\\
 & = & -\dfrac{1}{2}\|\mathbf{x}^{(n+1)}-\mathbf{x}^{(n)}\|^{2}+\,(\mathbf{x}^{(n)}-\mathbf{x}^{(n+1)})^{T}(P_{\mathcal{S}}(\mathbf{x}^{(n)})-\mathbf{x}^{(n+1)})\smallskip\\
 & \overset{(b)}{\leq} & -\,\dfrac{1}{2}\,\|\mathbf{x}^{(n+1)}-\mathbf{x}^{(n)}\|^{2}+{\displaystyle {\dfrac{1}{\tau}}\,\mathbf{F}(\mathbf{x}^{(n+1)})^{T}(P_{\mathcal{S}}(\mathbf{x}^{(n)})-\mathbf{x}^{(n+1)})}+\,{\dfrac{\varepsilon^{(n)}}{\tau}}\,\nabla\phi(\mathbf{x}^{(n+1)})^{T}(P_{\mathcal{S}}(\mathbf{x}^{(n)})-\mathbf{x}^{(n+1)})\,\smallskip\\
 & \overset{(c)}{\leq} & -\dfrac{1}{2}\,\|\mathbf{x}^{(n+1)}-\mathbf{x}^{(n)}\|^{2}+\,{\displaystyle {\dfrac{\varepsilon^{(n)}}{\tau}}\,\underbrace{\nabla\phi(\mathbf{x}^{(n+1)})^{T}\,(P_{\mathcal{S}}(\mathbf{x}^{(n)})-\mathbf{x}^{(n+1)})}_{V^{(n+1)}}}
\end{array}\label{eq:modd 1}
\end{align}
where: (a) follows readily from the definition of projection; (b)
comes from \eqref{eq:pass1} evaluated at $\mathbf{y}=P_{\mathcal{S}}(\mathbf{x}^{(n)})\in\mathcal{Q}$;
and (c) can be obtained by observing that since $P_{\mathcal{S}}(\mathbf{x}^{(n)})\in\mathcal{S}\subseteq\sol(\mathcal{Q},\mathbf{F})$
and $\mathbf{x}^{(n+1)}\in\mathcal{Q}$, we have $\mathbf{F}(P_{\mathcal{S}}(\mathbf{x}^{(n)}))^{T}\,(\mathbf{x}^{(n+1)}-P_{\mathcal{S}}(\mathbf{x}^{(n)}))\geq0$,
which yields in turn, by the monotonicity of $\mathbf{F}$ {[}Definition
\ref{Def_monotonicity}(i){]}, $\mathbf{F}(\mathbf{x}^{(n+1)})^{T}\,(P_{\mathcal{S}}(\mathbf{x}^{(n)})-\mathbf{x}^{(n+1)})\leq0$.
We now distinguish three cases.

\medskip{}

\noindent \emph{Case 1}: Eventually, $V^{(n+1)}\leq0$.

\smallskip{}

\noindent In this case the nonnegative sequence $\{\delta^{(n)}\}$
is (eventually) non-increasing and therefore convergent. Let us denote
by $n_{0}$ the index from which all $V^{(n)}$ are non positive,
and let us consider $n\geq n_{0}$ without loss of generality. Since
$\mathcal{S}$ is bounded, this implies that also $\{\mathbf{x}^{(n)}\}$
is bounded. Furthermore, it follows from (\ref{eq:modd 1}) that $\{\delta^{(n+1)}-\delta^{(n)}\}$
converges to zero and $\delta^{(n+1)}-\delta^{(n)}\,\leq\,-({1}/{2})\,\|\mathbf{x}^{(n+1)}-\mathbf{x}^{(n)}\|^{2},$
which shows that 
\begin{equation}
\lim_{n\to\infty}\,\|\mathbf{x}^{(n+1)}-\mathbf{x}^{(n)}\|\,=\,0.\label{eq:modd 2}
\end{equation}
Summing (\ref{eq:modd 1}) from $n_{0}$ to $n-1$, we get 
\[
\delta^{(n)}-\delta^{(n_{0})}\,\leq\,\dfrac{1}{\tau}\sum_{i=n_{0}}^{n-1}\varepsilon^{(i)}V^{(i+1)}.
\]
Since $\{\delta^{(n)}\}$ converges and $V^{(n)}\leq0$ and $\sum_{n}\varepsilon^{(n)}=\infty$
in the theorem implies that ${\displaystyle {\limsup_{n\to\infty}}\, V^{(n)}=0}$.
Then, there exists a subsequence $J$ such that 
\begin{equation}
{\displaystyle {\lim_{{{n\in J}\atop {n\to\infty}}}}\, V^{(n)}\,=\,0.}\label{eq:modd 5}
\end{equation}
Since $\{\mathbf{x}^{(n)}\}$ is bounded we may assume, without loss
of generality, that $\lim_{{{n\in J}\atop {n\to\infty}}}\mathbf{x}^{(n)}\,=\,\wt{\mathbf{x}}.$
Note that, since $\mathcal{Q}$ is closed, $\wt{\mathbf{x}}\in\mathcal{Q}$.
We show that actually $\wt{\mathbf{x}}\in\sol(\mathcal{Q},\mathbf{F})$.
If this is not so, there exists a point $\mathbf{y}\in\mathcal{Q}$
such that $\mathbf{F}(\wt{\mathbf{x}})^{T}(\mathbf{y}-\wt{\mathbf{x}})<0$.
Since $\mathbf{x}^{(n)}$ is the solution of the VI$(\mathcal{K},\mathbf{F}^{(n)})$
in Step 2 of the algorithm, we can write, 
\begin{equation}
\left[\,\mathbf{F}(\mathbf{x}^{(n)})+\tau\,(\,\mathbf{x}^{(n)}-\mathbf{x}^{(n-1)}\,)\,\right]^{T}(\,\mathbf{y}-\mathbf{x}^{(n)}\,)+\,\varepsilon^{(n-1)}\,\nabla\phi\left(\mathbf{x}^{(n)}\right)^{T}(\,\mathbf{y}-\mathbf{x}^{(n)}\,)\,\geq\,0.\label{eq:modd 3}
\end{equation}
By continuity, the definition of $\mathbf{y}$, the boundedness of
$\{\mathbf{x}^{(n)}\}$, and (\ref{eq:modd 2}), we have, without
loss of generality (after a suitable renumeration), 
\[
\lim_{{{n\in J}\atop {n\to\infty}}}\mathbf{F}(\mathbf{x}^{(n)})^{T}(\mathbf{y}-\mathbf{x}^{(n)})\,<\,0,\quad\lim_{{{n\in J}\atop {n\to\infty}}}\tau(\mathbf{x}^{(n)}-\mathbf{x}^{(n-1)})^{T}(\mathbf{y}-\mathbf{x}^{(n)})\,=\,0,\quad\lim_{{{n\in J}\atop {n\to\infty}}}\varepsilon^{(n-1)}\nabla\phi(\mathbf{x}^{(n)})^{T}(\,\mathbf{y}-\mathbf{x}^{(n)}\,)\,=\,0,
\]
which contradicts (\ref{eq:modd 3}). Therefore $\wt{\mathbf{x}}\in\sol(\mathcal{Q},\mathbf{F})$.

Thanks to (\ref{eq:modd 2}) we have ${\displaystyle {\lim_{n\in J,n\to\infty}}\,\mathbf{x}^{(n-1)}=\wt{\mathbf{x}}}$.
Therefore, by (\ref{eq:modd 5}) and continuity, we get $\nabla\phi(\wt{\mathbf{x}})^{T}(P_{\mathcal{S}}(\wt{\mathbf{x}})-\wt{\mathbf{x}})=0$.
But the convexity of $\phi$ implies that $\phi(P_{\mathcal{S}}(\wt{\mathbf{x}}))\,\geq\,\phi(\wt{\mathbf{x}})+\nabla\phi(\wt{\mathbf{x}})^{T}(P_{\mathcal{S}}(\wt{\mathbf{x}})-\wt{\mathbf{x}})\,=\,\phi(\wt{\mathbf{x}}),$
thus showing that $\wt{\mathbf{x}}\in\mathcal{S}$. Therefore we get
$\lim_{{{n\in J}\atop {n\to\infty}}}\delta^{(n)}\,=\,0.$ But since
the whole sequence $\{\delta^{(n)}\}$ is convergent, this implies
that the entire sequence $\{\delta^{(n)}\}$ converges to 0, thus
concluding the analysis of Case~1.

\medskip{}

\noindent \emph{Case 2}: The two index sets $J$ and $\bar{J}$ are
both infinite, where $J\,\triangleq\,\left\{ \, n\,\mid\, V^{(n)}>0\,\right\} $
and 
\[
\bar{J}\,\triangleq\,\left\{ \, n\,\in\, J\,\mid\,-\dfrac{1}{2}\,\|\,\mathbf{x}^{(n)}-\mathbf{x}^{(n-1)}\,\|^{2}+{\displaystyle {\dfrac{\varepsilon^{(n-1)}}{\tau}}\, V^{(n)}\,>\,0\,}\right\} .
\]
By (\ref{eq:modd 1}), if $n\in\bar{J}$ it might happen that $\delta^{(n)}>\delta^{(n-1)}$,
while if $n\not\in\bar{J}$ then necessarily $\delta^{(n)}\leq\delta^{(n-1)}$.
Therefore, since $\bar{J}$ is infinite, to prove that $\{\delta^{(n)}\}$
goes to zero it is enough to show that the subsequence $\{\delta^{(n)}\}_{\bar{J}}$
converges to zero. To this end, first observe that for every $n\in\bar{J}$
it holds that 
\begin{equation}
\varepsilon^{(n-1)}V^{(n)}\,>\,\dfrac{\tau}{2}\,\|\mathbf{x}^{(n)}-\mathbf{x}^{(n-1)}\|^{2}.\label{eq:modd 6}
\end{equation}
The sequence $\{\mathbf{x}^{(n)}\}_{\bar{J}}$ is bounded since the
definition of $V^{(n)}$, \eqref{eq:modd 6} and convexity imply $\phi(P_{\mathcal{S}}(\mathbf{x}^{(n-1)}))\,\geq\,\phi(\mathbf{x}^{(n)}).$
But $\phi(P_{\mathcal{S}}(\mathbf{x}^{(n-1)}))$ is the optimal value
of \eqref{eq:VI-C min} and therefore is a number, say $\beta$, that
does not depend on the iteration $n$. Therefore, since $\mathbf{x}^{(n-1)}$
belongs to $\mathcal{Q}$, we have that $\{\mathbf{x}^{(n)}\}_{\bar{J}}$
is bounded. By continuity, also $\{V^{(n)}\}_{\bar{J}}$ is bounded.
Hence, since $\{\varepsilon^{(n)}\}$ converges to $0$, (\ref{eq:modd 6})
yields 
\begin{equation}
\lim_{{{n\in\bar{J}}\atop {n\to\infty}}}\,\|\mathbf{x}^{(n)}-\mathbf{x}^{(n-1)}\|\,=\,0.\label{eq:modd 7}
\end{equation}
Since $\{\mathbf{x}^{(n)}\}_{\bar{J}}$ is bounded, it has limit points.
Let $\tilde{J}\subseteq\bar{J}$ be a subsequence such that $\lim_{{{n\in\tilde{J}}\atop {n\to\infty}}}\,=\,\wt{\mathbf{x}}.$
Reasoning exactly as in Case 1, (the only difference is that instead
of (\ref{eq:modd 2}) we use (\ref{eq:modd 7})), we may deduce that
$\wt{\mathbf{x}}\in\sol(\mathcal{Q},\mathbf{F})$. By continuity,
$\nabla\phi(\wt{\mathbf{x}})^{T}(P_{\mathcal{S}}(\wt{\mathbf{x}})-\wt{\mathbf{x}})\geq0$.
Thus $\wt{\mathbf{x}}\in\mathcal{S}$; hence ${\lim_{{{n\in\tilde{J}}\atop {n\to\infty}}}}\delta^{(n)}=0$.
Since this reasoning can be repeated for every convergent subsequence
of $\{{\mathbf{x}}^{(n)}\}_{\bar{J}}$, we conclude that ${\lim_{{{n\in\bar{J}}\atop {n\to\infty}}}}\,\delta^{(n)}=0$,
thus concluding the analysis of this case.

\medskip{}

\noindent \emph{Case 3}: The index set $J$ is infinite while $\bar{J}$
is finite. In this case, the sequence $\{\delta^{(n)}\}$ is non-increasing
eventually. Therefore $\{\delta^{(n)}\}$ converges, implying that
$\{\mathbf{x}^{(n)}\}$ is bounded, $\{\delta^{(n+1)}-\delta^{(n)}\}$
converges to zero and therefore, by (\ref{eq:modd 1}), also (\ref{eq:modd 2})
holds. At this point, we can proceed exactly as in Case 1 and Case
2 
to prove that $\{\delta^{(n)}\}$ converges to zero, thus concluding
the proof of the theorem. \hfill{}$\square$\vspace{-0.2cm}

\section{Proof of Lemma \ref{Lemma_min_principle}\label{sec:Proof-of-Lemma_min_principle}}

To prove the lemma it is sufficient to show that the first-order Taylor
expansion as given in (\ref{eq:Taylor_coomplex_domain_r2}) holds
for the function $f$; the rest of the proof follows similar steps
as those used to prove the minimum principle for (real-valued) functions
of real variables and thus is omitted; see e.g., \cite{BertsekasNedicOzdaglar_book_convex03}.

Before proving the lemma, it is useful to introduce a real-coordinate
representation of real-valued functions of complex matrices and establish
the connection between standard derivatives of this representation
and the $\mathbb{R}$-derivatives of the original functions of complex
variables. 

The complex space $\mathbb{C}^{n\times m}$ of dimension $n\cdot m$
has a natural structure of a real space $\mathbb{R}^{2nm}$ of dimensions
$2n\cdot m$; this comes readily, e.g., from the following isomorphic
transformation: 
\begin{equation}
\mathbb{C}^{n\times m}\ni\mathbf{Z}\Longleftrightarrow\check{\mathbf{z}}\triangleq\left[\begin{array}{c}
\text{{vec}}\left(\text{{Re}}(\textbf{Z})\right)\\
\text{{vec}}\left(\text{{Im}}(\textbf{Z})\right)
\end{array}\right]\in\mathbb{R}^{2nm}.\label{eq:isomorphic_mapping}
\end{equation}
For the sake of simplicity, in the following, we will denote by $\mathcal{Z}\triangleq\mathbb{C}^{n\times m}$
the original complex space and by $\mathbf{Z}$ the elements of $\mathcal{Z}$;
$\mathcal{R}\triangleq\mathbb{R}^{2nm}$ will be the $2n\cdot m$-dimensional
space of real vectors in the form $\check{\mathbf{z}}$, i.e., 
\begin{equation}
\mathcal{R}\triangleq\left\{ \check{\mathbf{z}}\in\mathbb{R}^{2nm}\,:\,\check{\mathbf{z}}\triangleq\left[\begin{array}{c}
\check{\mathbf{z}}_{R}\\
\check{\mathbf{z}}_{I}
\end{array}\right]\triangleq\left[\begin{array}{c}
\text{{vec}}\left(\text{{Re}}(\textbf{Z})\right)\\
\text{{vec}}\left(\text{{Im}}(\textbf{Z})\right)
\end{array}\right],\,\mbox{for some }\mathbf{Z}\in\mathcal{Z}\right\} ;\label{eq:R_set}
\end{equation}
elements of $\mathcal{R}$ will be denoted by $\check{\mathbf{z}}$,
and partitioned as in (\ref{eq:R_set}). 

Given a real-valued function of complex matrices $f:\mathcal{Z}\rightarrow\mathbb{R}$,
the representation of $f(\mathbf{Z})$ under the isomorphic transformation
(\ref{eq:isomorphic_mapping}) is denoted by $\check{f}(\check{\mathbf{z}})=f(\mathbf{Z})$.
Note that if $f(\mathbf{Z})$ is $\mathbb{R}$-(continuously) differentiable
on $\mathcal{Z}$ then $\check{f}(\check{\mathbf{z}})$ is (continuously)
differentiable on $\mathcal{R}$. Moreover, we can easily establish
the connection between the Jacobian of $\check{f}(\check{\mathbf{z}})$
and the Jacobian and conjugate Jacobian of $f(\mathbf{Z})$ {[}cf.
(\ref{eq:Jacobian_scalar_f}){]}, as shown next. By definition, for
any $\check{\mathbf{z}}\in\mathcal{R}$, the Jacobian of $\check{f}(\check{\mathbf{z}})$
is\vspace{-0.2cm} 
\begin{equation}
D_{\check{\mathbf{z}}}\check{f}(\check{\mathbf{z}})\triangleq\left(\nabla_{\check{\mathbf{z}}}\check{f}(\check{\mathbf{z}})\right)^{T}=\left[\dfrac{{\partial}\check{f}(\check{\mathbf{z}})}{{\partial}\check{\mathbf{z}}_{R}^{T}},\,\dfrac{{\partial}\check{f}(\check{\mathbf{z}})}{{\partial}\check{\mathbf{z}}_{I}^{T}}\right]\triangleq\left[D_{\check{\mathbf{z}}_{R}}\check{f}(\check{\mathbf{z}}),\, D_{\check{\mathbf{z}}_{I}}\check{f}(\check{\mathbf{z}})\right].\label{eq:Jacobian_real}
\end{equation}
Using (\ref{eq:complex_derivatives-1}) and (\ref{eq:Jacobian_scalar_f}),
it is not difficult to see that, for any given $\mathcal{Z}\ni\mathbf{Z}\Longleftrightarrow\check{\mathbf{z}}\in\mathcal{R}$,
the following hold 
\begin{align}
D{}_{\mathbf{Z}}f\left(\mathbf{Z}\right) & \triangleq\dfrac{{\partial}f\left(\mathbf{Z}\right)}{{\partial}\text{{vec}}\left(\mathbf{Z}\right)^{T}}=\dfrac{{1}}{2}\,\left[D_{\check{\mathbf{z}}_{R}}\check{f}(\check{\mathbf{z}})-j\cdot D_{\check{\mathbf{z}}_{I}}\check{f}(\check{\mathbf{z}})\right],\label{eq:Jacobians_f_row_1}\\
D{}_{\mathbf{Z}^{\ast}}f\left(\mathbf{Z}\right) & \triangleq\dfrac{{\partial}f\left(\mathbf{Z}\right)}{{\partial}\text{{vec}}\left(\mathbf{Z}^{\ast}\right)^{T}}=\dfrac{{1}}{2}\,\left[D_{\check{\mathbf{z}}_{R}}\check{f}(\check{\mathbf{z}})+j\cdot D_{\check{\mathbf{z}}_{I}}\check{f}(\check{\mathbf{z}})\right],\label{eq:Jacobians_f_row_2}
\end{align}
which provides the desired relationship between $D_{\check{\mathbf{z}}}\check{f}(\check{\mathbf{z}})$
and $D{}_{\mathbf{Z}}f\left(\mathbf{Z}\right)$ and $D{}_{\mathbf{Z}^{\ast}}f\left(\mathbf{Z}\right)$. 

Exploring the above equivalences, we can now readily prove Lemma \ref{Lemma_min_principle}
leveraging on standard  real calculus results. Given a real-valued
convex and continuously $\mathbb{R}$-differentiable function $f:\mathcal{K}\rightarrow\mathbb{R}$
on $\mathcal{K}$, the first-order Taylor expansion of $f(\mathbf{Z})=\check{f}(\check{\mathbf{z}})$
at $\mathcal{K}\ni\mathbf{Z}_{0}(\Leftrightarrow\check{\mathbf{z}}_{0})$
exists and it is given by: 
\begin{align}
f(\mathbf{\mathbf{Z}_{0}}+\Delta\mathbf{Z})-f(\mathbf{Z}_{0})\,= & \,\check{f}(\check{\mathbf{z}}_{0}+\Delta\check{\mathbf{z}})-\check{f}(\check{\mathbf{z}}_{0})\nonumber \\
\simeq & \, D_{\check{\mathbf{z}}}\check{f}(\check{\mathbf{z}}_{0})\cdot\text{\ensuremath{\Delta\check{\mathbf{z}}}}\label{eq:first_order_eq_r1}\\
= & \,2\,\text{{Re}}\left\{ D{}_{\mathbf{Z}}f\left(\mathbf{Z}_{0}\right)\text{{vec}}\left(\Delta\mathbf{Z}\right)\right\} \label{eq:first_order_eq_r3}\\
= & \,2\,\text{{Re}}\left\{ \text{{tr}}\left(\left(\nabla_{\mathbf{Z}}f(\mathbf{Z}_{0})\right)^{T}\Delta\mathbf{Z}\right)\right\} \label{eq:first_order_eq_r4}\\
= & \,\,2\,\left\langle \Delta\mathbf{Z},\,\nabla_{\mathbf{Z}^{\ast}}f(\mathbf{Z}_{0})\right\rangle ,\label{eq:first_order_eq_r5}
\end{align}
where (\ref{eq:first_order_eq_r1}) follows from the first-order Taylor
expansion of real-valued functions of real vectors (see, e.g., \cite{BertsekasNedicOzdaglar_book_convex03});
(\ref{eq:first_order_eq_r3}) follows from (\ref{eq:Jacobians_f_row_1});
(\ref{eq:first_order_eq_r4}) is due to $D{}_{\mathbf{Z}}f\left(\mathbf{Z}_{0}\right)=\text{{vec}}\left(\nabla_{\mathbf{Z}}f(\mathbf{Z}_{o})\right)^{T}$
and the property $\text{{vec}}\left(\mathbf{A}\right)^{T}\text{{vec}}\left(\mathbf{B}\right)=\text{{tr}}\left(\mathbf{A}^{T}\mathbf{B}\right)$
for any $\mathbf{A},\mathbf{B}\in\mathbb{C}^{n\times m}$; and in
(\ref{eq:first_order_eq_r5}) we used the fact that $f$ is real and
thus $\left(\nabla_{\mathbf{Z}}f(\mathbf{Z}_{o})\right)^{\ast}=\nabla_{\mathbf{Z}^{\ast}}f(\mathbf{Z}_{o})$.
This completes the proof. \hfill{}$\square$\vspace{-0.3cm}

\section{Complex Matrix Derivatives in Example \ref{Example_MIMO_complex_min_princ}\label{App: Example 24-Intermediate}}

We derive here the expressions of the (conjugate) derivatives used
in the Example \ref{Example_MIMO_complex_min_princ} and Example \ref{Example_MIMO_complex_min_princ}
revisited.

In order to obtain the expression of the augmented Hessian $\mathcal{H}_{\mathbf{Z}\mathbf{Z}^{\ast}}\tilde{{f}}(\mathbf{Z})$,
we need to compute $\nabla_{\mathbf{Z}\mathbf{Z}^{\ast}}^{2}\tilde{{f}}({\mathbf{Z}})$
$\triangleq D_{\mathbf{Z}}\left(\nabla_{\mathbf{Z}^{\ast}}\tilde{{f}}({\mathbf{Z}})\right)$
and $\nabla_{\mathbf{Z^{\ast}}\mathbf{Z}^{\ast}}^{2}\tilde{{f}}({\mathbf{Z}})\triangleq D_{\mathbf{Z}^{\ast}}\left(\nabla_{\mathbf{Z}^{\ast}}\tilde{{f}}({\mathbf{Z}})\right)$.
We preliminary compute $\nabla_{\mathbf{\mathbf{Z}}}{f}(\mathbf{Z})$
and $\nabla_{\mathbf{\mathbf{Z}^{\ast}}}{f}(\mathbf{Z})$.

Recalling that \cite[Prop. 3.12]{Are_book_MatrixDiff} $d\ln\det\left(\mathbf{Z}\right)=\textrm{Tr}\left(\mathbf{Z}^{-1}d\mathbf{Z}\right)$
for all\textcolor{black}{{} $\mathbf{Z}$ such that $\det\mathbf{Z}\neq0$,
}with $d\ln\det\left(\mathbf{Z}\right)$ being the (complex) differential
of $\ln\det\left(\mathbf{Z}\right)$, we have (up to a constant positive
factor)\vspace{-0.2cm}
\begin{eqnarray}
df(\mathbf{Z}) & = & \textrm{Tr}\left(\left(\mathbf{R}_{n}+\mathbf{HZH}^{H}\right)^{-1}\mathbf{H}d\mathbf{Z}\mathbf{H}^{H}\right)=\textrm{vec}^{T}\left(\left(\mathbf{H}^{H}\left(\mathbf{R}_{n}+\mathbf{HZH}^{H}\right)^{-1}\mathbf{H}\right)^{T}\right)\text{{vec}}(d\mathbf{Z})\vspace{-0.2cm}\label{eq:3-1}
\end{eqnarray}
which, using the identification rule as given in \cite[Table 3.3]{Are_book_MatrixDiff},
leads to the following Jacobian matrices of $f$:\vspace{-0.2cm}
\begin{eqnarray}
D_{\mathbf{Z}}f(\mathbf{Z}) & = & \textrm{vec}^{T}\left(\left(\mathbf{H}^{H}\left(\mathbf{R}_{n}+\mathbf{HZH}^{H}\right)^{-1}\mathbf{H}\right)^{T}\right)\quad\mbox{and}\quad D_{\mathbf{Z^{\ast}}}f(\mathbf{Z})=\mathbf{0},\label{eq:4}
\end{eqnarray}
 and thus {[}cf. (\ref{eq:Jacobian_scalar_f}){]}\vspace{-0.2cm}
\begin{equation}
\nabla_{\mathbf{Z}}f(\mathbf{Z})=\left(\mathbf{H}^{H}\left(\mathbf{R}_{n}+\mathbf{HZH}^{H}\right)^{-1}\mathbf{H}\right)^{T}\quad\mbox{and}\quad\nabla_{\mathbf{Z^{\ast}}}f(\mathbf{Z})=\mathbf{0}.\label{eq:5}
\end{equation}

It follows from (\ref{eq:5}) that\vspace{-0.2cm}
\begin{equation}
\nabla_{\mathbf{Z^{\ast}}}\tilde{{f}}({\mathbf{Z}})=\nabla_{\mathbf{\mathbf{Z}^{\ast}}}{f}(\mathbf{Z})+\nabla_{\mathbf{\mathbf{Z}^{\ast}}}{f}(\mathbf{Z})^{\ast}=\left(\nabla_{\mathbf{\mathbf{Z}}}{f}(\mathbf{Z})\right)^{\ast}=\mathbf{H}^{H}\left(\mathbf{R}_{n}+\mathbf{H}\mathbf{Z}^{H}\mathbf{H}^{H}\right)^{-1}\mathbf{H}.\label{eq:navla_f_Z}
\end{equation}

Given (\ref{eq:navla_f_Z}), we can now compute $\nabla_{\mathbf{Z}\mathbf{Z}^{\ast}}^{2}\tilde{{f}}({\mathbf{Z}})$
and $\nabla_{\mathbf{Z^{\ast}}\mathbf{Z}^{\ast}}^{2}\tilde{{f}}({\mathbf{Z}})$.
The differential of $\nabla_{\mathbf{Z^{\ast}}}\tilde{{f}}(\mathbf{Z})$
is:\vspace{-0.5cm}

\begin{eqnarray}
\text{{vec}}\left[d\left(\nabla_{\mathbf{Z^{\ast}}}\tilde{{f}}(\mathbf{Z}^{\ast})\right)\right] & = & \textrm{vec}\left[\mathbf{H}^{H}d\left(\mathbf{R}_{n}+\mathbf{H}\mathbf{Z}^{H}\mathbf{H}^{H}\right)^{-1}\mathbf{H}\right]\label{eq:8-1}\\
 & = & -\underset{\mathbf{G}(\mathbf{Z})\triangleq\mathbf{H}^{H}\left(\mathbf{R}_{n}+\mathbf{H}\mathbf{Z}^{H}\mathbf{H}^{H}\right)^{-1}\mathbf{H}}{\underbrace{\textrm{vec}\left[\mathbf{G}(\mathbf{Z})(d\mathbf{Z})^{H}\mathbf{G}(\mathbf{Z})\right]}}\label{eq:8-4}\\
 & = & -\left[\mathbf{G}(\mathbf{Z})^{T}\otimes\mathbf{G}(\mathbf{Z})\right]\textrm{vec}\left[(d\mathbf{Z}^{\ast})^{T}\right]\label{eq:8-5}\\
 & = & -\left[\mathbf{G}(\mathbf{Z})^{T}\otimes\mathbf{G}(\mathbf{Z})\right]\mathbf{K}_{n^{2}n^{2}}\textrm{vec}\left[(d\mathbf{Z}^{\ast})\right]\label{eq:Jacobian_nabla}
\end{eqnarray}
where in (\ref{eq:8-4}) we used the rule $d\mathbf{Z}^{-1}=-\mathbf{Z}^{-1}(d\mathbf{Z})\mathbf{Z}^{-1}$
\cite[Prop. 3.8]{Are_book_MatrixDiff}; (\ref{eq:8-5}) follows from
the property $\text{{vec}}(\mathbf{A}\mathbf{B}\mathbf{C})=(\mathbf{C}^{T}\otimes\mathbf{A})\,\text{{vec}}(\mathbf{B})$
\cite[Lemma 2.11]{Are_book_MatrixDiff}; and in the last equality
we introduced the commutation matrix $\mathbf{K}_{n^{2}n^{2}}$, which
is the $n^{2}\times n^{2}$ permutation matrix such that  $\text{{vec}}(\mathbf{A}^{T})=\mathbf{K}_{n^{2}n^{2}}\text{{vec}}(\mathbf{A})$
\cite[Def. 1.8]{Are_book_MatrixDiff}. 

It follows from (\ref{eq:Jacobian_nabla}) and the identification
rule \cite[Table 3.3]{Are_book_MatrixDiff} that 
\begin{eqnarray}
\nabla_{\mathbf{Z}^{*}\mathbf{Z}^{\ast}}^{2}\widetilde{f}(\mathbf{Z}) & = & \nabla_{\mathbf{Z}\mathbf{Z}^{*}}^{2}\widetilde{f}(\mathbf{Z})-\left[\mathbf{G}(\mathbf{Z})^{T}\otimes\mathbf{G}(\mathbf{Z})\right]\mathbf{K}_{n^{2}n^{2}}\quad\mbox{and}\quad\nabla_{\mathbf{Z}\mathbf{Z}^{*}}^{2}\widetilde{f}(\mathbf{Z})=\mathbf{0},\label{eq:9-1}
\end{eqnarray}
 which leads to the expression of the augmented Hessian $\mathcal{H}_{\mathbf{Z}\mathbf{Z}^{\ast}}\tilde{{f}}(\mathbf{Z})$
as given in (\ref{eq:augmented_Hessian_f_tilde}).

\section{Proof of Propositions \ref{monotonicity_complexVI} and \ref{VI_monotonicity_closed_sets}\label{sub:Proof-of-Proposition_monotonicity_complexVI}}

It is sufficient to prove only Proposition \ref{VI_monotonicity_closed_sets};
Proposition \ref{monotonicity_complexVI} is just a special case.
To do that, we need the following intermediate result. \vspace{-0.3cm}

\subsection{Mean-value theorem for functions of complex variables}

We provide here a version of the mean-value theorem that is suitable
for real-valued functions of complex matrices. We focus directly on
the specific function that we need to prove Proposition \ref{VI_monotonicity_closed_sets}.

Given a continuously $\mathbb{R}$-differentiable matrix function
$\mathbf{F}^{\mathbb{C}}:\mathcal{K}\rightarrow\mathbb{C}^{n\times m}$
on the convex and closed set $\mathcal{K}\subseteq\mathbb{C}^{n\times m}$
and a point $\Delta\mathbf{Y}\in\mathbb{C}^{n\times m}$, let us consider
the real-valued function of complex matrix variables 
\begin{equation}
g(\mathbf{Z})\triangleq\left\langle \Delta\mathbf{Y},\,\mathbf{\mathbf{F}^{\mathbb{C}}}(\mathbf{Z})\right\rangle .\label{eq:G_function}
\end{equation}
For every two points $\mathbf{Z}_{1},\mathbf{Z}_{2}\in\mathcal{K}$,
with $\Delta\mathbf{Z}\triangleq\mathbf{Z}_{2}-\mathbf{Z}_{1}$, let
$h(t):[0,1]\rightarrow\mathbb{R}$ be the real-valued scalar function,
defined as $[0,1]\ni t\mapsto h(t)\triangleq\left\langle \Delta\mathbf{Y},\,\mathbf{\mathbf{F}^{\mathbb{C}}}\left(\mathbf{Z}(t)\right)\right\rangle ,$
with $\mathbf{Z}(t)\triangleq\mathbf{Z}_{1}+t\,\Delta\mathbf{Z}$.
For some $\bar{{t}}\in(0,1)$, we have 
\begin{align}
g(\mathbf{Z}_{2})-g(\mathbf{Z}_{1}) & =h(1)-h(0)=h^{'}(\bar{{t}})\label{eq:mean_value_theorem}
\end{align}
where $h^{'}({t})$ is the first order derivative of $h(t)$ {[}note
that $h$ is continuously differentiable on $(0,1)${]}, and the last
equality in (\ref{eq:mean_value_theorem}) follows from the mean-value
theorem applied to the function $h(t)$. To compute $h^{'}(t)$ we
use the chain rule for complex matrix derivatives \cite{Are_book_MatrixDiff}
as shown next. Rewriting $h(t)$ as 
\begin{equation}
h(t)=\dfrac{{1}}{2}\,\text{{tr}}\left(\Delta\textbf{Y}^{H}\,\mathbf{\mathbf{F}^{\mathbb{C}}}\left(\mathbf{Z}(t)\right)\right)+\dfrac{{1}}{2}\,\text{{tr}}\left(\Delta\textbf{Y}^{T}\,\mathbf{\mathbf{F}^{\mathbb{C}}}\left(\mathbf{Z}(t)\right)^{\ast}\right)\vspace{-0.2cm}\label{eq:h_function_v2}
\end{equation}
and using\vspace{-0.2cm}
\[
\begin{array}{ll}
D{}_{\mathbf{Z}}\,\text{{tr}}\left(\Delta\textbf{Y}^{H}\,\mathbf{\mathbf{F}^{\mathbb{C}}}\left(\mathbf{Z}\right)\right)=\text{{vec}}\left(\Delta\textbf{Y}^{\ast}\right)^{T}D{}_{\mathbf{Z}}\mathbf{\mathbf{F}^{\mathbb{C}}}\left(\mathbf{Z}\right)\qquad & D{}_{\mathbf{Z^{\ast}}}\,\text{{tr}}\left(\Delta\textbf{Y}^{H}\,\mathbf{\mathbf{F}^{\mathbb{C}}}\left(\mathbf{Z}\right)\right)=\text{{vec}}\left(\Delta\textbf{Y}^{\ast}\right)^{T}D{}_{\mathbf{Z}^{\ast}}\mathbf{F}^{\mathbb{C}}\left(\mathbf{Z}\right)\bigskip\\
D{}_{\mathbf{Z}}\,\text{{tr}}\left(\Delta\textbf{Y}^{T}\,\mathbf{\mathbf{F}^{\mathbb{C}}}\left(\mathbf{Z}\right)^{\ast}\right)=\left(D{}_{\mathbf{Z^{\ast}}}\,\text{{tr}}\left(\Delta\textbf{Y}^{H}\,\mathbf{\mathbf{F}^{\mathbb{C}}}\left(\mathbf{Z}\right)\right)\right)^{\ast} & D{}_{\mathbf{Z}^{\ast}}\,\text{{tr}}\left(\Delta\textbf{Y}^{T}\,\mathbf{\mathbf{F}^{\mathbb{C}}}\left(\mathbf{Z}\right)^{\ast}\right)=\left(D{}_{\mathbf{Z}}\,\text{{tr}}\left(\Delta\textbf{Y}^{H}\,\mathbf{\mathbf{F}^{\mathbb{C}}}\left(\mathbf{Z}\right)\right)\right)^{\ast}
\end{array}
\]
 we have\vspace{-0.2cm} 
\begin{align}
h^{'}(t) & =D{}_{\mathbf{Z}(t)}h(t)\, D{}_{t}\mathbf{Z}(t)+D{}_{\mathbf{Z}(t)^{\ast}}h(t)\, D{}_{t}\mathbf{Z}(t)^{\ast}\medskip\label{eq:h_prime_row1}\\
 & =\dfrac{{1}}{2}\,\text{{vec}}\left(\Delta\textbf{Y}^{\ast}\right)^{T}D{}_{\mathbf{Z}}\mathbf{F}^{\mathbb{C}}\left(\mathbf{Z}(t)\right)\text{{vec}}\left(\Delta\mathbf{Z}\right)+\dfrac{{1}}{2}\,\text{{vec}}\left(\Delta\textbf{Y}\right)^{T}D{}_{\mathbf{Z}}\mathbf{\mathbf{F}^{\mathbb{C}}}\left(\mathbf{Z}(t)\right)^{\ast}\text{{vec}}\left(\Delta\mathbf{Z}\right)\nonumber \\
 & \quad+\dfrac{{1}}{2}\,\text{{vec}}\left(\Delta\textbf{Y}^{\ast}\right)^{T}D{}_{\mathbf{Z}^{\ast}}\mathbf{\mathbf{F}^{\mathbb{C}}}\left(\mathbf{Z}(t)\right)\text{{vec}}\left(\Delta\mathbf{Z}^{\ast}\right)+\dfrac{{1}}{2}\,\text{{vec}}\left(\Delta\textbf{Y}\right)^{T}D{}_{\mathbf{Z^{\ast}}}\mathbf{\mathbf{F}^{\mathbb{C}}}\left(\mathbf{Z}(t)\right)^{\ast}\text{{vec}}\left(\Delta\mathbf{Z}^{\ast}\right)\nonumber \\
 & =\dfrac{{1}}{2}\,\text{{vec}}\left(\left[\Delta\textbf{Y},\,\Delta\textbf{Y}^{\ast}\right]\right)^{H}\left[\begin{array}{ll}
D_{\mathbf{Z}}\mathbf{F}^{\mathbb{C}}(\mathbf{Z}(t)) & D_{\mathbf{Z}^{\ast}}\mathbf{F}^{\mathbb{C}}(\mathbf{Z}(t))\\
D_{\mathbf{Z}}\mathbf{F}^{\mathbb{C}}(\mathbf{Z}(t))^{\ast} & D_{\mathbf{Z}^{\ast}}\mathbf{F}^{\mathbb{C}}(\mathbf{Z}(t))^{\ast}
\end{array}\right]\text{{vec}}\left(\left[\Delta\textbf{Z},\,\Delta\textbf{Z}^{\ast}\right]\right),\label{eq:h_prime}
\end{align}
where in (\ref{eq:h_prime_row1}) we used the chain rule \cite[Th. 3.1]{Are_book_MatrixDiff}.
Using (\ref{eq:h_prime}) and the augmented Jacobian matrix $\mathbf{JF}^{\mathbb{C}}(\mathbf{Z})$
introduced in (\ref{eq:Augmented_Jacobian}), we can rewrite (\ref{eq:mean_value_theorem})
in a compact form as 
\begin{equation}
g(\mathbf{Z}_{2})-g(\mathbf{Z}_{1})=\,\text{{vec}}\left(\left[\Delta\textbf{Y},\,\Delta\textbf{Y}^{\ast}\right]\right)^{H}\mathbf{JF}^{\mathbb{C}}(\mathbf{Z}(\bar{{t}}))\,\text{{vec}}\left(\left[\Delta\textbf{Z},\,\Delta\textbf{Z}^{\ast}\right]\right).\label{eq:Delta_GZ_r5}
\end{equation}
which is the desired result.

\subsection{Proof of Proposition \ref{VI_monotonicity_closed_sets}}

We prove only (a)-(c); the proof of (d)-(e) follows similar steps.
\smallskip{}

\noindent \emph{Sufficiency} \emph{part}. For (a)-(c), it is enough
to prove only (c). Given two points $\mathbf{Z}_{1},\mathbf{Z}_{2}\in\mathcal{K}$,
let us define $\Delta\textbf{Z}\triangleq\mathbf{Z}_{2}-\mathbf{Z}_{1}$;
we have, for some $\bar{{t}}\in(0,1)$, 
\begin{align}
\left\langle \mathbf{Z}_{2}-\mathbf{Z}_{1},\,\mathbf{\mathbf{F}^{\mathbb{C}}}\left(\mathbf{Z}_{2}\right)-\mathbf{\mathbf{F}^{\mathbb{C}}}\left(\mathbf{Z}_{1}\right)\right\rangle  & =\text{{vec}}\left(\left[\Delta\textbf{Z},\,\Delta\textbf{Z}^{\ast}\right]\right)^{H}\mathbf{JF}^{\mathbb{C}}(\mathbf{Z}(\bar{{t}}))\,\text{{vec}}\left(\left[\Delta\textbf{Z},\,\Delta\textbf{Z}^{\ast}\right]\right),\label{eq:monotonocity_proof_1}
\end{align}
where the equality follows from (\ref{eq:Delta_GZ_r5}). Since $\mathbf{Z}_{1},\mathbf{Z}_{2}\in\mathcal{K}$,
we have that $\Delta\mathbf{Z}\in S_{\mathcal{K}}$; moreover $\mathbf{Z}(\bar{{t}})\in\mathcal{K}$
(due to the convexity of $\mathcal{K}$). It follows from (\ref{eq:monotonocity_proof_1})
that if there exists a constant $c_{\text{{sm}}}$ such that $\text{{vec}}\left([\mathbf{Y},\mathbf{Y}^{\ast}]\right)^{H}\mathbf{JF}^{\mathbb{C}}(\mathbf{Z})$
$\text{\text{{vec}}}\left([\mathbf{Y},\mathbf{Y}^{\ast}]\right)\geq$\textbf{\emph{$\,(c_{\text{{sm}}}/2)\,\left\Vert \mathbf{Y}\right\Vert _{F}^{2}$}}
for all $\mathbf{Y}\in\mathcal{S}_{\mathcal{K}}$ and $\mathbf{Z}\in\mathcal{K}$,
then $ $ $\mathbf{\mathbf{F}^{\mathbb{C}}}\left(\mathbf{Z}\right)$
is strongly monotone on $\mathcal{K}$. \smallskip{}

\noindent \emph{Necessity} \emph{part}. Let us focus on the strongly
monotonicity property only; monotonicity is obtained in a similar
way. Suppose that $\mathbf{\mathbf{F}^{\mathbb{C}}}\left(\mathbf{Z}\right)$
is strongly monotone on $\mathcal{K}$ with constant $c_{\text{{sm}}}>0$.
Let us show first that $\text{{vec}}\left([\mathbf{Y},\mathbf{Y}^{\ast}]\right)^{H}\mathbf{JF}^{\mathbb{C}}(\mathbf{Z})\text{{vec}}\left([\mathbf{Y},\mathbf{Y}^{\ast}]\right)\geq(c_{\text{{sm}}}/2)\,\left\Vert \mathbf{Y}\right\Vert _{F}^{2}$
for all $\mathbf{Y}\in S_{\mathcal{K}}$ and $\mathbf{Z}\in\text{{ri}}(\mathcal{K})$,
where $\text{{ri}}(\mathcal{K})$ denotes the relative interior of
$\mathcal{K}$ (see \cite[Ch. 1.4]{BertsekasNedicOzdaglar_book_convex03}
for the definition of $\text{{ri}}(\mathcal{K})$ and its main properties).
Then, we have 
\begin{align}
\text{{vec}}\left([\mathbf{Y},\mathbf{Y}^{\ast}]\right)^{H}\mathbf{JF}^{\mathbb{C}}(\mathbf{Z})\text{{vec}}\left([\mathbf{Y},\mathbf{Y}^{\ast}]\right) & =\dfrac{{1}}{2}\,\text{{vec}}\left([\mathbf{Y},\mathbf{Y}^{\ast}]\right)^{H}\lim_{t\downarrow0}\dfrac{{1}}{t}\left[\begin{array}{c}
\text{{vec}}\left(\mathbf{F}^{\mathbb{C}}(\mathbf{Z}+t\mathbf{Y})-\mathbf{F}^{\mathbb{C}}(\mathbf{Z})\right)\\
\text{{vec}}\left(\mathbf{F}^{\mathbb{C}}(\mathbf{Z}+t\mathbf{Y})-\mathbf{F}^{\mathbb{C}}(\mathbf{Z})\right)^{\ast}
\end{array}\right]\label{eq:strongly_monotonicity_F_complex_0}\\
 & =\dfrac{{1}}{2}\,\lim_{t\downarrow0}\dfrac{{1}}{t^{2}}\,\left\langle t\mathbf{Y},\mathbf{F}^{\mathbb{C}}(\mathbf{Z}+t\mathbf{Y})-\mathbf{F}^{\mathbb{C}}(\mathbf{Z})\right\rangle \label{eq:strongly_monotonicity_F_complex_1}\\
 & \geq\dfrac{c_{\text{{sm}}}}{2}\,\lim_{t\downarrow0}\dfrac{{1}}{t^{2}}\,\left\Vert t\mathbf{Y}\right\Vert _{F}^{2}=\dfrac{c_{\text{{sm}}}}{2}\,\left\Vert \mathbf{Y}\right\Vert _{F}^{2},\quad\forall\mathbf{Y}\in S_{\mathcal{K}}\,\mbox{and}\,\mathbf{Z}\in\text{{ri}}(\mathcal{K}),\label{eq:strongly_monotonicity_F_complex_2}
\end{align}
where the equality in (\ref{eq:strongly_monotonicity_F_complex_0})
follows from the ($\mathbb{R}$-)differentiability of $ $$\mathbf{\mathbf{F}^{\mathbb{C}}}\left(\mathbf{Z}\right)$
on $\mathcal{K}$ {[}(\ref{eq:strongly_monotonicity_F_complex_0})
can be proved using the same approach as in the proof of Lemma \ref{Lemma_min_principle}
but applied to $\text{{vec}}\left(\mathbf{F}^{\mathbb{C}}(\mathbf{Z})\right)${]};
in (\ref{eq:strongly_monotonicity_F_complex_1}) we used the definition
of inner product (\ref{eq:inner_product_matrix}); and in (\ref{eq:strongly_monotonicity_F_complex_2})
we used i) the fact that for sufficiently small $t>0$, $\mathbf{Z}+t\mathbf{Y}\in\mathcal{K}$
{[}since $\mathbf{Z}\in\text{{ri}}(\mathcal{K})${]}, and ii) the
strongly monotonicity of $\mathbf{F}^{\mathbb{C}}(\mathbf{Z})$ on
$\mathcal{K}$. 

Next, let $\mathbf{Z}\in\mathcal{K}$ but $\mathbf{Z}\notin\text{{ri}}(\mathcal{K})$;
by \cite[Proposition 1.4.1(a)]{BertsekasNedicOzdaglar_book_convex03}
there exists a sequence $\{\mathbf{Z}_{k}\}\subset\text{{ri}}(\mathcal{K})$
such that $\mathbf{Z}_{k}\rightarrow\mathbf{Z}$. By (\ref{eq:strongly_monotonicity_F_complex_2})
evaluated in each $\mathbf{Z}_{k}\in\text{{ri}}(\mathcal{K})$ and
the continuity of $\mathbf{JF}^{\mathbb{C}}(\mathbf{Z})$, it is follows
that 
\[
\text{{vec}}\left([\mathbf{Y},\mathbf{Y}^{\ast}]\right)^{H}\mathbf{JF}^{\mathbb{C}}(\mathbf{Z})\text{{vec}}\left([\mathbf{Y},\mathbf{Y}^{\ast}]\right)=\lim_{k\rightarrow\infty}\text{{vec}}\left([\mathbf{Y},\mathbf{Y}^{\ast}]\right)^{H}\mathbf{JF}^{\mathbb{C}}(\mathbf{Z}_{k})\text{{vec}}\left([\mathbf{Y},\mathbf{Y}^{\ast}]\right)\geq\dfrac{c_{\text{{sm}}}}{2}\,\left\Vert \mathbf{Y}\right\Vert _{F}^{2},
\]
for all $\mathbf{Y}\in S_{\mathcal{K}}.$ This completes the proof
of the necessity part. 

Note that Proposition \ref{VI_monotonicity_closed_sets} reduces to
Proposition \ref{monotonicity_complexVI} if the set $\mathcal{K}$
has nonempy interior. Indeed, when this happens, $\text{{Aff}}(\mathcal{K})=\mathbb{C}^{n\times m}$
and thus $S_{\mathcal{K}}=\mathbb{C}^{n\times m}$.

\section{Proof of Proposition \ref{Prop:MIMO_Game}\label{sec:Proof-of-Proposition_MIMOGame}}

Statement (a) follows from Proposition \ref{Lemma_NEP_complex_VI};
for (b) and (c), we prove only (b).

According to Proposition \ref{monotonicity_complexVI}(a), we need
to show that, under the assumption that $\boldsymbol{{\Upsilon}}_{\mathbf{F}^{\mathbb{C}}}^{\text{\texttt{{mimo}}}}$
in (\ref{eq:Upsilon_F_Q}) is positive semidefinite, the augmented
Jacobian matrix $\mathbf{JF}^{\mathbb{C}}(\mathbf{Q})$ associated
to $\mathbf{F}^{\mathbb{C}}(\mathbf{Q})$ in (\ref{eq:F_MIMO_G})
is augmented positive semidefinite on $\mathcal{P}^{\texttt{{mimo}}}$.
Given (\ref{eq:F_MIMO_G}), $D_{\mathbf{Q}^{\star}}\mathbf{F}^{\mathbb{C}}(\mathbf{Q})=\mathbf{0}$,
implying that $\mathbf{JF}^{\mathbb{C}}(\mathbf{Q})$ is a block diagonal
matrix: 
\begin{equation}
\mathbf{JF}^{\mathbb{C}}(\mathbf{Q})=\,\dfrac{{1}}{2}\,\left[\begin{array}{cc}
D_{\mathbf{Q}}\mathbf{F}^{\mathbb{C}}(\mathbf{Q}) & \mathbf{0}\\
\mathbf{0} & \left(D_{\mathbf{Q}}\mathbf{F}^{\mathbb{C}}(\mathbf{Q})\right)^{\ast}
\end{array}\right]\label{eq:JF_Q}
\end{equation}
with $D_{\mathbf{Q}}\mathbf{F}^{\mathbb{C}}(\mathbf{Q})$ given by
(see Appendix \ref{App: Example 24-Intermediate} for a similar computation):
\begin{align*}
D_{\mathbf{Q}}\mathbf{F}^{\mathbb{C}}(\mathbf{Q}) & =\left[\begin{array}{ccc}
D_{\mathbf{Q}_{1}}\mathbf{F}_{1}^{\mathbb{C}}(\mathbf{Q},) & \cdots & D_{\mathbf{Q}_{I}}\mathbf{F}_{1}^{\mathbb{C}}(\mathbf{Q})\\
\vdots & \ddots & \vdots\\
D_{\mathbf{Q}_{1}}\mathbf{F}_{I}^{\mathbb{C}}(\mathbf{Q}) & \cdots & D_{\mathbf{Q}_{Q}}\mathbf{F}_{I}^{\mathbb{C}}(\mathbf{Q})
\end{array}\right]=\left[\begin{array}{ccc}
\boldsymbol{{\Psi}}_{11}(\mathbf{Q})^{\ast}\otimes\boldsymbol{{\Psi}}_{11}(\mathbf{Q}) & \cdots & \boldsymbol{{\Psi}}_{1I}(\mathbf{Q})^{\ast}\otimes\boldsymbol{{\Psi}}_{1I}(\mathbf{Q})\\
\vdots & \ddots & \vdots\\
\boldsymbol{{\Psi}}_{I1}(\mathbf{Q})^{\ast}\otimes\boldsymbol{{\Psi}}_{I1}(\mathbf{Q}) & \cdots & \boldsymbol{{\Psi}}_{II}(\mathbf{Q})^{\ast}\otimes\boldsymbol{{\Psi}}_{II}(\mathbf{Q})
\end{array}\right]
\end{align*}
with 
\begin{equation}
\boldsymbol{{\Psi}}_{ij}(\mathbf{Q})\triangleq\mathbf{H}_{ii}^{H}\underset{\triangleq\mathbf{S}_{i}(\mathbf{Q})}{\underbrace{\left(\mathbf{R}_{n_{i}}+\sum\limits _{j=1}^{Q}\mathbf{H}_{ij}\mathbf{Q}_{j}\mathbf{H}_{ij}^{H}\right)^{^{-1}}}}\mathbf{H}_{ij}=\mathbf{H}_{ii}^{H}\mathbf{S}_{i}(\mathbf{Q})\mathbf{H}_{ij}.\label{eq:Psi_Q_ij}
\end{equation}
We will denote by $\boldsymbol{{\Psi}}_{ii}^{1/2}(\mathbf{Q})$ the
square root of the positive definite matrix $\boldsymbol{{\Psi}}_{ii}(\mathbf{Q})$
(recall that the channel matrices $\mathbf{H}_{ii}$ are assumed to
be full-column rank), i.e., $\boldsymbol{{\Psi}}_{ii}(\mathbf{Q})=\boldsymbol{{\Psi}}_{ii}^{H/2}(\mathbf{Q})\,\boldsymbol{{\Psi}}_{ii}^{1/2}(\mathbf{Q})$.

Therefore, $\mathbf{JF}^{\mathbb{C}}(\mathbf{Q})$ is augmented positive
semidefinite on $\mathcal{P}^{\texttt{{mimo}}}$ if and only if $D_{\mathbf{Q}}\mathbf{F}^{\mathbb{C}}(\mathbf{Q})$
is positive semidefinite on $\mathcal{P}^{\texttt{{mimo}}}$, or equivalently
the following matrix is so:\vspace{-0.2cm} 
\begin{equation}
\left[\begin{array}{ccc}
\mathbf{I} & \cdots & \left(\boldsymbol{{\Psi}}_{11}^{-H/2}\boldsymbol{{\Psi}}_{1I}\boldsymbol{{\Psi}}_{11}^{-1/2}\right)^{\ast}\otimes\left(\boldsymbol{{\Psi}}_{11}^{-H/2}\boldsymbol{{\Psi}}_{1I}\boldsymbol{{\Psi}}_{11}^{-1/2}\right)\\
\vdots & \ddots & \vdots\\
\left(\boldsymbol{{\Psi}}_{II}^{-H/2}\boldsymbol{{\Psi}}_{I1}\boldsymbol{{\Psi}}_{II}^{-1/2}\right)^{\ast}\otimes\left(\boldsymbol{{\Psi}}_{II}^{-H/2}\boldsymbol{{\Psi}}_{I1}\boldsymbol{{\Psi}}_{II}^{-1/2}\right) & \cdots & \mathbf{I}
\end{array}\right]\label{eq:Jacobian_MIMO_2}
\end{equation}
where for notational simplicity we omitted the dependence on $\mathbf{Q}$
and write $\boldsymbol{{\Psi}}_{ij}$, instead of $\boldsymbol{{\Psi}}_{ij}(\mathbf{Q})$.
The condensed matrix associated to (\ref{eq:Jacobian_MIMO_2}) is
the following $I\times I$ matrix\vspace{-0.2cm} 
\begin{equation}
\left[\begin{array}{ccc}
1 & \cdots & \left\Vert \left(\boldsymbol{{\Psi}}_{11}^{-H/2}\boldsymbol{{\Psi}}_{1Q}\boldsymbol{{\Psi}}_{11}^{-1/2}\right)^{\ast}\otimes\left(\boldsymbol{{\Psi}}_{11}^{-H/2}\boldsymbol{{\Psi}}_{1Q}\boldsymbol{{\Psi}}_{11}^{-1/2}\right)\right\Vert _{2}\\
\vdots & \ddots & \vdots\\
\left\Vert \left(\boldsymbol{{\Psi}}_{II}^{-H/2}\boldsymbol{{\Psi}}_{I1}\boldsymbol{{\Psi}}_{II}^{-1/2}\right)^{\ast}\otimes\left(\boldsymbol{{\Psi}}_{II}^{-H/2}\boldsymbol{{\Psi}}_{I1}\boldsymbol{{\Psi}}_{II}^{-1/2}\right)\right\Vert _{2} & \cdots & 1
\end{array}\right],\label{eq:Jacobian_MIMO_3}
\end{equation}
where $\left\Vert \mathbf{A}\right\Vert _{2}\triangleq\sqrt{{\rho(\mathbf{A}^{H}\mathbf{A})}}$
is the spectral norm of $\mathbf{A}$. Note that we can rewrite each
of the off-diagonal terms of (\ref{eq:Jacobian_MIMO_3}) as: with
$\widetilde{\boldsymbol{{\Psi}}}_{ij}\triangleq\boldsymbol{{\Psi}}_{ii}^{-H/2}\boldsymbol{{\Psi}}_{ij}\boldsymbol{{\Psi}}_{ii}^{-1/2}$,
\begin{align}
\left\Vert \left(\boldsymbol{{\Psi}}_{ii}^{-H/2}\boldsymbol{{\Psi}}_{ij}\boldsymbol{{\Psi}}_{ii}^{-1/2}\right)^{\ast}\otimes\left(\boldsymbol{{\Psi}}_{ii}^{-H/2}\boldsymbol{{\Psi}}_{ij}\boldsymbol{{\Psi}}_{ii}^{-1/2}\right)\right\Vert _{2} & =\left\Vert \widetilde{\boldsymbol{{\Psi}}}_{ij}^{\ast}\otimes\widetilde{\boldsymbol{{\Psi}}}_{ij}\right\Vert _{2}=\left[\rho\left(\widetilde{\boldsymbol{{\Psi}}}_{ij}^{T}\,\widetilde{\boldsymbol{{\Psi}}}_{ij}^{\ast}\otimes\widetilde{\boldsymbol{{\Psi}}}_{ij}^{H}\,\widetilde{\boldsymbol{{\Psi}}}_{ij}\right)\right]^{1/2}=\rho\left(\widetilde{\boldsymbol{{\Psi}}}_{ij}^{H}\,\widetilde{\boldsymbol{{\Psi}}}_{ij}\right),\label{eq:matrix_rho_equality}
\end{align}
where in the last equality we used the property $\rho\left(\mathbf{A}^{T}\mathbf{A}^{\ast}\otimes\mathbf{A}^{H}\mathbf{A}\right)=\rho\left(\mathbf{A}^{T}\mathbf{A}^{\ast}\right)\rho\left(\mathbf{A}^{H}\mathbf{A}\right)$
and the fact that the eigenvalues of $\mathbf{A}^{T}\mathbf{A}^{\ast}$
coincide with those of $\mathbf{A}^{H}\mathbf{A}$. Using (\ref{eq:matrix_rho_equality}),
we can now introduce the so-called comparison matrix $\boldsymbol{{\Upsilon}}_{\mathbf{F}^{\mathbb{C}}}^{\text{\texttt{{mimo}}}}(\mathbf{Q})$
associated to (\ref{eq:Jacobian_MIMO_3}) and defined as $ $ 
\[
\left[\boldsymbol{{\Upsilon}}_{\mathbf{F}^{\mathbb{C}}}^{\text{\texttt{{mimo}}}}(\mathbf{Q})\right]_{ij}\triangleq\left\{ \begin{array}{ll}
1, & \mbox{if }i=j\\
-\rho\left(\widetilde{\boldsymbol{{\Psi}}}_{ij}^{H}(\mathbf{Q})\,\widetilde{\boldsymbol{{\Psi}}}_{ij}(\mathbf{Q})\right) & \mbox{otherwise}.
\end{array}\right.
\]
It is indeed not difficult to see that if $\boldsymbol{{\Upsilon}}_{\mathbf{F}^{\mathbb{C}}}^{\text{\texttt{{mimo}}}}(\mathbf{Q})$
is positive semidefinite on $\mathcal{P}^{\texttt{{mimo}}}$ then
so is the matrix (\ref{eq:Jacobian_MIMO_3}) and thus also $D_{\mathbf{Q}}\mathbf{F}^{\mathbb{C}}(\mathbf{Q})$.
To conclude the proof, it is enough to show that $\boldsymbol{{\Upsilon}}_{\mathbf{F}^{\mathbb{C}}}^{\text{\texttt{{mimo}}}}(\mathbf{Q})\geq\boldsymbol{{\Upsilon}}_{\mathbf{F}^{\mathbb{C}}}^{\text{\texttt{{mimo}}}}$
for all $\mathbf{Q}\in\mathcal{P}^{\texttt{{mimo}}}$, where $\boldsymbol{{\Upsilon}}_{\mathbf{F}^{\mathbb{C}}}^{\text{\texttt{{mimo}}}}$
is defined in (\ref{eq:Upsilon_F_Q}) and the inequality has to be
intended component-wise. The latter properties indeed implies that
if $\boldsymbol{{\Upsilon}}_{\mathbf{F}^{\mathbb{C}}}^{\text{\texttt{{mimo}}}}$
is positive semidefinite then so is $\boldsymbol{{\Upsilon}}_{\mathbf{F}^{\mathbb{C}}}^{\text{\texttt{{mimo}}}}(\mathbf{Q})$
on $\mathcal{P}^{\texttt{{mimo}}}$. To this end, we focus next on
the off-diagonal terms $\rho(\widetilde{\boldsymbol{{\Psi}}}_{ij}^{H}(\mathbf{Q})\,\widetilde{\boldsymbol{{\Psi}}}_{ij}(\mathbf{Q}))$
of $\boldsymbol{{\Upsilon}}_{\mathbf{F}^{\mathbb{C}}}^{\text{\texttt{{mimo}}}}(\mathbf{Q})$
and prove that $|[\boldsymbol{{\Upsilon}}_{\mathbf{F}^{\mathbb{C}}}^{\text{\texttt{{mimo}}}}(\mathbf{Q})]_{ij}|\leq|[\boldsymbol{{\Upsilon}}_{\mathbf{F}^{\mathbb{C}}}^{\text{\texttt{{mimo}}}}]_{ij}|$
for all $\mathbf{Q}\in\mathcal{P}^{\texttt{{mimo}}}$ and $i\neq j$.
Denoting by $\mathbf{S}_{i}^{1/2}(\mathbf{Q})$ the square root of
the positive definite matrix $\mathbf{S}_{i}(\mathbf{Q})$ defined
in (\ref{eq:Psi_Q_ij}) {[}i.e., $\mathbf{S}_{i}(\mathbf{Q})=\boldsymbol{\mathbf{S}}_{i}^{H/2}(\mathbf{Q})\,\mathbf{S}_{i}^{1/2}(\mathbf{Q})${]},
and using $\widetilde{\boldsymbol{{\Psi}}}_{ij}(\mathbf{Q})\triangleq\boldsymbol{{\Psi}}_{ii}^{-H/2}(\mathbf{Q})\boldsymbol{{\Psi}}_{ij}(\mathbf{Q})\boldsymbol{{\Psi}}_{ii}^{-1/2}(\mathbf{Q})$,
we have the following chain of equalities/inequalities: for all $\mathbf{Q}\in\mathcal{P}^{\texttt{{mimo}}}$
and $i\neq j$, 
\begin{align}
\hspace{-0.2cm}\left|\left[\boldsymbol{{\Upsilon}}_{\mathbf{F}^{\mathbb{C}}}^{\text{\texttt{{mimo}}}}(\mathbf{Q})\right]_{ij}\right| & =\rho\left(\widetilde{\boldsymbol{{\Psi}}}_{ij}^{H}(\mathbf{Q})\,\widetilde{\boldsymbol{{\Psi}}}_{ij}(\mathbf{Q})\right)=\rho\left(\boldsymbol{{\Psi}}_{ij}^{H}(\mathbf{Q})\boldsymbol{{\Psi}}_{ii}^{-1}(\mathbf{Q})\boldsymbol{{\Psi}}_{ij}(\mathbf{Q})\boldsymbol{{\Psi}}_{ii}^{-1}(\mathbf{Q})\right)\nonumber \\
 & =\rho\left(\boldsymbol{{\Psi}}_{ii}^{-H/2}(\mathbf{Q})\mathbf{H}_{ij}^{H}\mathbf{S}_{i}(\mathbf{Q})\mathbf{H}_{ii}\left(\mathbf{H}_{ii}^{H}\mathbf{S}_{i}(\mathbf{Q})\mathbf{H}_{ii}\right)^{-1}\mathbf{H}_{ii}^{H}\mathbf{S}_{i}(\mathbf{Q})\mathbf{H}_{ij}\boldsymbol{{\Psi}}_{ii}^{-1/2}(\mathbf{Q})\right)\nonumber \\
 & =\rho\left(\boldsymbol{{\Psi}}_{ii}^{-H/2}(\mathbf{Q})\mathbf{H}_{ij}^{H}\mathbf{S}_{i}^{H/2}(\mathbf{Q})\,\right.\underset{\triangleq\mathbf{P}_{\mathcal{R}(\mathbf{H}_{ii})}\preceq\,\mathbf{I}}{\underbrace{\mathbf{S}_{i}^{1/2}(\mathbf{Q})\mathbf{H}_{ii}\left(\mathbf{H}_{ii}^{H}\mathbf{S}_{i}(\mathbf{Q})\mathbf{H}_{ii}\right)^{-1}\mathbf{H}_{ii}^{H}\mathbf{S}_{i}^{H/2}(\mathbf{Q})}}\,\left.\mathbf{S}_{i}^{1/2}(\mathbf{Q})\mathbf{H}_{ij}\boldsymbol{{\Psi}}_{ii}^{-1/2}(\mathbf{Q})\right)\label{eq:bounds_on_rho_1}\\
 & \leq\rho\left(\mathbf{S}_{i}^{1/2}(\mathbf{Q})\mathbf{H}_{ij}\boldsymbol{{\Psi}}_{ii}^{-1}(\mathbf{Q})\mathbf{H}_{ij}^{H}\mathbf{S}_{i}^{H/2}(\mathbf{Q})\right)=\rho\left(\mathbf{S}_{i}^{1/2}(\mathbf{Q})\mathbf{H}_{ij}\left(\mathbf{H}_{ii}^{H}\mathbf{S}_{i}(\mathbf{Q})\mathbf{H}_{ii}\right)^{-1}\mathbf{H}_{ij}^{H}\mathbf{S}_{i}^{H/2}(\mathbf{Q})\right)\label{eq:bounds_on_rho_2}\\
 & \leq\rho\left(\mathbf{S}_{i}^{1/2}(\mathbf{Q})\mathbf{H}_{ij}\mathbf{H}_{ii}^{\dagger}\mathbf{S}_{i}^{-1}(\mathbf{Q})\mathbf{H}_{ii}^{\dagger H}\mathbf{H}_{ij}^{H}\mathbf{S}_{i}^{H/2}(\mathbf{Q})\right)\label{eq:bounds_on_rho_3}\\
 & \leq\rho\left(\mathbf{S}_{i}^{-1}(\mathbf{Q})\right)\cdot\rho\left(\mathbf{S}_{i}(\mathbf{Q})\right)\cdot\rho\left(\mathbf{H}_{ii}^{\dagger H}\mathbf{H}_{ij}^{H}\mathbf{H}_{ij}\mathbf{H}_{ii}^{\dagger}\right)\label{eq:bounds_on_rho_4}\\
 & \leq\texttt{\text{{INNR}}}_{ij}\cdot\rho\left(\mathbf{H}_{ii}^{\dagger H}\mathbf{H}_{ij}^{H}\mathbf{H}_{ij}\mathbf{H}_{ii}^{\dagger}\right)=\left|\left[\boldsymbol{{\Upsilon}}_{\mathbf{F}^{\mathbb{C}}}^{\text{\texttt{{mimo}}}}\right]_{ij}\right|\label{eq:bounds_on_rho_5}
\end{align}
where in (\ref{eq:bounds_on_rho_1}), $\mathbf{P}_{\mathcal{R}(\mathbf{H}_{ii})}\triangleq\mathbf{S}_{i}^{1/2}(\mathbf{Q})\mathbf{H}_{ii}\left(\mathbf{H}_{ii}^{H}\mathbf{S}_{i}(\mathbf{Q})\mathbf{H}_{ii}\right)^{-1}\mathbf{H}_{ii}^{H}\mathbf{S}_{i}^{H/2}(\mathbf{Q})$
is the orthogonal projection onto the range space of $\mathbf{H}_{ii}$;
in (\ref{eq:bounds_on_rho_2}) we used the property of the projection
$\mathbf{P}_{\mathcal{R}(\mathbf{H}_{ii})}\preceq\,\mathbf{I}$ and
the spectral radius inequality $\rho(\mathbf{A}^{H}\mathbf{B}\mathbf{A})\leq\rho(\mathbf{A}^{H}\mathbf{C}\mathbf{A})$
for all $\mathbf{0}\preceq\mathbf{B}\preceq\mathbf{C}$; (\ref{eq:bounds_on_rho_3})
follows from the property $\left(\mathbf{X}^{H}\mathbf{A}\mathbf{X}\right)^{-1}\preceq\mathbf{X}^{\dagger}\mathbf{A}^{-1}\mathbf{X}^{\dagger H}$,
valid for all positive definite $n\times n$ matrices $\mathbf{A}$
and $n\times k$ full-column rank matrices $\mathbf{X}$; in (\ref{eq:bounds_on_rho_4})
we used $\mathbf{A}\preceq\rho\left(\mathbf{A}\right)\cdot\mathbf{I}$
and the spectral radius inequality as in (\ref{eq:bounds_on_rho_2});
and finally (\ref{eq:bounds_on_rho_5}) follows from $\rho\left(\mathbf{S}_{i}^{-1}(\mathbf{Q})\right)\cdot\rho\left(\mathbf{S}_{i}(\mathbf{Q})\right)\leq\texttt{\text{{INNR}}}_{ij}$,
with $\texttt{\text{{INNR}}}_{ij}$ defined in (\ref{eq:INNR_MIMO}).

The above chain of inequalities proves the desired relationship between
$\boldsymbol{{\Upsilon}}_{\mathbf{F}^{\mathbb{C}}}^{\text{\texttt{{mimo}}}}(\mathbf{Q})$
and $\boldsymbol{{\Upsilon}}_{\mathbf{F}^{\mathbb{C}}}^{\text{\texttt{{mimo}}}}$,
which completes the proof. \hfill{}$\square$

\global\long\def\baselinestretch{1}

{\small{ \bibliographystyle{IEEEtran}
\bibliography{scutari_refs}
 }} 
\end{document}